\newtheorem{theorem}{Theorem}
\newtheorem*{theorem*}{Theorem}
\newtheorem{lemma}[theorem]{Lemma}
\newtheorem{corollary}[theorem]{Corollary}
\newtheorem{proposition}[theorem]{Proposition}
\newtheorem*{proposition*}{Proposition}
\newtheorem{claim}{Claim}
\theoremstyle{definition}
\newtheorem{remark}[theorem]{Remark}
\newtheorem{example}[theorem]{Example}
\newtheorem{definition}[theorem]{Definition}
\def\structA{\mathbb{A}}
\def\structB{\mathbb{B}}
\def\structC{\mathbb{C}}
\def\C{\mathcal{C}}
\def\G{\mathcal{G}}
\def\H{\mathcal{H}}
\def\S{\mathcal{S}}
\def\B{\mathcal{B}}
\def\X{\mathcal{X}}
\def\I{\mathcal{I}}
\def\T{\mathcal{T}}
\def\M{\mathcal{M}}
\def\qplus{\mathbb{Q}_{\geq 0}}
\def\qinf{\overline{\mathbb{Q}}_{\geq 0}}
\def\btuples{\text{BTup}}
\def\tuples{\text{BTup}}
\def\equi{\equiv}
\def\lessfrac{\preceq^{\text{SA}}_k}
\def\id{\text{id}}
\newcommand{\problem}[1]{{\sc #1}}
\newcommand{\cost}[1]{\text{cost}(#1)}
\newcommand{\costb}[2]{\text{cost}_{#1}(#2)}
\newcommand{\opt}[1]{\text{opt}(#1)}
\newcommand{\optfrac}[2]{\text{opt}^\text{SA}_{#1}(#2)}
\newcommand{\tup}[1]{\text{tup}(#1)}
\newcommand{\tuple}[1]{\mathbf{#1}}
\newcommand{\rel}[1]{\text{rel}(#1)}
\newcommand{\pos}[1]{\text{Pos}(#1)}
\newcommand{\ar}[1]{\text{ar}(#1)}
\newcommand{\supp}[1]{\text{supp}(#1)}
\newcommand{\toset}[1]{\text{Set}(#1)}
\newcommand{\dom}[1]{\text{dom}(#1)}
\newcommand{\rels}[1]{\mathbf{#1}}
\newcommand{\tw}[1]{\text{$tw$}(#1)}
\newcommand{\twms}[1]{\text{$tw_{ms}$}(#1)}
\newcommand{\defeq}{\vcentcolon=}
\newcommand{\tupsetsize}[1]{\left\lVert#1\right\rVert}
\def\less{\preceq} 
\def\vecpred{\leq} 
\renewcommand{\mapsto}{\to}
\begin{document}

\title{The complexity of general-valued CSPs seen from the other
side\thanks{An extended abstract of this work appeared in the
\emph{Proceedings of the 59th Annual IEEE Symposium on Foundations of Computer
Science (FOCS'18)}~\cite{crz18:focs}. Stanislav \v{Z}ivn\'y was supported by a Royal Society University
Research Fellowship. This project has received funding from the European
Research Council (ERC) under the European Union's Horizon 2020 research and
innovation programme (grant agreement No 714532). The paper reflects only the
authors' views and not the views of the ERC or the European Commission. The
European Union is not liable for any use that may be made of the information
contained therein. Miguel Romero was supported by Fondecyt grant 11200956. 
Work done while Cl\'ement Carbonnel and Miguel Romero were at the University of Oxford.}}

\author{
Cl\'ement Carbonnel\\
CNRS\\
University of Montpellier\\
France\\
\texttt{clement.carbonnel@lirmm.fr}
\and
Miguel Romero\\
Faculty of Engineering and Science\\
Universidad Adolfo Ib\'a\~nez\\
Santiago, Chile\\
\texttt{miguel.romero.o@uai.cl}
\and
Stanislav \v{Z}ivn\'{y}\\
University of Oxford, UK\\
\texttt{standa.zivny@cs.ox.ac.uk}
}

\date{}
\maketitle

\begin{abstract}

The constraint satisfaction problem (CSP) is concerned with homomorphisms
between two structures. For CSPs with restricted left-hand side structures,
the results of Dalmau, Kolaitis, and Vardi [CP'02], Grohe [FOCS'03/JACM'07],
and Atserias, Bulatov, and Dalmau [ICALP'07] establish the precise borderline
of polynomial-time solvability (subject to complexity-theoretic assumptions)
and of solvability by bounded-consistency algorithms (unconditionally) as
bounded treewidth modulo homomorphic equivalence.

The general-valued constraint satisfaction problem (VCSP) is a generalisation of
the CSP concerned with homomorphisms between two \emph{valued} structures. For
VCSPs with restricted left-hand side valued structures, we establish the
precise borderline of polynomial-time solvability (subject to
complexity-theoretic assumptions) and of solvability by the $k$-th level of
the Sherali-Adams LP hierarchy (unconditionally). We also obtain results on
related problems concerned with finding a solution and recognising the
tractable cases; the latter has an application in database theory.

\end{abstract}

\section{Introduction}
\label{sec:intro}

\subsection{Constraint Satisfaction Problems}
The homomorphism problem for relational structures is a fundamental computer
science problem: Given two relational structures $\rels{A}$ and $\rels{B}$ over
the same signature, the goal is to determine the existence of a homomorphism
from $\rels{A}$ to $\rels{B}$ (see, e.g., the book by Hell and Ne\v{s}et\v{r}il
on this topic~\cite{Hell:graphs}). The homomorphism problem is known to be
equivalent to the evaluation problem and the containment problem for conjunctive
database queries~\cite{Chandra77:stoc,Kolaitis98:pods}, and also to the
constraint satisfaction problem (CSP)~\cite{Feder98:monotone}, which
originated in artificial intelligence~\cite{Montanari74:constraints} and
provides a common framework for expressing a wide range of both theoretical and
real-life combinatorial problems.

For a class $\C$ of relational structures, we denote by \problem{CSP}($\C$, $-$)
the restriction of the homomorphism problem in which the input structure
$\rels{A}$ belongs to $\C$ and the input structure $\rels{B}$ is arbitrary
(these types of restrictions are known as \emph{structural} restrictions).
Similarly, by \problem{CSP}($-$, $\C$) we denote the restriction of the
homomorphism problem in which the input structure $\rels{A}$ is arbitrary and
the input structure $\rels{B}$ belongs to $\C$.

Feder and Vardi initiated the study of \problem{CSP}($-$, $\{\rels{B}\}$), also
known as non-uniform CSPs, and famously conjectured that, for every fixed finite
structure $\rels{B}$, \problem{CSP}($-$, $\{\rels{B}\}$) is in PTIME or
\problem{CSP}($-$, $\{\rels{B}\}$) is NP-complete. For example, if $\rels{B}$ is
a clique on $k$ vertices then \problem{CSP}($-$, $\{\rels{B}\}$) is the
well-known $k$-colouring problem, which is known to be in PTIME for $k\leq 2$
and NP-complete for $k\geq 3$. Most of the progress on the Feder-Vardi
conjecture (e.g.,
\cite{Bulatov06:3-elementJACM,Barto09:siam,Idziak10:siam,Bulatov11:conservative,Barto14:jacm})
is based on the algebraic approach~\cite{Bulatov05:classifying}, culminating in
two (affirmative) solutions to the Feder-Vardi conjecture obtained
independently by Bulatov~\cite{Bulatov17:focs} and Zhuk~\cite{Zhuk20:jacm}.

\problem{CSP}($\C$, $-$) is only interesting if $\C$ is an infinite
class of structures as otherwise \problem{CSP}($\C$, $-$) is always in PTIME.
(This is, however, not the case for \problem{CSP}($-$, $\C$) as we have seen in
the example of $3$-colouring.) Freuder observed that \problem{CSP}($\C$, $-$) is
in PTIME if $\C$ consists of trees~\cite{Freuder82:backtrack-free} or, more
generally, if it has bounded treewidth~\cite{Freuder90:complexity}. Later,
Dalmau, Kolaitis, and Vardi showed that \problem{CSP}($\C$, $-$) is solved by
$k$-consistency, a fundamental local propagation
algorithm~\cite{Dechter03:processing}, if $\C$ is of bounded treewidth
\emph{modulo homomorphic equivalence}, i.e., if the treewidth of the cores of
the structures from $\C$ is at most $k$, for some fixed $k\geq
1$~\cite{Dalmau02:width}. Atserias, Bulatov, and Dalmau showed that this is
precisely the class of structures solved by
$k$-consistency~\cite{Atserias07:power}. Grohe proved~\cite{Grohe07:jacm}
that the tractability result of Dalmau et al.~\cite{Dalmau02:width} is optimal
for classes $\C$ of bounded arity: Under the assumption that FPT $\ne $ W[1],
\problem{CSP}($\C$, $-$) is tractable if and only if $\C$ has bounded treewidth
modulo homomorphic equivalence. 

\subsection{General-valued Constraint Satisfaction Problems} General-valued
Constraint Satisfaction Problems (VCSPs) are generalisations of CSPs which allow
for not only decision problems but also for optimisation problems (and the mix
of the two) to be considered in one framework~\cite{Cohen06:complexitysoft}. In
the case of VCSPs we deal with \emph{valued} structures. Regarding tractable
restrictions, the situation of the non-uniform case is by now well-understood.
Indeed, it holds that for any
fixed valued structure $\structB$,  either \problem{VCSP}($-$, $\{\structB\}$)
is in PTIME or \problem{VCSP}($-$, $\{\structB\}$) is
NP-complete~\cite{Kozik15:icalp,Kolmogorov17:sicomp}. 

For structural restrictions, it is a folklore result that \problem{VCSP}($\C$,
$-$) is tractable if $\C$ is of bounded treewidth; see,
e.g.~\cite{Bertele72:nonserial}. So is the belief that the $(k+1)$-st level of the
Sherali-Adams LP hierarchy~\cite{Sherali1990} solves \problem{VCSP}($\C$, $-$)
to optimality if the treewidth of all structures in $\C$ is at most $k$. (We are
not aware of any reference for this. For certain special problems, it is
discussed in~\cite{Bienstock04:do}. For the extension complexity of such
problems, see~\cite{Kolman15:ejc}.) However, unlike the CSP case, the precise
borderline of polynomial-time solvability and the power of fundamental
algorithms (such as the Sherali-Adams LP hierarchy) for VCSP($\C$, $-$) is still
unknown.  Understanding these complexity and algorithmic frontiers for
VCSP($\C$, $-$) is the main goal of this paper. 

\subsection{Contributions}

We study the problem \problem{VCSP}($\C$, $-$) for classes $\C$ of valued
structures over finite signatures and give three main results.

\paragraph{(1) Complexity classification}

As our first result, we give (in Theorem~\ref{theo:main}) a complete complexity
classification of \problem{VCSP}($\C$, $-$) and identify the precise borderline
of tractability, for classes $\C$ of bounded arity. A key ingredient in our
result is a novel notion of \emph{valued equivalence} and a characterisation of
this notion in terms of \emph{valued cores}. More precisely, we show that
\problem{VCSP}($\C$, $-$) is tractable if and only if $\C$ has bounded treewidth
modulo valued equivalence. This latter notion \emph{strictly} generalises
bounded treewidth and it is \emph{strictly} weaker than bounded treewidth modulo
homomorphic equivalence. Our proof builds on the characterisation by Dalmau et
al.~\cite{Dalmau02:width} and Grohe~\cite{Grohe07:jacm} for CSPs.
We show that the newly identified tractable classes are solvable by the
Sherali-Adams LP hierarchy. 

\paragraph{(2) Power of Sherali-Adams}

Our second result (Theorem~\ref{thm:sa}) gives a precise characterisation of the
power of Sherali-Adams for \problem{VCSP}($\C$, $-$). In particular, we show
that the $(k+1)$-st level of the Sherali-Adams LP hierarchy solves
\problem{VCSP}($\C$, $-$) to optimality if and only if the valued cores of the
structures from $\C$ have treewidth \emph{modulo scopes} at most $k$ and the
\emph{overlaps} of scopes are of size at most $k+1$. The proof builds on the
work of Atserias et al.~\cite{Atserias07:power} and Thapper and
\v{Z}ivn\'y~\cite{tz17:sicomp}, as well as on an adaptation of the classical
result connecting treewidth and brambles by Seymour and
Thomas~\cite{Seymour93:graph}.

\paragraph{(3) Search VCSP}

Our first two results are for the \problem{VCSP} in which we ask for the cost of
an optimal solution. It is also possible to define the \problem{VCSP} as a
\emph{search} problem, in which one is additionally required to return a
solution with the optimal cost. A complete characterisation of tractable search
cases in terms of structural properties of (a class of structures) $\C$ is open
even for CSPs and there is some evidence that the tractability frontier cannot
be captured in simple terms.\footnote{In particular, \cite[Lemma~1]{BDGM12:enum}
shows that a description of tractable cases of \problem{SCSP($\C$, $-$)}, which
is the search variant of \problem{CSP($\C$, $-$)} defined in
Section~\ref{sec:search}, would
imply a description of tractable cases of \problem{CSP($-$, $\{\rels{B}\}$)}.}
Building on our first
two results as well as on techniques from~\cite{tz16:jacm}, we give in
Section~\ref{sec:search} a characterisation of the tractable cases for search
\problem{VCSP($\C$, $-$)} in terms of tractable core computation
(Theorem~\ref{thm:search}). 

\paragraph{(4) Additional results}

In addition to our main results, we provide in Section~\ref{sec:meta} tight
complexity bounds for several problems related to our classification results,
e.g., deciding whether the treewidth is at most $k$ modulo valued equivalence,
deciding solvability by the $k$-th level of the Sherali-Adams LP hierarchy, and
deciding valued equivalence for valued structures.  These results have
interesting consequences to database theory. Specifically, to the evaluation and
optimisation of conjunctive queries over \emph{annotated} databases. In
particular, we show that the containment problem of conjunctive queries over the
tropical semiring is in NP, thus improving on the work of~\cite{Kostylev14:tds},
which put it in $\Pi^p_2$.

\subsection{Related work} 

In his PhD thesis~\cite{Farnqvist13:phd}, F\"arnqvist
studied the complexity of VCSP($\C$, $-$) and some other fragments of VCSPs (see
also~\cite{Farnqvist07:isaac,Farnqvist12:cpaior}). He considered a very specific
framework that only allows for particular types of classes $\C$'s to be
classified. For these classes, he showed that only bounded treewidth gives rise
to tractability (assuming bounded arity) and asked about more general classes.
In particular, decision CSPs do \emph{not} fit in his framework and Grohe's
classification~\cite{Grohe07:jacm} is not implied by F\"arnqvist's work. In
contrast, our characterisation (of \emph{all} classes $\C$'s of valued
structures) gives rise to new tractable cases going beyond those identified by
F\"arnqvist. Moreover, we can derive both Grohe's classification and
F\"arnqvist's classification directly from our results, as explained in Section~\ref{sec:compl}.

It is known that Grohe's characterisation applies only to classes $\C$ of
\emph{bounded arity}, i.e., when the arities of the signatures are always
bounded by a constant (for instance, CSPs over digraphs) and fails for classes
of unbounded arity.
In this direction, several hypergraph-based restrictions that lead to
tractability have been proposed (for a survey see, e.g.~\cite{GGLS16:pods}).
Nevertheless, the precise tractability frontier for CSP($\C$, $-$) is not known.
The situation is different for \emph{fixed-parameter tractability}: Marx gave a
complete classification of the fixed-parameter tractable restrictions CSP($\C$,
$-$), for classes $\C$ of structures of unbounded arity~\cite{Marx13:jacm}. 
Gottlob et al.~\cite{Gottlob09:icalp} and F\"arnqvist~\cite{Farnqvist12:cpaior} applied well-known hypergraph-based
tractable restrictions of CSPs to VCSPs. 

\section{Preliminaries}
\label{sec:prelims}

We assume familiarity with relational structures and homomorphisms. 
Briefly, a \emph{relational signature} is a finite set $\tau$ of relation
symbols $R$, each with a specified arity $\ar{R}$. A \emph{relational structure}
${\bf A}$ over a relational signature $\tau$ (or a relational $\tau$-structure,
for short) is a finite non-empty universe $A$ together with one relation $R^{\bf
A}\subseteq A^{\ar{R}}$ for each symbol $R \in \tau$. A \emph{homomorphism} from
a relational $\tau$-structure ${\bf A}$ (with universe $A$) to a relational $\tau$-structure ${\bf B}$
(with universe $B$) is a mapping $h:A \mapsto B$ such that for all $R\in\tau$
and all tuples $\tuple{x}\in R^{\bf A}$ we have $h(\tuple{x})\in R^{\bf B}$. We
refer the reader to~\cite{Hell:graphs} for more details.

We use $\qinf$ to denote the set of nonnegative rational numbers with positive infinity, i.e. $\qinf = \qplus \cup \{\infty\}$. 
As usual, we assume that $\infty+c=c+\infty=\infty$ for all $c\in \qinf$, $\infty \times 0 = 0 \times \infty = 0$, $\infty \times c=c\times \infty=\infty$ for all $c>0$, and $c/\infty=0$ for all $c\in \qplus$.

\paragraph{Valued structures}

A \emph{signature} is a finite set $\sigma$ of function symbols $f$, each with a specified arity $\ar{f}$. A \emph{valued structure} $\structA$ over a signature $\sigma$ (or valued $\sigma$-structure, for short) is a finite non-empty universe $A$ together with one function $f^{\structA}: A^{\ar{f}} \mapsto \qinf$ for each symbol $f \in \sigma$. 
We define $\tup{\structA}$ to be the set of all pairs $(f,\tuple{x})$ such that $f \in \sigma$ and $\tuple{x} \in A^{\ar{f}}$. The set of \emph{positive} tuples of $\structA$ is defined by $\tup{\structA}_{>0} \defeq \{(f,\tuple{x})\in \tup{\structA}\mid f^{\structA}(\tuple{x})>0\}$. If $\structA,\structB,\dots$ are valued structures, then $A,B,\dots$ denote their respective universes.

\paragraph{VCSPs}

We define \emph{Valued Constraint Satisfaction Problems} (VCSPs) as in~\cite{tz12:focs}. 
An instance of the \problem{VCSP} is given by two valued structures $\structA$ and $\structB$ over the same signature $\sigma$. 
For a mapping $h:A \mapsto B$, we define 
\[
  \cost{h} = \sum_{(f,\tuple{x})\in
  \tup{\structA}}f^{\structA}(\tuple{x})f^{\structB}(h(\tuple{x})).
\]
If needed, we will write $\costb{\structA\mapsto\structB}{h}$ for $\cost{h}$ to
emphasise the source and target structures.
Given a VCSP instance, the goal is to find the minimum cost over all possible
mappings $h:A\mapsto B$.  We denote this cost by $\opt{\structA,\structB}$. 

For a class $\C$ of valued structures (not necessarily over the same signature),
we denote by \problem{VCSP}($\C$, $-$) the class of VCSP
instances $(\structA,\structB)$ such that $\structA\in \C$. We say that
\problem{VCSP}($\C$, $-$) is in PTIME, the class of problems solvable in
\emph{polynomial time}, if there is a deterministic algorithm that solves any
instance $(\structA, \structB)$ of \problem{VCSP}($\C$, $-$) in time
$(|\structA|+|\structB|)^{O(1)}$.
We also consider the parameterised version of \problem{VCSP}($\C$, $-$), denoted
by $p$-\problem{VCSP}($\C$, $-$), where the parameter is $|\structA|$.
We say that $p$-\problem{VCSP}($\C$, $-$) is in FPT, the class of problems that
are \emph{fixed-parameter tractable}, if there is a deterministic algorithm that
solves any instance $(\structA,\structB)$ of $p$-\problem{VCSP}($\C$, $-$) in
time $f(|\structA|)\cdot|\structB|^{O(1)}$, where $f:\mathbb{N} \mapsto
\mathbb{N}$ is an arbitrary computable
function. The class W[1], introduced in~\cite{Downey95:sicomp}, can be seen as an analogue of NP in parameterised
complexity theory. Proving W[1]-hardness of $p$-\problem{VCSP}($\C$, $-$) (under
an fpt-reduction, formally defined in Section~\ref{sec:hard-main}) is a strong indication that $p$-\problem{VCSP}($\C$, $-$) is
not in FPT as it is believed that FPT $\neq$ W[1]. We refer the reader
to~\cite{FG06} for more details on parameterised complexity.

\paragraph{Treewidth of a valued structure} 
 
The notion of treewidth (originally introduced by Bertel\'e and Brioschi~\cite{Bertele72:nonserial} and later rediscovered
by Robertson and Seymour~\cite{Robertson84:minors3}) is a well-known measure of the tree-likeness of a graph \cite{diestel10:graph}.  
Let $G=(V(G),E(G))$ be a graph. A \emph{tree decomposition} of $G$ 
is a pair $(T,\beta)$ where $T=(V(T),E(T))$ is a tree and $\beta$ is a function that maps 
each node $t\in V(T)$ to a subset of $V(G)$ such that
\begin{enumerate}
\item $V(G)=\bigcup_{t\in V(T)} \beta(t)$, 
\item for every $u\in V(G)$, the set 
$\{t\in V(T)\mid u\in \beta(t)\}$ induces a connected subgraph of $T$, and 
\item for every edge $\{u,v\}\in E(G)$, there is a node $t\in V(T)$ with $\{u,v\}\subseteq \beta(t)$. 
\end{enumerate} 
The \emph{width} of the decomposition $(T,\beta)$ is $\max\{|\beta(t)|\mid t\in V(T)\}-1$. 
The \emph{treewidth} $\tw{G}$ of a graph $G$ is the minimum width over all its tree decompositions. 

Let ${\bf A}$ be a relational structure over relational signature $\tau$. Its
\emph{Gaifman graph} (also known as its \emph{primal graph}), 
denoted by $G(\rels{A})$, is the graph whose vertex set is the universe of ${\bf A}$ and whose edges are the
pairs $\{u,v\}$ for which there is a tuple $\tuple{x}$ and a relation symbol $R\in \tau$ 
such that $u,v$ appear in $\tuple{x}$ and $\tuple{x}\in R^{\rels{A}}$. 
We define the treewidth of $\rels{A}$ to be $\tw{\rels{A}}=\tw{G(\rels{A})}$. 

Let $\structA$ be a valued $\sigma$-structure.  
 If $\structA$ is the left-hand side of an instance of the \problem{VCSP}, the only tuples relevant to the problem are those in $\tup{\structA}_{>0}$. 
 Hence, in order to define structural restrictions and in particular, the notion of treewidth, we focus on the structure induced by $\tup{\structA}_{>0}$. 
 Formally, we associate with the signature $\sigma$ a relational signature $\rel{\sigma}$ that contains, for
 every $f\in \sigma$, a relation symbol $R_f$ of the same arity as $f$. 
We define $\pos{\structA}$ to be the relational structure over $\rel{\sigma}$ with the same universe $A$ of $\structA$ such that $\tuple{x}\in R_f^{\pos{\structA}}$ if and only if $f^{\structA}(\tuple{x})>0$.
We let the treewidth of $\structA$ be $\tw{\structA}=\tw{\pos{\structA}}$. 

\begin{remark}
Observe that, in the \problem{VCSP}, we allow infinite costs not only in
$\structB$ but also in the left-hand side structure $\structA$. This allows us
to consider the \problem{VCSP} as the minimum-cost mapping problem between two
mathematical objects of the same nature. Intuitively, mapping the tuples of
$\structA$ to infinity ensures that those are logically equivalent to
\textit{hard constraints}, as any minimum-cost solution of finite cost must map them to
tuples of cost exactly $0$ in $\structB$. Thus, decision CSPs, which are
$\{0,\infty\}$-valued VCSPs, are a special case of our definition and all our
results also apply to CSPs.
\end{remark}

\section{Equivalence for valued structures}
\label{sec:equiv}

We start by introducing the notion of valued equivalence that is crucial for our results. 

\begin{definition}
\label{def:improve}
Let $\structA, \structB$ be valued $\sigma$-structures. We say that $\structA$ \emph{improves} $\structB$, denoted by $\structA \less \structB$, if 
$\opt{\structA,\structC} \leq \opt{\structB,\structC}$ for all valued $\sigma$-structures $\structC$. 
\end{definition}

When two valued structures improve each other, we call them equivalent. In
Section~\ref{sec:intro}, we used the term ``valued equivalence''. In the rest of
the paper, we drop the word ``valued'' unless needed for clarity.

\begin{definition}
\label{def:equiv}
Let $\structA, \structB$ be valued $\sigma$-structures. We say that $\structA$ and $\structB$ are \emph{equivalent}, denoted by $\structA\equi\structB$, if $\structA\less\structB$ and $\structB\less\structA$. 
\end{definition}

Hence, two valued $\sigma$-structures $\structA$ and $\structB$ are equivalent
if they have the same optimal cost over all right-hand side valued structures. 
Observe that equivalence implies \emph{homomorphic} equivalence of
$\pos{\structA}$ and $\pos{\structB}$. 
Indeed, 
whenever $\pos{\structA}$ is not homomorphic to $\pos{\structB}$, 
we can define a valued $\sigma$-structure $\structC$ as follows: $\structC$ and $\structB$ have the same universe $B$, 
and for every $f\in \sigma$ and $\tuple{x}\in B^{\ar{f}}$, $f^{\structC}(\tuple{x})=0$ if $(f,\tuple{x})\in \tup{\structB}_{>0}$, and $f^{\structC}(\tuple{x})=\infty$ otherwise. 
Since the identity mapping has cost $0$, we have $\opt{\structB,\structC}=0$. On
the other hand, $\opt{\structA,\structC}=\infty$ 
as any finite-cost map from $\structA$ to $\structC$ is a homomorphism from
$\pos{\structA}$ to $\pos{\structB}$.
Consequently, $\structA\not\equiv\structB$. 
As the following example shows, the converse does not hold in general. 

\begin{figure}
\centering
\begin{tikzpicture}
  \pgftext{\includegraphics[scale=0.9]{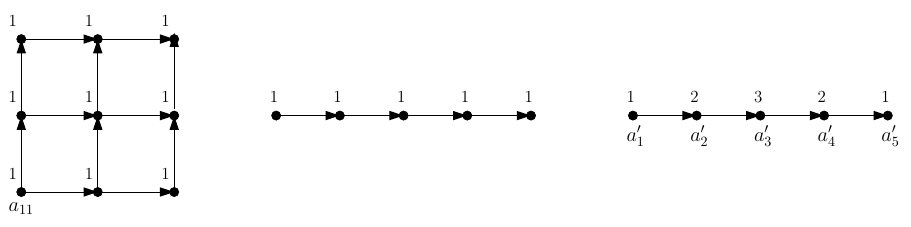}} at (0,0);
  \node at (-5.4,-2) {$\structA$};
  \node at (-0.8,-2) {$\structB$};
  \node at (4.8,-2) {$\structA'$};
\end{tikzpicture}
\caption{The valued structures from Examples~\ref{ex1} and~\ref{ex2}.}
\label{fig:ex1}
\end{figure}

\begin{example}
\label{ex1}
Consider the valued $\sigma$-structures $\structA$ and $\structB$ from Figure~\ref{fig:ex1}, 
with $\sigma=\{f,\mu\}$, where $f$ and $\mu$ are binary and unary function symbols, respectively.  
In Figure~\ref{fig:ex1}, $\mu$ is represented by the numbers labelling the nodes, and $f$ is represented as follows: 
pairs receiving cost $\infty$ are depicted as edges, while all remaining pairs are mapped to $0$. 
Observe that $\pos{\structA}$ and $\pos{\structB}$ are homomorphically equivalent. 
However, they are not (valued) equivalent. Indeed, 
consider the valued $\sigma$-structure $\structC$ with same universe $B$ as $\structB$ such that (i) for every $\tuple{x}\in B^{2}$,  
$f^{\structC}(\tuple{x})=0$ if $f^{\structB}(\tuple{x})=\infty$, otherwise $f^{\structC}(\tuple{x})=\infty$, and (ii) $\mu^{\structC}=\mu^{\structB}$. 
It follows that $\opt{\structA,\structC}=9$ and $\opt{\structB,\structC}=5$, and
thus $\structA\not\equiv \structB$. 
\end{example}

In the rest of the section, we give characterisations of equivalence in terms of certain types of homomorphisms and (valued) cores. 
We conclude with a useful characterisation of bounded treewidth modulo equivalence in terms of cores. 
In order to keep the flow uninterrupted, we defer (most of) the
proofs from this section to Appendix~\ref{app:equiv}.

\subsection{Inverse fractional homomorphisms}
\label{sec:inv-frac}

A homomorphism between two relational structures is a structure-preserving
mapping; e.g., for graphs, this means preserving adjacency -- edges are mapped
to edges only. An inverse homomorphism between two relational structures is a
mapping with the property that the preimage of any tuple in a relation in the
target structure can only contain tuples in the corresponding relation in the
source structure; e.g., for graphs, only edges (and not non-edges) from the
source graph can be mapped to the edges of the target graph.

A \emph{fractional} homomorphism between two valued structures has played
an important role in the work of Thapper and \v{Z}ivn\'y on the power of the
basic LP relaxation for VCSPs with a fixed valued constraint language~\cite{tz12:focs}. 
Intuitively, it is a probability distribution over mappings between the universes of the two structures with the property that the expected cost is not increased~\cite{tz12:focs}. 
In this paper, we will use a different but related notion of \emph{inverse} fractional homomorphism.

First, we need to define two elementary operations on valued structures (summation and scaling), as well as pointwise comparison. Let $\structA_1, \ldots, \structA_k$ be valued $\sigma$-structures. The valued $\sigma$-structure $\structC = \sum_{i} \structA_i$ has universe $C = \cup_{i} A_i$, and for each $f\in\sigma$ and $\tuple{x}\in C^{\ar{f}}$, $f^{\structC}(\tuple{x}) = \sum_{i : \tuple{x} \in A_i^{\ar{f}}} f^{\structA_i}(\tuple{x})$. Similarly, for $c\in\qplus$, the structure $c\cdot\structA_1$ is obtained from $\structA_1$ by scaling its functions by $c$, i.e. $f^{c\cdot\structA_1}(\tuple{x}) = c\cdot f^{\structA_1}(\tuple{x})$. Finally, if $\structA$ and $\structB$ are valued $\sigma$-structures with $A = B$, then we define the partial order $\structA \leq \structB$ pointwise, i.e. $\structA \leq \structB$ if and only if $f^{\structA}(\tuple{x}) \leq f^{\structB}(\tuple{x})$ for every symbol $f$ and tuple $\tuple{x}$. 
 
For sets $A$ and $B$, we denote by $B^A$ the set of all mappings $g:A\mapsto B$
from $A$ to $B$. For a probability distribution $\omega$ on $B^A$, the
\emph{support} of $\omega$ is the set $\supp{\omega} \defeq \{g\in B^A \mid
\omega(g)>0\}$. The following notion of an ``image valued structure'' will be useful.

\begin{definition}
\label{def:ga-new}
Let $\structA$ be a valued $\sigma$-structure and $g:A \mapsto B$ be a mapping.
We define $g(\structA)$ to be the valued $\sigma$-structure over universe
$g(A)$ such that, for all $f\in \sigma$ and $\tuple{y}\in g(A)^{\ar{f}}$,
  \[f^{g(\structA)}(\tuple{y})=f^{\structA}(g^{-1}(\tuple{y}))\defeq\sum_{\tuple{x}\in A^{\ar{f}}:
  g(\tuple{x})=\tuple{y}} f^{\structA}(\tuple{x}).\] If $\omega$ is a probability
distribution over $B^A$ we put $\omega(\structA)=\sum_{g\in
B^A}\omega(g)g(\structA)$. 
\end{definition}

Observe that, in the definition above, the universe of $\omega(\structA)$ is always $B$. Consequently, if $\structB$ is a valued $\sigma$-structure over $B$, then pointwise comparison is possible between $\omega(\structA)$ and $\structB$.

\begin{definition}\label{def:ifh}
Let $\structA, \structB$ be valued $\sigma$-structures. A probability
  distribution $\omega$ over $B^A$ is an \emph{inverse fractional homomorphism}
  (IFH) from $\structA$ to $\structB$ if $\omega(\structA)\vecpred\structB$.
\end{definition}

Example~\ref{ex2} on page~\pageref{ex2} describes an explicit example of an IFH. 

The following result relates improvement from Definition~\ref{def:improve} and inverse
fractional homomorphisms from Definition~\ref{def:ifh}.
The proof is based on Farkas' Lemma. 

\begin{proposition}
\label{prop:charfrac}
Let $\structA, \structB$ be valued $\sigma$-structures. Then, $\structA\less
  \structB$ if and only if there exists an IFH from $\structA$ to $\structB$.
\end{proposition}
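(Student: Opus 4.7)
The plan is to prove the two implications separately, with the forward direction (improvement $\Rightarrow$ existence of an inverse fractional homomorphism) resting on Farkas' lemma, as hinted in the statement.

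For the backward direction, I would fix any valued $\sigma$-structure $\structC$ and any mapping $h: B \mapsto C$, and view the inverse fractional homomorphism $\omega$ as a probability distribution on $B^A$. The key calculation is the expected cost
\[\sum_{g \in B^A} \omega(g)\cost{h \circ g} = \sum_{g \in B^A} \omega(g) \sum_{(f,\tuple{y}) \in \tup{\structA}} f^{\structA}(\tuple{y})\, f^{\structC}(h(g(\tuple{y}))),\]
which, after regrouping the inner sum by $\tuple{x} = g(\tuple{y})$ and using the definition of $f^{\structA}(g^{-1}(\tuple{x}))$, rewrites as $\sum_{(f,\tuple{x}) \in \tup{\structB}} f^{\structC}(h(\tuple{x})) \sum_{g} \omega(g) f^{\structA}(g^{-1}(\tuple{x}))$. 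Applying the defining inequality of $\omega$ bounds this by $\cost{h}$, so some $g \in \supp{\omega}$ satisfies $\cost{h \circ g} \leq \cost{h}$; choosing $h$ optimal for $(\structB,\structC)$ gives $\opt{\structA,\structC} \leq \opt{\structB,\structC}$.

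For the forward direction I argue by contrapositive: assuming no inverse fractional homomorphism from $\structA$ to $\structB$ exists, I construct a valued $\sigma$-structure $\structC$ with $\opt{\structA,\structC} > \opt{\structB,\structC}$. View the defining conditions as a linear system in variables $\omega(g) \geq 0$ for $g \in B^A$, consisting of one equality $\sum_g \omega(g) = 1$ and one inequality $\sum_g \omega(g) f^{\structA}(g^{-1}(\tuple{x})) \leq f^{\structB}(\tuple{x})$ per $(f,\tuple{x}) \in \tup{\structB}$ with $f^{\structB}(\tuple{x}) < \infty$. Infeasibility yields via Farkas' lemma nonnegative rational multipliers $\{y_{f,\tuple{x}}\}$ and a scalar $\lambda \in \mathbb{R}$ such that $\sum_{(f,\tuple{x})} y_{f,\tuple{x}} f^{\structB}(\tuple{x}) < \lambda$ and $\sum_{(f,\tuple{x})} y_{f,\tuple{x}} f^{\structA}(g^{-1}(\tuple{x})) \geq \lambda$ for every $g \in B^A$. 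Defining $\structC$ on universe $B$ by $f^{\structC}(\tuple{x}) \defeq y_{f,\tuple{x}}$ and repeating the regrouping identity from the first direction gives $\cost{g} = \sum_{(f,\tuple{x})} y_{f,\tuple{x}} f^{\structA}(g^{-1}(\tuple{x})) \geq \lambda$ for every $g: A \mapsto B$, while the identity $\id_B: B \mapsto B = C$ yields $\opt{\structB,\structC} \leq \cost{\id_B} < \lambda$, contradicting $\structA \preceq \structB$.

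The main obstacle I foresee is the careful handling of $\infty$-values. Infinite entries of $f^{\structB}$ render the corresponding constraints vacuous and simply drop out of the LP, but infinite entries of $f^{\structA}$ force $\omega(g) = 0$ on any $g$ that maps some $\infty$-tuple of $\structA$ to a finite-cost tuple of $\structB$, effectively restricting the support of $\omega$ to \emph{admissible} mappings. A short preliminary argument is needed for the degenerate case when no admissible mapping exists, most cleanly by exhibiting a direct witness $\structC$ with all-unit finite costs on finite-cost positions of $\structB$ (and $0$ elsewhere) that refutes $\structA \preceq \structB$; once admissibility is secured, the LP has only finitely many variables, finite rational coefficients, and Farkas' lemma (over $\mathbb{Q}$) applies verbatim to produce the rational multipliers used to build $\structC$.
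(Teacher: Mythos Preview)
Your plan follows the same overall strategy as the paper: the backward direction via an expected-cost computation, and the forward direction by contrapositive using Farkas' lemma to produce a separating valued structure $\structC$ on universe $B$. The paper routes the forward direction through an intermediate characterisation (improvement is equivalent to a ``for all weightings $c$, there exists $h$'' condition), whereas you go directly to constructing $\structC$; this is a cosmetic difference.

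There is, however, one genuine gap. You correctly note that the LP must be written over \emph{admissible} mappings $g$ only (those for which $f^{\structA}(g^{-1}(\tuple{x})) < \infty$ whenever $f^{\structB}(\tuple{x}) < \infty$), since otherwise the coefficient matrix has $\infty$ entries and Farkas does not apply. But then the Farkas certificate gives the inequality
\[
\sum_{(f,\tuple{x})} y_{f,\tuple{x}}\, f^{\structA}(g^{-1}(\tuple{x})) \geq \lambda
\]
only for admissible $g$, not for every $g \in B^A$ as you write. When you set $f^{\structC}(\tuple{x}) := y_{f,\tuple{x}}$, a \emph{non}-admissible $g$ will have some $\tuple{y}$ with $f^{\structA}(\tuple{y}) = \infty$ and $f^{\structB}(g(\tuple{y})) < \infty$; but if the corresponding multiplier $y_{f,g(\tuple{y})}$ happens to be $0$, that term contributes $\infty \cdot 0 = 0$ to $\cost{g}$, and nothing forces $\cost{g} \geq \lambda$. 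So $\opt{\structA,\structC} \geq \lambda$ is not established.

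The fix, which the paper uses, is to replace $y_{f,\tuple{x}}$ by $y_{f,\tuple{x}} + \varepsilon$ for a sufficiently small $\varepsilon > 0$ (concretely, small enough that $\varepsilon \sum_{(f,\tuple{x}) \in \tup{\structB}_{<\infty}} f^{\structB}(\tuple{x})$ is less than the slack in the strict inequality). This makes every $f^{\structC}(\tuple{x})$ with $f^{\structB}(\tuple{x}) < \infty$ strictly positive, so every non-admissible $g$ now has $\cost{g} = \infty$, while the strict inequality $\cost{\id_B} < \lambda$ survives. Your handling of the degenerate case $\G_{<\infty} = \emptyset$ via the all-ones witness is correct and essentially the same idea specialised to that situation; you just need to carry the perturbation into the main case as well.
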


\begin{remark}\label{rem:pos}
  Let us remark that an IFH $\omega$ from $\structA$ to $\structB$ is actually a distribution over the set of homomorphisms from $\pos{\structA}$ to $\pos{\structB}$, i.e., 
 every $g\in \supp{\omega}$ is a homomorphism from $\pos{\structA}$ to $\pos{\structB}$. 
 Indeed, for every $\tuple{x}\in R_f^{\pos{\structA}}$, where $f\in \sigma$ and $\tuple{x}\in A^{\ar{f}}$, 
 it must be the case that $f^{\structB}(g(\tuple{x}))\geq \sum_{h\in B^{A}} \omega(h)f^{\structA}(h^{-1}(g(\tuple{x}))) \geq  \omega(g)f^{\structA}(\tuple{x})>0$. 
Hence, $g(\tuple{x})\in R_f^{\pos{\structB}}$. In view of Proposition~\ref{prop:charfrac}, 
this offers another explanation of the fact that equivalence implies homomorphic equivalence (of the positive parts). 
\end{remark}

Let $\omega$ and $\omega'$ be inverse fractional homomorphisms from $\structA$ to $\structB$, and from $\structB$ to $\structC$, respectively. 
We define $\omega'\circ \omega: C^{A}\mapsto \qplus$ as 
\[\omega'\circ \omega(h)=\sum_{\substack{h_1:A\mapsto B, h_2:B\mapsto C\\ h_2\circ h_1=h}} \omega(h_1)\omega'(h_2).\]
Observe that $\omega'\circ \omega$ is an inverse fractional homomorphism from $\structA$ to $\structC$.

\subsection{Cores}

Appropriate notions of cores have played an important role in the complexity
classifications of left-hand side restricted CSPs~\cite{Grohe07:jacm}, right-hand
side restricted CSPs~\cite{Bulatov05:classifying,Bulatov17:focs,Zhuk20:jacm},
and right-hand side restricted VCSPs~\cite{tz16:jacm,Kolmogorov17:sicomp}. In
this paper, we will define cores around IFHs.

For two valued $\sigma$-structures $\structA$ and $\structB$, 
we say that an IFH $\omega$ from $\structA$ to $\structB$ is \emph{surjective} if every $g\in \supp{\omega}$ is surjective. 

\begin{definition}
A valued $\sigma$-structure $\structA$ is a \emph{core} if every IFH from $\structA$ to $\structA$ is surjective.
\end{definition}

Next we show that equivalent valued structure that are cores are in fact isomorphic. 

\begin{definition}
Let $\structA,\structB$ be valued $\sigma$-structures. 
An \emph{isomorphism} from $\structA$ to $\structB$ is a bijective mapping $h: A \mapsto B$ such that 
$f^{\structA}(\tuple{x}) = f^{\structB}(h(\tuple{x}))$ for all $(f,\tuple{x}) \in \tup{\structA}$. 
If such an $h$ exists, we say that $\structA$ and $\structB$ are \emph{isomorphic}. 
\end{definition}

\begin{proposition}
\label{prop:iso}
If $\structA$, $\structB$ are core valued $\sigma$-structures such that $\structA \equi \structB$, then $\structA$ and $\structB$ are isomorphic.
\end{proposition}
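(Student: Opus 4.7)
My plan is to invoke Proposition~\ref{prop:charfrac} in both directions, compose the two inverse fractional homomorphisms into one from $\structA$ to itself, use the core hypothesis to force the support to consist of bijections and then of automorphisms, and finally read off the isomorphism from any map in $\supp{\omega_1}$.

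Applying Proposition~\ref{prop:charfrac} to $\structA\less\structB$ and $\structB\less\structA$ yields an inverse fractional homomorphism $\omega_1:B^A\to\qplus$ from $\structA$ to $\structB$ and an inverse fractional homomorphism $\omega_2:A^B\to\qplus$ from $\structB$ to $\structA$. Define
\[
  \omega(h)\defeq\sum_{\substack{g_1,g_2\\ g_2\circ g_1=h}}\omega_1(g_1)\omega_2(g_2),\qquad \omega'(h)\defeq\sum_{\substack{g_1,g_2\\ g_1\circ g_2=h}}\omega_1(g_1)\omega_2(g_2).
\]
A routine double-sum calculation, chaining the two defining inequalities, shows that $\omega$ is an inverse fractional homomorphism from $\structA$ to itself and $\omega'$ one from $\structB$ to itself. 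The core hypothesis on $\structA$ and $\structB$ forces every map in $\supp{\omega}\cup\supp{\omega'}$ to be surjective, hence bijective on the finite universe. Because each element of $\supp{\omega}$ is of the form $g_2\circ g_1$ with $\omega_1(g_1)\omega_2(g_2)>0$, this propagates back to give $|A|=|B|$, and every map in $\supp{\omega_1}\cup\supp{\omega_2}$ is itself a bijection.

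The crux is upgrading $\supp{\omega}\subseteq\text{bijections}$ to $\supp{\omega}\subseteq\mathrm{Aut}(\structA)$. Since each $h\in\supp{\omega}$ is a bijection, the defining inequality at any $\tuple{x}$ with $f^{\structA}(\tuple{x})<\infty$ forces $f^{\structA}(h^{-1}(\tuple{x}))<\infty$ for each such $h$; summing this inequality over all $\tuple{x}$ with $f^{\structA}(\tuple{x})<\infty$, both sides collapse to the finite constant $\sum_{\tuple{z}:\,f^{\structA}(\tuple{z})<\infty}f^{\structA}(\tuple{z})$, so every individual inequality is tight. I then induct on the finitely many distinct finite values $v_1<v_2<\cdots<v_m$ of $f^{\structA}$ to show that each $h\in\supp{\omega}$ preserves each level set $S_{v_i}=\{\tuple{z}:f^{\structA}(\tuple{z})=v_i\}$: at level $v_1$ all terms of the convex combination summing to $v_1$ are bounded below by $v_1$, forcing equality; the step from $v_{<i}$ to $v_i$ uses that $h$ already permutes $\bigcup_{j<i}S_{v_j}$, so the preimage of a level-$v_i$ tuple cannot sit below level $v_i$. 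Hence $\supp{\omega}\subseteq\mathrm{Aut}(\structA)$, and symmetrically $\supp{\omega'}\subseteq\mathrm{Aut}(\structB)$. I expect the careful bookkeeping of $\infty$-valued tuples in this summation/tightness argument to be the main obstacle.

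To conclude, I fix an arbitrary $g_1\in\supp{\omega_1}$ and show it is an isomorphism. Picking any $g_2\in\supp{\omega_2}$ (nonempty), for every $g\in\supp{\omega_1}$ both $g_2\circ g$ and $g_2\circ g_1$ have positive weight in $\omega$, and thus lie in $\mathrm{Aut}(\structA)$. Consequently $g^{-1}\circ g_1=(g_2\circ g)^{-1}\circ(g_2\circ g_1)\in\mathrm{Aut}(\structA)$, so $f^{\structA}(g^{-1}(g_1(\tuple{x})))=f^{\structA}(\tuple{x})$ for every $g\in\supp{\omega_1}$. Averaging with $\omega_1$ and applying its defining inequality at $\tuple{y}=g_1(\tuple{x})$ gives
\[
  f^{\structA}(\tuple{x})=\sum_{g\in\supp{\omega_1}}\omega_1(g)\,f^{\structA}(g^{-1}(g_1(\tuple{x})))\leq f^{\structB}(g_1(\tuple{x})).
\]
The symmetric argument applied to $\omega_2$ and $\mathrm{Aut}(\structB)$ (via $\omega'$) gives $f^{\structB}(g_1(\tuple{x}))\leq f^{\structA}(g_2(g_1(\tuple{x})))=f^{\structA}(\tuple{x})$, where the last equality uses $g_2\circ g_1\in\mathrm{Aut}(\structA)$. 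Equality throughout shows $g_1$ is value-preserving, and being a bijection $A\to B$, an isomorphism from $\structA$ to $\structB$.
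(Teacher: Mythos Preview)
Your proof is correct. The high-level strategy (apply Proposition~\ref{prop:charfrac} both ways, compose to exploit the core hypothesis, then induct on the finitely many value levels) matches the paper's, but the organisation of the value-level induction is genuinely different.

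The paper runs its induction directly on the cross homomorphisms $\omega:\structA\to\structB$ and $\omega':\structB\to\structA$: at each level it argues that the $i$-th smallest value of $f^{\structA}$ equals the $i$-th smallest value of $f^{\structB}$, then uses a counting argument ($|\{\tuple{x}:f^{\structA}(\tuple{x})=v^i_f\}|=|\{\tuple{x}:f^{\structB}(\tuple{x})=v^i_f\}|$) to force any $g\in\supp{\omega}$ to preserve that level set. You instead first run the induction on the \emph{self}-compositions $\omega:\structA\to\structA$ and $\omega':\structB\to\structB$, where your summation trick---adding the defining inequalities over all finite-valued tuples and observing that both sides collapse to the same finite constant---gives tightness of every inequality for free and makes the level-by-level step almost trivial. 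Only after establishing $\supp{\omega}\subseteq\mathrm{Aut}(\structA)$ and $\supp{\omega'}\subseteq\mathrm{Aut}(\structB)$ do you transfer this to the cross maps via the group-theoretic identity $g^{-1}\circ g_1=(g_2\circ g)^{-1}\circ(g_2\circ g_1)$. This is arguably cleaner: it separates the combinatorial core of the argument (automorphisms of a single structure) from the final comparison of $\structA$ and $\structB$, and avoids the back-and-forth counting in the paper's induction. The paper's route, on the other hand, is slightly more direct in that it never introduces automorphism groups explicitly.
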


We now introduce the central notion of a core of a valued structure and show that every valued structure has a unique core (up to isomorphism). 

\begin{definition}
Let $\structA,\structB$ be valued $\sigma$-structures. We say that $\structB$ is a \emph{core} of $\structA$ if $\structB$ is a core and 
$\structA\equi\structB$.
\end{definition}

\begin{proposition}
\label{prop:exist-core}
Every valued structure $\structA$ has a core and all cores of $\structA$ are isomorphic. 
Moreover, for a given valued structure $\structA$, it is possible to effectively compute a core of $\structA$ and all cores of $\structA$ are over a universe of size at most $|A|$.
\end{proposition}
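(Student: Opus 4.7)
The plan is to prove existence and the size bound by iteratively shrinking $\structA$ using inverse fractional homomorphisms that witness non-coreness, and to deduce uniqueness from Proposition~\ref{prop:iso}. Effective computation will reduce to solving a finite collection of linear programs at each step.

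For existence, I would run the following procedure. Set $\structA_0 \defeq \structA$. Given $\structA_i$, test whether it is a core. If so, output $\structA_i$. Otherwise, by definition of \emph{core}, there exist an inverse fractional homomorphism $\omega$ from $\structA_i$ to $\structA_i$ and a non-surjective $g \in \supp{\omega}$; set $\structA_{i+1} \defeq g(\structA_i)$. Proposition~\ref{prop:equi-supp} gives $\structA_{i+1} \equi \structA_i$, and non-surjectivity of $g$ gives $|A_{i+1}| < |A_i|$, so the procedure terminates after at most $|A|$ steps. Transitivity of $\equi$ (immediate from its definition via the preorder $\less$) then yields a core $\structB$ of $\structA$ with $|B| \leq |A|$, establishing both existence and the size bound.

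For uniqueness, let $\structB_1$ and $\structB_2$ be two cores of $\structA$. Then $\structB_1 \equi \structA \equi \structB_2$, so $\structB_1 \equi \structB_2$, and Proposition~\ref{prop:iso} supplies an isomorphism. For effective computation, it suffices to make the test-and-shrink step algorithmic. By Proposition~\ref{prop:charfrac} (or directly from the definition), the inverse fractional homomorphisms from $\structA_i$ to itself form the rational solution set of an explicit linear system in the variables $(\omega(g))_{g : A_i \to A_i}$. For each non-surjective $g : A_i \to A_i$, maximise $\omega(g)$ subject to that system; $\structA_i$ is a core iff every such maximum is $0$, and otherwise any optimiser with $\omega(g) > 0$ yields the required witness.

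The main obstacle has already been absorbed by Propositions~\ref{prop:equi-supp} and~\ref{prop:iso}: the former is precisely what makes the shrinking step preserve equivalence, and the latter is what forces uniqueness once two candidate cores have been produced. The remaining subtlety is purely one of complexity: the LP above has $|A_i|^{|A_i|}$ variables, so the procedure as described runs in single-exponential time, which is acceptable for the ``effectively compute'' claim in the sense of computability asserted by the proposition.
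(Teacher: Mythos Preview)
Your proposal is correct and follows essentially the same approach as the paper: iterated shrinking via Proposition~\ref{prop:equi-supp} for existence and the size bound, Proposition~\ref{prop:iso} for uniqueness, and an LP formulation for the computability of each step. The only cosmetic difference is that the paper tests coreness with a single LP (minimising $-\sum_{g\ \text{non-surjective}}\omega(g)$ over the feasibility polytope) rather than one LP per non-surjective $g$, but both yield the same decidability conclusion.
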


Proposition~\ref{prop:exist-core} allows us to speak about \emph{the} core of a valued structure. 
It follows then that equivalence can be characterised in terms of cores: $\structA$ and $\structB$ are equivalent if 
and only if their cores are isomorphic. 

We conclude with a technical characterisation of the property of being a core that will be important in the paper. 
Intuitively, it states that every non-surjective mapping from a core $\structA$ to itself is suboptimal with respect to a fixed weighting of the tuples of $\structA$. 

\begin{proposition}
\label{prop:core-char}
Let $\structA$ be a valued $\sigma$-structure. Then, $\structA$ is a core if and only if there exists a mapping $c: \tup{\structA} \mapsto \qplus$ such that for every non-surjective mapping $g: A \mapsto A$,
\[\sum_{(f,\tuple{x})\in \tup{\structA}}f^{\structA}(\tuple{x})c(f,\tuple{x}) <
  \sum_{(f,\tuple{x})\in \tup{\structA}}f^{\structA}(\tuple{x})c(f,g(\tuple{x})).\]
Moreover, such a mapping $c: \tup{\structA} \mapsto \qplus$ is computable, whenever $\structA$ is a core. 
\end{proposition}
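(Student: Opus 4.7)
I would prove both directions separately: the forward implication via a Farkas-lemma duality argument, and the backward implication by a direct contradiction that leverages Proposition~\ref{prop:equi-supp}.

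For the only-if direction, assume $\structA$ is a core. I would set up the linear program whose variables are $c(f, \tuple{x}) \geq 0$ for every $(f, \tuple{x}) \in \tup{\structA}$, with the additional stipulation that $c(f, \tuple{x}) = 0$ whenever $f^{\structA}(\tuple{x}) = \infty$ so that all quantities remain finite, subject to one constraint per non-surjective $g: A \to A$,
\[
\sum_{(f, \tuple{x}) \in \tup{\structA}} f^{\structA}(\tuple{x}) \bigl[c(f, g(\tuple{x})) - c(f, \tuple{x})\bigr] \geq 1.
\]
By Farkas' lemma, infeasibility would yield non-negative multipliers $\lambda(g)$, not all zero, such that $\sum_g \lambda(g) \bigl[f^{\structA}(g^{-1}(\tuple{y})) - f^{\structA}(\tuple{y})\bigr] \leq 0$ for every $(f, \tuple{y})$. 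Normalising $\lambda$ to a probability distribution $\omega$ on non-surjective maps would produce an inverse fractional homomorphism from $\structA$ to itself whose support consists entirely of non-surjective maps, contradicting that $\structA$ is a core. Hence the LP is feasible; since its data is rational, a rational $c$ can be computed in polynomial time (e.g., by the ellipsoid method), and trivial scaling converts the $\geq 1$ constraint into the required strict inequality.

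For the if-direction, suppose $c$ exists and, for contradiction, that some inverse fractional homomorphism $\omega$ from $\structA$ to itself has a non-surjective $g^* \in \supp{\omega}$. By Proposition~\ref{prop:equi-supp}, $g^*(\structA) \equi \structA$, and in particular $g^*(\structA) \less \structA$. I would then introduce the valued $\sigma$-structure $\structC$ with universe $A$ defined by $f^{\structC}(\tuple{y}) := c(f, \tuple{y})$, and let $S := \sum_{(f,\tuple{x})} f^{\structA}(\tuple{x}) c(f, \tuple{x})$. The identity map certifies $\opt{\structA, \structC} \leq S$. On the other hand, for any $h : g^*(A) \to A$, a direct calculation using Definition~\ref{def:ga} shows that the cost of $h$ on the instance $(g^*(\structA), \structC)$ equals the cost of $h \circ g^* : A \to A$ on the instance $(\structA, \structC)$; since $h \circ g^*$ factors through $g^*(A) \subsetneq A$, it is non-surjective, so by the hypothesis on $c$ its cost is strictly greater than $S$. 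Taking the minimum over the finite set of such $h$'s yields $\opt{g^*(\structA), \structC} > S \geq \opt{\structA, \structC}$, contradicting $g^*(\structA) \less \structA$.

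The main technical obstacle is the Farkas step: identifying the dual variables so they correspond precisely to inverse fractional homomorphisms supported on non-surjective maps, and handling the finitely many $\infty$-valued tuples cleanly (my choice is to force $c$ to vanish on them up front). Once this is set up, the rest reduces to bookkeeping, and the computability claim follows immediately from the polynomial-time solvability of LPs with rational data.
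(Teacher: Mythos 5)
Your overall strategy matches the paper's: a Farkas/Motzkin duality argument to produce $c$ in the forward direction, and a contradiction via the auxiliary valued structure $\structC$ with $f^{\structC}(\tuple{y}) \defeq c(f,\tuple{y})$ in the backward direction. The backward direction is correct as written; you route through Proposition~\ref{prop:equi-supp}, while the paper goes slightly more directly through Proposition~\ref{prop:nonsurj} (of which \ref{prop:equi-supp} is itself a consequence), but both arguments work.

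There is, however, a genuine gap in the forward direction which your ``bookkeeping'' remark underestimates. To keep the LP over rational data you must exclude from the constraint set those non-surjective $g$ with $g \notin \H_{<\infty}$, i.e., those for which some $(f,\tuple{x})$ has $f^{\structA}(\tuple{x}) = \infty$ but $f^{\structA}(g(\tuple{x})) < \infty$: for such $g$, the term $f^{\structA}(\tuple{x})\, c(f,g(\tuple{x}))$ equals $0$ or $\infty$ depending on whether $c(f,g(\tuple{x}))$ is zero or positive, so it is not an affine function of $c$ with rational coefficients. Forcing $c(f,\tuple{x}) = 0$ on $\infty$-weight tuples handles the subtrahend but not this term. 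Once the LP is restricted to $g \in \H_{<\infty}$ (for which the coefficients $f^{\structA}(g^{-1}(\tuple{y})) - f^{\structA}(\tuple{y})$ are finite on $\tup{\structA}_{<\infty}$), your Farkas dual argument is sound and does produce a valid $c$ for those $g$. But the proposition demands the strict inequality for \emph{all} non-surjective $g$, and the $c$ returned by the LP may vanish on some finite-weight tuples; a $g \notin \H_{<\infty}$ that maps an $\infty$-weight tuple onto one of these would then not be covered, and ``trivial scaling'' cannot help because scaling preserves zeros. The paper closes this gap by adding a small $\varepsilon > 0$ to $c(f,\tuple{x})$ for every $(f,\tuple{x}) \in \tup{\structA}_{<\infty}$, chosen so small that it does not destroy the strict inequalities won from Farkas for $g\in\H_{<\infty}$, while guaranteeing $c > 0$ on finite-weight tuples and hence an $\infty$ on the right-hand side for every $g \notin \H_{<\infty}$. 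You should add this perturbation step (and the two degenerate base cases $\tup{\structA}_{<\infty} = \emptyset$ and $\S \cap \H_{<\infty} = \emptyset$, which the paper disposes of separately) to make the forward direction complete.
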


\begin{example}
\label{ex2}
Let $\structA$ and $\structA'$ be the valued $\sigma$-structures depicted in Figure~\ref{fig:ex1}. 
Recall that $\sigma=\{f,\mu\}$, where $f$ is a $\{0,\infty\}$-valued binary function (represented by edges) and $\mu$ is unary (represented by node labels). 
Also, the elements of $\structA$ are denoted by $a_{ij}$, where $i$ and $j$ indicate the corresponding row and column of the grid, respectively  
(for readability, only $a_{11}$ is depicted in Figure~\ref{fig:ex1}).
We claim that $\structA'$ is the core of $\structA$. Indeed, since $\pos{\structA'}$ is a relational core, it follows that $\structA'$ is a core. 
To see that $\structA\less \structA'$, 
let $g:A\mapsto A'$ be the mapping that maps all elements in the $i$-th diagonal of $\structA$ (first diagonal is $\{a_{11}\}$, second diagonal is $\{a_{21},a_{12}\}$, and so on) 
to $a'_i$. Assigning $\omega(g)=1$ gives an IFH $\omega$ from $\structA$ to $\structA'$, 
and thus $\structA\less \structA'$ by Proposition~\ref{prop:charfrac}. 

Conversely, consider the mappings $g_1$,$g_2$,$g_3$,$g_4$,$g_5$,$g_6$ from $A'$ to $A$ 
that map $(a'_1,a'_2,a'_3,a'_4,a'_5)$ to $(a_{11},a_{21},a_{31},a_{32},a_{33})$, $(a_{11},a_{21},a_{22},a_{32},a_{33})$, $(a_{11},a_{12},a_{22},a_{23},a_{33})$, $(a_{11},a_{12},a_{13},a_{23},a_{33})$, $(a_{11},a_{12},a_{22},a_{32},a_{33})$ and $(a_{11},a_{21},a_{22},a_{23},a_{33})$, respectively. We define the distribution $\omega'(g_1)=1/3$, $\omega'(g_2)=1/12$, $\omega'(g_3)=1/12$, $\omega'(g_4)=1/3$, $\omega'(g_5)=1/12$ and $\omega'(g_6)=1/12$. Then, we have
\[
\sum_{k}\omega'(g_k)\mu^{\structA'}(g_k^{-1}(a_{ij})) =
\begin{cases}
(2 \times 1/3 + 4 \times 1/12) \times 1 \quad &\text{if } a_{ij} \in \{a_{11},a_{33}\}\\
(1/3 + 2 \times 1/12) \times 2 \quad &\text{if } a_{ij} \in \{a_{12},a_{21},a_{23},a_{32}\}\\
(1/3) \times 3 \quad &\text{if } a_{ij} \in \{a_{13},a_{31}\}\\
(4 \times 1/12) \times 3 \quad &\text{if } a_{ij} = a_{22}
\end{cases}
\]
and hence for all $i,j$, $\sum_{k}\omega'(g_k)\mu^{\structA'}(g_k^{-1}(a_{ij}))
  = 1 \leq \mu^{\structA}(a_{ij})$. It follows that $\omega'$ is an IFH from $\structA'$ to $\structA$. Therefore, $\structA'\less\structA$ and $\structA'$ is the core of $\structA$. In particular, $\structA$ is not a core. 
As we explain later in Example~\ref{ex:cores}, it is possible to modify $\structA$ (more precisely, $\mu^{\structA}$) so that it becomes a core. 

\end{example}

\subsection{Treewidth modulo equivalence} 

In this section we show an elementary property of cores that is crucial for our purposes: the treewidth of the core of a valued structure $\structA$ is the lowest possible among all structures equivalent to $\structA$.

\begin{proposition}
\label{prop:core-tw}
Let $\structA$ be a valued $\sigma$-structure and $\structA'$ be its core. Then, $\tw{\structA'}\leq \tw{\structA}$. 
\end{proposition}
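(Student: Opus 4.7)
The plan is to leverage the iterative construction of the core given in the proof of Proposition~\ref{prop:exist-core}. That construction produces a sequence $\structA = \structA^{(0)}, \structA^{(1)}, \ldots, \structA^{(k)}$ whose final structure is isomorphic to $\structA'$, and where at each step $\structA^{(i)} = g^{(i)}(\structA^{(i-1)})$ for some non-surjective mapping $g^{(i)}:A^{(i-1)} \mapsto A^{(i-1)}$ lying in the support of an inverse fractional homomorphism from $\structA^{(i-1)}$ to itself. Since treewidth is invariant under isomorphism, it suffices to show that $\tw{\structA^{(i)}} \leq \tw{\structA^{(i-1)}}$ at each step, and then chain the inequalities.

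For the single-step bound, abbreviate $\structB = \structA^{(i-1)}$, $g = g^{(i)}$, and $\structB' = g(\structB)$, so $B' = g(B) \subseteq B$. By the remark following Proposition~\ref{prop:charfrac}, $g$ is in particular a homomorphism from $\pos{\structB}$ to $\pos{\structB}$. I will verify that for every $f \in \sigma$, every tuple $\tuple{x} \in R_f^{\pos{\structB'}}$ also lies in $R_f^{\pos{\structB}}$ and has all its coordinates in $B'$: by Definition~\ref{def:ga}, such an $\tuple{x}$ must equal $g(\tuple{y})$ for some $\tuple{y} \in R_f^{\pos{\structB}}$, whence $g(\tuple{y}) \in R_f^{\pos{\structB}}$ because $g$ is a homomorphism, and its coordinates automatically lie in $g(B) = B'$. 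It follows that every edge of the Gaifman graph $G(\pos{\structB'})$ is also an edge of the induced subgraph of $G(\pos{\structB})$ on $B'$, and hence restricting any tree decomposition of $G(\pos{\structB})$ to $B'$ by intersecting every bag with $B'$ yields a tree decomposition of $G(\pos{\structB'})$ of no larger width. This gives $\tw{\structB'} \leq \tw{\structB}$.

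The main subtlety I anticipate is avoiding a seductive but incorrect shortcut: one might try to use a single surjective homomorphism $h: \pos{\structA} \to \pos{\structA'}$ (arising from some i.f.h.\ witnessing $\structA \less \structA'$) and push a tree decomposition of $\pos{\structA}$ forward by $h$. This fails because the preimages $h^{-1}(u')$ need not induce connected subsets of $T$, so the connectivity axiom of tree decompositions can be violated. The iterative, one-collapse-at-a-time approach outlined above circumvents this by keeping each intermediate universe inside $A$, which is precisely what enables the clean induced-subgraph argument.
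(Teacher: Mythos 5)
Your proof is correct but takes a genuinely different path from the paper's. The paper argues directly from the definition of core: it picks inverse fractional homomorphisms $\omega$ from $\structA'$ to $\structA$ and $\omega'$ back, takes any $g\in\supp{\omega}$, and proves $g$ is an \emph{injective} homomorphism from $\pos{\structA'}$ into $\pos{\structA}$; injectivity is obtained by contradiction because the composition $\omega'\circ\omega$ is an inverse fractional homomorphism from $\structA'$ to itself, and a non-injective $g$ would force a non-injective (hence non-surjective) map $g'\circ g$ into its support, contradicting that $\structA'$ is a core. You instead unwind the iterative core construction from Proposition~\ref{prop:exist-core} and observe that each one-step image has its positive part literally contained (as a substructure with nested universe) in the previous one, so no injectivity argument is needed. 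Both are sound. The paper's argument is more self-contained, relying only on the abstract definitions of inverse fractional homomorphism and core; yours is more elementary step-by-step but leans on the particular core-finding procedure and requires the (correctly supplied) observation that the iteration terminates in a structure isomorphic to $\structA'$, so that treewidth transfers via Proposition~\ref{prop:iso}. Your closing caution about the naive ``push a tree decomposition forward via a surjective homomorphism'' shortcut identifies a genuine pitfall that both proofs avoid.
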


\begin{proof}
Since treewidth is preserved under relational substructures, it suffices to show that 
$\pos{\structA'}$ is isomorphic to a substructure of $\pos{\structA}$, i.e., 
there is an injective homomorphism from $\pos{\structA'}$ to $\pos{\structA}$. 
Let $\omega$ and $\omega'$ be IFHs from $\structA'$ to $\structA$, and from $\structA$ to $\structA'$, 
respectively. 
Pick any mapping $g\in\supp{\omega}$. 
As observed at the end of Section~\ref{sec:inv-frac}, $g$ has to be a homomorphism from $\pos{\structA'}$ to $\pos{\structA}$. 
It suffices to show that $g$ is injective. But this follows immediately from the
fact that $\omega'\circ\omega$ is an IFH from $\structA'$ to itself and the
 assumption that $\structA'$ is a core.
\end{proof}

We conclude Section~\ref{sec:equiv} with the following useful characterisation of ``being equivalent to a bounded treewidth structure'' in terms of cores. 

\begin{proposition}
\label{prop:tw-equiv-core}
Let $\structA$ be a valued $\sigma$-structure and $k\geq 1$. Then, the following are equivalent:
\begin{enumerate}
\item There is a valued $\sigma$-structure $\structA'$ such that $\structA'\equi\structA$ and $\structA'$ has treewidth at most $k$.
\item The treewidth of the core of $\structA$ is at most $k$. 
\end{enumerate}
\end{proposition}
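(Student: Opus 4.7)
The plan is to deduce both implications directly from the results already established in Section~\ref{sec:equiv}, since Proposition~\ref{prop:core-tw} and Proposition~\ref{prop:exist-core} essentially package everything we need.

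For the direction $(2)\Rightarrow(1)$, I would simply take $\structA'$ to be the core of $\structA$, whose existence is guaranteed by Proposition~\ref{prop:exist-core}. By the definition of a core of $\structA$, we have $\structA'\equi\structA$, and by hypothesis $\tw{\structA'}\leq k$, so $(1)$ holds with this choice.

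For the direction $(1)\Rightarrow(2)$, let $\structA''$ denote the core of $\structA$, which exists and is unique up to isomorphism by Proposition~\ref{prop:exist-core}. The key observation is that $\structA''$ is also a core of $\structA'$: equivalence is transitive (this is immediate from Definition~\ref{def:equiv}, since the improvement relation is transitive by Definition~\ref{def:improve}), so from $\structA'\equi\structA$ and $\structA''\equi\structA$ we obtain $\structA''\equi\structA'$; combined with the fact that $\structA''$ is a core, this means $\structA''$ is a core of $\structA'$ in the sense of the definition preceding Proposition~\ref{prop:exist-core}. Applying Proposition~\ref{prop:core-tw} to $\structA'$ and its core $\structA''$ yields $\tw{\structA''}\leq \tw{\structA'}\leq k$, which is exactly $(2)$.

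There is no real obstacle here — the statement is essentially a corollary obtained by combining the existence/uniqueness of cores (Proposition~\ref{prop:exist-core}) with the treewidth-minimality of the core (Proposition~\ref{prop:core-tw}) and the transitivity of $\equi$. The only small point requiring a line of justification is that $\structA\equi\structA'$ forces $\structA$ and $\structA'$ to share (up to isomorphism) the same core, which is where the uniqueness clause of Proposition~\ref{prop:exist-core} is used.
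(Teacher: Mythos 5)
Your proof is correct and follows essentially the same route as the paper: the $(2)\Rightarrow(1)$ direction is immediate, and the $(1)\Rightarrow(2)$ direction combines transitivity of $\equi$, uniqueness of cores, and Proposition~\ref{prop:core-tw}. The only cosmetic difference is that the paper introduces a separate core $\structB'$ of $\structA'$ and invokes Proposition~\ref{prop:iso} to identify it with the core of $\structA$, whereas you argue directly that the core of $\structA$ is itself a core of $\structA'$ — both arguments rest on the same facts.
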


\begin{proof}
(2) $\Rightarrow$ (1) is immediate. 
For (1) $\Rightarrow$ (2), let $\structA'$ be of treewidth at most $k$ such that $\structA'\equi\structA$. 
Let $\structB$ and $\structB'$ be the cores of $\structA$ and $\structA'$, respectively. 
Since $\structB\equi\structB'$, $\structB$ and $\structB'$ are isomorphic, by Proposition~\ref{prop:iso}. 
By Proposition~\ref{prop:core-tw}, the treewidth of $\structB'$ is at most $k$, and so is the treewidth of $\structB$. 
\end{proof}

\section{Complexity of \problem{VCSP}($\C$, $-$)}
\label{sec:compl}

Let $\C$ be a class of valued structures. 
We say that $\C$ has \emph{bounded arity} if there is a constant $r\geq 1$ such that for every valued $\sigma$-structure $\structA\in \C$ and $f\in \sigma$, 
we have that $\ar{f}\leq r$. 
Similarly, we say that $\C$ has bounded treewidth \emph{modulo equivalence} if there is a constant $k\geq 1$ such that every 
$\structA\in \C$ is equivalent to a valued structure $\structA'$ with $\tw{\structA'}\leq k$.
The following is our first main result.

\begin{theorem}[\textbf{Complexity classification}]
\label{theo:main}
Assume FPT $\neq$ W[1]. Let $\C$ be a recursively enumerable class of valued structures of bounded arity. Then, the following are equivalent: 
\begin{enumerate}
\item \problem{VCSP}($\C$, $-$) is in PTIME.
\item $p$-\problem{VCSP}($\C$, $-$) is in FPT. 
\item $\C$ has bounded treewidth modulo equivalence. 
\end{enumerate}
\end{theorem}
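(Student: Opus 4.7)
My plan is to establish the cycle $(1)\Rightarrow (2)\Rightarrow (3)\Rightarrow (1)$. The implication $(1)\Rightarrow (2)$ is immediate, since any PTIME algorithm is fixed-parameter tractable with a constant parameter function.

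For $(3)\Rightarrow (1)$, let $k$ be the constant bounding treewidth modulo equivalence for $\C$; by Proposition~\ref{prop:tw-equiv-core}, the valued core of every $\structA\in\C$ itself has treewidth at most $k$. Given $(\structA,\structB)$, I will invoke the forthcoming Theorem~\ref{thm:sa}: under this hypothesis, the $(k+1)$-st level of the Sherali-Adams LP hierarchy computes $\opt{\structA,\structB}$ exactly. Since Sherali-Adams at a fixed level is solvable in polynomial time in $|\structA|+|\structB|$, this yields the desired PTIME algorithm, and crucially avoids having to compute a core of $\structA$ explicitly (a task which is not known to be polynomial in general).

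The main obstacle is $(2)\Rightarrow (3)$, which I will prove contrapositively: assuming $\C$ has unbounded treewidth modulo equivalence, I will show that $p$-\problem{VCSP}($\C$, $-$) is W[1]-hard. By Proposition~\ref{prop:tw-equiv-core} the class $\C^\star$ of valued cores of members of $\C$ has unbounded treewidth. The natural first attempt is to fpt-reduce from $p$-\problem{CSP}($\{\pos{\structA'}:\structA'\in\C^\star\}$, $-$) by encoding a relational right-hand side $\rels{B}$ as the $\{0,\infty\}$-valued structure $\structB$ with $f^{\structB}(\tuple{y})=0$ iff $\tuple{y}\in R_f^{\rels{B}}$, so that $\opt{\structA,\structB}=\opt{\structA',\structB}=0$ precisely when $\pos{\structA'}\to\rels{B}$. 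This fails to apply Grohe's theorem directly, because the positive part of a valued core need not itself be a relational core; the treewidth modulo homomorphic equivalence of $\pos{\structA'}$ can be strictly smaller than $\tw{\structA'}$.

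To repair the reduction I will use the weighting $c:\tup{\structA'}\to\qplus$ from Proposition~\ref{prop:core-char}, which witnesses that every non-surjective self-map of $\structA'$ has strictly higher $c$-weighted cost than the identity. Using $c$, I will enrich the right-hand side so that any mapping collapsing $\structA'$ incurs an extra penalty, pinning optimal mappings to bijections onto an isomorphic copy of $\structA'$ inside $\structB$. This restores the rigidity that Grohe exploits, after which the excluded-grid/bramble machinery, applied to the Gaifman graph of $\structA'$ (which coincides with that of $\pos{\structA'}$ and has treewidth $\tw{\structA'}$), yields an fpt-reduction from the W[1]-hard partitioned subgraph isomorphism problem. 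The hardest step will be to carry out this enrichment in a way compatible with the fpt parameterisation by $|\structA|$ and with the effective computation of $c$ and of the required combinatorial witnesses, using the recursive enumerability of $\C$ to locate, for each source instance, a valued core of sufficiently large treewidth inside $\C^\star$.
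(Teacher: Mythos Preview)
Your proposal is correct and follows essentially the same route as the paper: the paper also derives $(3)\Rightarrow(1)$ from Theorem~\ref{thm:sa}, and for $(2)\Rightarrow(3)$ it gives exactly the fpt-reduction you outline---enumerate $\C$ to find a member whose valued core $\structA'$ has large treewidth, apply the excluded-grid machinery (via Grohe's Lemma~\ref{lemma:grohe}, stated in terms of \emph{surjective} rather than bijective $\pi\circ h$), and then weight the right-hand side with the function $c^*$ from Proposition~\ref{prop:core-char} to penalise non-surjective compositions. The only cosmetic difference is that the paper reduces from $p$-\problem{CLIQUE} rather than partitioned subgraph isomorphism, but the role of the source problem and the threshold $M^*=\sum f^{\structA'}(\tuple{x})c^*(f,\tuple{x})$ are identical to what you describe.
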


\begin{remark}
Although Grohe's result~\cite{Grohe07:jacm} for CSPs looks almost identical to
Theorem~\ref{theo:main}, we emphasise that his result involves a different type of equivalence. In
Grohe's case, the equivalence in question is \emph{homomorphic equivalence}
whereas in our case the equivalence in question involves \emph{improvement} (cf.
Definition~\ref{def:equiv}). As we will explain later in this section, 
Grohe's classification follows as a special case of Theorem~\ref{theo:main}. 
\end{remark}

\begin{remark}
As in~\cite{Grohe07:jacm}, we can remove the condition in Theorem~\ref{theo:main} of $\C$ being recursively enumerable, 
by assuming a stronger hypothesis than FPT $\neq$ W[1] regarding non-uniform complexity classes.
\end{remark}

Bounded treewidth implies bounded treewidth modulo equivalence but we will show
in Example~\ref{ex:btw} that the converse is not true in general. Thus
Theorem~\ref{theo:main} gives new tractable cases compared to classes of VCSPs
of (previously known to be tractable) bounded treewidth.

Bounded treewidth modulo equivalence implies bounded treewidth modulo
homomorphic equivalence (of the positive parts) but we will show in
Example~\ref{ex:cores} below that the converse is not true in general either.
Therefore, Theorem~\ref{theo:main} tells us that the tractability frontier for
VCSP($\C$, $-$) lies \emph{strictly} between bounded treewidth and bounded
treewidth modulo homomorphic equivalence. 

\begin{example}
\label{ex:btw}
We construct a class of structures of unbounded treewidth whose cores are of
bounded treewidth (in fact treewidth $1$).

By Proposition~\ref{prop:tw-equiv-core}, a class $\C$ has bounded treewidth
modulo equivalence if and only if the class given by the \emph{cores} of the
valued structures in $\C$ has bounded treewidth. Thus we obtain a separation of
bounded treewidth and bounded treewidth modulo equivalence.

Consider the signature $\sigma=\{f,\mu\}$, where $f$ and $\mu$ 
are binary and unary function symbols, respectively. 
For $n\geq 1$, let $\structA_n$ be the valued $\sigma$-structure with universe $A_n=\{1,\dots,n\}\times \{1,\dots,n\}$ such that 
(i) $f^{\structA_n}((i,j),(i',j'))=\infty$ if $i\leq i'$, $j\leq j'$, and $(i'-i)+(j'-j)=1$; otherwise $f^{\structA_n}((i,j),(i',j'))=0$, and (ii) $\mu^{\structA_n}((i,j))=1$, for all $(i,j)\in A_n$. 
Also, for $n\geq 1$, let $\structA'_n$ be the valued $\sigma$-structure with universe $A'_n=\{1,\dots,2n-1\}$ such that 
(i) $f^{\structA_n'}(i,j)=\infty$ if $j=i+1$; otherwise $f^{\structA_n'}(i,j)=0$, 
and (ii) $\mu^{\structA_n'}(i)=i$, for $1\leq i\leq n$, and $\mu^{\structA_n'}(i)=2n-i$, for $n+1\leq i\leq 2n-1$. 
The structures $\structA$ and $\structA'$ from Figure~\ref{fig:ex1} correspond
to $\structA_3$ and $\structA'_3$, respectively; informally $\structA_n$ is a
crisp directed grid of size $n \times n$ with a unary function $\mu$ with weight
$1$ applied to each element. Generalising the reasoning behind
Example~\ref{ex2}, we argue in Appendix~\ref{app:example} that for each $n \geq
1$ the valued structure $\structA'_n$ is the core of $\structA_n$. Since
$\tw{\structA_n'}=1$, the class $\C\defeq\{\structA_n\mid n\geq 1\}$ has bounded
treewidth modulo equivalence. However, $\C$ has unbounded treewidth as the
Gaifman graphs in $\{G(\pos{\structA_n})\mid n\geq 1\}$ correspond to the class
of (undirected) \emph{grids}, which is a well-known family of graphs with
unbounded treewidth (see, e.g.~\cite{diestel10:graph}). We also describe in
Appendix~\ref{app:example} how to alter the definition of $\C$ to obtain a class
of \emph{finite-valued} structures (taking on finite values in $\qplus$)
that has bounded treewidth modulo equivalence but Gaifman graphs of unbounded treewidth.
\end{example}

\begin{figure}
\centering
\begin{tikzpicture}
  \pgftext{\includegraphics[scale=0.4]{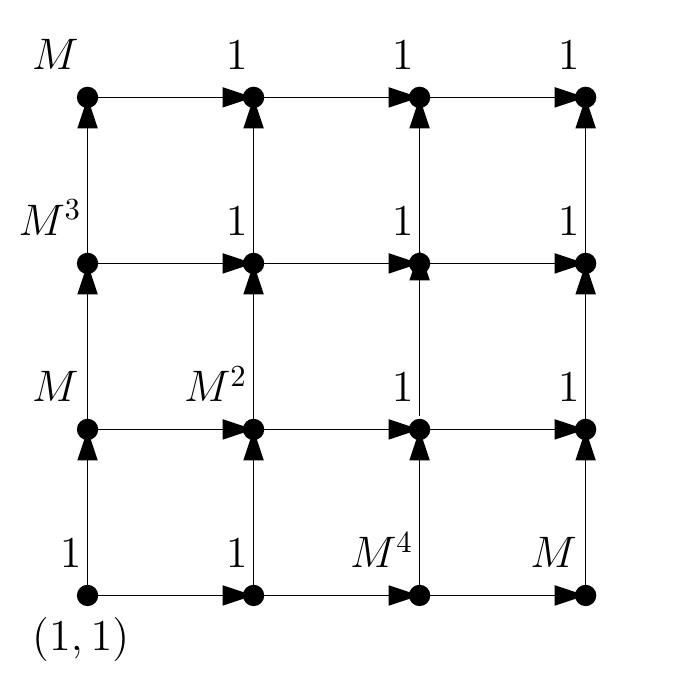}} at (0,0):
  \node at (-2.2,-0.6) {\scriptsize $(2,1)$};
  \node at (-0.6,-2.0) {\scriptsize $(1,2)$};
\end{tikzpicture}
\caption{The valued $\sigma$-structure $\structC_4$ from Example~\ref{ex:cores} ($M>16$). }
\label{fig:ex-cores}
\end{figure}

\begin{example}
\label{ex:cores}
We construct a class of structures of unbounded treewidth modulo equivalence but
bounded treewidth modulo homomorphic equivalence, thus separating bounded
treewidth modulo equivalence from bounded treewidth modulo homomorphic
equivalence.

For $n\geq 3$, let $\structA_n$ be the valued $\sigma$-structure from Example~\ref{ex:btw}. 
Let $\structC_n$ be the valued $\sigma$-structure with the same universe as $\structA_n$, i.e., $C_n=\{1,\dots,n\}\times \{1,\dots,n\}$, 
such that $f^{\structC_n}=f^{\structA_n}$ and $\mu^{\structC_n}$ is defined as follows. 
Let $D_1,\dots,D_n$ be the $n$ first diagonals of $\structC_n$ starting from the bottom left corner $(1,1)$ (see Figure~\ref{fig:ex-cores} for an illustration of $\structC_4$). 
For $1\leq i\leq n$, let $E_i$ be the top-left to bottom-right enumeration of $D_i$. 
In particular, $E_1=\left((1,1)\right)$, $E_2=\left((2,1), (1,2)\right)$, $E_3=\left((3,1),(2,2),(1,3)\right)$ and $E_n=\left((n,1),(n-1,2),\dots,(1,n)\right)$.  
Fix an integer $M=M(n)$ such that $M>n^2$. 
The values assigned by $\mu^{\structC_n}$ to $E_1$, $E_2$ and $E_3$ are $\left(1\right)$, $\left(M, 1\right)$ and 
$\left(M^3, M^2, M^4\right)$, respectively, and for $E_i$, with $4\leq i\leq n$, is $\left(M,1,\dots,1,M\right)$. 
All remaining elements in $C_n\setminus \bigcup_{1\leq i\leq n} D_i$ receive cost $1$. 
Figure~\ref{fig:ex-cores} depicts the case of $\structC_4$.

Let $\C\defeq\{\structC_n\mid n\geq 3\}$. 
Note first that $\pos{\structC_n}$ is homomorphically equivalent to the relational structure $\rels{P}_{2n-1}$ over relational 
signature $\rel{\sigma}=\{R_f,R_\mu\}$ 
(recall the definition of $\rel{\sigma}$ from Section~\ref{sec:prelims}), whose universe is $P_{2n-1}=\{1,\dots,2n-1\}$, 
$R_{\mu}^{\rels{P}_{2n-1}}=P_{2n-1}$ and $R_{f}^{\rels{P}_{2n-1}}=\{(i,i+1)\mid 1\leq i\leq 2n-2\}$. 
Since $\tw{\rels{P}_{2n-1}}=1$, for all $n\geq 3$, 
it follows that $\{\pos{\structC_n}\mid n\geq 3\}$ has bounded treewidth modulo homomorphic equivalence. 
We claim that $\C$ has unbounded treewidth modulo (valued) equivalence. 
It suffices to show that $\structC_n$ is a core, for all $n\geq 3$. 
In order to prove this, we apply Proposition~\ref{prop:core-char}. 
Fix $n\geq 3$ and define $c:\tup{\structC_n}\mapsto \qplus$ 
such that (i) $c(f,\tuple{x})=0$ if $f^{\structC_n}(\tuple{x})=\infty$; otherwise $c(f,\tuple{x})=1$, 
and (ii) $c(\mu,x)=1/\mu^{\structC_n}(x)$, for all $x\in C_n$. 
We have $\sum_{(p,\tuple{y})\in \tup{\structC_n}}p^{\structC_n}(\tuple{y})c(p,\tuple{y})=|C_n|=n^2$. 
Next we show that if $g:C_n\mapsto C_n$ satisfies that $v(g)=\sum_{(p,\tuple{y})\in \tup{\structC_n}}p^{\structC_n}(\tuple{y})c(p,g(\tuple{y}))\leq n^2$, 
then $g$ is the identity mapping. Using Proposition ~\ref{prop:core-char}, this implies that $\structC_n$ is a core. 

Let $g:C_n\mapsto C_n$ such that $v(g)\leq n^2$. 
The mapping $g$ must satisfy the following two conditions: 
(a) $g$ is a homomorphism from $\pos{\structC_n}$ to $\pos{\structC_n}$ (otherwise $v(g)=\infty$), and 
(b) for every $x\in C_n$, $\mu^{\structC_n}(x)\leq\mu^{\structC_n}(g(x))$, 
otherwise $v(g)\geq \mu^{\structC_n}(x) c(\mu,g(x))=\mu^{\structC_n}(x)/\mu^{\structC_n}(g(x))\geq M>n^2$. 
We can argue inductively, and show that $g$ is the identity over $D_i$, for all $1\leq i\leq n$. 
Condition (a) implies that $g$ is the identity over the remaining elements in $C_n\setminus \bigcup_{1\leq i\leq n} D_i$, as required. 
For $D_1$, we have that $g((1,1))=(1,1)$ by condition (a). 
For $D_2$, note that (a) implies that $\{g((2,1)),g((1,2))\}\subseteq \{(2,1),(1,2)\}$. 
By condition (b), $g((2,1))=(2,1)$. To see that $g((1,2))=(1,2)$, 
suppose by contradiction that $g((1,2))=(2,1)$, then condition (a) implies that $g((1,3))\in\{(3,1),(2,2)\}$, 
which violates (b). 
For the case $3\leq i\leq n$, recall that $E_i=x_1,x_2,\dots, x_{|D_i|}$ is the above-defined enumeration of $D_i$. 
Since $g$ is the identity over $D_{i-1}$ and by condition (a), we have that $g$ is the identity over $\{x_2,\dots,x_{|D_i|-1}\}$. 
As $\mu^{\structC_n}(x_1)>\mu^{\structC_n}(x_2)$ and $\mu^{\structC_n}(x_{|D_i|})>\mu^{\structC_n}(x_{|D_i|-1})$, 
conditions (a) and (b) imply that $g(x_1)=x_1$ and $g(x_{|D_i|})=x_{|D_i|}$, as required.

To conclude this example we note that the class of valued $\sigma$-structures $\{\structB_n\mid  n \geq 3 \}$ where each $\structB_n$ is derived from $\structC_n$ by setting $B_n = C_n$, $\mu^{\structB_n} = \mu^{\structC_n}$ and $f^{\structB_n}(\tuple{x}) = \min(1,f^{\structC_n}(\tuple{x}))$ for all $\tuple{x} \in (B_n)^2$ is an example of a \emph{finite-valued} class of structures that has bounded treewidth modulo homomorphic equivalence but unbounded treewidth modulo valued equivalence, since each $\structB_n$ is a core (this follows from exactly the same argument used for $\structC_n$).
\end{example}

\paragraph{Corollaries of the complexity classification}

We can obtain the classification for CSPs of Dalmau et al.~\cite{Dalmau02:width} and Grohe~\cite{Grohe07:jacm} 
as a special case of Theorem~\ref{theo:main}. 
Indeed, we can associate with a relational $\tau$-structure $\rels{A}$ a valued $\sigma_\tau$-structure $\structA_{0,\infty}$ such that 
(i) $\sigma_\tau=\{f_{R}\mid R\in \tau, \ar{f_R}=\ar{R}\}$, (ii) $\rels{A}$ and $\structA_{0,\infty}$ have the same universe $A$, and 
(iii) if $\tuple{x}\in R^{\rels{A}}$, then $f_R^{\structA_{0,\infty}}(\tuple{x})=\infty$, otherwise $f_R^{\structA_{0,\infty}}(\tuple{x})=0$, for every $R\in \tau$ and $\tuple{x}\in A^{\ar{R}}$.  
For a class $\C$ of relational structures, we define the class of valued structures $\C_{0,\infty}\defeq \{\structA_{0,\infty}\mid \rels{A}\in \C\}$. 
It is not hard to check that, when $\C$ is of bounded arity, \problem{CSP}($\C$,$-$) reduces in polynomial time to \problem{VCSP}($\C_{0,\infty}$,$-$) and vice versa. 
Hence, a classification of \problem{CSP}($\C$,$-$), for $\C$'s of bounded arity, is equivalent to a classification of \problem{VCSP}($\C_{0,\infty}$,$-$). 
Finally, note that $\C$ has bounded treewidth modulo homomorphic equivalence if and only if $\C_{0,\infty}$ has bounded treewidth modulo (valued) equivalence. 
This implies the known CSP classification from \cite{Dalmau02:width} and \cite{Grohe07:jacm}.

In his PhD thesis~\cite{Farnqvist13:phd}, F\"arnqvist also considered the complexity
of \problem{VCSP}($\C$, $-$). However, he considered a different definition of the problem, that we denote by  \problem{VCSP}$_F$($\C$, $-$). 
Formally, for a relational $\tau$-structure $\rels{A}$, let $\structA_{F}$ be the valued $\sigma_{\rels{A}}$-structure such that 
 (i) $\sigma_{\rels{A}}=\{f_{R,\tuple{x}}\mid R\in \tau, \tuple{x}\in R^{\rels{A}}, \ar{f_{R,\tuple{x}}}=\ar{R}\}$, 
 (ii) $\rels{A}$ and $\structA_F$ have the same universe $A$, and 
 (iii) for every $f_{R,\tuple{x}}\in \sigma_{\rels{A}}$ and $\tuple{x}\in A^{\ar{f_{R,\tuple{x}}}}$, 
 we have that $f_{R,\tuple{x}}^{\structA_F}(\tuple{x})=1$ and $f_{R,\tuple{x}}^{\structA_F}(\tuple{y})=0$, 
 for all $\tuple{y}\neq \tuple{x}$. 
 For a class of relational structures $\C$, 
 \problem{VCSP}$_F$($\C$, $-$) is precisely the problem \problem{VCSP}($\C_F$, $-$), 
 where $\C_F\defeq\{\structA_F\mid \rels{A}\in \C\}$. 
 It was shown in~\cite{Farnqvist13:phd} that for a class $\C$ of relational structures of bounded arity,  \problem{VCSP}$_F$($\C$, $-$) is tractable if and only if 
 $\C$ has bounded treewidth. 
 This result follows directly from Theorem~\ref{theo:main} as every valued structure in a class of the form $\C_F$ is a (valued) core, 
 and hence,  bounded treewidth modulo equivalence  boils down to bounded treewidth. 

 Intuitively, \problem{VCSP}$_F$($\C$, $-$) restricts \problem{VCSP} only based
 on the (multiset of) tuples appearing in the structures from $\C$. 
 In contrast, our definition of \problem{VCSP($\C$, $-$)} considers directly the structures in $\C$.  
 This allows us for a more fine-grained analysis of structural restrictions, and in particular, provides us with new tractable classes beyond bounded treewidth. 
 Indeed, as Example~\ref{ex:btw} illustrates, we can find simple tractable classes of valued structures with unbounded treewidth.
 
Finally, let us note that since Theorem~\ref{theo:main} applies to \emph{all}
valued structures, it in particular covers the \emph{finite-valued} \problem{VCSP}, where 
 all functions are restricted to take finite values in $\qplus$, and hence the tractability part of Theorem~\ref{theo:main} 
 directly applies to the finite-valued case. The hardness part also applies to the finite-valued case. 
 Indeed, the right-hand side structure $\structB$ constructed in the reduction of Proposition~\ref{prop:main-hardness} 
 can be made to be finite-valued as explained at the end of the proof. Therefore, Theorem~\ref{theo:main} also gives a classification for finite-valued VCSPs.
 Moreover, Examples~\ref{ex:btw} and~\ref{ex:cores} demonstrate that already for finite-valued structures the tractability frontier is strictly between bounded treewidth and bounded treewidth modulo homomorphic equivalence.

\medskip
The rest of this section is devoted to proving the hardness part of Theorem~\ref{theo:main}, 
i.e., the implication (2) $\Rightarrow$ (3). The tractability part of Theorem~\ref{theo:main} (implication (3) $\Rightarrow$ (1)) is established in Section~\ref{sec:sa}. 
In particular, it will follow from Theorem \ref{thm:sa} that, if there is a constant $k\geq 1$ such that every valued structure in the class $\C$ is equivalent to a valued structure of treewidth at most $k$, 
then \problem{VCSP}($\C$, $-$) can be solved in polynomial time using the
$(k+1)$-st level of the Sherali-Adams LP hierarchy. 
The remaining implication (1) $\Rightarrow$ (2) is immediate. 

\subsection{Hardness}
\label{sec:hard-main}

We start with the notion of fpt-reductions~\cite{FG06} tailored to our setting.
Formally, a decision problem $\mathcal{P}$ with parameter $\kappa$ over $\Sigma$ is a
subset of $\Sigma^*$, the set of all strings over the alphabet $\Sigma$,
describing the ``yes'' instances of $\mathcal{P}$, and
$\kappa:\Sigma^*\mapsto\mathbb{N}$. It is known that each optimisation problem has an
equivalent decision problem. We denote by $p$-\problem{VCSP$_d$}($\C$, $-$) the
decision version of $p$-\problem{VCSP}($\C$, $-$). Formally, $p$-\problem{VCSP$_d$}($\C$,
$-$) with parameter $\kappa'$ over $\Sigma'$ is a subset of $(\Sigma')^*$ such that
$x=((\structA,\structB), c)\in p$-\problem{VCSP$_d$}($\C$, $-$) if and only if 
$(\structA,\structB)$ is a \problem{VCSP}($\C$, $-$) instance such that $\opt{\structA,\structB}\leq c$, and
$\kappa':(\Sigma')^*\mapsto\mathbb{N}$ is defined by $\kappa'(x)=|\structA|$.

An \emph{fpt}-reduction from $(\mathcal{P},\kappa)$ to $p$-\problem{VCSP$_d$}($\C$, $-$)
is a mapping $red:\Sigma^* \mapsto (\Sigma')^*$ such that (i) for all
$x\in\Sigma^*$ we have $x\in\mathcal{P}$ if and only if $red(x)\in
p$-\problem{VCSP$_d$}($\C$, $-$); (ii) there is a computable function
$f:\mathbb{N}\mapsto\mathbb{N}$ and an algorithm that, given $x\in\Sigma^*$,
computes $red(x)$ in time $f(\kappa(x))\cdot|x|^{O(1)}$; and (iii) there is a
computable function $g:\mathbb{N}\mapsto\mathbb{N}$ such that for all instances
$x\in\Sigma^*$, we have $\kappa'(red(x))\leq g(\kappa(x))$.

Let us mention that our hardness result does not follow directly from Grohe's result for CSPs \cite{Grohe07:jacm}. 
The natural approach is to define, for a class of valued structures $\C$, the class of relational structures $\pos{\C}=\{\pos{\structA}\mid \structA\in \C\}$. 
Then one can observe that $p$-\problem{CSP}($\pos{\C}$, $-$) fpt-reduces to
$p$-\problem{VCSP$_d$}($\C$, $-$), and hence W[1]-hardness of the former problem implies hardness for the latter. 
However, if $\C$ has unbounded treewidth modulo equivalence, the class $\pos{\C}$ does not necessarily have unbounded treewidth modulo homomorphic equivalence 
(see Example~\ref{ex:cores}), 
and hence $\pos{\C}$ is not necessarily hard according to Grohe's classification. 

We instead adapt Grohe's proof to the case of VCSPs. 
We need some notation. 
For $k\geq 1$, a \emph{$k$-clique} of a graph is a clique on $k$ vertices.  
A graph $H$ is a \emph{minor} of a graph $G$ if $H$ is isomorphic 
to a graph that can be obtained from a subgraph of $G$ by contracting edges (for more details see, e.g.~\cite{diestel10:graph}). 
For $k,\ell\geq 1$, the $(k\times \ell)$-grid is the graph with vertex set $\{1,\dots,k\}\times\{1,\dots,\ell\}$ and 
an edge between $(i,j)$ and $(i',j')$ if $|i-i'|+|j-j'|=1$. 
The parameterised problem $p$-\problem{CLIQUE} asks, given instance $(G,k)$, 
whether there is a $k$-clique in $G$, and has parameter $\kappa$ such that $\kappa(G,k)=k$. 
It is a well-known result that $p$-\problem{CLIQUE} is complete for $W[1]$ under fpt-reductions \cite{DF95:fpt}. 
The implication $(2)\Rightarrow (3)$ in Theorem~\ref{theo:main} follows from the following proposition. 

\begin{proposition}
\label{prop:main-hardness}
Let $\C$ be a recursively enumerable class of valued structures of bounded arity. Suppose $\C$ is of unbounded treewidth modulo equivalence. 
If $p$-\problem{VCSP}($\C$, $-$) is fixed-parameter tractable then FPT $=$ W[1].  
\end{proposition}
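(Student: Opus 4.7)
The plan is to give an fpt-reduction from $p$-\problem{CLIQUE}, which is $W[1]$-complete, to $p$-\problem{VCSP$_d$}($\C, -$), adapting the blueprint of Grohe~\cite{Grohe07:jacm} to the valued setting. First, I would use Proposition~\ref{prop:tw-equiv-core} to rephrase the hypothesis: $\C$ having unbounded treewidth modulo equivalence is equivalent to saying that the class of cores of structures in $\C$ has unbounded treewidth. Combined with the bounded arity assumption, the excluded grid theorem of Robertson and Seymour then guarantees that for every $\ell \geq 1$ there exists $\structA \in \C$ whose core $\structA'$ admits the $\ell \times \ell$ grid as a minor of its Gaifman graph $G(\pos{\structA'})$. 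Since $\C$ is recursively enumerable and cores can be effectively computed (Proposition~\ref{prop:exist-core}), I can enumerate $\C$ and compute cores until such an $\structA$ is found, in time depending only on $\ell$.

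The reduction then goes as follows. Given an instance $(G, k)$ of $p$-\problem{CLIQUE}, I would pick $\ell = \ell(k)$ large enough to support a grid-based clique gadget, locate the corresponding $\structA \in \C$ together with its core $\structA'$ and an $\ell \times \ell$ grid minor of $G(\pos{\structA'})$, and build a right-hand side valued $\sigma$-structure $\structB$ (where $\sigma$ is the signature of $\structA$) that uses this grid minor to verify the existence of a $k$-clique in $G$. The ``crisp'' layer of $\structB$ assigns $\infty$ to tuples incompatible with an edge of $G$, in the style of Grohe's clique gadget. Since $\structA \equi \structA'$, we have $\opt{\structA, \structB} = \opt{\structA', \structB}$, so correctness reduces to analysing mappings from $A'$ to $B$.

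The key technical difference with Grohe's proof is that, in the valued case, a mapping $h : A' \mapsto B$ can have finite cost without being a homomorphism from $\pos{\structA'}$ to $\pos{\structB}$, and, even when it is one, it may collapse elements of the core without violating any crisp constraint. To ensure that every low-cost mapping behaves as an injection on the grid minor, so that the clique encoding is actually exploited, I would use the weighting $c: \tup{\structA'} \mapsto \qplus$ from Proposition~\ref{prop:core-char}, which makes every non-surjective self-map of $\structA'$ strictly more expensive than the identity. Concretely, I would augment $\structB$ with additional finite-valued penalty functions driven by $c$ so that, after composition with a natural projection $B \mapsto A'$, any mapping collapsing the core incurs a strictly positive extra price over the fixed base cost $T_0 := \sum_{(f,\tuple{x}) \in \tup{\structA'}} f^{\structA'}(\tuple{x}) c(f,\tuple{x})$ paid by the identity.

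The main obstacle is the delicate balancing act required in the construction of $\structB$ and the threshold $T$: the $c$-penalties must be strong enough that any homomorphism not factoring through an automorphism of $\structA'$ pays strictly more than $T$, while an ``honest'' encoding of a $k$-clique achieves cost exactly $T_0 = T$. Once this gadget is carried out, $\opt{\structA, \structB} \leq T$ if and only if $G$ contains a $k$-clique, with $|\structA|$ bounded by a function of $k$ alone and the whole construction running in fpt time. The assumed fixed-parameter tractability of $p$-\problem{VCSP}($\C, -$) would then yield an fpt algorithm for $p$-\problem{CLIQUE}, forcing FPT $=$ W[1].
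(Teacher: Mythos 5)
Your proposal takes essentially the same route as the paper's proof: enumerate $\C$ and compute cores until one of large treewidth is found, invoke the Excluded Grid Theorem to obtain a grid minor in the core's Gaifman graph, build Grohe's clique gadget $\rels{B}$ over that grid minor, and use the weighting $c$ from Proposition~\ref{prop:core-char} to make any mapping that collapses the core strictly more expensive than the identity, so that the threshold $M^*$ separates the two cases. One technical point you gloss over is the interface with Grohe's Lemma~4.4: as stated, that lemma requires $\rels{A}$ to be a \emph{relational} core, but $\pos{\structA'}$ need not be a relational core even when $\structA'$ is a valued core. The paper therefore restates Grohe's construction (Lemma~\ref{lemma:grohe}) so that a $k$-clique in $G$ is characterised by the existence of a homomorphism $h : \rels{A} \to \rels{B}$ with $\pi \circ h$ \emph{surjective} on $A$ (rather than by the mere existence of a homomorphism), and proves this separately for disconnected $\rels{A}$. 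Your $c$-penalties are exactly the right tool to certify surjectivity of $\pi \circ h$ for any cheap mapping, but you would still need to extract the surjective form from Grohe's proof, which is a genuine (if small) step not mentioned in your plan. A minor mechanical point: rather than ``augmenting $\structB$ with additional finite-valued penalty functions'' -- which would change the signature and break the requirement that $\structA$ and $\structB$ be over the same $\sigma$ -- the paper folds $c^*$ into the existing functions, setting $f^{\structB}(\tuple{x}) = c^*(f,\pi(\tuple{x}))$ for $\tuple{x} \in R_f^{\rels{B}}$ and a uniform large penalty $1 + M^*/\delta$ otherwise, so that any non-homomorphic assignment on a positive tuple already overshoots $M^*$.
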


\begin{proof}
We present an fpt-reduction from $p$-\problem{CLIQUE} to $p$-\problem{VCSP$_d$}($\C$, $-$). 
More precisely, given an instance $(G,k)$ of $p$-\problem{CLIQUE}, we shall construct valued structures $\structA'\in \C$ and 
$\structB$, together with a threshold $M^*\geq 0$, such that $G$ contains a
$k$-clique if and only if $\opt{\structA',\structB}\leq M^*$. 
As in~\cite{Grohe07:jacm}, we rely on the Excluded Grid
Theorem~\cite{Robertson86:excluding}, which states that there is a function $w:\mathbb{N}\mapsto\mathbb{N}$ such that every graph $H$ of treewidth at least $w(k)$ contains the $(k\times k)$-grid as a minor. 
Given an instance $(G,k)$ of $p$-\problem{CLIQUE}, with $k\geq 2$, 
we start by enumerating the class $\C$ until we obtain a valued structure $\structA'\in \C$ with core $\structA$ such that $\tw{\structA}\geq w(K)$, where $K={k\choose 2}$. 
By Proposition~\ref{prop:tw-equiv-core}, such $\structA'\in \C$ always exists. 
Let $\rels{A}\defeq \pos{\structA}$. 
By the Excluded Grid Theorem, the Gaifman graph $G(\rels{A})$ of $\rels{A}$ (see
the definition of the Gaifman graph in Section~\ref{sec:prelims}) 
contains the $(K\times K)$-grid, and hence, the $(k\times K)$-grid as a minor. 
Since $\C$ is recursively enumerable and cores are computable (by Proposition~\ref{prop:exist-core}), 
the valued structure $\structA'$ and its core $\structA$ can be effectively computed in time $\alpha(k)$, where $\alpha$ is a computable function. 

In order to define $\structB$, 
we exploit the main construction in~\cite{Grohe07:jacm}, which defines a relational structure $\rels{B}$ from $G$, $k$ and $\rels{A}$. 
The key property of $\rels{B}$ is that, assuming $\rels{A}$ is a relational core, 
then $G$ contains a $k$-clique if and only if there is a homomorphism from $\rels{A}$ to $\rels{B}$. 
Since in our case $\rels{A}$ is not necessarily a relational core, 
we restate in the following lemma the properties of $\rels{B}$ simply in terms of surjective homomorphisms.  
Together with our characterisation of (valued) cores in Proposition~\ref{prop:core-char}, 
this will allow us to define our required valued structure $\structB$ and threshold $M^*$.

\begin{lemma}
\label{lemma:grohe}
Given $k\geq 2$, $K={k\choose 2}$, graph $G$ and relational $\tau$-structure $\rels{A}$ such that the $(k\times K)$-grid is a minor of its Gaifman graph $G(\rels{A})$, 
there is a relational $\tau$-structure $\rels{B}$ such that 
\begin{enumerate}
\item There exists a fixed homomorphism $\pi$ from $\rels{B}$ to $\rels{A}$ such that the following are equivalent:
\begin{enumerate}[label=(\alph*)]
\item $G$ contains a $k$-clique. 
\item There is a homomorphism $h$ from $\rels{A}$ to $\rels{B}$ such that $\pi\circ h$ is the identity mapping from $A$ to itself, where $A$ is the universe of $\rels{A}$.
\end{enumerate}
\item $\rels{B}$ can be computed in time $\beta(|\rels{A}|,k)\cdot |G|^{O(r(\rels{A}))}$, where $\beta$ is a computable function and $r(\rels{A})\geq 1$ is the arity of the relational signature $\tau$. 
\end{enumerate}

\end{lemma}

\begin{proof}
We can separate $\rels{A}$ into two relational $\tau$-structures $\rels{A}'$ and $\rels{A}\setminus \rels{A'}$ such that the disjoint union of their Gaifman graphs $G(\rels{A}')$ and $G(\rels{A}\setminus \rels{A'})$ 
is precisely $G(\rels{A})$, and $G(\rels{A}')$ is a connected graph containing the $(k\times K)$-grid as a minor. 
The main hardness reduction in~\cite{Grohe07:jacm} produces for the connected structure $\rels{A}'$ a relational $\tau$-structure $\rels{B}'$ and a 
homomorphism $\pi'$ from $\rels{B}'$ to $\rels{A}'$ such that conditions (1) and (2) hold with respect to $\rels{A}'$, $\rels{B}'$ and $\pi'$. 
(This is shown in Section 4 from \cite{Grohe07:jacm} as a preparation for the proof of Theorem 4.1.)
We define our required structure $\rels{B}$ to be the disjoint union of $\rels{B}'$  and $\rels{A}\setminus \rels{A}'$.  
We define $\pi$ to be the homomorphism from $\rels{B}$ to $\rels{A}$ such that 
$\pi(b)=\pi'(b)$, if $b$ belongs to $\rels{B}'$ and $\pi(b)=b$, otherwise. It suffices to check condition (1).

Suppose that $G$ contains a $k$-clique. Then there is a homomorphism $h'$ from $\rels{A}'$ to $\rels{B}'$ such that $\pi'\circ h'=\id$.  
We can define the mapping $h$ from $\rels{A}$ to $\rels{B}$ such that $h(a)=h'(a)$, if $a$ belongs to $\rels{A}'$, and $h(a)=a$, otherwise. 
It follows that $h$ is a homomorphism and $\pi\circ h=\id$. 
Assume now that there is a homomorphism $h$ from $\rels{A}$ to $\rels{B}$ such that $\pi\circ h=\id$. 
Then there exists a homomorphism $h'$ from $\rels{A'}$ to $\rels{B}'$ such that $\pi'\circ h'=\id$ ($h'$ is simply the restriction of $h$ to $\rels{A}'$). 
Since $\rels{A}'$ and $\rels{B}'$ satisfy condition (1), we conclude that $G$ contains a $k$-clique.
\renewcommand{\qedsymbol}{$\blacksquare$}\end{proof}

Let $\rels{B}$ be the relational structure from Lemma~\ref{lemma:grohe} applied to $G,k$ and $\rels{A}=\pos{\structA}$. 
Recall that $\rels{A}$, and hence $\rels{B}$, are defined over the relational signature $\rel{\sigma}$, where $\sigma$ is the signature of $\structA$. 
Since $\structA$ is a core, by Proposition~\ref{prop:core-char} we can compute a function $c^*:\tup{\structA}\mapsto \qplus$ such that for every non-surjective mapping $g: A \mapsto A$, it is the case that 
\[\sum_{(f,\tuple{x})\in \tup{\structA}}f^{\structA}(\tuple{x})c^*(f,\tuple{x}) < \sum_{(f,\tuple{x})\in
\tup{\structA}}f^{\structA}(\tuple{x})c^*(f,g(\tuple{x})).\]

From $\rels{B}$ and $c^*$ we construct a valued structure $\structB$ over the same signature $\sigma$ of $\structA$ as follows. 
Let $\pi$ be the homomorphism from $\rels{B}$ to $\rels{A}$ given by Lemma~\ref{lemma:grohe}. 
Let $M^* \defeq  \sum_{(f,\tuple{x})\in \tup{\structA}}f^{\structA}(\tuple{x})c^*(f,\tuple{x})<\infty$. 
The universe of $\structB$ is the universe $B$ of $\rels{B}$. For each $f\in \sigma$ and $\tuple{x}\in B^{\ar{f}}$, we define:
\[
f^{\structB}(\tuple{x})=
\begin{cases}
c^*(f,\pi(\tuple{x})) &\quad \text{if $\tuple{x}\in R_f^{\rels{B}}$}\\
\infty &\quad \text{otherwise}.
\end{cases}
\]
We show that $G$ contains a $k$-clique if and only if $\opt{\structA',\structB}\leq M^*$. 
Before doing so, note that the total running time of the reduction is
$\alpha(k)+\beta(\alpha(k),k)|G|^{O(r(\rels{A}))}$, where $\beta$ is from
Lemma~\ref{lemma:grohe} and $r(\rels{A})$ is the arity of $\rel{\sigma}$, and hence the arity of $\sigma$. 
Since the class $\C$ has bounded arity, there is a constant $r\geq 1$ such that $r(\rels{A})\leq r$. 
Thus the running time of the reduction is $\beta'(k)|G|^{O(1)}$ for a computable function $\beta'$. 
Also, since $\structA'$ is computed in time $\alpha(k)$, we have that $|\structA'|\leq \alpha(k)$. 
It follows that our reduction is actually an fpt-reduction, and hence, $p$-\problem{VCSP$_d$}($\C$, $-$) is W[1]-hard. 
If $p$-\problem{VCSP}($\C$, $-$) is in FPT, then $p$-\problem{VCSP$_d$}($\C$, $-$) is in FPT, and consequently  FPT $=$ W[1], as required.

Assume that $G$ contains a $k$-clique. By Lemma~\ref{lemma:grohe}, 
there is a homomorphism $h$ from $\rels{A}$ to $\rels{B}$ such that $\pi\circ h$ is the identity. 
We have that 
\begin{align*}
\opt{\structA',\structB}= \opt{\structA,\structB} & \leq \costb{\structA\mapsto\structB}{h} = \sum_{(f,\tuple{x})\in\tup{\structA}_{>0}} f^{\structA}(\tuple{x}) f^{\structB}(h(\tuple{x}))=\sum_{\tuple{x}\in R_f^{\rels{A}}} f^{\structA}(\tuple{x}) f^{\structB}(h(\tuple{x}))\\
 & = \sum_{\tuple{x}\in R_f^{\rels{A}}} f^{\structA}(\tuple{x}) c^*(f,\pi(h(\tuple{x}))) = \sum_{(f,\tuple{x})\in\tup{\structA}_{>0}} f^{\structA}(\tuple{x}) c^*(f,\tuple{x})=M^*.
\end{align*}

Suppose now that $\opt{\structA',\structB}\leq M^*$. 
In particular, $\opt{\structA,\structB}\leq M^*$. 
Let $h^*:A\mapsto B$ be a mapping with cost $\opt{\structA,\structB}$. 
Note that $h^*$ is a homomorphism from  $\rels{A}$ to $\rels{B}$ (otherwise $\costb{\structA\mapsto\structB}{h^*}=\infty$). 
We show that $\pi\circ h^*$ is surjective. 
Assume to the contrary.  
By the definition of $c^*$, it follows that $M^*<\sum_{(f,\tuple{x})\in \tup{\structA}}f^{\structA}(\tuple{x})c^*(f, \pi(h^*(\tuple{x})))$. 
On the other hand, using the fact that $h^*$ is a homomorphism, we have that 

\begin{align*}
\sum_{(f,\tuple{x})\in \tup{\structA}}f^{\structA}(\tuple{x})c^*(f, \pi(h^*(\tuple{x})))&=
\sum_{\tuple{x}\in R_f^\rels{A}}f^{\structA}(\tuple{x})c^*(f, \pi(h^*(\tuple{x})))=
\sum_{\tuple{x}\in R_f^\rels{A}}f^{\structA}(\tuple{x})f^{\structB}(h^*(\tuple{x}))\\
&=\sum_{(f,\tuple{x})\in \tup{\structA}_{>0}}f^{\structA}(\tuple{x})f^{\structB}(h^*(\tuple{x}))=\cost{h^*}.
\end{align*}

Hence, $M^*<\cost{h^*}$; a contradiction. 
It follows that $\pi\circ h^*$ is an isomorphism of $\rels{A}$. 
We can define $g=h^*\circ (\pi\circ h^*)^{-1}$ and obtain a homomorphism $g$ from $\rels{A}$ to $\rels{B}$ 
such that $\pi\circ g=\id$. 
By Lemma~\ref{lemma:grohe} we obtain that $G$ contains a $k$-clique.

\medskip

Finally, note that we can make $\structB$ in the reduction to be finite-valued.  We only need to 
replace $\infty$ by a sufficiently large $N$ in its definition. More precisely, 
we need $\costb{\structA\mapsto\structB}{h}>M^*$, whenever $h:A\mapsto B$ is not a homomorphism from $\rels{A}$ to $\rels{B}$, 
i.e., we need $f^{\structA}(\tuple{x})\cdot N > M^*$ for any  $(f,\tuple{x})\in \tup{\structA}_{>0}$. 
(We can take  $N=1+(M^*/\min\{f^{\structA}(\tuple{x}): (f,\tuple{x})\in \tup{\structA}_{>0}\})$.)
\end{proof}

\section{Power of Sherali-Adams for VCSP($\C$, $-$)}
\label{sec:sa}

In this section we will present and prove our second main result,
Theorem~\ref{thm:sa}. First, in Section~\ref{subsec:sa}, we will define the
Sherali-Adams LP relaxation for VCSPs and state Theorem~\ref{thm:sa}.
Second, in Section~\ref{subsec:tw}, we will define the key concept of
\emph{treewidth modulo scopes}, which essentially captures the applicability of the
Sherali-Adams for VCSP($\C$, $-$). Section~\ref{subsec:suff} proves the
sufficiency part of Theorem~\ref{thm:sa}, whereas Sections~\ref{subsec:nec-tw}
and~\ref{subsec:overlap} prove the necessity part of Theorem~\ref{thm:sa}.

\subsection{Sherali-Adams LP Relaxations}
\label{subsec:sa}

Before we define the Sherali-Adams LP relaxation formally, we give an informal
explanation that might be helpful to the reader who has not seen this before.
Any VCSP instance $(\structA,\structB)$ over signature $\sigma$ has a natural
integer linear programming formulation (ILP) with $0/1$-variables.
The variables of the ILP are pairs $(f^\structA,\tuple{x})$, where $f^\structA$ is a ``constraint''
and $\tuple{x}\in A^{\ar{f}}$ is a ``local assignment to the variables on which
the constraint $f^\structA$ depends'', denoted by $s$ in Figure~\ref{fig:sa}.
The ILP has several constraints. Firstly,
local assignments with infinite costs are forbidden by setting the corresponding
ILP variables to $0$. Secondly, the integrality constraints (i.e., having $0/1$
variables) ensure that for every constraint exactly one assignment is selected.
Finally, ``marginalisation constraints'' ensure that the local assignments
are consistent in the sense that if $x\in A$ is mapped by a solution to $b\in B$
then it is the same $b$ independently of the constraint that depends on $x$. The
objective function of the ILP is the cost of an assignment given by
the ILP variables.

By relaxing the integrality constraints, one obtains the so-called basic
LP relaxation. One can think of the LP variables, for every constraint, as a
probability distribution over the set of local assignments. Looking at
Figure~\ref{fig:sa}, conditions \textcolor{red}{\textbf{(SA2)}} and 
\textcolor{red}{\textbf{(SA4)}} ensure that the variables are indeed probability
distributions. Conditions \textcolor{red}{\textbf{(SA3)}} correspond to the
feasibility part (the underlying CSP) and rule out locally-infeasible assignments.
Conditions \textcolor{red}{\textbf{(SA1)}} ensure consistency between pairs of
constraints, called ``marginalisation constraints'' above. Finally, the
objective function of the LP is the expected cost of an assignment given by the
LP variables with respect to the local probability distributions.

The idea of the $k$-th level of the Sherali-Adams relaxation is
to strengthen the LP by introducing constraints on all sets of variables of size
up to $k$ and imposing (local) consistency among these constraints. A further
(notational) complication in the definition is the possible repetition of a
variable in a constraint scope, which is dealt with using the $\toset{}$
operator, which now define.

Given a tuple $\tuple{x}$, we write $\toset{\tuple{x}}$ to denote the set of
elements appearing in $\tuple{x}$. 
Let $(\structA,\structB)$ be an instance of the VCSP over a signature $\sigma$ and $k\geq 1$. We define a new signature $\sigma_k =\sigma \cup \{ \rho_k \}$, where $\rho_k$ is a new function symbol of arity $k$. Then, we create from $(\structA,\structB)$ an instance $(\structA_k,\structB_k)$ over $\sigma_k$ such that $A_k=A$, $B_k=B$, $\rho_k^{\structA_k}(\tuple{x}) = 1$ for any $\tuple{x} \in A_k^k$, $\rho_k^{\structB_k}(\tuple{x}) = 0$ for any $\tuple{x} \in B_k^k$, and for every $f\in\sigma$ we have $f^{\structA_k}=f^{\structA}$ and $f^{\structB_k}=f^{\structB}$.
Because
the new function $\rho_k$ is identically zero in $\structB_k$, we have that for
any mapping $h: A \mapsto B$, $\cost{h}$ is the same in both instances
$(\structA,\structB)$ and $(\structA_k,\structB_k)$. 
The \emph{Sherali-Adams relaxation of level $k$}~\cite{Sherali1990} of $(\structA,\structB)$,
denoted by SA$_k(\structA,\structB)$, is the linear program given in
Figure~\ref{fig:sa}, which has
one variable $\lambda(f,\tuple{x},s)$ for each $(f,\tuple{x}) \in
\tup{\structA_k}_{>0}$ and $s: \toset{\tuple{x}} \mapsto B_k$.

\begin{figure}
\begin{align*}
&\min \sum_{\substack{(f,\tuple{x}) \in \tup{\structA_k}_{>0}, \; s: \toset{\tuple{x}} \mapsto B_k \\ f^{\structA_k}(\tuple{x}) \times f^{\structB_k}(s(\tuple{x})) < \infty}} \lambda(f,\tuple{x},s) f^{\structA_k}(\tuple{x}) f^{\structB_k}(s(\tuple{x})) &\\[3\jot]
&\lambda(f,\tuple{x},s) = \sum_{r:\toset{\tuple{y}} \mapsto B_k,
r|_{\toset{\tuple{x}}}=s} \lambda(p,\tuple{y},r) ~~ \textcolor{red}{\textbf{(SA1)}} & \hspace{-10mm} \forall (f, \tuple{x}), (p, \tuple{y}) \in \tup{\structA_k}_{>0}:\\[-5\jot]
&~& \hspace{-50mm} 
\text{$\toset{\tuple{x}} \subseteq \toset{\tuple{y}}$ and $|\toset{\tuple{x}}|\leq k$}; \\ 
&~& \forall s: \toset{\tuple{x}} \mapsto B_k\\[3\jot]
&\sum_{s: \toset{\tuple{x}} \mapsto B_k} \lambda(f,\tuple{x},s) = 1 ~~ \textcolor{red}{\textbf{(SA2)}} & \forall (f,\tuple{x}) \in \tup{\structA_k}_{>0}\\
&\lambda(f,\tuple{x},s) = 0 ~~ \textcolor{red}{\textbf{(SA3)}} &\hspace{-50mm}
\forall (f,\tuple{x}) \in \tup{\structA_k}_{>0},s: \toset{\tuple{x}} \mapsto B_k: f^{\structA_k}(\tuple{x}) \times f^{\structB_k}(s(\tuple{x})) = \infty\\
&\lambda(f,\tuple{x},s) \geq 0 ~~ \textcolor{red}{\textbf{(SA4)}} &
\hspace{-20mm} \forall (f,\tuple{x}) \in \tup{\structA_k}_{>0}, s: \toset{\tuple{x}} \mapsto B_k
\end{align*}
\caption{The Sherali-Adams relaxation SA$_k(\structA,\structB)$ of level $k$ of $(\structA,\structB)$.} 
\label{fig:sa}
\end{figure}
The variables are indexed not only by $\tuple{x}$ and $s$
but also by $f$. This would not be necessary if $k\geq \ar{f}$ 
but we are also interested in the case of $k<\ar{f}$. 

\begin{definition}
\label{def:improvek}
Let $\structA, \structB$ be valued $\sigma$-structures and $k\geq 1$. 
\begin{itemize}
\item We denote by $\optfrac{k}{\structA,\structB}$ the minimum cost of a solution to SA$_k(\structA,\structB)$.
\item We write $\structA \lessfrac \structB$ if  $\optfrac{k}{\structA,\structC} \leq \optfrac{k}{\structB,\structC}$ for all valued $\sigma$-structures $\structC$. 
\end{itemize}
\end{definition}

The proof of the following can be found in Appendix~\ref{app:lessfrac}.

\begin{proposition}
\label{prop:lessfrac}
Let $\structA, \structB$ be valued $\sigma$-structures and $k\geq 1$. If there exists an IFH from $\structA$ to $\structB$, then $\structA \lessfrac \structB$.
\end{proposition}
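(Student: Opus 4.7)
The plan is to show that, given the inverse fractional homomorphism $\omega$ from $\structA$ to $\structB$ and an arbitrary valued $\sigma$-structure $\structC$, one has $\optfrac{k}{\structA,\structC} \leq \optfrac{k}{\structB,\structC}$. I will do this by taking any feasible solution $\lambda_{\structB}$ to $\text{SA}_k(\structB,\structC)$ and transporting it along $\omega$ to produce a feasible solution $\lambda_{\structA}$ to $\text{SA}_k(\structA,\structC)$ whose objective value is at most that of $\lambda_{\structB}$. Concretely, for each $(f,\tuple{x}) \in \tup{\structA_k}_{>0}$ and $s: \toset{\tuple{x}} \to C_k$, I set
\[
\lambda_{\structA}(f,\tuple{x},s) \;\defeq\; \sum_{g \in \supp{\omega}} \omega(g) \sum_{\substack{s': \toset{g(\tuple{x})} \to C_k \\ s' \circ g|_{\toset{\tuple{x}}} \,=\, s}} \lambda_{\structB}(f, g(\tuple{x}), s').
\]
This is well-defined because $(f, g(\tuple{x})) \in \tup{\structB_k}_{>0}$ for every $g \in \supp{\omega}$: if $f \in \sigma$ this follows from the fact noted in Section~\ref{sec:inv-frac} that every such $g$ is a homomorphism from $\pos{\structA}$ to $\pos{\structB}$, and if $f = \rho_k$ this is immediate from $\rho_k^{\structB_k} \equiv 1$.

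Next, I verify the four Sherali-Adams constraints for $\lambda_{\structA}$. Nonnegativity (SA4) is clear. For (SA2), for fixed $g$ each $s': \toset{g(\tuple{x})} \to C_k$ determines a unique $s = s' \circ g|_{\toset{\tuple{x}}}$, so summing over $s$ gives $\sum_s \lambda_{\structA}(f,\tuple{x},s) = \sum_g \omega(g) \sum_{s'} \lambda_{\structB}(f, g(\tuple{x}), s') = \sum_g \omega(g) = 1$, using (SA2) for $\lambda_{\structB}$. For (SA1) applied to $(f,\tuple{x}), (p,\tuple{y}) \in \tup{\structA_k}_{>0}$ with $\toset{\tuple{x}} \subseteq \toset{\tuple{y}}$ and $|\toset{\tuple{x}}| \leq k$, I invoke (SA1) for $\lambda_{\structB}$ at $(f, g(\tuple{x})), (p, g(\tuple{y}))$ — valid since $\toset{g(\tuple{x})} \subseteq \toset{g(\tuple{y})}$ and $|\toset{g(\tuple{x})}| \leq k$ — and then reindex both sides by $r'$ to show that each side collapses to $\sum_g \omega(g) \sum_{r':\, r' \circ g|_{\toset{\tuple{x}}} = s} \lambda_{\structB}(p, g(\tuple{y}), r')$. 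For (SA3), it suffices to handle $f \in \sigma$ (since for $f = \rho_k$ the product is automatically $0$); if $f^{\structA}(\tuple{x}) = \infty$, the IFH inequality $\omega(g) f^{\structA}(\tuple{x}) \leq f^{\structB}(g(\tuple{x}))$ forces $f^{\structB}(g(\tuple{x})) = \infty$ for every $g \in \supp{\omega}$, whereas if $f^{\structC}(s(\tuple{x})) = \infty$ we use only that $f^{\structB}(g(\tuple{x})) > 0$; either way the inner product for $\lambda_{\structB}(f, g(\tuple{x}), s')$ is infinite and (SA3) for $\lambda_{\structB}$ sets that summand to $0$.

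Finally, I compare objectives. The $\rho_k$-terms vanish in both objectives since $\rho_k^{\structC_k} \equiv 0$, so I may compare contributions from $f \in \sigma$ only. Substituting the definition of $\lambda_{\structA}$, moving the sum over $g$ outside, and grouping the innermost sum over $\tuple{x}$ by their image $\tuple{y} = g(\tuple{x})$ rewrites the objective value of $\lambda_{\structA}$ as
\[
\sum_{g} \omega(g) \sum_{(f,\tuple{y}) \in \tup{\structB}_{>0}, \, s'} \lambda_{\structB}(f, \tuple{y}, s')\, f^{\structC}(s'(\tuple{y}))\, f^{\structA}(g^{-1}(\tuple{y})),
\]
and swapping the order of summation lets me apply the IFH inequality $\sum_g \omega(g) f^{\structA}(g^{-1}(\tuple{y})) \leq f^{\structB}(\tuple{y})$ termwise, yielding exactly the objective value of $\lambda_{\structB}$. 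The main obstacle will be the careful bookkeeping around infinities: I must ensure that terms excluded from the objective on the $\structA$-side (those with infinite product) are already killed by (SA3) before the substitution-and-swap manipulations, and that no ill-defined $\infty \cdot 0$ arises in the IFH step — both follow from the case analysis for (SA3) above together with the observation that $\lambda_{\structB}(f, \tuple{y}, s') > 0$ combined with $f^{\structB}(\tuple{y}) = \infty$ forces $f^{\structC}(s'(\tuple{y})) = 0$.
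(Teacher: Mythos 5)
Your proof is correct, and it takes a genuinely different route from the one in the paper. Both arguments start from a feasible solution $\lambda_{\structB}$ of $\text{SA}_k(\structB,\structC)$ and pull it back along $\omega$, but they diverge at the crucial step. The paper lower-bounds the objective of $\lambda_{\structB}$ by $\sum_{g\in\supp{\omega}}\omega(g)\,\cdot\,(\text{a quantity depending on }g)$ using the IFH inequality, then invokes an averaging argument to pick a \emph{single} $g\in\supp{\omega}$, and builds a feasible solution $\lambda'$ for $\text{SA}_k(\structA,\structC)$ by pulling $\lambda_{\structB}$ back along that one map (so $\lambda'(f,\tuple{x},s)$ is nonzero only for assignments $s$ that factor through $g$). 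Your construction instead \emph{averages} the per-$g$ pullbacks, defining $\lambda_{\structA}=\sum_{g\in\supp{\omega}}\omega(g)\,\lambda'_g$ where $\lambda'_g$ is precisely the paper's single-$g$ pullback. This means you do not need the averaging/pigeonhole step at all, nor do you need to start from an \emph{optimal} $\lambda_{\structB}$ (the paper's argument does, since the existence of a good $g$ is extracted from optimality); the termwise application of the IFH inequality at the very end does all the work, and the bound $\text{obj}(\lambda_{\structA})\leq\text{obj}(\lambda_{\structB})$ holds for every feasible $\lambda_{\structB}$. In exchange, the paper's approach has slightly less bookkeeping: it only needs to verify feasibility of a pullback along a single map, whereas your direct verification of (SA1)--(SA4) for the convex combination requires re-threading the sums over $g$ each time (though, as you implicitly use, feasibility of the convex combination also follows at once from feasibility of each $\lambda'_g$, since the constraint set is a polytope). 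Your handling of the $\infty$-arithmetic in (SA3) and in the final objective bound (via $\lambda_{\structB}>0$ forcing finiteness of $f^{\structB}(\tuple{y})f^{\structC}(s'(\tuple{y}))$) is correct and is the right place to be careful.
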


We also have the following.

\begin{proposition}
\label{prop:corefrac}
Let $\structA$ be a valued $\sigma$-structure, $\structA'$ be the core of $\structA$ and $k\geq 1$. Then, 
$\optfrac{k}{\structA,\structC} = \optfrac{k}{\structA',\structC}$, for all valued $\sigma$-structures $\structC$. 
\end{proposition}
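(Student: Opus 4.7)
The plan is to chain together three results that are already available: the fact that $\structA$ and its core $\structA'$ are (valued) equivalent, the characterisation of the relation $\less$ in terms of inverse fractional homomorphisms (Proposition~\ref{prop:charfrac}), and the transfer from inverse fractional homomorphisms to the Sherali-Adams relation $\lessfrac$ (Proposition~\ref{prop:lessfrac}). The conclusion is then just an unpacking of the definition of $\lessfrac$.

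More concretely, first I would invoke the definition of the core of a valued structure: since $\structA'$ is the core of $\structA$, we have $\structA' \equi \structA$, which by Definition~\ref{def:equiv} means both $\structA \less \structA'$ and $\structA' \less \structA$. Second, I apply Proposition~\ref{prop:charfrac} in both directions to obtain inverse fractional homomorphisms $\omega_1$ from $\structA$ to $\structA'$ and $\omega_2$ from $\structA'$ to $\structA$. Third, I feed these two inverse fractional homomorphisms into Proposition~\ref{prop:lessfrac} (applied to the given $k$), which yields $\structA \lessfrac \structA'$ and $\structA' \lessfrac \structA$ simultaneously. Finally, unfolding Definition~\ref{def:improvek} of $\lessfrac$, the first gives $\optfrac{k}{\structA,\structC} \leq \optfrac{k}{\structA',\structC}$ for every valued $\sigma$-structure $\structC$, while the second gives the reverse inequality, hence equality for all $\structC$.

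There is essentially no obstacle here, since all the heavy lifting is done by the earlier propositions; the statement is really a corollary packaging Proposition~\ref{prop:lessfrac} together with the equivalence $\structA \equi \structA'$. The only point that requires a moment of care is that Proposition~\ref{prop:lessfrac} is stated for a fixed $k$, so one has to observe that applying it to the same $k$ as in the statement of Proposition~\ref{prop:corefrac} is enough; the inverse fractional homomorphisms themselves do not depend on $k$.
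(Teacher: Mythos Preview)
Your proposal is correct and follows exactly the same route as the paper's proof: use $\structA\equi\structA'$, apply Proposition~\ref{prop:charfrac} in both directions to obtain inverse fractional homomorphisms, then invoke Proposition~\ref{prop:lessfrac} to conclude $\structA\lessfrac\structA'$ and $\structA'\lessfrac\structA$, hence equality of $\optfrac{k}{\cdot,\structC}$. Your write-up is simply a bit more explicit about unfolding the definitions.
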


\begin{proof}
Since $\structA \equiv \structA'$, by Proposition~\ref{prop:charfrac}, there
exist IFHs from $\structA$ to $\structA'$ and from
$\structA'$ to $\structA$. Therefore, by Proposition~\ref{prop:lessfrac},
$\optfrac{k}{\structA,\structC} = \optfrac{k}{\structA',\structC}$ for all
valued $\sigma$-structures $\structC$.
\end{proof}

Given a valued $\sigma$-structure $\structA$, the \emph{overlap} of $\structA$
is the largest integer $m$ such that there exist $(f,\tuple{x}), (p,\tuple{y})
\in \tup{\structA}_{>0}$ with $(f,\tuple{x}) \neq (p,\tuple{y})$ and $|\toset{\tuple{x}} \cap \toset{\tuple{y}}| = m$.

The following is our second main result; $\twms{\structA}$ is defined in
Section~\ref{subsec:tw} and Theorem~\ref{thm:sa} is implied by Theorems~\ref{thm:suff-core}, \ref{thm:nece-core-core}, and~\ref{thm:nece-scopes-core}
proved in Sections~\ref{subsec:suff}, \ref{subsec:nec-tw},
and~\ref{subsec:overlap}, respectively.

\begin{theorem}[\textbf{Power of Sherali-Adams}]
\label{thm:sa}
Let $\structA$ be a valued $\sigma$-structure and let $k\geq 1$. Let $\structA'$ be the core of $\structA$. The
Sherali-Adams relaxation of level $k$ is always tight for $\structA$, i.e.,
for every valued $\sigma$-structure $\structB$, we have that
$\optfrac{k}{\structA,\structB}=\opt{\structA,\structB}$, \emph{if and only if} (i)
$\twms{\structA'}\leq k-1$ and (ii) the overlap of $\structA'$ is at most $k$.
\end{theorem}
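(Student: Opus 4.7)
The plan is first to reduce to the case where $\structA$ is itself a core. By Proposition~\ref{prop:corefrac}, $\optfrac{k}{\structA,\structB}=\optfrac{k}{\structA',\structB}$ for every $\structB$, while $\structA\equi\structA'$ gives $\opt{\structA,\structB}=\opt{\structA',\structB}$. Hence SA$_k$ is tight on $\structA$ if and only if it is tight on $\structA'$, and conditions (i)--(ii) of the theorem are stated purely in terms of $\structA'$. I therefore work throughout under the assumption that $\structA$ is a core and prove the equivalence for that case.

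For the sufficiency direction, assuming $\twms{\structA}\leq k-1$ and overlap $\leq k$, I would take any feasible SA$_k$ solution $\lambda$ and round it to an integral mapping using a tree decomposition of $\structA$ realising the bound $\twms{\structA}\leq k-1$. The overlap bound ensures that every intersection of two positive scopes has size at most $k$, so the consistency constraints (SA1) genuinely match the marginals on every overlap. A standard root-to-leaf sampling argument along the decomposition, in the spirit of Atserias--Bulatov--Dalmau~\cite{Atserias07:power} and lifted to the valued setting following Thapper--\v{Z}ivn\'y~\cite{tz17:sicomp}, then produces a probability distribution over maps $h:A\mapsto B$ whose expected cost equals the SA$_k$ objective value of $\lambda$. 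This yields an integral mapping with $\cost{h}\leq\optfrac{k}{\structA,\structB}$, hence tightness.

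For the necessity of the overlap bound, suppose there exist distinct $(f,\tuple{x}),(p,\tuple{y})\in\tup{\structA}_{>0}$ with $|\toset{\tuple{x}}\cap\toset{\tuple{y}}|\geq k+1$. I would construct a right-hand side $\structB$ in which the two scopes can only be made simultaneously cheap by incompatible choices on the overlap, and then exhibit a consistent SA$_k$ solution that ``splits the difference'' on the $(k+1)$-element overlap in a way (SA1) cannot forbid, since (SA1) only propagates marginals on subsets of size $\leq k$. Using the rigidity certificate $c:\tup{\structA}\mapsto\qplus$ supplied by Proposition~\ref{prop:core-char}, I would calibrate the values of $f^{\structB}$ and $p^{\structB}$ so that the integer optimum strictly exceeds the SA$_k$ optimum.

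For the necessity of the treewidth modulo scopes bound, assuming $\twms{\structA}\geq k$, I would adapt the Seymour--Thomas duality between treewidth and brambles~\cite{Seymour93:graph} to the scope-quotient associated with $\structA$, obtaining a bramble of order $k+1$. From this bramble I would build, in the style of the indicator gadgets of~\cite{Atserias07:power} lifted to valued structures, a right-hand side $\structB$ together with a feasible SA$_k$ solution of low cost, while arguing that any integral $h:A\mapsto B$ would either have to simultaneously cover all bramble elements (combinatorially impossible) or behave non-injectively on the core and thus incur strictly larger cost against the rigidity certificate of Proposition~\ref{prop:core-char}. The main obstacle will be precisely this last step: bramble-based LP lower bounds are classically tuned to relational cores, so the delicate adaptation is to make the bramble interact correctly with the \emph{inverse fractional} notion of core identified in Section~\ref{sec:equiv} — in particular with the weighting $c$ — so that the constructed $\structB$ produces a \emph{strict} gap $\optfrac{k}{\structA,\structB}<\opt{\structA,\structB}$ rather than merely rendering the integer problem infeasible.
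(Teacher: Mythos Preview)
Your plan matches the paper's approach in all essentials: the reduction to the core via Proposition~\ref{prop:corefrac}, the sufficiency argument by rounding along a tree decomposition of width modulo scopes at most $k-1$ (the paper phrases this as integrality of a restricted polytope, which is equivalent to your sampling formulation), and both necessity directions via the core certificate $c^*$ of Proposition~\ref{prop:core-char} combined with, respectively, an Atserias--Bulatov--Dalmau-style parity construction driven by a bramble and an ad hoc parity gadget exploiting the large overlap. One point you gloss over is that the bramble duality you invoke is not the classical Seymour--Thomas theorem but a genuinely new variant (Theorem~\ref{theo:char-twms} in the paper) characterising treewidth \emph{modulo scopes}: a bramble must be coverable by a set of size $\leq k+1$ \emph{or by a scope}, and proving this requires a nontrivial modification of the standard induction to handle bags that are large but happen to be scopes.
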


We remark that although Theorem~\ref{thm:sa} deals with valued
structures, it also shows the same result for finite-valued structures, where all
functions are restricted to finite  values in $\qplus$.
In particular, the sufficiency part of Theorem~\ref{thm:sa}, i.e., if the core
satisfies conditions (i) and (ii) then the $k$-th level of Sherali-Adams is tight, 
applies directly to the finite-valued case. 
It will follow from the proofs of Theorems~\ref{thm:nece-core} and~\ref{thm:nece-scopes} that 
whenever the core violates condition (i) or (ii) then the $k$-th level of Sherali-Adams is not tight even for finite-valued structures. 
Hence, Theorem~\ref{thm:sa} also characterises the tightness of Sherali-Adams for finite-valued VCSPs.

Let us note that the characterisation given by Theorem~\ref{thm:sa} for levels
$k\geq r$, where $r$ is the arity of the signature of $\structA$, boils down to
the notion of treewidth. That is, if $k\geq r$, the $k$-th level of Sherali-Adams
is tight if and only if the treewidth of the core of $\structA$ is at most
$k-1$. Interestingly, Theorem~\ref{thm:sa} tells us \emph{precisely} under which
conditions the $k$-th level works even for $k<r$. 

Finally, we remark that the tractability part of Theorem~\ref{theo:main}, which
we obtain as a corollary of Theorem~\ref{thm:sa}, does not immediately follow
from a naive algorithm that would compute, for $\structA \in \C$, $\opt{\structA,\structB}$ using dynamic programming along a tree decomposition of the
core $\structA'$ of $\structA$. Such an algorithm would first need to compute
$\structA'$, and it is not clear that it can be done in polynomial time, even
with the promise that $\structA'$ has bounded treewidth. (The situation is
different for relational structures, where this promise problem is known to be
solvable in polynomial time~\cite[Lemma~25]{Chen15:icdt}.) Theorem~\ref{thm:sa} gives a way to circumvent this issue since the linear program SA$_k(\structA,\structB)$ does not depend on $\structA'$ in any way.

\subsection{Treewidth modulo scopes}
\label{subsec:tw}

Let $\rels{A}$ be a relational structure with universe $A$ over a relational
signature $\tau$.  Recall from Section~\ref{sec:prelims} that $G(\rels{A})$
denotes the Gaifman graph of $\rels{A}$. A \emph{scope} of $G(\rels{A})$ is a
set $X$ for which there is relation symbol $R\in \tau$ and a tuple $\tuple{x}\in
R^{\rels{A}}$ such that $X=\toset{\tuple{x}}$. In other words, the scopes of
$G(\rels{A})$ are the sets that appear precisely in the tuples of
$\rels{A}$.\footnote{In a (V)CSP instance, the term \emph{scope} usually refers to the
list of variables a (valued) constraint depends on.}
Observe that every scope $X$ of $G(\rels{A})$ induces a clique in $G(\rels{A})$.  
\begin{definition}
Let $\rels{A}$ be a relational structure and $G(\rels{A})$ its Gaifman graph. 
Let $(T,\beta)$ be a tree decomposition of $G(\rels{A})$, where $T=(V(T),E(T))$. 
The \emph{width modulo scopes} of $(T,\beta)$ is defined by 
\[\max\{|\beta(t)|-1\mid \text{$t\in V(T)$ and $\beta(t)$ is not a scope of $G(\rels{A})$}\}.\]
If $\beta(t)$ is a scope for all nodes $t\in V(T)$ then we set the width modulo scopes of $(T,\beta)$ to be $0$. 
The \emph{treewidth modulo scopes} of $G(\rels{A})$, denoted by $\twms{G(\rels{A})}$, is the 
minimum width modulo scopes over all its tree decompositions. 
The treewidth modulo scopes of $\rels{A}$ is $\twms{\rels{A}}=\twms{G(\rels{A})}$. 
For a valued structure $\structA$, we define the treewidth modulo
scopes of $\structA$ as $\twms{\structA}=\twms{\pos{\structA}}$.
\end{definition}

Unlike treewidth, the notion of treewidth modulo scopes is not \emph{monotone}, i.e., it can increase after taking substructures.  
To see this, take for instance the relational structure $\rels{A}$ that corresponds to the undirected $k\times k$ grid. 
We have $\twms{\rels{A}}=k$. However, adding a new relation with only one tuple containing all elements of $\rels{A}$ 
lowers the treewidth modulo scopes to $0$. 
Let us also remark that the relational structures with treewidth modulo scopes $0$ are precisely the relational structures 
whose underlying hypergraphs are $\alpha$-acyclic (see e.g.~\cite{GGLS16:pods}). 

\medskip

Let $G=(V,E)$ be a graph. A bramble $\B$ of $G$ is a collection of subsets of $V$ such that (i) each $B\in \B$ is a connected set, and 
(ii) every pair of sets $B,B'\in \B$ touch, i.e., they have a vertex in common or $G$ contains an edge between them. 
A subset of $V$ is a cover of $\B$ if it intersects every set in $\B$. There is a well-known connection between treewidth and brambles. 

\begin{theorem}[\cite{Seymour93:graph}]
Let $G$ be a graph and $k\geq 1$. Then the treewidth of $G$ is at most $k$ if and only if any bramble in $G$ can be covered by a set of size at most $k+1$.
\end{theorem}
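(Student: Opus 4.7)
\medskip
\noindent\textbf{Proof plan for the Seymour--Thomas bramble/treewidth duality.}
The plan is to prove the two implications separately. The ``only if'' direction (small treewidth implies every bramble has a small cover) is essentially the Helly property for subtrees of a tree and is fairly direct. The ``if'' direction is the substantive one; the most natural route is to argue the contrapositive: given a graph of treewidth at least $k+1$, construct an uncoverable bramble of order $k+2$.

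For the easy direction, fix a tree decomposition $(T,\beta)$ of $G$ of width at most $k$, and let $\B$ be a bramble. For each $B\in\B$, let $T_B$ denote the subgraph of $T$ induced by $\{t\in V(T)\mid \beta(t)\cap B\neq \emptyset\}$. I would first verify that each $T_B$ is a connected subtree of $T$: this is a standard consequence of connectivity of $B$ in $G$ together with the property that for every $u\in V(G)$ the set $\{t\mid u\in \beta(t)\}$ induces a subtree. Next, for any two bramble elements $B,B'\in\B$, the touching condition together with the edge-covering axiom of tree decompositions forces $T_B$ and $T_{B'}$ to share at least one node. Since the subtrees $\{T_B\}_{B\in\B}$ are pairwise intersecting subtrees of a tree, by the Helly property they share a common node $t^*$. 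Then $\beta(t^*)$ is a cover of $\B$ of size at most $k+1$, as required.

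For the hard direction, I would prove the contrapositive: if $\tw(G)\geq k+1$ then there exists a bramble of $G$ for which every cover has size at least $k+2$. The cleanest route is via the intermediate notion of a \emph{haven} of order $k+2$, i.e., a function $\beta$ assigning to every set $X\subseteq V$ of size at most $k+1$ a connected component $\beta(X)$ of $G\setminus X$, subject to the consistency condition $X\subseteq Y \Rightarrow \beta(Y)\subseteq \beta(X)$. I would first prove that $\tw(G)\geq k+1$ implies the existence of such a haven; the standard approach is a minimality argument on a ``cops and robber'' strategy, showing that if no haven of order $k+2$ exists, then one can recursively build a tree decomposition of width $\leq k$ by guessing separators, assembling the decomposition along the recursion tree. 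Once a haven $\beta$ is in hand, I would produce from it a bramble $\B$ by taking, for every $X$ of size at most $k+1$, the set $\beta(X)\cup X^+$ for a suitable ``enlargement'' (alternatively, one closes the haven images under taking paths that join any two via a vertex outside $X$); two such sets will touch because the haven's consistency forces their images to be non-separable by any set of size $\leq k+1$. Any set of size at most $k+1$ fails to cover $\B$ because its image under $\beta$ is a bramble element it misses.

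The main obstacle is the haven-to-tree-decomposition construction (equivalently, the proof that absence of a haven of order $k+2$ yields a tree decomposition of width $\leq k$). This is the genuinely nontrivial piece of the duality and requires a careful recursive construction: one picks a small separator witnessing that the haven ``fails'' at some set $X$, and uses the inductive decompositions of the pieces glued along $X$. Implementing this carefully so that all bag sizes remain bounded by $k+1$, and verifying the subtree-intersection property for each vertex, is the technically delicate step. Once that is established, the conversion between havens and brambles is routine bookkeeping, and combining it with the easy direction yields the stated equivalence.
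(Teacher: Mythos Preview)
The paper does not give its own proof of this statement: it is quoted as the Seymour--Thomas duality~\cite{Seymour93:graph} and used as a black box, so there is nothing in-paper to compare your attempt against directly.

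That said, immediately afterwards the paper proves the \emph{analogue} for treewidth modulo scopes (Theorem~\ref{theo:char-twms}), and it does so by explicitly adapting Diestel's textbook proof~\cite[Theorem~12.3.9]{diestel10:graph} rather than by the haven route you outline. Your easy direction via the Helly property for subtrees is equivalent to what is done there (the paper phrases the same idea as an edge-orientation argument on the decomposition tree, ending at a source node whose bag covers the bramble). For the hard direction, however, the Diestel argument the paper adapts is a direct downward induction on $|\B|$: one shows that for every bramble $\B$ there is a ``$\B$-admissible'' tree decomposition, meaning one where any bag of size $>k+1$ fails to cover $\B$; Menger's theorem is used to glue partial decompositions across a minimum cover of $\B$, and the case $\B=\emptyset$ yields a decomposition of width at most $k$. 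Your haven/cops-and-robber plan is a perfectly standard and correct alternative, but the Diestel approach has the concrete advantage that it localises the ``bad bag'' condition, which is exactly what lets the paper extend it to the modulo-scopes setting where some large bags (the scopes) are allowed.
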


We show an analogous characterisation for treewidth modulo scopes, which will be
important later in the proof of Theorem~\ref{thm:nece-core}.

\begin{theorem}
\label{theo:char-twms}
Let $\rels{A}$ be a relational structure and $k\geq 0$. Then $\twms{\rels{A}}\leq k$ if and only if any bramble in $G(\rels{A})$ can be covered by a set of size at most $k+1$ or by a scope in $G(\rels{A})$. 
\end{theorem}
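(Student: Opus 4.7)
The plan is to mimic the classical Seymour--Thomas equivalence~\cite{Seymour93:graph} between treewidth and bramble number, modifying the argument to treat scopes as ``zero-cost'' bags. The easy direction ($\Rightarrow$) will follow from a standard Helly argument on subtrees, while the hard direction ($\Leftarrow$) requires adapting the Seymour--Thomas duality proof so that the class of ``allowed'' bags in a tree decomposition is enlarged by all scopes, regardless of their size.

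For the forward direction, I would assume that $(T,\beta)$ is a tree decomposition of $G(\rels{A})$ of width modulo scopes at most $k$, so every bag $\beta(t)$ is either a scope of $G(\rels{A})$ or satisfies $|\beta(t)| \leq k+1$. Given any bramble $\B$, for each $B \in \B$ I set $T_B = \{t \in V(T) \mid \beta(t) \cap B \neq \emptyset\}$. Since $B$ is connected and the bag-occurrences of each vertex form a connected subtree of $T$, a routine argument shows that $T_B$ induces a subtree of $T$. If $B, B' \in \B$ touch, either they share a vertex (and any bag containing it lies in $T_B \cap T_{B'}$) or an edge $\{u,v\}$ of $G(\rels{A})$ joins them, in which case the bag witnessing this edge is in $T_B \cap T_{B'}$. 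By the Helly property for subtrees of a tree, $\bigcap_{B \in \B} T_B$ is nonempty, and any node $t^*$ in this intersection provides a bag $\beta(t^*)$ that covers $\B$. Since this bag is a scope or has size at most $k+1$, the conclusion follows.

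For the backward direction, I would assume that every bramble of $G(\rels{A})$ is coverable by either a set of size at most $k+1$ or by a scope, and construct a tree decomposition of $G(\rels{A})$ of width modulo scopes at most $k$. Let $\S$ denote the family of all such admissible covers. The approach is to adapt the recursive separator-extraction of Seymour--Thomas so that separators are drawn from $\S$: starting from the whole vertex set, iteratively pick a separator in $\S$ that splits the current piece, install it as a bag, and recurse on the resulting components reattached to the separator. The existence of a usable separator in $\S$ at each step should follow from the hypothesis: if none existed, one could extract (as in the classical proof) a collection of pairwise-touching connected subsets of $G(\rels{A})$ admitting no transversal in $\S$, i.e.\ a bramble contradicting the assumption.

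The main obstacle is the backward direction. Unlike the classical setting, the admissible family $\S$ is \emph{not} closed under taking subsets: a strict subset of a scope is not necessarily a scope, and the restriction of a scope to a residual subgraph is not necessarily a scope of that subgraph. Consequently, when the recursive procedure descends into a residual piece, one cannot simply re-apply the hypothesis inside the piece, because the hypothesis concerns brambles of the \emph{original} $G(\rels{A})$. Handling this requires lifting any local obstruction encountered during the recursion back to a bramble of $G(\rels{A})$ (for instance, by rooting the locally extracted pairwise-touching connected sets at the currently active separator and extending them across the previously-cut pieces), so that the hypothesis can be invoked globally to rule out the obstruction and continue the construction.
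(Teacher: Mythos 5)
Your forward direction is correct: the Helly property of subtrees gives a node $t^*$ whose bag meets every set in $\B$, and the width-modulo-scopes bound guarantees that $\beta(t^*)$ is either a scope or has size at most $k+1$. This is a slightly different (and arguably cleaner) route than the paper's edge-orientation argument, but both are standard and equivalent.

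The backward direction, however, has a genuine gap. You correctly diagnose the obstacle: the admissible family is not hereditary, since a scope of $G(\rels{A})$ installed as a separator need not be a scope of a residual subgraph, and brambles of a residual piece are not brambles of $G(\rels{A})$. But the ``lifting'' you propose (extending locally extracted pairwise-touching connected sets across previously-cut pieces to obtain a global bramble) is not worked out and is not clearly workable. The lifted sets might all be forced to contain (or route through) the active separator, so the resulting global bramble could be trivially covered by a single vertex, yielding no usable obstruction; conversely, even if the global hypothesis supplies a cover, it is unclear how to transfer it into a separator that is simultaneously admissible and useful inside the residual piece.

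The paper resolves this differently and the missing idea is precisely the induction device it uses. Rather than recursing on subgraphs, it fixes a bramble $\B$ of the original $G(\rels{A})$, defines a decomposition to be $\B$-admissible when every ``bad'' bag (size exceeding $k+1$ and not a scope) fails to cover $\B$, and proves by downward induction on $|\B|$ that $G(\rels{A})$ has a $\B$-admissible tree decomposition for every bramble $\B$. Taking $\B=\emptyset$ gives the claim. In the inductive step, for each component $C$ of $G(\rels{A})-X$ one enlarges $\B$ to $\B'=\B\cup\{V(C)\}$; when $\B'$ is a bramble the hypothesis applies (to a strictly larger bramble of the \emph{whole} $G(\rels{A})$), and the resulting decomposition is restricted to $X\cup V(C)$ and repaired, with a separate case analysis depending on whether $X$ is a small cover (Menger paths) or a maximal scope cover. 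This keeps every appeal to the covering hypothesis global and sidesteps the non-heredity of scopes; your proposal lacks this induction-on-brambles mechanism, and without it the recursion cannot be closed.
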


\begin{proof}
The proof is an adaptation of the proof of~\cite[Theorem~12.3.9]{diestel10:graph}.
Suppose first that $\twms{\rels{A}}\leq k$ and let $(T,\beta)$ be a tree decomposition of $G(\rels{A})$ of 
width modulo scopes at most $k$, where $T=(V(T),E(T))$. 
Let $\B$ be any bramble of $G(\rels{A})$. We show that there is $t\in V(T)$ such that $\beta(t)$ covers $\B$. 
If there is an edge $\{t_1,t_2\}$ of $T$ such that $\beta(t_1)\cap \beta(t_2)$ covers $\B$, then we are done. 
Otherwise, we can define an orientation for each edge $\{t_1,t_2\}$ of $T$ as follows. 
Let $X \defeq \beta(t_1)\cap \beta(t_2)$ and $\B_X=\{B\in\B\mid X\cap B=\emptyset\}$. By assumption, $X$ does not cover $\cal B$ and then $\B_X$ is not empty. 
If we remove the edge $\{t_1,t_2\}$ from $T$, we obtain exactly two connected components $T_1=(V(T_1),E(T_1))$ and $T_2=(V(T_2),E(T_2))$ containing $t_1$ and $t_2$, respectively. 
Let $U_1 \defeq \bigcup_{t\in V(T_1)}\beta(t)$ and $U_2 \defeq \bigcup_{t\in V(T_2)}\beta(t)$. 
It is a well-known property of tree decompositions (e.g., \cite[Lemma~12.3.1]{diestel10:graph}) that $X$ separates $U_1$ and $U_2$ in $G(\rels{A})$. 
Take $B\in \B_X$. Since $B$ is connected, it follows that $B\subseteq U_i\setminus X$, for some $i\in \{1,2\}$. 
Since all sets in $\B_X$ mutually touch, we can choose the same $i\in \{1,2\}$, for all $B\in \B_X$. We then 
orient $\{t_1,t_2\}$ towards $t_i$. 

Let $t$ be a sink node in this orientation of $T$, i.e., $t$ has only incoming arcs (note that there must be at least one sink). 
We claim that $\beta(t)$ covers $\B$. 
Towards a contradiction, suppose there exists $B\in \B$ with $B\cap \beta(t)=\emptyset$. Take an arbitrary $b\in B$ and a node $t_b\in V(T)$ such that $b\in \beta(t_b)$. 
Let $\{t',t\}$ be the last edge in the unique simple path from $t_b$ to $t$. 
Observe that $(\beta(t')\cap\beta(t))\cap B= \emptyset$, and hence $B\in \B_{\beta(t')\cap\beta(t)}$. 
Since the edge $\{t',t\}$ is oriented towards $t$, we have that $B\subseteq U_t\setminus \beta(t')\cap\beta(t)$, 
where $U_t=\bigcup_{t\in V(T_t)}\beta(t)$, and $T_t=(V(T_t),E(T_t))$ is the connected component of $T-\{t',t\}$ containing $t$. 
In particular, there is $t^*\in V(T_t)$ such that $b\in \beta(t^*)$. By the connectedness property 
of tree decompositions, we have that $b\in \beta(t)$; a contradiction. 

We now prove the other direction. Assume that any bramble in $G(\rels{A})$ can be covered by a set of size at most $k+1$ or by a scope. 
We say that a set of vertices  satisfying this property is \emph{good}; otherwise it is \emph{bad}.
Let $G'$ be a subgraph of $G(\rels{A})$ and $\B$ be a bramble of $G(\rels{A})$. 
A \emph{$\B$-admissible} tree decomposition for $G'$, 
is a tree decomposition $(T,\beta)$ of $G'$ with $T=(V(T),E(T))$ such that every $\beta(t)$, with $t\in V(T)$, that is a bad set of $G(\rels{A})$ does not cover $\B$. 
We will show that for every bramble $\B$ of $G(\rels{A})$, there is a
  $\B$-admissible tree decomposition for $G(\rels{A})$. 
Since every set covers the empty bramble $\B=\emptyset$, the result follows as any $\B$-admissible tree decomposition for $\B=\emptyset$ 
has only good bags, and then its width modulo scopes is at most $k$. 

We now fix a bramble $\B$ of $G(\rels{A})$ and assume inductively that $G(\rels{A})$ has a $\B'$-admissible tree decomposition for every bramble $\B'$
with more sets than $\B$. (The induction starts, since every bramble of $G(\rels{A})$ has at most $2^{|V(G(\rels{A}))|}$ sets.) By hypothesis, we can take a good set $X$ covering $\B$. 
We show the following:

\medskip
\noindent
For every connected component $C=(V(C),E(C))$ of $G(\rels{A})-X$, 
there is a $\B$-admissible tree decomposition of $G(\rels{A})[X\cup V(C)]$ (i.e., the subgraph of $G(\rels{A})$ induced by $X\cup V(C)$) 
which has $X$ as one of its bags. 
\medskip

These decompositions can be glued to define a $\B$-admissible tree decomposition of $G(\rels{A})$ as required.

Let $C=(V(C),E(C))$ be a fixed connected component of $G(\rels{A})-X$. 
Since $X\cap V(C)=\emptyset$ and $X$ covers $\B$, it follows that $V(C)\not\in \B$. 
Then if we define $\B' \defeq \B\cup\{V(C)\}$, we have that $|\B'|>|\B|$. 
Suppose first that $\B'$ is not a bramble. (Note that in the base case of the induction we always have this case.)
This means that $V(C)$ fails to touch a set $B^*\in \B$. 
Let $N(C)$ be the union of $V(C)$ and all vertices in $G(\rels{A})$ adjacent to some vertex in $V(C)$. 
We have $N(C)\cap B^*=\emptyset$, and hence $N(C)$ does not cover $\B$. 
We also have $N(C)\subseteq X\cup V(C)$. 
Then our desired $\B$-admissible tree decomposition for $G(\rels{A})[X\cup V(C)]$ consists of  two adjacent bags $X$ and $N(C)$.

We now assume that $\B'$ is a bramble. 
By induction, there is a $\B'$-admissible tree decomposition $(T,\beta)$ for $G(\rels{A})$ with $T=(V(T),E(T))$. 
We start by considering the case when $|X|\leq k+1$. Without loss of generality, 
we can assume that $X$ is a cover of $\B$ of minimum size. In this case we can argue as in the proof of~\cite[Theorem~12.3.9]{diestel10:graph}. 
Let $\ell \defeq |X|\leq k+1$. If $(T,\beta)$ is $\B$-admissible then we are done, so we assume that there exists 
$s\in V(T)$ such that $\beta(s)$ is bad and covers $\B$. 
Every separator of two covers of a bramble is also a cover of that bramble~\cite[Lemma~12.3.8]{diestel10:graph}.
It follows that the minimum size of a separator of $X$ and $\beta(s)$ is $\ell$. 
By Menger's theorem~\cite[Theorem~3.3.1]{diestel10:graph}, there exist $\ell$
disjoint paths $P_1,\dots,P_\ell$ linking $X$ and $\beta(s)$ such that
each $P_i$ intersects $X$ and $\beta(s)$ exactly in its initial node and final node, respectively. 
We denote by $x_i$ the initial node of $P_i$. We have $X=\{x_1,\dots,x_\ell\}$. 
Observe that since $(T,\beta)$ is $\B'$-admissible, the bag $\beta(s)$ fails to cover $\B'$, that is, $\beta(s)\cap V(C)=\emptyset$. 
It follows that the path $P_i$ intersects $X\cup V(C)$ exactly in its initial node $x_i$.

 Let $(T,\beta')$ be the restriction of $(T,\beta)$ to the set of nodes $X\cup V(C)$, i.e., 
 $\beta'(t)=\beta(t)\cap (X\cup V(C))$, for all $t\in V(T)$. 
 The pair $(T,\beta')$ is a tree decomposition of $G(\rels{A})[X\cup V(C)]$. 
 The desired tree decomposition $(T,\beta'')$ for $G(\rels{A})[X\cup V(C)]$ is the result of adding 
 some of the nodes $x_i$'s to some particular bags of $(T,\beta')$. 
Let us fix for each $x_i$ a node $t_i\in V(T)$ such that $x_i\in \beta(t_i)$.  
 Formally, $(T,\beta'')$ is given by 
 \[\beta''(t)=\beta'(t)\cup\{x_i\mid \text{$t$ is in the unique simple path from $t_i$ to $s$ in $T$}\}.\]

Observe that $(T,\beta'')$ still satisfies the connectedness property of tree decompositions. 
Also, observe that $|\beta(t)|\geq |\beta''(t)|$, for all $t\in V(T)$. 
Indeed, for each $x_i\in \beta''(t)\setminus \beta(t)$, with $i \in \{1,\dots,\ell\}$, the node $t$ is in the unique simple path 
from $t_i$ to $s$ in $T$. 
It follows that $\beta(t)$ contains a node $w_i$ from the path $P_i$ that is different from $x_i$.
Since $P_i$ meets $X\cup V(C)$ only in $x_i$, we have that  $w_i\not\in\beta''(t)$. 
The claim follows since all the $w_i$'s are distinct. 

Since $\beta''(s)=X$, it remains to show that 
$(T,\beta'')$ is $\B$-admissible. 
Pick $t\in V(T)$ such that $\beta''(t)$ is bad, that is, $|\beta''(t)|>k+1$ and $\beta''(t)$ is not a scope. 
We need to show that $\beta''(t)$ does not cover $\B$. 
We claim first that $\beta(t)$ is also a bad set. 
By contradiction, suppose that $\beta(t)$ is good. 
As $|\beta(t)|\geq |\beta''(t)|>k+1$, we have that $\beta(t)$ is a scope. 
Since $|\beta(t)|\geq |\beta''(t)|>k+1$ and $|X|=\ell\leq k+1$, it is the case that $\beta(t)\cap V(C)\neq \emptyset$.
It follows that $\beta(t)\subseteq X\cup V(C)$ as $\beta(t)$ is a clique in $G(\rels{A})$. 
Consequently, $\beta(t)\subseteq \beta''(t)$, and since $|\beta(t)|\geq |\beta''(t)|$, 
we conclude that $\beta(t)=\beta''(t)$. But $\beta''(t)$ is not a scope, which is a contradiction. 

Since $(T,\beta)$ is $\B'$-admissible and $\beta(t)$ is bad, we have that 
there is $B^*\in \B'$ such that $\beta(t)\cap B^*=\emptyset$. 
As discussed in the previous paragraph, we know that $\beta(t)\cap V(C)\neq
\emptyset$. 
It follows that $B^*\in \B$. 
We claim that $\beta''(t)\cap B^*=\emptyset$, 
which implies that  $\beta''(t)$ does not cover $\B$, as required. 
Towards a contradiction, suppose that $x_i\in B^*$, for $x_i\in \beta''(t)\setminus \beta(t)$, 
with $i\in \{1,\dots,\ell\}$. 
By definition, $t$ is in the unique simple path from $t_i$ to $s$ in $T$. 
Also, by definition, $\beta(s)$ covers $\B$, and then there is $b\in B^*\cap \beta(s)$. 
Since $B^*$ is connected, there is a path $P$ from $x_i$ to $b$ in $G(\rels{A})$ whose nodes belong to $B^*$. 
We have that $\beta(t)$ must contain a node from $P$, and then a node from $B^*$, which is a contradiction.

It remains to consider the case  when $|X|>k+1$, and hence $X$ is a scope. Without loss of generality, we assume that $X$ is a maximal scope covering $\B$. 
 Again, let $(T,\beta')$ be the restriction of $(T,\beta)$ to the set of nodes $X\cup V(C)$.
  For $t\in V(T)$, we say that $\beta'(t)$ is \emph{maximal} if there is no 
 $t'\in V(T)$ such that $\beta'(t)\subsetneq \beta'(t')$. 
 We define $(\tilde{T},\tilde{\beta})$, where $\tilde{T}=(V(\tilde{T}),E(\tilde{T}))$, to be a decomposition of $G(\rels{A})[X\cup V(C)]$
 whose bags are precisely the maximal bags of $(T,\beta')$, or more formally, 
 
 \[\{\tilde{\beta}(t)\mid t\in V(\tilde{T})\}=\{\beta'(t)\mid \text{$t\in V(T)$ and $\beta'(t)$ is maximal}\}.\]

 In order to obtain $(\tilde{T},\tilde{\beta})$ we can iteratively remove non-maximal bags from $(T,\beta')$ as follows: 
 if $\beta'(t)$ is not maximal and it is strictly contained in $\beta'(t')$, for $t,t'\in V(T)$, and $\{t,t''\}$ is the first edge in the unique simple path from $t$ 
 to $t'$ in $T$, then remove $\beta'(t)$ by contracting the edge $\{t,t''\}$ into a new node $s$ defining $\beta'(s)=\beta'(t'')$.

 We claim that $(\tilde{T},\tilde{\beta})$ satisfies the required conditions. 
Since $X$ is a scope and then a clique in $G(\rels{A})$, 
 there is a node $t_X\in V(T)$ such that $X\subseteq \beta'(t_X)$. 
We show first that there is no $t\in V(T)$ such that $X\subsetneq \beta'(t)$. 
Towards a contradiction, suppose such a $t$ exists. 
Note that $X\subsetneq \beta'(t)\subseteq \beta(t)$. 
This implies that $\beta(t)$ covers $\B'$. Moreover, since $|X|>k+1$ and $X$ is a maximal scope, 
we deduce that $\beta(t)$ is a bad set. This contradicts the $\B'$-admissibility of $(T,\beta)$ and the claim follows. 
Using this, we obtain that $\beta'(t_X)=X$ and moreover, $\beta'(t_X)$ is a maximal bag for  $(T,\beta')$. 
By construction of $(\tilde{T},\tilde{\beta})$, there exists $\tilde{t}\in V(\tilde{T})$ 
such that $\tilde{\beta}(\tilde{t})=X$. 
It only remains to show that $(\tilde{T},\tilde{\beta})$ is $\B$-admissible.

Suppose $\tilde{\beta}(t)$ is a bad set for a node $t\in V(\tilde{T})$. 
We prove first that $\tilde{\beta}(t)\cap V(C)\neq \emptyset$. 
By contradiction, assume that $\tilde{\beta}(t)\subseteq X$. 
It follows that $\tilde{\beta}(t)=X$, otherwise $\tilde{\beta}(t)$ would not be a maximal bag of $(T,\beta')$. 
But $X$ is a scope and then $\tilde{\beta}(t)$ cannot be bad. This is a contradiction. 
Let $t^*\in V(T)$ be a node such that $\beta'(t^*)=\tilde{\beta}(t)$. 
We claim that $\beta(t^*)$ is a bad set. Since $\beta'(t^*)\subseteq \beta(t^*)$, 
we know that $|\beta(t^*)|\geq |\beta'(t^*)|>k+1$. 
For the sake of contradiction, suppose that $\beta(t^*)$ is a scope. 
In particular, $\beta(t^*)$ is a clique in $G(\rels{A})$. 
Since $\beta'(t^*)\cap V(C)\neq \emptyset$, and hence $\beta(t^*)\cap V(C)\neq \emptyset$, 
we have that $\beta(t^*)\subseteq X\cup V(C)$. Therefore, $\beta'(t^*)=\beta(t^*)$. 
This is a contradiction since $\beta'(t^*)$ is bad and then not a scope. 
Thus $\beta(t^*)$ is a bad set. As $(T,\beta)$ is $\B'$-admissible, 
there is $B^*\in \B'$ with $\beta(t^*)\cap B^*=\emptyset$. 
Since $\beta(t^*)\cap V(C)\neq\emptyset$, it follows that $B^*\in \B$. 
Finally, since $\beta'(t^*)\subseteq \beta(t^*)$, we have that $\beta'(t^*)\cap B^*=\emptyset$. 
Hence, $\beta'(t^*)=\tilde{\beta}(t)$ does not cover $\B$. 
We conclude that $(\tilde{T},\tilde{\beta})$ is $\B$-admissible. 
\end{proof}

\subsection{Sufficiency}
\label{subsec:suff}

We show the following. 

\begin{theorem}
\label{thm:suff}
Let $\structA$ be a valued $\sigma$-structure and let $k\geq 1$. Suppose that (i) $\twms{\structA}\leq k-1$ and (ii) the overlap of $\structA$ is at most $k$. 
  Then the Sherali-Adams relaxation of level $k$ is always tight for $\structA$, i.e., 
for every valued $\sigma$-structure $\structB$, we have that $\optfrac{k}{\structA,\structB}=\opt{\structA,\structB}$. 
\end{theorem}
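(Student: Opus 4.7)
The plan is to take an optimal solution $\lambda^*$ to SA$_k(\structA,\structB)$ and round it into a mapping $h:A\mapsto B$ with $\cost{h}\leq\optfrac{k}{\structA,\structB}$; combined with the trivial relaxation bound $\optfrac{k}{\structA,\structB}\leq\opt{\structA,\structB}$, this yields the claimed equality. The rounding will follow the classical bounded-treewidth template for Sherali-Adams, adapted so that the large bags permitted by $\twms{\structA}\leq k-1$ are handled through the positive tuples of $\structA$ themselves, not through the auxiliary $\rho_k$-tuples (which have arity exactly $k$ and cannot cover a scope of size greater than $k$).

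First I would fix a tree decomposition $(T,\beta)$ of $\pos{\structA}$ of width modulo scopes at most $k-1$ and assign to each $t\in V(T)$ a distribution $\mu_t$ over $B^{\beta(t)}$ read off from $\lambda^*$: if $\beta(t)$ is a scope of size greater than $k$, hypothesis (ii) forces the existence of a \emph{unique} positive tuple $(f_t,\tuple{x}_t)\in\tup{\structA}_{>0}$ with $\toset{\tuple{x}_t}=\beta(t)$, and I set $\mu_t=\lambda^*(f_t,\tuple{x}_t,\cdot)$; otherwise $|\beta(t)|\leq k$ and I pick any $\tuple{x}_t\in A^k$ with $\toset{\tuple{x}_t}=\beta(t)$ and set $\mu_t=\lambda^*(\rho_k,\tuple{x}_t,\cdot)$. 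Next I would verify that for every edge $\{t_1,t_2\}\in E(T)$ the distributions $\mu_{t_1},\mu_{t_2}$ agree on $X=\beta(t_1)\cap\beta(t_2)$. A short case split shows $|X|\leq k$ in all cases: if both bags have size exceeding $k$ and differ, they are distinct scopes whose overlap is bounded by (ii); if some bag has size at most $k$, the bound is immediate; and if the two bags coincide then $\mu_{t_1}=\mu_{t_2}$ directly, either by the uniqueness from (ii) when the common bag has size greater than $k$, or by one application of (SA1) when it has size at most $k$. When $|X|\leq k$ and the bags differ, I pick any $\tuple{z}\in A^k$ with $\toset{\tuple{z}}=X$ and apply (SA1) twice to show that both $\mu_{t_1}$ and $\mu_{t_2}$ marginalize on $B^X$ to the common distribution $\lambda^*(\rho_k,\tuple{z},\cdot)$ restricted to $X$.

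With local consistency in hand, I would glue the $\mu_t$'s into a single joint distribution $\mu$ over $B^A$ in the standard way (sample at the root of $T$, then extend along the tree by conditioning on the intersection); the running intersection property makes $\mu$ well-defined and ensures that its marginal on $B^{\beta(t)}$ equals $\mu_t$ for every $t$. For each $(f,\tuple{x})\in\tup{\structA}_{>0}$ the scope $\toset{\tuple{x}}$ is a clique of $G(\pos{\structA})$ and so sits in some bag $\beta(t)$; the main delicate step, and the technical heart of the plan, will be to show that the marginal of $\mu$ on $B^{\toset{\tuple{x}}}$ is exactly $\lambda^*(f,\tuple{x},\cdot)$. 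When $|\toset{\tuple{x}}|>k$, combining $\twms{\structA}\leq k-1$ with (ii) forces $\beta(t)=\toset{\tuple{x}}$ and $(f_t,\tuple{x}_t)=(f,\tuple{x})$, so the marginal is $\mu_t$ by construction; when $|\toset{\tuple{x}}|\leq k$ one last application of (SA1) closes the gap. Taking expectations then yields $E_{h\sim\mu}[\cost{h}]=\optfrac{k}{\structA,\structB}$ (with finiteness ensured by (SA3)), and the probabilistic method extracts the required $h$.
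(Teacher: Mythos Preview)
Your plan is correct and is essentially the same argument as the paper's, recast in probabilistic language. The paper phrases the same rounding as ``the restricted polytope $P(\structA,\structB)$ is integral'' and proves it by an explicit bottom-up induction on the tree that writes the LP solution as a convex combination of integral vectors $I^g$; your Markov-tree gluing (sample at the root, extend by conditioning on intersections) is exactly the same construction, and your case split on bag sizes/overlaps to invoke \textcolor{red}{\textbf{(SA1)}} matches the paper's choice of the tuples $(f_t,\tuple{x}_t)$ and $(f_e,\tuple{x}_e)$ and its use of the overlap bound.
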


\begin{proof}
Let $\structB$ be an arbitrary valued $\sigma$-structure with universe $B$. 
Let $\rels{A}\defeq\pos{\structA}$ and let $(T,\beta)$ be a tree decomposition of the Gaifman graph $G(\rels{A})$ of width modulo scopes $\leq k-1$, 
where $T=(V(T),E(T))$. 
As usual, we denote by $A$ the universe of $\structA$, $\rels{A}$ and $G(\rels{A})$. 
Recall that the solutions for SA$_k(\structA,\structB)$ are indexed by the set
\begin{align*}
{\I}\defeq &\{(f,\tuple{x},s): (f,\tuple{x})\in \tup{\structA_k}_{>0}, s:\toset{\tuple{x}}\mapsto B_k\}\\
=&\{(f,\tuple{x},s): (f,\tuple{x})\in \tup{\structA}_{>0}, s:\toset{\tuple{x}}\mapsto B\}\cup\{(\rho_k,\tuple{x},s): \tuple{x}\in A^k, s:\toset{\tuple{x}}\mapsto B\}.
\end{align*}
Let $\btuples\subseteq \tup{\structA_k}_{>0}$ be the set 
\[\btuples\defeq\{(f,\tuple{x})\in \tup{\structA_k}_{>0}\mid \text{$\toset{\tuple{x}}\subseteq \beta(t)$ for some $t\in V(T)$}\}.\]
We have $\tup{\structA}_{>0}\subseteq \btuples$.  
We define $\T\defeq\{(f,\tuple{x},s)\in \I: (f,\tuple{x})\in \btuples\}$. 
Let $P(\structA,\structB)$ be the system of linear inequalities given by the constraints of SA$_k(\structA,\structB)$ restricted to the variables 
indexed by $\T$. 
More precisely, $P(\structA,\structB)$ is the following system over variables $\{\lambda(f,\tuple{x},s): (f,\tuple{x},s)\in \T\}$: 

\begin{align}
&\lambda(f,\tuple{x},s) = \sum_{r:\toset{\tuple{y}} \mapsto B,\, 
r|_{\toset{\tuple{x}}}=s} \lambda(p,\tuple{y},r) & \hspace{-10mm} \qquad\qquad\qquad\quad \forall (f, \tuple{x}), (p, \tuple{y}) \in \tuples: \nonumber\\[-5\jot]
&~& \hspace{-50mm} 
 \text{$\toset{\tuple{x}} \subseteq \toset{\tuple{y}}$ and $|\toset{\tuple{x}}|\leq k$}; \nonumber \\ 
&~&  \forall s: \toset{\tuple{x}} \mapsto B \label{eq:p1} \\[3\jot]
&\sum_{s: \toset{\tuple{x}} \mapsto B} \lambda(f,\tuple{x},s) = 1  & \forall (f,\tuple{x}) \in \tuples \label{eq:p2}\\
&\lambda(f,\tuple{x},s) = 0  &\hspace{-50mm}
\forall (f,\tuple{x},s)\in \T,\, f^{\structA_k}(\tuple{x}) \times f^{\structB_k}(s(\tuple{x})) = \infty \label{eq:p3}\\
&\lambda(f,\tuple{x},s) \geq 0  &
\hspace{-20mm} \forall (f,\tuple{x},s)\in \T \label{eq:p4}
\end{align}

From the definition of the Sherali-Adams hierarchy, we have
$\optfrac{k}{\structA,\structB}\leq\opt{\structA,\structB}$. We need to prove that $\opt{\structA,\structB}\leq
\optfrac{k}{\structA,\structB}$. Let $\lambda=\{\lambda(f,\tuple{x},s):
(f,\tuple{x},s)\in \I\}$ be an optimal solution to SA$_k(\structA,\structB)$.
Let $c=\{c(f,\tuple{x},s): (f,\tuple{x},s)\in\I\}$ be the vector defining the
objective function of SA$_k(\structA,\structB)$.  Consider the projection
$\lambda|_{\T}=\{\lambda(f,\tuple{x},s): (f,\tuple{x},s)\in \T\}$ of $\lambda$
to $\T$. Similarly, consider the projection $c|_{\T}$ of $c$ to $\T$. 
The restriction of $c$ to $\I\setminus \T$ is the vector $0$, and hence,
$c|_{\T}\cdot \lambda|_{\T}=c\cdot \lambda$. Also, $\lambda|_{\T}$ is a
solution to $P(\structA,\structB)$. By Lemma~\ref{lemma:p-integral} proved
below, the polytope $P(\structA,\structB)$ is integral and thus $\lambda|_{\T}$
is a convex combination of integral solutions $I^{g_1},\dots,I^{g_n}$ of
$P(\structA,\structB)$, for assignments $g_i:A\mapsto B$.  It follows that there
exists $i\in \{1,\dots,n\}$ such that $c|_{\T}\cdot I^{g_i} \leq c|_{\T}\cdot
\lambda|_{\T}=c\cdot \lambda$. In particular, the cost of the assignment $g_i$
is $\leq c \cdot \lambda= \optfrac{k}{\structA,\structB}$. We conclude that
$\opt{\structA,\structB} \leq \optfrac{k}{\structA,\structB}$. 
\end{proof}

This is the last missing piece in the proof of Theorem~\ref{thm:suff}.

\begin{lemma}
\label{lemma:p-integral}
The polytope described by $P(\structA,\structB)$ is integral. 
\end{lemma}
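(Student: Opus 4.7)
The plan is to show integrality by exhibiting every feasible $\lambda$ of $P(\structA,\structB)$ as a convex combination of $0/1$-valued solutions, one for each assignment $g: A \mapsto B$ compatible with the hard constraints. The key ingredients are the $k$-local consistency built into constraints~(1)--(2) together with the two geometric assumptions: $\twms{\structA} \leq k-1$ and the overlap of $\structA$ being at most $k$.

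First, to each bag $\beta(t)$ of the tree decomposition $(T, \beta)$ I attach a probability distribution $\mu_t$ over $B^{\beta(t)}$. If $|\beta(t)| \leq k$, pick any tuple $\tuple{x} \in A^k$ with $\toset{\tuple{x}} = \beta(t)$ (padding with repetitions if necessary) and set $\mu_t := \lambda(\rho_k, \tuple{x}, \cdot)$. If $|\beta(t)| > k$, then by $\twms{\structA} \leq k-1$ the bag $\beta(t)$ must be a scope, so some $(f,\tuple{x}) \in \tup{\structA}_{>0}$ satisfies $\toset{\tuple{x}} = \beta(t)$; moreover the overlap bound $\leq k$ rules out a second positive tuple with the same scope (its overlap with $(f,\tuple{x})$ would be $|\beta(t)|>k$), so this tuple is unique, and I set $\mu_t := \lambda(f, \tuple{x}, \cdot)$. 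Constraint \textbf{(SA2)} ensures each $\mu_t$ is a probability distribution.

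Next, I verify pairwise consistency: for every edge $\{t_1,t_2\}$ of $T$, the marginals of $\mu_{t_1}$ and $\mu_{t_2}$ on the intersection $S := \beta(t_1) \cap \beta(t_2)$ coincide. The intersection satisfies $|S|\leq k$---trivially if some $\beta(t_i)$ is small, and otherwise (after contracting any edge with equal bags) because two distinct scopes of size exceeding $k$ overlap in at most $k$ elements by the overlap hypothesis. Padding $S$ to $\tuple{z} \in A^k$ with $\toset{\tuple{z}} = S$, the triple $(\rho_k, \tuple{z})$ lies in $\btuples$, and constraint (1) applied between $(\rho_k, \tuple{z})$ and the tuple defining $\mu_{t_i}$ forces both marginals to equal $\lambda(\rho_k, \tuple{z}, \cdot)$. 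Invoking the running-intersection property of $(T,\beta)$, the standard junction-tree construction assembles these consistent local distributions into a global distribution $\mu$ on $B^A$ whose marginal on $\beta(t)$ is $\mu_t$ for every $t$.

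Finally, for each $g \in \supp{\mu}$ I define the $0/1$ vector $I^g$ by $I^g(f,\tuple{x},s) = 1$ iff $s = g|_{\toset{\tuple{x}}}$. Any $(f,\tuple{x}) \in \btuples$ with $|\toset{\tuple{x}}|>k$ must have $f\in\sigma$ (since $\rho_k$ has arity $k$) and, by the uniqueness argument above, $\toset{\tuple{x}}$ equal to some bag $\beta(t)$, so $\lambda(f,\tuple{x},\cdot)=\mu_t$; for $(f,\tuple{x})\in\btuples$ with $|\toset{\tuple{x}}|\leq k$, constraint (1) identifies $\lambda(f,\tuple{x},\cdot)$ with the marginal of $\mu_t$ onto $\toset{\tuple{x}}$ for any bag containing $\toset{\tuple{x}}$. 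Hence $\lambda|_{\T}=\sum_g \mu(g)\,I^g$ coordinate-wise. Constraint \textbf{(SA3)} ensures $\mu(g)=0$ whenever $g$ would make $\lambda(f,\tuple{x},g|_{\toset{\tuple{x}}})=0$ forced by an infinite product, so each $I^g$ appearing with positive weight is a feasible integral solution, exhibiting $\lambda|_\T$ as a convex combination of vertices of $P(\structA,\structB)$. The main obstacle is the treatment of large scopes: the Sherali-Adams level-$k$ constraints (1) provide no direct marginalisation between tuples whose common scope exceeds $k$, so the proof crucially relies on the overlap hypothesis to force every large bag to be uniquely filled by a single positive tuple, letting us read off $\mu_t$ from a single $\lambda$-coordinate; everything else is routine junction-tree reasoning.
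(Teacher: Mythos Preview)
Your proposal is correct and follows essentially the same strategy as the paper: attach a local distribution to each bag (using $\rho_k$ for small bags and the scope tuple for large bags), use the overlap bound to ensure marginal consistency along edges of the tree decomposition, and glue. The paper carries out the gluing by an explicit bottom-up induction on $T$ with an explicit product formula for the coefficients $\delta_g$, whereas you invoke the junction-tree theorem as a black box; the content is the same. One minor point: your marginal-consistency argument as written assumes $S=\beta(t_1)\cap\beta(t_2)\neq\emptyset$ (you pad $S$ to a $\rho_k$-tuple), but the empty-intersection case is trivial and the paper simply separates it out via the set $E_\cap$.
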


\begin{proof}

We start with some notation. Let $\omega$ be a probability distribution over $W^X$. 
For $X'\subseteq X$, we define the \emph{marginal} of $\omega$, denoted by $\omega_{X'}$, to be the 
probability distribution over $W^{X'}$ defined by $\omega_{X'}(h)=\sum_{g:X\mapsto W,\, g|_{X'}=h} \omega(g)$. 
We have the following:

\begin{claim}
\label{claim:marginals}
Let $\omega$ and $\delta$ be probability distributions over $W^X$ and $W^Y$, respectively. Suppose $\omega_{X\cap Y}=\delta_{X\cap Y}$. 
Then there is a probability distribution $\epsilon$ over $W^{X\cup Y}$ with $\epsilon_X = \omega$ and $\epsilon_Y=\delta$. 
\end{claim}

\begin{proof}
We set $\epsilon(h)=\frac{\omega(h|_{X})\delta(h|_{Y})}{\omega(h|_{X\cap Y})}$ for every $h:X\cup Y\mapsto W$. ($\epsilon(h)=0$ if $\omega(h|_{X\cap Y})=0$.)
\renewcommand{\qedsymbol}{$\blacksquare$}\end{proof}

Let us fix a solution $\lambda$ of $P(\structA,\structB)$. 
We need to show that $\lambda$ is a convex combination of integral solutions. 
Since the width modulo scopes of the decomposition $(T,\beta)$ is $\leq k-1$, 
we know that either $|\beta(t)|\leq k$ or $\beta(t)$ is a scope of $G(\rels{A})$, for every $t\in V(T)$. 
It follows that for each $t\in V(T)$, we can find  $(f_t,\tuple{x}_t)\in \tuples$ such that $\toset{\tuple{x}_t}=\beta(t)$. 

For $t\in V(T)$, let $\omega^t$ be the probability distribution over $B^{\beta(t)}$ given by $\omega^t(h)=\lambda(f_t,\tuple{x}_t, h)$. 
Note that $(\star)$ for every edge $\{t,t'\}\in E(T)$, we have that $\omega^{t}_{\beta(t)\cap \beta(t')} = \omega^{t'}_{\beta(t)\cap \beta(t')}$. 
Indeed, since the overlap of $\structA$ is $\leq k$, we can pick a tuple $(\rho_k, \tuple{z})\in \tuples$ such that $\toset{\tuple{z}}=\beta(t)\cap\beta(t')$. 
By restriction (\ref{eq:p1}) in $P(\structA,\structB)$, we obtain that $\omega^{t}_{\beta(t)\cap \beta(t')} = \lambda(\rho_k, \tuple{z}, \cdot) =  \omega^{t'}_{\beta(t)\cap \beta(t')}$. 
We can then define a distribution $\epsilon$ over $B^A$ such that $\epsilon_{\beta(t)}=\omega^t$, for every $t\in V(T)$. To do this, we start at  a particular node $r\in V(T)$. 
We apply iteratively Claim~\ref{claim:marginals} to extend the current distribution ($\omega^r$ at the beginning) one bag at a time, until all bags are covered. 
Note that the hypothesis of Claim~\ref{claim:marginals} (same marginals) is always satisfied due to $(\star)$.

For $g:A\to B$, we define $I^g =\{I^g(f,\tuple{x},s): (f,\tuple{x},s)\in \T\}$ to be the $0/1$-vector such that 
$I^g(f,\tuple{x},s) = 1$ if and only if $g|_{\toset{\tuple{x}}} = s$. 
We show that $\lambda = \sum_{g\in\supp{\epsilon}} \epsilon(g) I^g$. 
Take $(f,\tuple{x},s)\in \T$. By definition of $\T$, there is $t\in V(T)$ such that $\toset{\tuple{x}}\subseteq \beta(t)=\toset{\tuple{x}_t}$. 
If $(f,\tuple{x})=(f_t,\tuple{x}_t)$, then 
\begin{align*}
\sum_{g\in\supp{\epsilon}} \epsilon(g) I^g(f_t,\tuple{x}_t, s) = \sum_{g\in\supp{\epsilon}, \, g|_{\beta(t)}=s} \epsilon(g) = \omega^t(s) = \lambda(f_t,\tuple{x}_t, s)
\end{align*}
and we are done. In case $(f,\tuple{x})\neq(f_t,\tuple{x}_t)$, since the overlap is $\leq k$, we have that $|\toset{\tuple{x}}|\leq k$, and by condition (\ref{eq:p1}) in $P(\structA,\structB)$, we obtain
\begin{align*}
\lambda(f,\tuple{x}, s) = \sum_{r:\beta(t)\mapsto B,\, r|_{\toset{\tuple{x}}}=s} \omega^t(r) = \epsilon_{\toset{\tuple{x}}}(s) =  \sum_{g\in\supp{\epsilon}} \epsilon(g) I^g(f,\tuple{x}, s). 
\end{align*}

It remains to show that $I^g$ is a solution to $P(\structA,\structB)$, for all $g\in\supp{\epsilon}$. 
Conditions (\ref{eq:p1}), (\ref{eq:p2}) and (\ref{eq:p4}) hold by construction. 
For condition (\ref{eq:p3}), take $(f,\tuple{x},s)\in \T$ such that $f^{\structA_k}(\tuple{x}) \times f^{\structB_k}(s(\tuple{x})) = \infty$. 
Since $0=\lambda(f,\tuple{x},s) = \sum_{g\in\supp{\epsilon}} \epsilon(g) I^g(f,\tuple{x},s)$, it follows that $I^g(f,\tuple{x},s)=0$, 
for all $g\in \supp{\epsilon}$, 
and hence condition (\ref{eq:p3}) is satisfied. 
\end{proof}

\medskip

\begin{theorem}
\label{thm:suff-core}
Let $\structA$ be a valued $\sigma$-structure and $\structA'$ be its core. 
Suppose that (i) $\twms{\structA'}\leq k-1$ and (ii) the overlap of $\structA'$ is at most $k$.
Then the Sherali-Adams relaxation of level $k$ is always tight for $\structA$, i.e., 
for every valued $\sigma$-structure $\structB$, we have that $\optfrac{k}{\structA,\structB}=\opt{\structA,\structB}$. 
\end{theorem}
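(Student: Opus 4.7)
The plan is to derive Theorem~\ref{thm:suff-core} directly from Theorem~\ref{thm:suff} together with the transfer results for cores proved earlier. Since the hypotheses in Theorem~\ref{thm:suff-core} apply not to $\structA$ itself but to its core $\structA'$, the strategy is to push everything through $\structA'$ where Theorem~\ref{thm:suff} applies verbatim, and then translate back to $\structA$ on both sides of the desired equality.

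Concretely, given an arbitrary valued $\sigma$-structure $\structB$, I would first apply Theorem~\ref{thm:suff} to $\structA'$: since $\twms{\structA'}\leq k-1$ and the overlap of $\structA'$ is at most $k$, the $k$-th level Sherali-Adams relaxation is tight at $\structA'$, so $\optfrac{k}{\structA',\structB}=\opt{\structA',\structB}$. Next, I would translate the LP value: by Proposition~\ref{prop:corefrac}, $\optfrac{k}{\structA,\structB}=\optfrac{k}{\structA',\structB}$, so the Sherali-Adams value is preserved when passing between a structure and its core. Finally, I would translate the optimum: since $\structA'$ is the core of $\structA$, we have $\structA\equi\structA'$ by definition, which by Definition~\ref{def:improve} and Definition~\ref{def:equiv} immediately yields $\opt{\structA,\structB}=\opt{\structA',\structB}$. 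Chaining these three equalities gives $\optfrac{k}{\structA,\structB}=\optfrac{k}{\structA',\structB}=\opt{\structA',\structB}=\opt{\structA,\structB}$, as required.

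There is essentially no obstacle here, as all the heavy lifting has already been done: Theorem~\ref{thm:suff} handles the structural tightness argument (via the integrality of $P(\structA',\structB)$ established in Lemma~\ref{lemma:p-integral}), Proposition~\ref{prop:corefrac} handles the invariance of the Sherali-Adams value under passing to a core, and the definition of (valued) equivalence handles the invariance of the true optimum. The only point worth noting is that one must apply Theorem~\ref{thm:suff} to $\structA'$ rather than to $\structA$, since the structural conditions on treewidth modulo scopes and overlap are only assumed for the core; without using the core, $\structA$ itself could have arbitrarily large treewidth modulo scopes (as in Example~\ref{ex:btw}) and the direct application would fail.
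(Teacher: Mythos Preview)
Your proposal is correct and matches the paper's proof essentially verbatim: apply Theorem~\ref{thm:suff} to the core $\structA'$, then transfer the equality back to $\structA$ using Proposition~\ref{prop:corefrac} for the Sherali-Adams value and the definition of equivalence for the true optimum. There is nothing to add.
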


\begin{proof}
Let $\structB$ be a valued $\sigma$-structure. We can apply
Theorem~\ref{thm:suff} to $\structA'$, and obtain that
$\optfrac{k}{\structA',\structB}=\opt{\structA',\structB}$. Since
$\structA'\equiv \structA$, we know that
$\opt{\structA,\structB}=\opt{\structA',\structB}$ (by the definition of
equivalence and cores) and
$\optfrac{k}{\structA,\structB}=\optfrac{k}{\structA',\structB}$ (by
Proposition~\ref{prop:corefrac}). Hence,
$\optfrac{k}{\structA,\structB}=\opt{\structA,\structB}$.  
\end{proof}

\subsection{Necessity of treewidth modulo scopes}
\label{subsec:nec-tw}

In this and the following section we will denote
by $\oplus$ the addition modulo $2$. 
We show the following. 

\begin{theorem}
\label{thm:nece-core}
Let $\structA$ be a valued $\sigma$-structure and let $k\geq 1$. 
Suppose that $\structA$ is a core and $\twms{\structA}\geq k$. 
Then there exists a valued $\sigma$-structure $\structB$ such that $\optfrac{k}{\structA,\structB}<\opt{\structA,\structB}$. 
\end{theorem}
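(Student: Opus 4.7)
The plan is to adapt the bramble-based obstruction used by Atserias, Bulatov, and Dalmau~\cite{Atserias07:power} for the $k$-consistency algorithm, and its weighted-LP version due to Thapper and \v{Z}ivn\'y~\cite{tz17:sicomp}, to the treewidth-modulo-scopes setting by combining Theorem~\ref{theo:char-twms} with the core witness of Proposition~\ref{prop:core-char}.

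First I would apply the contrapositive of Theorem~\ref{theo:char-twms} at level $k-1$ to obtain a bramble $\B$ in $G(\pos{\structA})$ that is \emph{not} coverable by any set of size at most $k$ nor by any scope. Second, since $\structA$ is a core, by Proposition~\ref{prop:core-char} there is a weighting $c: \tup{\structA} \to \qplus$ such that any non-surjective mapping $g: A \to A$ gives $\sum_{(f,\tuple{x})} f^{\structA}(\tuple{x}) c(f,\tuple{x}) < \sum_{(f,\tuple{x})} f^{\structA}(\tuple{x}) c(f,g(\tuple{x}))$; write $M^* \defeq \sum_{(f,\tuple{x})} f^{\structA}(\tuple{x}) c(f,\tuple{x})$.

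Next I would take as universe $B$ the set of pairs $(X, h)$ with $X$ either a subset of $A$ of size at most $k$ or a scope of $\pos{\structA}$, and $h: X \to A$ a partial assignment satisfying an ``admissibility'' condition dictated by the pebble game associated with $\B$, so that admissible local views are exactly those that a strategy respecting $\B$ would maintain. I would then define, for each $f \in \sigma$ and each tuple $\tuple{b} = ((X_1, h_1), \dots, (X_{\ar{f}}, h_{\ar{f}})) \in B^{\ar{f}}$, the value $f^{\structB}(\tuple{b})$ to equal $c(f, (h_1(y_1), \dots, h_{\ar{f}}(y_{\ar{f}})))$ when the $h_i$'s glue to a partial homomorphism of $\pos{\structA}$ (with $y_i$ an ``anchor'' in $X_i$), and $\infty$ otherwise.

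Finally I would establish two claims. (a) A feasible solution to $\text{SA}_k(\structA, \structB)$ of cost at most $M^*$ is obtained by distributing, for each $(f, \tuple{x}) \in \tup{\structA_k}_{>0}$, the values $\lambda(f, \tuple{x}, s)$ over the maps $s : \toset{\tuple{x}} \to B$ that are restrictions of an admissible local configuration extending $\toset{\tuple{x}}$ inside a member of $\B$; the fact that $\B$ is uncoverable by sets of size $\leq k$ and by scopes is exactly what ensures the consistency and normalisation constraints of the LP are satisfied across all $k$-sized and scope-sized overlaps. (b) Any integer mapping $g: A \to B$ projects to $\pi \circ g: A \to A$, and the bramble obstruction forces $\pi \circ g$ to fail to be the identity; since $\structA$ is a core this projection must be non-surjective, so Proposition~\ref{prop:core-char} gives $\cost{g} > M^*$. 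Combining (a) and (b) yields $\optfrac{k}{\structA,\structB} \leq M^* < \opt{\structA,\structB}$, as required. The main obstacle will be isolating the correct admissibility condition for the pairs $(X, h)$: it must be loose enough to permit the bramble-driven fractional solution, yet tight enough that every integer mapping collapses through $\pi$ to a non-surjective map on $A$.
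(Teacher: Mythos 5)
Your high-level skeleton matches the paper's proof: obtain a bramble $\B$ from Theorem~\ref{theo:char-twms} that is uncoverable by small sets or scopes, fix the core weighting $c^*$ from Proposition~\ref{prop:core-char}, build a right-hand side structure $\structB$ whose cost function composes $c^*$ with a projection $\pi$, show $\opt{\structA,\structB} > M^*$ via non-surjectivity of $\pi\circ g$, and exhibit a bramble-driven fractional solution of cost $M^*$. However, the gadget you propose is not the one the paper uses, and it is the gadget that does all the work. The paper's $\rels{B}$ is the Atserias--Bulatov--Dalmau \emph{parity} structure: the universe consists of pairs $(a,(b_1,\dots,b_{d_a}))$ with $b_i \in \{0,1\}$ over the incident edges of $a$ in $G(\pos{\structA})$ and a parity constraint that is even except at a designated $a_0$. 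It is precisely this parity mechanism (Lemma~\ref{lemma:nohom}) that makes every homomorphism $h:\rels{A}\to\rels{B}$ satisfy $\pi\circ h(A) \neq A$: a globally consistent assignment of edge-bits would contradict the handshake lemma, whereas local consistency survives because $\B$ is uncoverable. Your ``pebble-position'' universe of pairs $(X,h)$ with $h:X\to A$ has no analogous mechanism, and you do not specify the admissibility condition, the choice of anchor $y_i$, or why such a $\pi$ would even be a homomorphism from $\rels{B}$ to $\pos{\structA}$; with that much freedom there is no reason $\pi\circ g$ should fail to be surjective for all integer $g$.

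Two further concrete errors. First, your deduction ``the bramble obstruction forces $\pi\circ g$ to fail to be the identity; since $\structA$ is a core this projection must be non-surjective'' is not valid: a non-identity map can perfectly well be a bijection, and being a core does not convert ``not the identity'' into ``not surjective.'' The paper does not pass through the identity at all --- it establishes non-surjectivity directly, after which Proposition~\ref{prop:core-char} gives $\cost{g} > M^*$. Second, the feasibility of the fractional solution is the genuinely delicate part of the theorem and you leave it as ``distributing over restrictions of admissible configurations.'' The paper constructs the family $\H$ of identity partial homomorphisms indexed by sets $X$ that do not cover $\B$, restricts to a lattice $\S$ of sets that moreover do not separate the component $G_0$, and then proves the technical Claim~\ref{claim:uniform} that the uniform distributions on the $\H(S)$, $S\in\S$, form a consistent family. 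Constraints $\textcolor{red}{\textbf{(SA1)}}$--$\textcolor{red}{\textbf{(SA4)}}$ are then verified against this family via the closure under intersection (Claim~\ref{claim:interS}) and the parity structure of $\rels{B}$. Without an analogue of that consistency claim for your $(X,h)$-universe, part (a) of your argument is not established, and you have yourself flagged that isolating the right admissibility condition is an open obstacle.
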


\begin{proof}
Let $\rels{A} \defeq \pos{\structA}$. 
As usual, we denote by $A$ the universe of $\structA$, $\rels{A}$ and the vertex
  set of $G(\rels{A})$. 
Since $\twms{\structA}=\twms{G(\rels{A})}\geq k$, 
Theorem~\ref{theo:char-twms} implies that 
there exists a bramble $\B$ of $G(\rels{A})$ that cannot be covered by any scope nor subset of size at most $k$ in $G(\rels{A})$. 
Every $B\in \B$ must belong to the same connected component of $G(\rels{A})$, 
which we denote by $G_0=(A_0,E_0)$. 
Fix any $a_0 \in A_0$.  
For each $a\in A$, let $d_a$ denote the degree of $a$ in $G(\rels{A})$, 
and let $e_1^a,\dots,e_{d_a}^a$ be a fixed enumeration of all the edges incident to $a$ in $G(\rels{A})$.

Recall that $\rels{A}$ is defined over the relational signature $\rel{\sigma}=\{R_f\mid f\in \sigma, \ar{R_f}=\ar{f}\}$. 
We define a relational structure $\rels{B}$ over $\rel{\sigma}$ as in~\cite{Atserias07:power}. 
The universe of $\rels{B}$, denoted by $B$, contains precisely all the tuples $(a,(b_1,\dots,b_{d_a}))$ such that
\begin{enumerate}
\item $a\in A$ and $b_1,\dots,b_{d_a}\in \{0,1\}$; 
\item $b_1\oplus \cdots \oplus b_{d_a} = 0$, if $a\neq a_0$;
\item $b_1\oplus \cdots \oplus b_{d_a} = 1$, if $a=a_0$. 
\end{enumerate}

Let $R_f\in \rel{\sigma}$ be of arity $n$. A tuple $((a^1,(b_1^1,\dots,b^1_{d_{a^1}})), \dots, (a^n,(b_1^n,\dots,b^n_{d_{a^n}})))$ 
belongs to $R_f^{\rels{B}}$ if and only if
\begin{enumerate}
\item $(a^1,\dots,a^n)$ belongs to $R_f^{\rels{A}}$, 
\item if $\{a^\ell,a^m\}=e_{i}^{a^\ell}=e_j^{a^m}$, for some $\ell,m\in \{1,\dots,n\}$, $i\in \{1,\dots,d_{a^\ell}\}$ and $j\in \{1,\dots,d_{a^m}\}$, 
then $b_i^\ell=b_j^m$. 
\end{enumerate}

Let $\pi:B\mapsto A$ be the first projection, i.e., $\pi((a,(b_1,\dots,b_{d_a})))=a$, for all $(a,(b_1,\dots,b_{d_a}))\in B$. 
By definition of $\rels{B}$, $\pi$ is a homomorphism from $\rels{B}$ to $\rels{A}$. 

The following lemma follows directly from the proof of~\cite[Lemma~1]{Atserias07:power}.

\begin{lemma}
\label{lemma:nohom}
There is no homomorphism $h$ from $\rels{A}$ to $\rels{B}$ such that $\pi\circ h(A)=A$. 
\end{lemma}

By Proposition~\ref{prop:core-char}, since $\structA$ is a core, 
there is a function $c^*:\tup{\structA}\mapsto \qplus$ such that every non-surjective mapping $g: A \mapsto A$ satisfies 
\[\sum_{(f,\tuple{x})\in \tup{\structA}}f^{\structA}(\tuple{x})c^*(f,\tuple{x}) < \sum_{(f,\tuple{x})\in
  \tup{\structA}}f^{\structA}(\tuple{x})c^*(f,g(\tuple{x})).\]
 We denote $M^* \defeq \sum_{(f,\tuple{x})\in \tup{\structA}}f^{\structA}(\tuple{x})c^*(f,\tuple{x})<\infty$. 

Now we are ready to define $\structB$. The universe of $\structB$ is $B$, i.e.,
the same as $\rels{B}$. 
For each $f\in \sigma$ and tuple $\tuple{x}\in B^{\ar{f}}$ we define 
\[
f^{\structB}(\tuple{x})=
\begin{cases}
c^*(f,\pi(\tuple{x})) &\quad \text{if $\tuple{x}\in R_f^{\rels{B}}$}\\
\infty &\quad \text{otherwise}
\end{cases}
\]

\setcounter{claim}{0}
\begin{claim}
\label{claim:optAB}
$\opt{\structA,\structB}> M^*$.
\end{claim}

\begin{proof}
Let $h$ be a homomorphism from $\rels{A}$ to $\rels{B}$ (if it is not then $\costb{\structA\mapsto\structB}{h}=\infty$). 
Then 
\[\costb{\structA\mapsto\structB}{h}=\sum_{(f,\tuple{x})\in \tup{\structA}_{>0}} f^{\structA}(\tuple{x})f^{\structB}(h(\tuple{x})) =\sum_{(f,\tuple{x})\in \tup{\structA}_{>0}} f^{\structA}(\tuple{x})c^*(f,\pi(h(\tuple{x}))).\]
By Lemma~\ref{lemma:nohom}, $\pi\circ h$ is a non-surjective mapping from $A$ to $A$. 
By definition of $c^*$, we have that $\costb{\structA\mapsto\structB}{h}>M^*$. 
\renewcommand{\qedsymbol}{$\blacksquare$}\end{proof}

In the rest of the proof, we will show that $\optfrac{k}{\structA,\structB}\leq
M^*$. Together with Claim~\ref{claim:optAB}, this establishes Theorem~\ref{thm:nece-core}.

Let $s$ be a partial homomorphism from $\rels{A}$ to $\rels{B}$. We denote by $\dom{s}$ the domain of $s$. 
We say that $s$ is an \emph{identity partial homomorphism} from $\rels{A}$ to $\rels{B}$ if $\pi(s(a))=a$, for all $a\in \dom{s}$. 
We denote by IPHom$(\rels{A},\rels{B})$ the set of all identity partial homomorphisms from $\rels{A}$ to $\rels{B}$. 
Let $\lambda$ be a feasible solution of SA$_k(\structA,\structB)$ satisfying the following property $(\dagger)$: 
if $\lambda(f,\tuple{x},s)>0$ for $(f,\tuple{x})\in \tup{\structA_k}_{>0}, s:\toset{\tuple{x}}\mapsto B$, 
then $s$ is an identity partial homomorphism from $\rels{A}$ to $\rels{B}$.  
We claim that the cost of any such $\lambda$ is precisely $M^*$. 
Indeed, we have that 
\begin{align*}
&\sum_{\substack{(f,\tuple{x})\in \tup{\structA_k}_{>0}, s:\toset{\tuple{x}}\mapsto B \\ f^{\structA_k}(\tuple{x})\times f^{\structB_k}(s(\tuple{x}))<\infty}} \lambda(f,\tuple{x},s)f^{\structA_k}(\tuple{x})f^{\structB_k}(s(\tuple{x})) \\
&=\sum_{\substack{(f,\tuple{x})\in \tup{\structA}_{>0}, s:\toset{\tuple{x}}\mapsto B\\ f^{\structA}(\tuple{x})\times f^{\structB}(s(\tuple{x}))<\infty}} \lambda(f,\tuple{x},s)f^{\structA}(\tuple{x})f^{\structB}(s(\tuple{x})) \\
&= \sum_{(f,\tuple{x})\in \tup{\structA}_{>0}} f^{\structA}(\tuple{x}) \sum_{s:\toset{\tuple{x}}\mapsto B, f^{\structA}(\tuple{x})\times f^{\structB}(s(\tuple{x}))<\infty} \lambda(f,\tuple{x},s)f^{\structB}(s(\tuple{x}))\\
&=\sum_{(f,\tuple{x})\in \tup{\structA}_{>0}} f^{\structA}(\tuple{x}) \sum_{\substack{s:\toset{\tuple{x}}\mapsto B, f^{\structA}(\tuple{x})\times f^{\structB}(s(\tuple{x}))<\infty \\ s\in \text{IPHom}(\rels{A},\rels{B})}} \lambda(f,\tuple{x},s)f^{\structB}(s(\tuple{x}))\\
&=\sum_{(f,\tuple{x})\in \tup{\structA}_{>0}} f^{\structA}(\tuple{x}) \sum_{\substack{s:\toset{\tuple{x}}\mapsto B, f^{\structA}(\tuple{x})\times f^{\structB}(s(\tuple{x}))<\infty \\ s\in \text{IPHom}(\rels{A},\rels{B})}} \lambda(f,\tuple{x},s)c^*(f,\tuple{x})\\
&=\sum_{(f,\tuple{x})\in \tup{\structA}_{>0}} f^{\structA}(\tuple{x}) c^*(f,\tuple{x}) = M^*.
\end{align*}

Thus it suffices to show the existence of a feasible solution to SA$_k(\structA,\structB)$ satisfying $(\dagger)$. 
In~\cite[Lemma~2]{Atserias07:power} it is shown that, if the treewidth of $\rels{A}$ is at least $k$, then the $(k-1)$-consistency test succeeds over $\rels{A},\rels{B}$. 
To do this, the authors of~\cite{Atserias07:power} exhibit a winning strategy $\H_k$ for the Duplicator in the existential $k$-pebble game over $\rels{A},\rels{B}$. 
We built on their construction to define a set $\H$ of identity partial homomorphisms from $\rels{A}$ to $\rels{B}$, which will allow us to define our required $\lambda$. 

Recall that we fixed a bramble $\B$ of $G(\rels{A})$ that cannot be covered by any scope nor subset of size at most $k$ in $G(\rels{A})$, 
whose existence is guaranteed by Theorem~\ref{theo:char-twms}. 
Recall also that every set in $\B$ belongs to the connected component $G_0=(A_0,E_0)$ of $G(\rels{A})$ and that $a_0\in A_0$. 
Let $P$ be a path in $G(\rels{A})$ starting at $a_0$. For every edge $e$ of $G(\rels{A})$, we define
\[
x_e^P=
\begin{cases}
1 &\quad \text{if $e$ appears an odd number of times in $P$}\\
0 &\quad \text{otherwise}.
\end{cases}
\]
For $a\in A$, we define $h^{P}(a)=(a,(x_{e_1^a}^P,\dots,x_{e_{d_a}^a}^P))$. 
We have $h^P(a)=(a,(0,\dots,0))$, for all $a\in A\setminus A_0$. 
The following claim is shown in~\cite{Atserias07:power} (second claim in the proof of Lemma~2): 

\begin{claim}
\label{claim:iph-path}
Let $X\subseteq A$, and let $P$ be a path in $G(\rels{A})$ from $a_0$ to $b$, where $b\not\in X$. 
Then, the restriction $h^P|_X$ of $h^P$ to $X$ is a partial homomorphism from $\rels{A}$ to $\rels{B}$. 
\end{claim}

Notice that in the previous claim, $h^P|_X$ is actually an identity partial homomorphism. 
Let $\X \defeq \{X\subseteq A: \text{$X$ does not cover $\B$}\}$. 
We define a set $\H$ of identity partial homomorphisms from $\rels{A}$ to $\rels{B}$ as follows. 
For every $X\in \X$ and every path $P$ in $G(\rels{A})$ from $a_0$ to $b$, where $b$ belongs to a set $B\in \B$ 
disjoint from $X$, we include $h^P|_{X}$ to $\H$. 
By Claim~\ref{claim:iph-path}, $\H$ contains only identity partial homomorphisms from $\rels{A}$ to $\rels{B}$. 
For $X\in \X$, let $\H(X) \defeq \{h\in \H\mid \dom{h}=X\}$. 
We have $|\H(X)|>0$, for all $X\in \X$, as $X$ does not cover $\B$ and then there is a set $B\in \B$ disjoint from $X$, 
and by connectivity, there is at least one path $P$ from $a_0$ to some node $b\in B$. 

It follows from the proof of Lemma~2 in~\cite{Atserias07:power} that $\H$ has the following closure properties:

\begin{claim}
\label{claim:closureH}
Let $X,X'\in \X$ such that $X\subseteq X'$, then 
\begin{enumerate}
\item if $h\in \H(X)$, then there exists $h'\in \H(X')$ such that 
$h'|_{X}=h$. 
\item if $h'\in \H(X')$, then $h'|_{X}\in \H(X)$. 
\end{enumerate}
\end{claim}

For $X,X'\in \X$ such that $X\subseteq X'$ and $h\in B^{X}$, 
let $\H(X')^{X,h} \defeq \{h'\in \H(X')\mid h'|_{X}=h\}$. 
Claim~\ref{claim:closureH}, item (1) states that $|\H(X')^{X,h}|>0$, for all $h\in \H(X)$. 

The arguments below are an adaptation of the argument from~\cite{tz17:sicomp} for proving the existence of gap instances for Sherali-Adams relaxations of VCSP($E_{\G,3}$), 
and a refinement of an argument from~\cite[Lemma~2]{Atserias07:power} (specifically Claim~\ref{claim:uniform} below). 
In particular, we shall define a well-behaved subset $\S\subseteq\X$. 
A key property of $\S$ is that the family of distributions $\{U(\H(S))\mid S\in \S\}$, where $U(\H(S))$ is the uniform distribution over $\H(S)$, is \emph{consistent} in the following sense: 
for every $S\subseteq S'$, the marginal distribution of $U(\H(S'))$ over $S$ coincides with $U(\H(S))$. 
As it turns out, we will be able to extend $\{U(\H(S))\mid S\in \S\}$ to a consistent family $\{\mu(\H(X))\mid X\in \X\}$ over the whole set $\X$. 
We shall define our required $\lambda$ from this latter family. 

\medskip

We say that $X\subseteq A$ \emph{separates} $G_0$ if there exist nodes $u,u'\in A_0\setminus X$ such that $X$ \emph{separates} $u$ and $u'$, i.e., 
every path connecting $u$ and $u'$ in $G_0$ intersects $X$. 
Let $\S \defeq \{S\in \X\mid \text{$S$ does not separate $G_0$}\}$. 
Below we state two important properties of $\S$:

\begin{claim}
\label{claim:interS}
$\S$ is closed under intersection.
\end{claim}

\begin{proof}
Let $S,S'\in \S$. Clearly, $S\cap S'$  does not cover $\B$. To see that $S\cap S'$ does not separate $G_0$, take $a,a'\in A_0\setminus (S\cap S')$. 
We need to find a path in $G_0$ between $a$ and $a'$ avoiding $S\cap S'$. 
If $a,a'\in A_0\setminus S$ or $a,a'\in A_0\setminus S'$, we are done as $S$ and $S'$ do not separate $G_0$. 
Thus, without loss of generality we can assume that $a\in (A_0\cap S)\setminus S'$ and $a'\in (A_0\cap S')\setminus S$. 
Let $B,B'\in \B$ such that $B\cap S=\emptyset$ and $B'\cap S'=\emptyset$, and pick $b\in B$ and $b'\in B'$. 
Since $S'$ does not separate $a$ and $b'$, there exists a path $P$ from $a$ to $b'$ avoiding $S'$, and hence, $S\cap S'$. 
Since $B,B'$ are connected and they touch,  there is also a path $P'$ from $b'$ to $b$ avoiding $S\cap S'$. 
Finally, since $S$ does not separate $a'$ and $b$, there is a path $P''$ from $b$ to $a'$ avoiding $S$, and then, $S\cap S'$. 
The claim follows by taking the concatenation of $P$, $P'$ and $P''$.
\renewcommand{\qedsymbol}{$\blacksquare$}\end{proof}

\begin{claim}
\label{claim:uniform}
  Let $S,S'\in \S$ such that $S\subseteq S'$ and let $h\in \H(S)$. Then
  $|\H(S')^{S,h}|=|\H(S')|/|\H(S)|$. 
\end{claim}

\begin{proof}
The proof is by induction on $\ell \defeq |S'\setminus S|$. If $\ell=0$, we are done. 
Suppose that $\ell=1$ and $S'\setminus S=\{x\}$. 
Assume first that $x\not\in A_0$. 
Let $g\in \H(S)$ and $P$ be any path starting at $a_0$ and ending at some node in $B\in \B$, with $B\cap S=\emptyset$, such that $h^P|_{S}=g$. 
Since $B\cap S'=\emptyset$ as $B\subseteq A_0$ and thus $h^P|_{S'}\in \H(S')^{S,g}$.  
Notice also that for every $g'\in \H(S')^{S,g}$, it must be the case that $g'(x)=(x,(0,\dots,0))$, 
as a path $P'$ starting at $a_0$ cannot visit an edge $e$ incident to $x$, and hence $x^{P'}_e=0$. 
It follows that $\H(S')^{S,g}=\{h^P|_{S'}\}$. 
In particular, $|\H(S')^{S,h}|=1$ and $|\H(S')|=\sum_{g\in B^S} |\H(S')^{S,g}|$. 
By Claim~\ref{claim:closureH}, item (2), $\sum_{g\in B^S} |\H(S')^{S,g}|=\sum_{g\in \H(S)} |\H(S')^{S,g}|$. 
This implies that $|\H(S')|=|\H(S)|$, and then the claim follows.

We now assume that $x\in A_0$. We show that there is an edge $\{x,u\}$ in $G(\rels{A})$ such that $u\not\in S'$. 
Assume to the contrary, and take $B'\in \B$ such that $S'\cap B'=\emptyset$ and pick $b'\in B'$. 
We have $x,b'\in A_0\setminus S$.  
Since every neighbour of $x$ lies in $S$, we have that $S$ separates $x$ and $b'$, which contradicts the fact that $S\in \S$. 
Let $e_1,\dots, e_m$, with $m\geq 1$, be an enumeration of all the edges from $x$ to a node outside $S'$. 
Recall that $e^x_1,\dots,e^x_{d_x}$ is our fixed enumeration of all the edges incident to $x$. 
Without loss of generality, suppose that $e^x_1,\dots,e^x_{d_x}=e_1,\dots,e_m, e'_1,\dots,e'_n$, where $e'_1,\dots,e'_n$ is an enumeration ($n\geq 0$) of all the edges 
from $x$ to a node inside $S$. 
We claim that $|\H(S')^{S,g}|=2^{m-1}$, for all $g\in \H(S)$. 

We start by proving $|\H(S')^{S,g}|\leq 2^{m-1}$. 
We define an injective mapping $\eta$ from $\H(S')^{S,g}$ to $\{0,1\}^{m-1}$. 
Let $g'\in \H(S')^{S,g}$ and suppose that $g'(x)=(x, (y_1,\dots,y_m, y'_1,\dots,y'_n))$. 
We define $\eta(g')=(y_1,\dots,y_{m-1})$. Suppose that $\eta(g'_1)=\eta(g'_2)$, for $g'_1,g'_2\in \H(S')^{S,g}$. 
Thus, $g'_1(x), g'_2(x)$ are of the form $g'_1(x)=(x,(\eta(g'_1), y_{1m}, y'_{11},\dots,y'_{1n}))$ and 
$g'_2(x)=(x,(\eta(g'_2), y_{2m}, y'_{21},\dots,y'_{2n}))$. 
We have $(y'_{11},\dots,y'_{1n})=(y'_{21},\dots,y'_{2n})$ as these values are already determined by $g$. 
It follows that $y_{1m}=y_{2m}$ as the parity of the number of ones in $(\eta(g'_1), y_{1m}, y'_{11},\dots,y'_{1n}))$ and $(\eta(g'_2), y_{2m}, y'_{21},\dots,y'_{2n}))$ must be the same. 
In particular, $g'_1=g'_2$, and thus $\eta$ is injective.

Now we show that $|\H(S')^{S,g}|\geq 2^{m-1}$. 
As we mentioned above, if $g'\in \H(S')^{S,g}$ and $g'(x)=(x, (y_1,\dots,y_m, y'_1,\dots,y'_n))$, then $(y'_1,\dots,y'_n)$ is already determined by $g$. 
Let $\bar z=(z_1,\dots,z_{m})\in \{0,1\}^{m}$ be an arbitrary $0/1$-vector such that  

\begin{itemize}
\item $\bigoplus_{1\leq i\leq m}z_i \oplus \bigoplus_{1\leq j\leq n} y'_j = 1$, if $x=a_0$;  
\item $\bigoplus_{1\leq i\leq m}z_i \oplus \bigoplus_{1\leq j\leq n} y'_j = 0$, if $x\neq a_0$. 
\end{itemize}

The number of such $\bar z$'s is precisely $2^{m-1}$, and hence, it suffices to show that, 
for every such $\bar z$, there exists $g'_{\bar z}\in \H(S')^{S,g}$ such that $g'_{\bar z}(x)=(x, (\bar z,y'_1,\dots,y'_n))$. 
Let $P$ be any path in $G(\rels{A})$ from $a_0$ to a node $b\in B$, where $B\in \B$ and $B\cap S=\emptyset$, such that 
$h^P|_{S}=g$. Let $B'\in \B$ such that $B'\cap S'=\emptyset$ and pick any $b'\in B'$. 
Since $B,B'$ are connected and they touch, there exists a path $P'$ from $b$ to $b'$ completely contained in $B\cup B'$, 
and hence, avoiding $S$. 
Let $W$ be the concatenation of $P$ and $P'$. 
By construction, $g' \defeq h^{W}|_{S'}\in \H(S')$. 
Since $P'$ avoids $S$, $x^{W}_e=x^{P}_e$, for every edge $e$ incident to a node in $S$. 
Therefore, $h^{W}(a)=h^{P}(a)$, for all $a\in S$. 
It follows that $g'\in \H(S')^{S,g}$.

Let $g'(x)=(x, (y_1,\dots,y_m, y'_1,\dots,y'_n))$. 
Suppose without loss of generality that $\bar z=(z_1,\dots,z_r, z_{r+1},\dots,z_{m})$, where $r\in \{0,\dots,m\}$ and 
$z_i\neq y_i$ if and only if $i\in \{1,\dots,r\}$. If $r=0$, we are done as we can set $g'_{\bar z}=g'$. 
Suppose then that $r\geq 1$. Observe that $\bigoplus_{1\leq i\leq r}y_i = \bigoplus_{1\leq i\leq r} z_i$. 
This implies that $r$ is even.  
Recall that $e_1,\dots,e_m$ is our fixed enumeration of all the edges from $x$ to a node outside $S'$. 
Let $u_1,\dots,u_r$ be the nodes outside $S'$ such that $e_i=\{x,u_i\}$, for all $i\in \{1,\dots,r\}$. 
We have $u_i,b'\in A_0\setminus S'$. 
Since $S'\in \S$, $S'$ cannot separate $u_i$ and $b'$, there is a path $P_i$ from $b'$ to $u_i$ avoiding $S'$, for all $i\in \{1,\dots,r\}$. 
Let $W_i$ be the extension of $P_i$ with the edge $\{u_i,x\}$. 
We denote by $W_i^{-1}$ the reverse path of $W_i$, in particular, $W_i^{-1}$ is a path from $x$ to $b'$. 
Let $W'$ be the extension of the path $W$ with the concatenation $Z$ of the paths $W_1$, $W_{2}^{-1}$,$\dots$, $W_{r-1}$, $W_{r}^{-1}$. 
Since $W'$ ends at $b'$ we have $h^{W'}|_{S'}\in \H(S')$. 
Since the subpath $Z$ of $W'$ avoids $S$, $h^{W'}(a)=h^{W}(a)$, for all $a\in S$, and then, $h^{W'}|_{S'}\in \H(S')^{S,g}$. 
Notice also that the subpath $Z$ visits only the $e_i$'s with $i\in \{1,\dots,r\}$, and it visits 
each such $e_i$ exactly once. This implies that $h^{W'}(x)=(x,(y_1\oplus 1,\dots, y_r\oplus 1,y_{r+1},\dots,y_m,y'_1,\dots,y'_n))$.
In other words, $h^{W'}(x)=(x,(z_1,\dots,z_m,y'_1,\dots,y'_n))$. The claim follows by taking $g'_{\bar z}=h^{W'}|_{S'}$. 
Therefore, $|\H(S')^{S,g}|=2^{m-1}$, for all $g\in \H(S)$. 
In particular, $|\H(S')^{S,h}|=2^{m-1}$. 
On the other hand, $|\H(S')|=\sum_{g\in B^S} |\H(S')^{S,g}|$. 
By Claim~\ref{claim:closureH}, item (2), $\sum_{g\in B^S} |\H(S')^{S,g}|=\sum_{g\in \H(S)} |\H(S')^{S,g}|$. 
This implies that $|\H(S')|=2^{m-1}|\H(S)|$, and then our claim holds for $\ell=1$.

Assume now that $\ell\geq 2$. 
We claim that there is a node $x^*\in S'\setminus S$ such that $S'\setminus \{x^*\}\in \S$. 
By contradiction, suppose that this is not the case. 
If $x\not\in A_0$ for $x\in S'\setminus S$, then $S'\setminus \{x\} \in \S$ as $S'\setminus \{x\} $ cannot separate $G_0$. 
Similarly, if there is an edge $\{x,u\}$ in $G(\rels{A})$ with $x\in S'\setminus S$ and $u\not\in S'$, then $S'\setminus \{x\}\in \S$ 
for the same reason as before. 
It follows that $x\in A_0$ and every neighbour of $x$ lies inside $S'$, for all $x\in S'\setminus S$. 
Let $B'\in\B$ be disjoint from $S'$. Then any path from a node in $S'\setminus S$ to a node in $B'$ must intersect $S$. 
This contradicts the fact that $S\in \S$. Let $S^* \defeq S'\setminus\{x^*\}\in \S$. 

We have $|\H(S')^{S,h}|=\sum_{g\in \H(S^*)^{S,h}} |\H(S')^{S^*,g}|$. 
By inductive hypothesis, we know that $|\H(S')^{S^*,g}|=|\H(S')|/|\H(S^*)|$, for every $g\in \H(S^*)$, and 
$|\H(S^*)^{S,h}|=|\H(S^*)|/|\H(S)|$. It follows that $|\H(S')^{S,h}|=(|\H(S')|/|\H(S^*)|)|\H(S^*)^{S,h}|=|\H(S')|/|\H(S)|$. 
\renewcommand{\qedsymbol}{$\blacksquare$}\end{proof}

Now we are ready to define our vector $\lambda$ satisfying ($\dagger$). 
Fix $(f,\tuple{x})\in \tup{\structA_k}_{>0}$. $\toset{\tuple{x}}$ is either a scope or a 
subset of size at most $k$ of $G(\rels{A})$. 
In particular, $\toset{\tuple{x}}$ cannot cover $\B$ and thus $\toset{\tuple{x}}\in \X$. 
Let $\overline{\toset{\tuple{x}}} \defeq \bigcap_{S\in \S\mid \toset{\tuple{x}}\subseteq S} S$. 
Let us note that $\overline{\toset{\tuple{x}}}$ is well-defined, i.e., there is $S^*\in \S$ such that $\toset{\tuple{x}}\subseteq S^*$. 
Indeed, we can take $S^*=A\setminus B\in \S$, where $B$ is any set in $\B$ disjoint from $\toset{\tuple{x}}$. 
By Claim~\ref{claim:interS}, $\overline{\toset{\tuple{x}}}\in \S$. 
Observe then that $\overline{\toset{\tuple{x}}}$ is the inclusion-wise minimal set of $\S$ containing $\toset{\tuple{x}}$. 
For every mapping $s:\toset{\tuple{x}}\mapsto B$, we define
\[\lambda(f,\tuple{x},s)=\Pr_{h\sim U(\H(\overline{\toset{\tuple{x}}}))}\left[h|_{\toset{\tuple{x}}}=s\right]\] 
where $U(\H(\overline{\toset{\tuple{x}}}))$ denotes the uniform distribution over $\H(\overline{\toset{\tuple{x}}})$. 

By Claim~\ref{claim:closureH}, we have that ($*$) $\lambda(f,\tuple{x},s)>0$ if and only if $s\in \H(\toset{\tuple{x}})$. 
Hence, $\lambda$ satisfies ($\dagger$). 
It remains to prove that $\lambda$ is feasible for SA$_k(\structA,\structB)$. 
Conditions \textcolor{red}{\textbf{(SA4)}} and \textcolor{red}{\textbf{(SA2)}} follow from definition. 
For condition \textcolor{red}{\textbf{(SA3)}}, 
recall first that the function $c^*:\tup{\structA}\mapsto \qplus$, whose existence is guaranteed by the fact that $\structA$ is a core, 
satisfies that  
\[\sum_{(f,\tuple{x})\in \tup{\structA}}f^{\structA}(\tuple{x})c^*(f,\tuple{x}) < \sum_{(f,\tuple{x})\in \tup{\structA}}f^{\structA}(\tuple{x})c^*(f,g(\tuple{x}))\]
for every non-surjective mapping $g: A \mapsto A$. 
In particular, $\sum_{(f,\tuple{x})\in \tup{\structA}}f^{\structA}(\tuple{x})c^*(f,\tuple{x}) <\infty$. 
It follows that $f^\structA(\tuple{x})=\infty$ implies $c^*(f,\tuple{x})=0$, 
for every $(f,\tuple{x})\in \tup{\structA}$. 
By contradiction, suppose that condition \textcolor{red}{\textbf{(SA3)}} does not hold. 
Then there is $(f,\tuple{x})\in \tup{\structA}$ and $s:\toset{\tuple{x}}\mapsto B$ 
such that $f^{\structA}(\tuple{x})f^{\structB}(s(\tuple{x}))=\infty$, but $\lambda(f,\tuple{x},s)>0$. 
By ($*$), we know that $s\in \H(\toset{\tuple{x}})$. 
In particular, $s$ is a partial homomorphism from $\rels{A}$ to $\rels{B}$. 
On the other hand, note that $f^{\structB}(s(\tuple{x}))<\infty$ by construction of $\structB$, 
and then $f^{\structA}(\tuple{x})=\infty$. By a previous remark, we have that $c^*(f,\tuple{x})=0$. 
Additionally, note that $\tuple{x}\in R_f^{\rels{A}}$, and since $s$ is a partial homomorphism, 
$s(\tuple{x})\in R_f^{\rels{B}}$. By definition of $\structB$, it follows that $f^{\structB}(s(\tuple{x}))=c^*(f,\pi(s(\tuple{x})))=c^*(f,\tuple{x})=0$. 
Hence, $f^{\structA}(\tuple{x})f^{\structB}(s(\tuple{x}))=0$; a contradiction.

For condition \textcolor{red}{\textbf{(SA1)}}, let $(f,\tuple{x}), (p,\tuple{y})\in \tup{\structA_k}_{>0}$
and define $X \defeq \toset{\tuple{x}}$ and $Y \defeq \toset{\tuple{y}}$. 
Suppose that $|X|\leq k$ and $X\subseteq Y$.
Let $s:X\mapsto B$. 
We need to show that 
\[\lambda(f,\tuple{x},s)=\sum_{\substack{r:Y\mapsto B\\ r|_{X}=s}}\lambda(p,\tuple{y},r)=\sum_{r\in \H(Y)^{X,s}}\lambda(p,\tuple{y},r),\]
where the last equality holds due to ($*$). Using again ($*$) and Claim~\ref{claim:closureH}, item (2), the required equality holds directly when $s\not\in \H(X)$. 
Then we assume that $s\in \H(X)$. 
For any $Z\in \X$ such that $X\subseteq Z\subseteq \overline{Y}$, we have that
\begin{equation}
\label{eq:X}
\begin{split}
\Pr_{h\sim U(\H(\overline{Y}))}\left[h|_{X}=s\right] &= \sum_{r:Z\mapsto B}\Pr_{h\sim U(\H(\overline{Y}))}\left[h|_{X}=s, h|_{Z}=r\right] \\
&= \sum_{\substack{r:Z\mapsto B\\ r|_{X}=s}}\Pr_{h\sim
  U(\H(\overline{Y}))}\left[h|_{Z}=r \right].
\end{split}
\end{equation}

By applying Equation~\eqref{eq:X} with $Z=\overline{X}$, we obtain

\begin{equation}
\label{eq:barX}
\Pr_{h\sim U(\H(\overline{Y}))}\left[h|_{X}=s\right] =\sum_{\substack{r:\overline{X}\mapsto B\\ r|_{X}=s}} \Pr_{h\sim U(\H(\overline{Y}))}\left[h|_{\overline{X}}=r\right]
= \sum_{r\in \H(\overline{X})^{X,s}} \Pr_{h\sim U(\H(\overline{Y}))}\left[h|_{\overline{X}}=r\right],
\end{equation}
where the last equality holds due to Claim~\ref{claim:closureH}, item (2). 
By definition, $\Pr_{h\sim U(\H(\overline{Y}))}\left[h|_{\overline{X}}=r\right]=|\H(\overline{Y})^{\overline{X},r}|/|\H(\overline{Y})|$. 
Since $r\in \H(\overline{X})$, we can apply Claim~\ref{claim:uniform} and obtain that $\Pr_{h\sim U(\H(\overline{Y}))}\left[h|_{\overline{X}}=r\right]=1/|\H(\overline{X})|$. 
Using this in Equation~\eqref{eq:barX}, we obtain 

\begin{align*}
\Pr_{h\sim U(\H(\overline{Y}))}\left[h|_{X}=s\right]  = \sum_{r\in \H(\overline{X})^{X,s}}1/|\H(\overline{X})| = \Pr_{h\sim U(\H(\overline{X}))}\left[h|_{X}=s\right] = \lambda(f,\tuple{x},s).
\end{align*}

Hence, 

\begin{align*}
\lambda(f,\tuple{x},s) &= \Pr_{h\sim U(\H(\overline{Y}))}\left[h|_{X}=s\right]  \\ 
&= \sum_{\substack{r:Y\mapsto B\\ r|_{X}=s}}\Pr_{h\sim
U(\H(\overline{Y}))}\left[h|_{Y}=r \right]  \quad \text{(By Equation~\eqref{eq:X} with $Z=Y$)} \\
&= \sum_{\substack{r:Y\mapsto B\\ r|_{X}=s}} \lambda(p,\tuple{y},r).
\end{align*}

Therefore, $\lambda$ is a feasible solution of SA$_k(\structA,\structB)$.  

\medskip

Finally, we remark that, as in Proposition~\ref{prop:main-hardness}, 
we can define $\structB$ to be finite-valued by replacing $\infty$ with a sufficiently large $N$ 
(taking $N=1+(M^*/\min\{f^{\structA}(\tuple{x}): (f,\tuple{x})\in \tup{\structA}_{>0}\})$ will do). 
This finishes the proof of Theorem~\ref{thm:nece-core}.
\end{proof}

\begin{theorem}
\label{thm:nece-core-core}
Let $\structA$ be a valued $\sigma$-structure and let $k\geq 1$. 
Let $\structA'$ be the core of $\structA$. If $\twms{\structA'}\geq k$, 
then the Sherali-Adams relaxation of level $k$ is not always tight for $\structA$. 
\end{theorem}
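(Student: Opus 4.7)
The plan is to reduce directly to Theorem~\ref{thm:nece-core} by transferring the bad instance produced there for the core $\structA'$ back to $\structA$, exploiting the fact that both $\opt{\cdot,\structB}$ and $\optfrac{k}{\cdot,\structB}$ are invariant under the passage to the core.

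More precisely, first I would invoke Theorem~\ref{thm:nece-core} on $\structA'$: since $\structA'$ is a core by hypothesis and $\twms{\structA'}\geq k$, there exists a valued $\sigma$-structure $\structB$ witnessing a gap, i.e., $\optfrac{k}{\structA',\structB}<\opt{\structA',\structB}$. The same $\structB$ will serve as the witness for $\structA$.

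Next I would transfer each side of the inequality from $\structA'$ to $\structA$. For the linear programming value, Proposition~\ref{prop:corefrac} (applied with the right-hand side $\structC=\structB$) gives $\optfrac{k}{\structA,\structB}=\optfrac{k}{\structA',\structB}$. For the integral optimum, since $\structA'$ is the core of $\structA$ we have $\structA\equiv\structA'$, and the definition of equivalence (Definitions~\ref{def:improve} and~\ref{def:equiv}) applied with $\structC=\structB$ yields $\opt{\structA,\structB}=\opt{\structA',\structB}$. Putting these two identities together with the gap from Theorem~\ref{thm:nece-core} gives
\[
\optfrac{k}{\structA,\structB}=\optfrac{k}{\structA',\structB}<\opt{\structA',\structB}=\opt{\structA,\structB},
\]
so the $k$-th Sherali-Adams relaxation is not tight for $\structA$ on instance $\structB$.

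There is essentially no obstacle here beyond recognising that both sides behave well under replacement by the core; the real content is already contained in Theorem~\ref{thm:nece-core} (hardness for cores) and Proposition~\ref{prop:corefrac} (LP-invariance under equivalence). The only thing worth double-checking is that Theorem~\ref{thm:nece-core} is applicable: its hypothesis requires $\structA'$ itself to be a core and to satisfy $\twms{\structA'}\geq k$, both of which hold by assumption.
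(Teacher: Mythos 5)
Your proof is correct and matches the paper's argument essentially verbatim: invoke Theorem~\ref{thm:nece-core} on the core $\structA'$, then transfer both $\opt{\cdot,\structB}$ and $\optfrac{k}{\cdot,\structB}$ from $\structA'$ to $\structA$ using the equivalence $\structA\equiv\structA'$ and Proposition~\ref{prop:corefrac}. Nothing to add.
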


\begin{proof}
We can apply Theorem~\ref{thm:nece-core} to $\structA'$, and obtain $\structB$ such that 
$\optfrac{k}{\structA',\structB}<\opt{\structA',\structB}$. 
Since $\structA'\equiv \structA$, we know that
$\opt{\structA,\structB}=\opt{\structA',\structB}$ (by the definition of
equivalence and cores)
and $\optfrac{k}{\structA,\structB}=\optfrac{k}{\structA',\structB}$ (by
Proposition~\ref{prop:corefrac}). 
Hence, $\optfrac{k}{\structA,\structB}<\opt{\structA,\structB}$, and the result follows. 
\end{proof}

\subsection{Necessity of bounded overlap}
\label{subsec:overlap}

In this section we provide the last missing piece in the proof of Theorem~\ref{thm:sa}.

\begin{theorem}
\label{thm:nece-scopes}
Let $\structA$ be a valued $\sigma$-structure and let $k \geq 1$. Suppose that
$\structA$ is a core and that the overlap of $\structA$ is at least $k+1$. Then there exists a valued
$\sigma$-structure $\structB$ such that $\optfrac{k}{\structA,\structB}<\opt{\structA,\structB}$. 
\end{theorem}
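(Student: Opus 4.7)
The idea is to exploit the fact that the (SA1) constraints only compare marginals on sub-scopes of size at most $k$, so an overlap of size $\geq k+1$ cannot be ``seen'' by level $k$. Fix distinct positive tuples $(f,\tuple{x}), (p,\tuple{y}) \in \tup{\structA}_{>0}$ realising the overlap, choose a $(k+1)$-element subset $Z_0 \subseteq \toset{\tuple{x}} \cap \toset{\tuple{y}}$, and build a relational $\rel{\sigma}$-structure $\rels{B}$ with universe $B \defeq A \times \{0,1\}$ and projection $\pi(a,b) \defeq a$ as follows: for each $R_g \in \rel{\sigma}$ include in $R_g^{\rels{B}}$ every lift of every tuple of $R_g^{\pos{\structA}}$, subject to two parity restrictions---a lift of $\tuple{x}$ belongs to $R_f^{\rels{B}}$ only if the XOR of its bits over the positions whose coordinate lies in $Z_0$ equals $0$, and a lift of $\tuple{y}$ belongs to $R_p^{\rels{B}}$ only if the corresponding XOR equals $1$. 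Then $\pi$ is a homomorphism $\rels{B} \to \pos{\structA}$ by construction (a minor adjustment of $Z_0$ handles the edge case where $\tuple{x}$ or $\tuple{y}$ have repeated coordinates on $Z_0$).

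Applying Proposition~\ref{prop:core-char} to the core $\structA$ produces $c^* : \tup{\structA} \to \qplus$; set $M^* \defeq \sum_{(g,\tuple{z}) \in \tup{\structA}} g^{\structA}(\tuple{z}) c^*(g,\tuple{z})$ and $\delta \defeq \min\{g^{\structA}(\tuple{z}) : (g,\tuple{z}) \in \tup{\structA}_{>0}\}$. The strict inequality in Proposition~\ref{prop:core-char} forces $M^* < \infty$, hence $c^*(g,\tuple{z}) = 0$ whenever $g^{\structA}(\tuple{z}) = \infty$. Define $\structB$ over $B$ exactly as in the proof of Theorem~\ref{thm:nece-core}: $g^{\structB}(\tuple{w}) \defeq c^*(g,\pi(\tuple{w}))$ if $\tuple{w} \in R_g^{\rels{B}}$, and $g^{\structB}(\tuple{w}) \defeq 1 + M^*/\delta$ otherwise. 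Mirroring Claim~\ref{claim:optAB}, any $h:A \to B$ with $\cost{h} \leq M^*$ is forced to be a homomorphism $\pos{\structA} \to \rels{B}$, so $\sigma \defeq \pi \circ h$ is an endomorphism of $\pos{\structA}$; by Proposition~\ref{prop:core-char} and the defining property of $c^*$, $\sigma$ must be surjective (else $\cost{h} > M^*$), hence bijective on the finite set $A$. The tuples $\tuple{z}_1 \defeq \sigma^{-1}(\tuple{x})$ and $\tuple{z}_2 \defeq \sigma^{-1}(\tuple{y})$ then lie in $R_f^{\pos{\structA}}$ and $R_p^{\pos{\structA}}$ respectively, and the homomorphism conditions $h(\tuple{z}_1) \in R_f^{\rels{B}}$ and $h(\tuple{z}_2) \in R_p^{\rels{B}}$ impose XOR$\,=0$ and XOR$\,=1$ on the same set of bits of $h$ (those indexed by $\sigma^{-1}(Z_0)$), a contradiction. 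Thus $\opt{\structA,\structB} > M^*$.

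To complete the argument, build a feasible solution $\lambda$ to SA$_k(\structA,\structB)$ of cost $M^*$ by letting $\lambda(g,\tuple{z},\cdot)$ be the uniform distribution over \emph{identity lifts} $s : \toset{\tuple{z}} \to B$ satisfying $\pi \circ s = \id$ and $s(\tuple{z}) \in R_g^{\rels{B}}$. Since every $s$ in the support yields $g^{\structB}(s(\tuple{z})) = c^*(g,\tuple{z})$, the objective telescopes to $M^*$; (SA2), (SA3), and (SA4) follow immediately. The decisive check, and the main obstacle, is (SA1): for any marginalising scope $\toset{\tuple{z}}$ of size at most $k$ we have $\toset{\tuple{z}} \not\supseteq Z_0$ since $|Z_0|=k+1$, so some element of $Z_0 \setminus \toset{\tuple{z}}$ remains ``free''; flipping its parity bit bijects parity-$0$ and parity-$1$ identity lifts of the larger scope without changing the marginal on $\toset{\tuple{z}}$. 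Consequently every size-$\leq k$ marginal is uniform over identity lifts, independent of any parity constraint imposed by the larger scope, so all (SA1) equalities hold. This is precisely where the quantitative mismatch between the overlap $k+1$ and the SA-threshold $k$ is exploited, yielding $\optfrac{k}{\structA,\structB} \leq M^* < \opt{\structA,\structB}$.
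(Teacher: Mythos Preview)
Your plan is correct and follows the same overall architecture as the paper---build a relational cover $\rels{B}$ of $\pos{\structA}$ with parity constraints, convert it to $\structB$ via the weighting $c^*$ from Proposition~\ref{prop:core-char}, then exhibit a feasible SA$_k$ solution supported on identity lifts---but it takes a genuinely shorter route in one place. The paper's universe is $A \times \{0,1\}^\ell$, one bit for every ordered pair of distinct elements of $\tup{\structA}_{>0}$; its exclusion rules (\twoast)/(\threeast) fire whenever the projection equals the first or second coordinate of \emph{some} such pair, so the argument for $\opt{\structA,\structB} > M^*$ can work directly with $h(\tuple{x})$ and $h(\tuple{y})$ regardless of where $\pi\circ h$ sends them. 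You instead keep a single bit and constrain only lifts of the two fixed tuples $\tuple{x}$ and $\tuple{y}$, compensating with the (correct) observation that a surjective endomorphism of a finite relational structure is an automorphism, so that $\sigma^{-1}$ is a homomorphism and $\tuple{z}_1 = \sigma^{-1}(\tuple{x})$, $\tuple{z}_2 = \sigma^{-1}(\tuple{y})$ lie in $R_f^{\pos{\structA}}$, $R_p^{\pos{\structA}}$ with $\pi(h(\tuple{z}_i))$ landing exactly on $\tuple{x}$, $\tuple{y}$. This buys you a much lighter case analysis for SA$_k$-feasibility (the paper's Claims~\ref{claim:scopes-nonempty} and~\ref{claim:extensions} go through nine cases each), at the small price of relying on the automorphism fact, which you should state and justify explicitly rather than leave implicit.

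Two points to tighten. First, the ``minor adjustment'' for repeated coordinates is not optional and should be made precise: for each $z_0\in Z_0$ fix one position $i_{z_0}^x$ in $\tuple{x}$ (and one $i_{z_0}^y$ in $\tuple{y}$) with that coordinate, and take the XOR over those representative positions only. If you XOR over \emph{all} positions with coordinate in $Z_0$, an element appearing with even multiplicity in $\tuple{y}$ contributes $0$ to the XOR of any function-type lift, which can make the XOR for $(p,\tuple{y})$ identically $0$ and break both the nonemptiness needed for (SA2) and the bit-flip bijection you use for (SA1). Second, your definition of the support of $\lambda(g,\tuple{z},\cdot)$ should drop the clause ``$s(\tuple{z})\in R_g^{\rels{B}}$'' when $g=\rho_k$, since $\rho_k\notin\sigma$.
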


\begin{proof}
Given a tuple $\tuple{t}$ and an index $i \in \{1,\ldots,|\tuple{t}|\}$, we
denote by $\tuple{t}[i]$ the $i$-th entry of $\tuple{t}$. By extension, given a set $I$ of indices we use $\tuple{t}[I]$ to denote the tuple obtained from $\tuple{t}$ after discarding all entries $\tuple{t}[j]$ with $j \notin I$. Given a tuple $\tuple{t}$, we use $\tupsetsize{\tuple{t}}$ to denote the number of distinct elements appearing in $\tuple{t}$, i.e. $\tupsetsize{\tuple{t}} \defeq |\toset{\tuple{t}}|$.

Because the overlap of $\structA$ is at least $k+1$, there exist $(p,\tuple{x}),(q,\tuple{y}) \in \tup{\structA}$ such that:
\begin{itemize}
\item[(i)] $p^{\structA}(\tuple{x}), q^{\structA}(\tuple{y}) > 0$, 
\item[(ii)] $|\toset{\tuple{x}} \cap \toset{\tuple{y}}| > k$,  and 
\item[(iii)] $(p,\tuple{x}) \neq (q,\tuple{y})$. 
\end{itemize}
Let $I_x,I_y$ be sets of indices of size $n = |\toset{\tuple{x}} \cap \toset{\tuple{y}}|$ such that $\toset{\tuple{x}[I_x]} = \toset{\tuple{y}[I_y]} = \toset{\tuple{x}} \cap \toset{\tuple{y}}$. (There may be more than one possible choice for $I_x$ and $I_y$.)  We first define a relational structure $\rels{B}$ over the signature $\rel{\sigma} = \{ R_f \mid f \in \sigma, \text{ar}(R_f) = \text{ar}(f) \}$ and universe $B = A \times \{0,1\}$. Let $\pi : B \to A$ and $\eta : B \to \{0,1\}$ be the projections onto the first and second coordinates, respectively. For every symbol $R_f \in  \rel{\sigma}$, we add to $R_f^\rels{B}$ every tuple $\tuple{t} \in B^{\ar{R_f}}$ such that $\pi(\tuple{t}) \in R_f^{\pos{\structA}}$. Then, we remove from $R_p^\rels{B}$ all the tuples $\tuple{t}$ such that $\pi(\tuple{t}) = \tuple{x}$ and $\bigoplus_{i \in I_x} \eta(\tuple{t}[i]) = 1$ and we remove from $R_q^\rels{B}$ all the tuples $\tuple{t}$ such that $\pi(\tuple{t}) = \tuple{y}$ and $\bigoplus_{i \in I_y} \eta(\tuple{t}[i]) = 0$. This completes the definition of $\rels{B}$.

By construction, $\pi$ is a homomorphism from $\rels{B}$ to $\pos{\structA}$. As the next claim shows, whenever $h$ is a homomorphism from $\pos{\structA}$ to $\rels{B}$, 
the mapping $\pi\circ h$ cannot be surjective. 

\setcounter{claim}{0}
\begin{claim}
\label{claim:nonsurj}
If $h: A\to B$ is a homomorphism from $\pos{\structA}$ to $\rels{B}$, then $(\pi\circ h)(A)\neq A$. 
\end{claim}

\begin{proof}
Towards a contradiction, suppose that $(\pi \circ h)(A) = A$. In that case, the map $\pi \circ h$ is bijective and has an inverse $l$. The map $l$ is an isomorphism from $\pos{\structA}$ to itself, so in particular we have that $l(\tuple{x}) \in  R_p^{\pos{\structA}}$ and $l(\tuple{y}) \in  R_q^{\pos{\structA}}$. It follows that $h(l(\tuple{x})) \in R_p^\rels{B}$, $h(l(\tuple{y})) \in R_q^\rels{B}$, $\pi(h(l(\tuple{x}))) = \tuple{x}$ and $\pi(h(l(\tuple{y}))) = \tuple{y}$. Now, if we define
\[
b_S \defeq \bigoplus_{v \in \toset{\tuple{x}} \cap \toset{\tuple{y}}}\eta(h(l(v)))
\]
then by construction of $\rels{B}$ we have $b_S = \bigoplus_{i \in I_x} \eta(h(l(\tuple{x}))[i]) = 0$ and $b_S = \bigoplus_{i \in I_y} \eta(h(l(\tuple{y}))[i]) = 1$, a contradiction.
\renewcommand{\qedsymbol}{$\blacksquare$}\end{proof}

Now, since $\structA$ is a core, by Proposition~\ref{prop:core-char} there exists a function $c^*: \tup{\structA} \to \qplus$ such that for every non-surjective mapping $g : A \to A$,
\[M^* \defeq \sum_{(f,\tuple{z})\in \tup{\structA}}f^{\structA}(\tuple{z})c^*(f,\tuple{z}) < \sum_{(f,\tuple{z})\in \tup{\structA}}f^{\structA}(\tuple{z})c^*(f,g(\tuple{z})).\]
In particular, $M^*$ is finite so $f^{\structA}(\tuple{z}) = \infty$ implies $c^*(f,\tuple{z}) = 0$ for all $(f,\tuple{z}) \in \tup{\structA}$. 
Then, we define a valued $\sigma$-structure $\structB$ over the universe $B$ such that for all $f\in \sigma$ and $\tuple{t}\in B^{\ar{f}}$, 
\[
f^\structB(\tuple{t}) \defeq
\begin{cases}
c^*(f, \pi(\tuple{t})) &\quad \text{if } \tuple{t} \in R_f^\rels{B}\\
\infty &\quad \text{otherwise}
\end{cases}
\]

\begin{claim}
\label{claim:nec-scopes-int}
$\opt{\structA,\structB} > M^*$.
\end{claim}

\begin{proof}
Let $h$ be a homomorphism from $\pos{\structA}$ to $\rels{B}$ (if it is not then $\costb{\structA\mapsto\structB}{h}=\infty$). 
By Claim~\ref{claim:nonsurj}, we have $(\pi \circ h)(A) \neq A$, i.e., $\pi\circ h$ is not surjective. 
By definition of $c^*$, we have
\[\costb{\structA\mapsto\structB}{h}=\sum_{(f,\tuple{z})\in \tup{\structA}_{>0}} f^{\structA}(\tuple{z})f^{\structB}(h(\tuple{z}))
=\sum_{(f,\tuple{z})\in \tup{\structA}_{>0}} f^{\structA}(\tuple{z})c^*(f,\pi(h(\tuple{z}))) > M^*\]
and the claim follows.
\renewcommand{\qedsymbol}{$\blacksquare$}\end{proof}

The next step is to exhibit a solution to SA$_k(\structA,\structB)$ of cost exactly $M^*$. Recall that the signature $\sigma_k$ of the modified instance $(\structA_k,\structB_k)$ used to define SA$_k(\structA,\structB)$ contains an additional function symbol $\rho_k$ of arity $k$. 

\begin{claim}
\label{claim:nec-scopes-frac}
$\optfrac{k}{\structA,\structB} \leq M^*$.
\end{claim}

\begin{proof}
Let $(f,\tuple{z}) \in \tup{\structA_k}_{>0}$ and $s : \toset{\tuple{z}} \to B_k$. If $\pi(s(\tuple{z})) \neq \tuple{z}$ then we set $\lambda(f,\tuple{z},s) \defeq 0$. If instead $\pi(s(\tuple{z})) = \tuple{z}$, then we set $\lambda(f,\tuple{z},s)$ according to the following rules:
\begin{itemize}
\item If $(p,\tuple{x}) \neq (f,\tuple{z}) \neq (q,\tuple{y})$, then $\lambda(f,\tuple{z},s) \defeq 1/2^{\tupsetsize{\tuple{z}}}$;
\item If $(f,\tuple{z}) = (p,\tuple{x})$, then $\lambda(p,\tuple{x},s) \defeq 1/2^{\tupsetsize{\tuple{x}}-1}$ if $\bigoplus_{i \in I_x} \eta(s(\tuple{x})[i]) = 0$ and $\lambda(p,\tuple{x},s) \defeq 0$ otherwise;
\item If $(f,\tuple{z}) = (q,\tuple{y})$, then $\lambda(q,\tuple{y},s) \defeq 1/2^{\tupsetsize{\tuple{y}}-1}$ if $\bigoplus_{i \in I_y} \eta(s(\tuple{y})[i]) = 1$ and $\lambda(q,\tuple{y},s) \defeq 0$ otherwise.
\end{itemize}

Because $\lambda(f,\tuple{z},s)$ may only be nonzero if $\pi(s(\tuple{z})) = \tuple{z}$ and $f^{\structA}(\tuple{z}) = \infty$ implies $f^\structB(s(\tuple{z})) = c^*(f,\pi(s(\tuple{z}))) = c^*(f,\tuple{z}) = 0$ for all such $(s,\tuple{z})$, $\lambda$ satisfies the condition \textcolor{red}{\textbf{(SA3)}}. By construction, $\lambda$ satisfies the conditions \textcolor{red}{\textbf{(SA2)}} and \textcolor{red}{\textbf{(SA4)}} as well. 

For any $(f,\tuple{z}) \in \tup{\structA_k}$ we use $\text{id}(f^{\structB},\tuple{z})$ to denote the set of all assignments $r : \toset{\tuple{z}} \to B$ such that $\pi(r(\tuple{z})) = \tuple{z}$ and either $f = \rho_k$ or $r(\tuple{z}) \in R^{\rels{B}}_f$. Then, the cost of $\lambda$ is

\begin{align*}
\sum_{\substack{(f,\tuple{z}) \in \tup{\structA_k}_{>0}, \; s: \toset{\tuple{z}} \to B_k, \\ f^{\structA_k}(\tuple{z}) \times f^{\structB_k}(s(\tuple{z})) < \infty}} &\lambda(f,\tuple{z},s) f^{\structA_k}(\tuple{z}) f^{\structB_k}(s(\tuple{z})) \\
&= \sum_{(f,\tuple{z}) \in \tup{\structA_k}_{>0}, \; s \in \text{id}(f^{\structB},\tuple{z})} \lambda(f,\tuple{z},s) f^{\structA_k}(\tuple{z}) f^{\structB_k}(s(\tuple{z}))\\
&= \sum_{(f,\tuple{z}) \in \tup{\structA}_{>0}, \; s \in \text{id}(f^{\structB},\tuple{z})} \lambda(f,\tuple{z},s) f^{\structA}(\tuple{z}) f^{\structB}(s(\tuple{z}))\\
&= \sum_{(f,\tuple{z}) \in \tup{\structA}_{>0}, \; s \in \text{id}(f^{\structB},\tuple{z})} \lambda(f,\tuple{z},s) f^{\structA}(\tuple{z}) c^*(f,\pi(s(\tuple{z})))\\
&= \sum_{(f,\tuple{z}) \in \tup{\structA}_{>0}, \; s \in \text{id}(f^{\structB},\tuple{z})} \lambda(f,\tuple{z},s) f^{\structA}(\tuple{z}) c^*(f,\tuple{z})\\
&= \sum_{(f,\tuple{z}) \in \tup{\structA}_{>0}} f^{\structA}(\tuple{z}) c^*(f,\tuple{z})\\
&= M^*.
\end{align*}

At this point, to prove that $\lambda$ is indeed a solution to
SA$_k(\structA,\structB)$ of cost $M^*$ we need only show that it
satisfies the condition \textcolor{red}{\textbf{(SA1)}}. Let $(f,\tuple{z}), (g,\tuple{w})\in \tup{\structA_k}_{>0}$ 
such that $\toset{\tuple{z}}\subseteq \toset{\tuple{w}}$ and
  $\tupsetsize{\tuple{z}} \leq k$. Let $s : \toset{\tuple{z}} \to B_k$. If
  $\pi(s(\tuple{z})) \neq \tuple{z}$ then all the lambdas involved in the
  equation \textcolor{red}{\textbf{(SA1)}} for the triple $((f,\tuple{z}),
  (g,\tuple{w}), s)$ are zero, so in this case the condition holds. Now, assume
  that $\pi(s(\tuple{z})) = \tuple{z}$ and observe that $(p,\tuple{x}) \neq (f,\tuple{z}) \neq (q,\tuple{y})$ because $\tupsetsize{\tuple{x}}, \tupsetsize{\tuple{y}} > k$. We consider two cases.

If $(p,\tuple{x}) \neq (g,\tuple{w}) \neq (q,\tuple{y})$ then there are exactly $2^{\tupsetsize{\tuple{w}}-\tupsetsize{\tuple{z}}}$ mappings $r$ from $\toset{\tuple{w}}$ to $B_k$ such that $\pi(r(w)) = w$ for all $w \in \toset{\tuple{w}}$ and $\eta(r(z)) = \eta(s(z))$ for all $z \in \toset{\tuple{z}}$. Thus we have
\[
\lambda(f,\tuple{z},s) = 1/2^{\tupsetsize{\tuple{z}}} = 2^{\tupsetsize{\tuple{w}}-\tupsetsize{\tuple{z}}} 1/2^{\tupsetsize{\tuple{w}}} = \sum_{r : \toset{\tuple{w}} \to B_k, \, r|_{\toset{\tuple{z}}} = s} \lambda(g,\tuple{w},r)
\]
and \textcolor{red}{\textbf{(SA1)}} holds for the triple $((f,\tuple{z}), (g,\tuple{w}), s)$.

In the second case, we have either $(g,\tuple{w}) = (p,\tuple{x})$ or $(g,\tuple{w}) = (q,\tuple{y})$. Let $Z' \defeq \toset{\tuple{z}} \cap \toset{\tuple{x}} \cap \toset{\tuple{y}}$ and $Z'' \defeq \toset{\tuple{z}} \backslash Z'$. Since $Z'$ is a strict subset of $\toset{\tuple{x}} \cap \toset{\tuple{y}}$, the mapping $s$ has exactly $2^{|\toset{\tuple{x}} \cap \toset{\tuple{y}}| - |Z'| - 1}$ distinct extensions $r : \toset{\tuple{z}} \cup (\toset{\tuple{x}} \cap \toset{\tuple{y}}) \to B_k$ such that $\pi(r(w)) = w$ for all $w \in \toset{\tuple{z}} \cup (\toset{\tuple{x}} \cap \toset{\tuple{y}})$ and $\bigoplus_{v \in \toset{\tuple{x}} \cap \toset{\tuple{y}}} \eta(r(v)) = 0$ (if $(g,\tuple{w}) = (p,\tuple{x})$) or $\bigoplus_{v \in \toset{\tuple{x}} \cap \toset{\tuple{y}}} \eta(r(v)) = 1$ (if $(g,\tuple{w}) = (q,\tuple{y})$). In turn, each of these mappings $r$ has $2^{\tupsetsize{\tuple{w}} - |\toset{\tuple{x}} \cap \toset{\tuple{y}}| - |Z''|}$ distinct extensions defined on the whole of $\toset{\tuple{w}}$ that satisfy $\pi(r(w)) = w$ for all $w \in \toset{\tuple{w}}$. Putting everything together, $s$ has $2^{\tupsetsize{\tuple{w}} - (|Z'| + |Z''|) - 1} = 2^{\tupsetsize{\tuple{w}} - \tupsetsize{\tuple{z}} - 1}$ extensions to $\toset{\tuple{w}}$ with the correct bit sum. Thus we have
\[
\lambda(f,\tuple{z},s) = 1/2^{\tupsetsize{\tuple{z}}} = 
2^{\tupsetsize{\tuple{w}}-\tupsetsize{\tuple{z}}-1} 1/2^{\tupsetsize{\tuple{w}} - 1} = \sum_{r : \toset{\tuple{w}} \to B_k, \, r|_{\toset{\tuple{z}}} = s} \lambda(g,\tuple{w},r)
\]
and \textcolor{red}{\textbf{(SA1)}} holds in this last case as well, which concludes the proof of the claim.
\renewcommand{\qedsymbol}{$\blacksquare$}\end{proof}

The proof of Theorem~\ref{thm:nece-scopes} is now established by Claim~\ref{claim:nec-scopes-int} and Claim~\ref{claim:nec-scopes-frac}.
As in Theorem~\ref{thm:nece-core}, we can make $\structB$ to be finite-valued by replacing $\infty$ with a sufficiently large number $N$. 
\end{proof}

\begin{theorem}
\label{thm:nece-scopes-core}
Let $\structA$ be a valued $\sigma$-structure and let $k\geq 1$.  Let $\structA'$ be the core of $\structA$. If the overlap of $\structA'$ is at least $k+1$, then the Sherali-Adams relaxation of level $k$ is not always tight for $\structA$. 
\end{theorem}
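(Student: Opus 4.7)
The plan is to mirror the argument used in the proof of Theorem~\ref{thm:nece-core-core}, simply swapping in Theorem~\ref{thm:nece-scopes} instead of Theorem~\ref{thm:nece-core}. The core $\structA'$ is, by definition, itself a core, and by hypothesis its overlap is at least $k+1$, so Theorem~\ref{thm:nece-scopes} applies directly to $\structA'$.

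Concretely, I would first invoke Theorem~\ref{thm:nece-scopes} on the core $\structA'$ to obtain a valued $\sigma$-structure $\structB$ witnessing
\[\optfrac{k}{\structA',\structB}<\opt{\structA',\structB}.\]
Next, I would transfer this strict inequality from $\structA'$ to $\structA$ using equivalence. On the integer side, since $\structA\equiv\structA'$ and equivalence says that the two structures have the same optimal cost against every right-hand side, we have $\opt{\structA,\structB}=\opt{\structA',\structB}$. On the fractional side, Proposition~\ref{prop:corefrac} gives $\optfrac{k}{\structA,\structB}=\optfrac{k}{\structA',\structB}$. Combining the two equalities with the strict inequality above yields $\optfrac{k}{\structA,\structB}<\opt{\structA,\structB}$, which is exactly the statement that the $k$-th level of Sherali-Adams is not tight for $\structA$.

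There is no real obstacle here; the work has already been done in Theorem~\ref{thm:nece-scopes}, and Proposition~\ref{prop:corefrac} plus the definition of equivalence are precisely designed to let us lift failure-of-tightness results from the core back to the original structure. The proof will be essentially a three-line copy of the proof of Theorem~\ref{thm:nece-core-core}.
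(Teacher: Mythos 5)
Your proposal is correct and matches the paper's proof of Theorem~\ref{thm:nece-scopes-core} essentially verbatim: apply Theorem~\ref{thm:nece-scopes} to the core $\structA'$, then transfer the strict inequality back to $\structA$ using the definition of equivalence on the integral side and Proposition~\ref{prop:corefrac} on the fractional side. There is nothing to add.
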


\begin{proof}
We can apply Theorem~\ref{thm:nece-scopes} to $\structA'$, and obtain $\structB$ such that 
$\optfrac{k}{\structA',\structB}<\opt{\structA',\structB}$. 
Since $\structA'\equiv \structA$, we know that
$\opt{\structA,\structB}=\opt{\structA',\structB}$ (by the definition of
equivalence and cores)
and $\optfrac{k}{\structA,\structB}=\optfrac{k}{\structA',\structB}$ (by
Proposition~\ref{prop:corefrac}). 
Hence, $\optfrac{k}{\structA,\structB}<\opt{\structA,\structB}$, and the result follows. 
\end{proof}

\section{Search VCSP($\C$, $-$)} 
\label{sec:search}

If a class $\C$ of valued structures has bounded treewidth modulo equivalence then the Sherali-Adams LP hierarchy can be used to solve in polynomial time \problem{VCSP($\C$,$-$)}, that is, to compute the minimum cost of a mapping from $\structA \in \C$ to some arbitrary valued structure $\structB$. 
However, it may be the case that computing a mapping of that cost is NP-hard even though we
know that one exists. In this section we will focus on the search version of the \problem{VCSP}, which explicitly asks for a minimum-cost mapping and will be denoted by \problem{SVCSP}.

If $\C$ is a class of valued structures, we denote by \problem{Core Computation($\C$)} the
problem that takes as input some $\structA \in \C$, and asks to compute a
mapping $g: A \mapsto A$ such that $g(\structA)$ is a core of $\structA$ and
there exists an IFH $\omega$ from $\structA$ to
$\structA$ such that $g \in \supp{\omega}$.

Building on our results from Section~\ref{sec:sa} and adapting techniques
from~\cite{tz16:jacm}, we will prove our third main result.

\begin{theorem}[\textbf{Search classification}]
\label{thm:search}
Assume FPT $\neq$ W[1]. Let $\C$ be a recursively enumerable class of valued structures of bounded arity. Then, the following are equivalent:
\begin{enumerate}
\item \problem{SVCSP($\C$, $-$)} is in PTIME.
\item $\C$ is of bounded treewidth modulo equivalence and \problem{Core Computation($\C$)}
is in PTIME.
\end{enumerate}
\end{theorem}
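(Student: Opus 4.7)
For $(2)\Rightarrow(1)$, I give a direct algorithm. On input $(\structA,\structB)$ with $\structA\in\C$, first invoke \problem{Core Computation($\C$)} to obtain a mapping $g:A\to A$ such that $g(\structA)$ is the core of $\structA$ and $g\in\supp{\omega}$ for some inverse fractional homomorphism $\omega$ from $\structA$ to itself. By Proposition~\ref{prop:tw-equiv-core} and the bounded-treewidth-modulo-equivalence hypothesis on $\C$, $g(\structA)$ has treewidth bounded by some constant $k$, and a tree decomposition of width $k$ can be computed in polynomial time. I then run the classical dynamic-programming algorithm on this decomposition to find an optimal mapping $h':g(A)\to B$ for $(g(\structA),\structB)$, and output $h'\circ g$. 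Correctness rests on two identities: unfolding Definition~\ref{def:ga} gives that the cost of $h'\circ g$ in $(\structA,\structB)$ equals the cost of $h'$ in $(g(\structA),\structB)$, and $\opt{g(\structA),\structB}=\opt{\structA,\structB}$ because $g(\structA)\equi\structA$ by Proposition~\ref{prop:equi-supp}.

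For $(1)\Rightarrow(2)$, the bounded-treewidth-modulo-equivalence conclusion is immediate: \problem{SVCSP($\C$, $-$)} in PTIME trivially implies \problem{VCSP($\C$, $-$)} in PTIME (output the cost of the mapping produced), so Theorem~\ref{theo:main} (under FPT $\neq$ W[1]) yields the conclusion. The substantive task is to show \problem{Core Computation($\C$)} is in PTIME using the \problem{SVCSP($\C$, $-$)} oracle. My plan is to implement the iterative folding underlying Proposition~\ref{prop:exist-core} via the ellipsoid method applied to the linear program from Proposition~\ref{prop:core-char}. Starting from $\structA^{(0)}=\structA$, at step $i$ I check feasibility of the LP in variables $c:\tup{\structA^{(i)}}\to\qplus$ asserting that the identity strictly dominates every non-surjective mapping on $\structA^{(i)}$; this LP has polynomially many variables but exponentially many constraints (one per non-surjective $g$). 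The required separation oracle, given a candidate $c$, reduces to minimising $\sum f^{\structA^{(i)}}(\tuple{x})c(f,g(\tuple{x}))$ over non-surjective $g$, which in turn reduces to $|A^{(i)}|$ VCSP queries, one per element $a\in A^{(i)}$ constraining $a\notin g(A^{(i)})$. Since $\structA^{(i)}\equi\structA$ yields $\opt{\structA^{(i)},\structC}=\opt{\structA,\structC}$ for every right-hand side $\structC$, each such query is answered by one \problem{SVCSP($\C$, $-$)} call on $(\structA,\structC)$ for appropriate $\structC$, even though the intermediate $\structA^{(i)}$ need not itself lie in $\C$.

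If the ellipsoid method declares feasibility, $\structA^{(i)}$ is a core and the algorithm halts; if it declares infeasibility, the Farkas certificate returned is, after normalisation, precisely an inverse fractional homomorphism from $\structA^{(i)}$ to itself supported entirely on a polynomially-sized set of non-surjective mappings. The key algebraic observation is that dualising the constraint ``identity dominates $g$'' for each non-surjective $g$ yields, modulo scaling, exactly the inverse-fractional-homomorphism inequalities of Proposition~\ref{prop:charfrac}. I then pick any $g_i\in\supp{\omega}$, set $\structA^{(i+1)}=g_i(\structA^{(i)})$ (which is strictly smaller and equivalent to $\structA$ by Proposition~\ref{prop:equi-supp}), and iterate; after at most $|A|$ folds I reach the core and output the composition of the $g_i$'s. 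The main obstacle is handling the fact that the intermediate structures $\structA^{(i)}$ lie outside $\C$ and hence cannot be fed directly to the oracle; this is resolved uniformly across iterations by routing all oracle calls through the original $\structA$ via the equivalence $\structA^{(i)}\equi\structA$, with the rounding and extraction techniques of~\cite{tz16:jacm} converting the fractional Farkas witnesses into the explicit integral folds $g_i$.
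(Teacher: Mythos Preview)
Your direction $(2)\Rightarrow(1)$ is correct and matches the paper's argument in spirit; the paper invokes Sherali--Adams (via Lemma~\ref{lem:twms-search}) rather than dynamic programming on a tree decomposition, but either works, and your direct cost identity via Definition~\ref{def:ga} is in fact cleaner than the paper's appeal to Proposition~\ref{prop:nonsurj}.

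For $(1)\Rightarrow(2)$ the overall strategy---ellipsoid method with a separation oracle built from the \problem{SVCSP} algorithm---is the same as the paper's, but your specific formulation has a real gap. Your LP at step $i$ lives on $\structA^{(i)}$: its constraints are indexed by non-surjective maps $g:A^{(i)}\to A^{(i)}$, and the separation oracle must produce such a $g$. A call to \problem{SVCSP($\C$,$-$)} on $(\structA,\structC_a)$ returns an optimal $h^*:A\to C_a$, not a map out of $A^{(i)}$. The equivalence $\structA^{(i)}\equi\structA$ equates the optimal \emph{values}, but you need the \emph{argmin} on $A^{(i)}$. The natural candidate, the restriction $h^*|_{A^{(i)}}$, is optimal for $(\structA^{(i)},\structC_a)$ only if the accumulated fold $G_i=g_{i-1}\circ\cdots\circ g_0:A\to A$ lies in the support of some inverse fractional homomorphism from $\structA$ to itself (so that Proposition~\ref{prop:nonsurj} applies to $h^*\circ G_i$ and the grouping-by-image computation goes through). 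You never establish this invariant: each $g_j$ you extract is only known to lie in $\supp{\omega_j}$ for some $\omega_j$ on $\structA^{(j)}$, and composing such maps does not obviously yield a map in the support of an inverse fractional homomorphism on $\structA$ (the inequalities defining $\omega_j$ involve $f^{\structA^{(j)}}$, not $f^{\structA}$). The same missing invariant means your final output---the composition of the $g_i$'s---is not shown to meet the specification of \problem{Core Computation($\C$)}, which demands membership in the support of an inverse fractional homomorphism from $\structA$ to $\structA$.

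The paper avoids all of this by never leaving $\structA$. Its \problem{Reduction Step} (Lemma~\ref{lem:compute-retraction}) takes as input a pair $(\structA,g)$ with $g:A\to A$ already in the support of some inverse fractional homomorphism on $\structA$, and sets up an LP whose primal variables are $\omega(h)$ for $h:A\to A$ and whose dual has polynomially many variables indexed by $\tup{\structA}_{<\infty}$. The separation oracle for the dual then directly consumes the output of \problem{SVCSP($\C$,$-$)} on instances $(\structA,\structB_a)$ with universe $B_a=g(A)\setminus\{a\}$, and the basic optimum dual solution returned by the ellipsoid method (Lemma~\ref{lem:ellipsoid}) is itself an inverse fractional homomorphism on $\structA$ whose support contains the next $g^+:A\to A$ with $g^+(A)\subsetneq g(A)$. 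Iterating (Lemma~\ref{lem:ret-step}, Proposition~\ref{prop:reduc}) reaches the core while maintaining the required invariant automatically.
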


\begin{remark}
\label{rem:crispsearch}
Given a class $\C$ of relational structures, let \problem{SCSP($\C$, $-$)} denote the search variant of \problem{CSP($\C$, $-$)}; i.e., given $\rels{A}$ and $\rels{B}$ with $\rels{A}\in\C$, the task is to return a homomorphism from $\rels{A}$ to $\rels{B}$ if one exists. When applied to (bounded-arity, recursively enumerable) classes of relational structures, Theorem~\ref{thm:search} states that \problem{SCSP($\C$, $-$)} is in PTIME if and only if $\C$ has bounded treewidth modulo homomorphic equivalence and computing a homomorphism from any given $\rels{A} \in \C$ to one of its cores is in PTIME. This result is folklore and can be easily derived from Grohe~\cite{Grohe07:jacm} and Dalmau, Kolaitis, and Vardi~\cite{Dalmau02:width}. We provide here a brief sketch of the argument since our proof of Theorem~\ref{thm:search} for classes of valued structures follows roughly the same strategy, although the technical details are significantly more involved. 

Let $\C$ be a class of relational structures and $\C'$ denote the class of all cores of structures in $\C$. If computing a homomorphism $g$ from any given $\rels{A} \in \C$ to one of its cores is in PTIME, then \problem{SCSP($\C$, $-$)} is polynomial-time reducible to \problem{SCSP($\C'$, $-$)} by simply replacing each instance $(\rels{A},\rels{B})$ of \problem{SCSP($\C$, $-$)} with $(g(\rels{A}),\rels{B})$. (If a homomorphism $h$ from $g(\rels{A})$ to $\rels{B}$ is returned, then $h \circ g$ is a homomorphism from $\rels{A}$ to $\rels{B}$. Otherwise, no homomorphism from $\rels{A}$ to $\rels{B}$ exists.) Furthermore, if $\C$ has bounded treewidth modulo homomorphic equivalence then \problem{SCSP($\C'$, $-$)} is in PTIME; one can, for example, compute an optimal tree decomposition of any $\rels{A'} \in \C'$ in linear time (using Bodlaender's algorithm~\cite{bodlaender96:tree}) and then use standard dynamic programming techniques. It follows that \problem{SCSP($\C$, $-$)}  is in PTIME for any class $\C$ satisfying both conditions. For the converse implication, if \problem{SCSP($\C$, $-$)} is in PTIME then, by Grohe's result~\cite{Grohe07:jacm} (and under our assumptions), $\C$ is of bounded treewidth modulo homomorphic equivalence. It only remains to prove that computing a homomorphism from $\rels{A} \in \C$ to one of its cores is in PTIME; for this task we will use a simple argument from Chen and Mengel~\cite{Chen15:icdt}. Observe that a relational structure $\rels{C}$ is not a core if and only if there exists a relational structure $\rels{C}_{a}$ obtained by removing one element $a$ from the universe of $\rels{C}$ (as well as all tuples containing $a$) that is homomorphically equivalent to $\rels{C}$. It follows that an algorithm that starts with the pair of structures $(\rels{A},\rels{A})$ and greedily removes elements from the universe of the right-hand side structure while maintaining homomorphic equivalence will eventually terminate with $(\rels{A},\rels{A'})$, where $\rels{A'}$ is the core of $\rels{A}$. Recall that $\rels{A} \in \C$ and $\C$ has bounded treewidth modulo homomorphic equivalence. Thus the homomorphism tests required by the algorithm outlined above can be done in polynomial time~\cite{Dalmau02:width}. It then suffices to run the assumed algorithm for \problem{SCSP($\C$, $-$)} on the pair $(\rels{A},\rels{A'})$ to compute a homomorphism from $\rels{A}$ to one of its cores. \end{remark}

If $\C$ is a class of valued structures, the problem \problem{Reduction
Step($\C$)} takes as input some $\structA \in \C$ and a mapping $g: A \mapsto A$
that belongs to the support of some IFH from $\structA$ to $\structA$. The goal
is to compute a mapping $g^+: A \mapsto A$ such that $g^+(A) \subsetneq g(A)$
and $g^+$ belongs to the support of some IFH from $\structA$ to $\structA$, or assert that no such mapping exist. The relevance of this problem to \problem{Core Computation} is highlighted by the following proposition.

\begin{proposition}
\label{prop:reduc}
Let $\structA$ be a valued structure and $g: A \mapsto A$ be a mapping that
  belongs to the support of some IFH from $\structA$ to $\structA$. Then,
  $g(\structA)$ is not the core of $\structA$ if and only if there exists a
  mapping $g^+: A \mapsto A$ such that $g^+(A) \subsetneq g(A)$ and $g^+$
  belongs to the support of some IFH from $\structA$ to $\structA$.
\end{proposition}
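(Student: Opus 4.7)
The plan is to prove the two directions separately, leaning on Proposition~\ref{prop:equi-supp} (support of an inverse fractional homomorphism from $\structA$ to itself yields an equivalent quotient) and Proposition~\ref{prop:exist-core} (existence, uniqueness up to isomorphism, and size bound on the core), with a composition argument in one direction and a size argument in the other.

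For the ``only if'' direction, I will start from the hypothesis that $g(\structA)$ is not the core of $\structA$. Since $g \in \supp{\omega}$ for some inverse fractional homomorphism $\omega$ from $\structA$ to itself, Proposition~\ref{prop:equi-supp} gives $g(\structA) \equi \structA$; combined with the uniqueness part of Proposition~\ref{prop:exist-core} (and Proposition~\ref{prop:iso}), this forces $g(\structA)$ not to be a core. So there is a non-surjective $g' : g(A) \mapsto g(A)$ lying in the support of some inverse fractional homomorphism $\omega'$ from $g(\structA)$ to $g(\structA)$. I will set $g^+ \defeq g' \circ g$, viewed as a map $A \mapsto A$; since $g'$ is non-surjective on $g(A)$, automatically $g^+(A) = g'(g(A)) \subsetneq g(A)$. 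It remains to exhibit an inverse fractional homomorphism from $\structA$ to $\structA$ whose support contains $g^+$. The key observation is that the Dirac distribution $\omega^{**}$ putting all mass on $g$ (regarded as a map $A \mapsto g(A)$) is itself an inverse fractional homomorphism from $\structA$ to $g(\structA)$, because by Definition~\ref{def:ga} we have $f^{g(\structA)}(\tuple{x}) = f^{\structA}(g^{-1}(\tuple{x}))$ for every $(f,\tuple{x}) \in \tup{g(\structA)}$, so the defining inequality becomes an equality. Defining the composition
\[
\omega^+(h) \defeq \sum_{\substack{g_1 : A \mapsto g(A), \; g_2 : g(A) \mapsto g(A)\\ g_2 \circ g_1 = h}} \omega^{**}(g_1)\,\omega'(g_2),
\]
a direct rewriting of $f^{\structA}\bigl((g_2 \circ g_1)^{-1}(\tuple{x})\bigr)$ as $\sum_{\tuple{z} : g_2(\tuple{z}) = \tuple{x}} f^{\structA}(g_1^{-1}(\tuple{z}))$ lets me apply the IFH inequality first for $\omega^{**}$ and then for $\omega'$ to verify that $\omega^+$ is an inverse fractional homomorphism from $\structA$ to $g(\structA)$. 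Because any map whose image lies in $g(A)$ has empty preimage on tuples involving elements of $A \setminus g(A)$, the inequality for such tuples is trivial, so $\omega^+$ extends without change to an inverse fractional homomorphism from $\structA$ to $\structA$. Finally $\omega^+(g' \circ g) \geq \omega^{**}(g)\,\omega'(g') > 0$, so $g^+ \in \supp{\omega^+}$.

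For the ``if'' direction, I will argue by contradiction on universe sizes. Assume $g^+$ as in the statement exists and that $g(\structA)$ \emph{is} the core of $\structA$. By Proposition~\ref{prop:equi-supp} applied to $g^+$, the quotient $g^+(\structA)$ is equivalent to $\structA$, hence equivalent to the core $g(\structA)$. Since $g(\structA)$ is a core, the uniqueness clause of Proposition~\ref{prop:exist-core} (together with Proposition~\ref{prop:iso}) identifies it, up to isomorphism, as the core of $g^+(\structA)$ as well. But Proposition~\ref{prop:exist-core} also bounds the universe of that core by $|g^+(A)|$, giving $|g(A)| \leq |g^+(A)|$, which contradicts $g^+(A) \subsetneq g(A)$.

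The main obstacle is the verification in the ``only if'' direction that the composed distribution $\omega^+$ really is an inverse fractional homomorphism from $\structA$ to $\structA$: one has to rewrite preimages of composed maps carefully, invoke the two IFH inequalities in the correct order, and then justify the passage from codomain $g(\structA)$ to codomain $\structA$ by the triviality-on-missing-elements observation. The other direction is essentially a one-line size counting argument once the uniqueness of cores is in hand.
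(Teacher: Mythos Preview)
Your ``if'' direction (arguing by size via Propositions~\ref{prop:equi-supp}, \ref{prop:iso}, and \ref{prop:exist-core}) is correct and matches the paper's argument.

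The ``only if'' direction, however, contains a genuine gap at the extension step. You correctly observe that the Dirac mass on $g$ is an inverse fractional homomorphism from $\structA$ to $g(\structA)$, and that composing it with $\omega'$ gives an inverse fractional homomorphism $\omega^+$ from $\structA$ to $g(\structA)$. The problem is your claim that $\omega^+$ ``extends without change'' to an inverse fractional homomorphism from $\structA$ to $\structA$. You handle tuples $\tuple{x}$ with a component outside $g(A)$ correctly (the preimage is empty), but for $\tuple{x}\in g(A)^{\ar{f}}$ the bound you have is
\[
\sum_{h}\omega^+(h)\,f^{\structA}(h^{-1}(\tuple{x}))\ \le\ f^{g(\structA)}(\tuple{x})\ =\ f^{\structA}(g^{-1}(\tuple{x})) \ =\ \sum_{\tuple{y}:\,g(\tuple{y})=\tuple{x}} f^{\structA}(\tuple{y}),
\]
and this right-hand side can be strictly larger than $f^{\structA}(\tuple{x})$ (the sum may contain many positive terms besides, or even instead of, $f^{\structA}(\tuple{x})$ itself). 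So the required inequality $\le f^{\structA}(\tuple{x})$ does not follow, and your $\omega^+$ need not be an inverse fractional homomorphism from $\structA$ to $\structA$.

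The paper avoids this trap with a longer composition. Starting from the non-surjective $g^*\in\supp{\omega'}$, it uses $\structA\equiv g^*(g(\structA))$ and Proposition~\ref{prop:charfrac} to obtain inverse fractional homomorphisms $\omega_1$ from $\structA$ to $g^*(g(\structA))$ and $\omega_2$ from $g^*(g(\structA))$ back to $\structA$, and then composes with the \emph{original} $\omega$ (which is already an IFH from $\structA$ to $\structA$ with $g$ in its support). The resulting $\omega^+=\omega\circ\omega_2\circ\omega_1$ is automatically an IFH from $\structA$ to $\structA$, and $g^+\defeq g\circ g_2\circ g_1$ lies in its support with image contained in $g(A)$ and of strictly smaller cardinality. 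The key point you are missing is that to land back in $\structA$ (rather than $g(\structA)$) you must route through an IFH whose \emph{codomain} is $\structA$; the final application of $\omega$ both achieves this and guarantees the image of $g^+$ sits inside $g(A)$.
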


\begin{proof}
If there exists such a mapping $g^+$ then by Proposition~\ref{prop:easy} (3) from Appendix~\ref{app:equiv} we have $\structA \equiv g(\structA) \equiv g^+(\structA)$. Then, the cores of $g(\structA)$ and $g^+(\structA)$ are equivalent and by Proposition~\ref{prop:iso} they are isomorphic. By Proposition~\ref{prop:exist-core} the universe of the core of $g^+(\structA)$ has size at most $|g^+(A)| < |g(A)|$, so $g(\structA)$ is not a core.

For the converse implication, suppose that $g(\structA)$ is not a core. Let
  $\omega$ be an IFH from $\structA$ to $\structA$ such that $g \in
  \supp{\omega}$. By the definition of a core, there exists a non-surjective IFH
  from $g(\structA)$ to $g(\structA)$. Let $g^*$ be a mapping in its support
  such that $g^*(g(A)) \subsetneq g(A)$. By Proposition~\ref{prop:easy} (3), we
  have $\structA \equiv g(\structA) \equiv g^*(g(\structA))$. Then, by
  Proposition~\ref{prop:charfrac} there exist an IFH $\omega_1$ from $\structA$
  to $g^*(g(\structA))$ and an IFH $\omega_2$ from $g^*(g(\structA))$ to $\structA$. Let $g_1, g_2$ be arbitrary mappings in the support of $\omega_1, \omega_2$, respectively. If we define $\omega^+ \defeq \omega \circ \omega_2 \circ \omega_1$, where 
\[\omega^+(h) = (\omega \circ \omega_2 \circ \omega_1)(h)=\sum_{\substack{h_1:A\mapsto g^*(g(A)),\\ h_2:g^*(g(A))\mapsto A,\\ h_3: A \mapsto A:\\ h_3 \circ h_2 \circ h_1=h}} \omega(h_3)\omega_2(h_2)\omega_1(h_1)\]
then $\omega^+$ is an IFH from $\structA$ to $\structA$. Let $g^+ \defeq g \circ g_2 \circ g_1$. Because $|g^+(A)| \leq |g_1(A)| \leq |g^*(g(A))| < |g(A)|$, we have $g^+(A) \subsetneq g(A)$. Moreover,
\[
\omega^+(g^+) \geq \omega(g)\omega_2(g_2)\omega_1(g_1) > 0
\]
so $g^+$ belongs to the support of at least one IFH from $\structA$ to $\structA$, which concludes the proof.
\end{proof}

\begin{lemma}
\label{lem:ret-step}
Let $\C$ be a class of valued structures. If \problem{Reduction Step($\C$)} is in
PTIME, then \problem{Core Computation($\C$)} is in PTIME.
\end{lemma}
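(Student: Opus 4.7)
The plan is to establish this lemma by exhibiting an iterative algorithm that, starting from the identity mapping on $A$, repeatedly invokes the oracle for \problem{Reduction Step($\C$)} until no further reduction is possible, and then returns the final mapping. Correctness will follow from Proposition~\ref{prop:reduc}, and polynomial running time will follow from the fact that each iteration strictly reduces the image size.

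First I would observe that the identity mapping $\id: A \mapsto A$ belongs to the support of an inverse fractional homomorphism from $\structA$ to $\structA$: simply take $\omega$ with $\omega(\id) = 1$ and $\omega(g) = 0$ for all $g \neq \id$. The required inequality $\sum_{g \in A^A}\omega(g)f^{\structA}(g^{-1}(\tuple{x})) \leq f^{\structA}(\tuple{x})$ reduces to $f^{\structA}(\tuple{x}) \leq f^{\structA}(\tuple{x})$, which holds trivially. Thus $(\structA, \id)$ is a valid input to \problem{Reduction Step($\C$)}.

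Next, I would define the algorithm: set $g_0 \defeq \id$ and, for $i \geq 0$, invoke the \problem{Reduction Step($\C$)} oracle on $(\structA, g_i)$. If it returns a mapping $g^+$, set $g_{i+1} \defeq g^+$ and iterate; otherwise halt and output $g_i$. By the specification of \problem{Reduction Step}, each returned $g_{i+1}$ again lies in the support of some inverse fractional homomorphism from $\structA$ to $\structA$, so it is a legal input to the next iteration. Because $|g_{i+1}(A)| < |g_i(A)|$ whenever a reduction is returned, the sequence $(|g_i(A)|)_i$ strictly decreases, so the process terminates after at most $|A|$ iterations.

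Finally, correctness follows from Proposition~\ref{prop:reduc}: if the oracle asserts that no further reduction exists for $(\structA, g_i)$, then $g_i(\structA)$ must be the core of $\structA$, and $g_i$ lies in the support of an inverse fractional homomorphism from $\structA$ to itself by construction. Since each oracle call runs in polynomial time by assumption and the number of iterations is at most $|A|$, the overall running time is polynomial in $|\structA|$, establishing that \problem{Core Computation($\C$)} is in PTIME. There is no real obstacle here beyond checking that the base case (identity mapping) is admissible and that the iterative invariant ``$g_i$ is in the support of some inverse fractional homomorphism from $\structA$ to $\structA$'' is preserved, both of which are immediate.
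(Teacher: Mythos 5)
Your proof is correct and follows essentially the same iterative approach as the paper: initialize with the identity mapping, repeatedly call the \problem{Reduction Step} oracle, terminate after at most $|A|$ iterations, and conclude via Proposition~\ref{prop:reduc}. The added check that the identity mapping is a legal input (being in the support of the point-mass inverse fractional homomorphism) is a nice explicit touch that the paper leaves implicit.
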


\begin{proof}
Suppose that \problem{Reduction Step($\C$)} is in PTIME, and let $\structA \in
  \C$. We initialize a variable $g: A \mapsto A$ to the identity mapping on $A$
  and invoke the polynomial-time algorithm R for \problem{Reduction Step($\C$)}
  on input $(\structA,g)$. If R asserts that no mapping $g^+: A \mapsto A$ such
  that $g^+(A) \subsetneq g(A)$ and $g^+$ belongs to the support of some IFH from $\structA$ to $\structA$ exists, then from Proposition~\ref{prop:reduc} we deduce that $g(\structA)$ is a core and we are done. Otherwise, we set $g \defeq g^+$ and repeat the procedure until R finds that $g(\structA)$ is a core (via Proposition~\ref{prop:reduc}). In this case, by Proposition~\ref{prop:easy} (3) from Appendix~\ref{app:equiv}, it holds that $g(\structA) \equiv \structA$, so $g(\structA)$ is the core of $\structA$ and we return $g$. The procedure terminates after at most $|A|$ calls to R.
\end{proof}

We start by proving the implication (1) $\Rightarrow$ (2) of Theorem~\ref{thm:search}. If \problem{SVCSP($\C$, $-$)} is in PTIME, then, by Theorem~\ref{theo:main} (and under the assumption that FPT $\neq$ W[1]), $\C$ is of bounded treewidth modulo equivalence; the nontrivial part is to show that \problem{Core Computation($\C$)} is in PTIME. To achieve this, we adapt an algorithm from~\cite[Proposition~4.7]{tz16:jacm} originally used to determine in polynomial time the complexity of core finite-valued constraint languages (here, ``core'' refers to the notion for right-hand side valued structures, which differs from our own; see~\cite[Definition~2.6]{tz16:jacm} for a precise definition). The central idea is to show that \problem{Reduction Step($\C$)} can be solved by the \emph{ellipsoid algorithm} using a separation oracle that makes polynomially many calls to the assumed polynomial-time algorithm for \problem{SVCSP($\C$, $-$)}.

Before we proceed with the main proof we need the following definitions and result from combinatorial optimisation, as well as two minor technical lemmas.

\begin{definition}[{\cite{grotschel88:geom}}]
Let $A \in \mathbb{Q}^{m \times n}$, $\overline{b} \in \mathbb{Q}^m$ and $P = \{\overline{x} \in \mathbb{Q}^n: \, A\overline{x} \leq \overline{b}\}$. A \emph{strong separation oracle} for $P$ is an algorithm that, given on input a vector $\overline{y} \in \mathbb{Q}^n$, either concludes that $\overline{y} \in P$ or computes a vector $\overline{a} \in \mathbb{Q}^n$ such that $\overline{a}^T \overline{y} > \overline{a}^T \overline{x}$ for all $\overline{x} \in P$.
\end{definition}

\begin{definition}[{\cite{grotschel88:geom}}]
Let $A \in \mathbb{Q}^{m \times n}$, $\overline{b} \in \mathbb{Q}^m$, $P = \{\overline{x} \in \mathbb{Q}^n: \, A\overline{x} \leq \overline{b}\}$, $\overline{c} \in \mathbb{Q}^n$ and SEP be a strong separation oracle for $P$. A \emph{basic optimum dual solution with oracle inequalities} is a set of inequalities $\overline{a}_1^T \overline{x} \leq \alpha_1, \ldots, \overline{a}_k^T \overline{x} \leq \alpha_k$ valid for $P$, where $\overline{a}_1, \ldots, \overline{a}_k$ are linearly independent outputs of SEP, and dual variables $\lambda_1, \ldots, \lambda_k \in \qplus$ such that $\lambda_1\overline{a}_1 + \ldots + \lambda_k\overline{a}_k = \overline{c}$ and $\lambda_1 \alpha_1 + \ldots + \lambda_k \alpha_k = \max_{\overline{x} \in P}\overline{c}^T \overline{x}$.
\end{definition}

\begin{lemma}[{\cite[Lemma~6.5.15]{grotschel88:geom}}]
\label{lem:ellipsoid}
Let $A \in \mathbb{Q}^{m \times n}$, $\overline{b} \in \mathbb{Q}^m$, $P = \{\overline{x} \in \mathbb{Q}^n: \, A\overline{x} \leq \overline{b}\}$ and $\overline{c} \in \mathbb{Q}^n$. Suppose that the bit sizes of the coefficients of $A$ and $\overline{b}$ are bounded by $\phi$. Given a strong separation oracle SEP for $P$ where every output has encoding size at most $\phi$, one can, in time polynomial in $n$, $\phi$ and the encoding size of $\overline{c}$, and with polynomially many oracle queries to SEP, either
\begin{itemize}
\item find a basic optimum dual solution with oracle inequalities, or
\item assert that the dual problem is unbounded or has no solution.
\end{itemize}
\end{lemma}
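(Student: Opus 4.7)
The plan is to follow the classical ellipsoid-method argument from combinatorial optimization. First I would run the central-cut ellipsoid algorithm on the optimization problem $\max \overline{c}^T \overline{x}$ subject to $A\overline{x} \leq \overline{b}$ together with the implicit constraint $\overline{x} \in P$, where membership is tested via the separation oracle SEP. Because $A$ and $\overline{b}$ have bit sizes bounded by $\phi$ and every output of SEP also has bit size bounded by $\phi$, the vertices of $P$ relevant to the LP have polynomially bounded bit complexity, so the method terminates after a polynomial number of iterations and either certifies infeasibility or unboundedness of the dual, or returns an optimum $\overline{x}^*$.

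Next, I would collect every hyperplane $\overline{a}_j^T \overline{x} \leq \alpha_j$ returned by SEP during the run, obtaining a polynomial-size family of valid inequalities for $P$. Let $\widehat{P}$ denote the polyhedron defined by these inequalities together with $A\overline{x} \leq \overline{b}$. By construction $P \subseteq \widehat{P}$, yet every iterate that the algorithm rejected was cut off by one of the recorded hyperplanes, so $\max_{\widehat{P}} \overline{c}^T \overline{x} = \max_P \overline{c}^T \overline{x} = \overline{c}^T \overline{x}^*$. Since $\widehat{P}$ has an explicit description, standard LP duality then produces a nonnegative dual solution whose value equals $\overline{c}^T \overline{x}^*$.

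To obtain a \emph{basic} dual solution supported only on oracle inequalities, I would treat the rows of $A$ as artificial SEP outputs (they are trivially valid for $P$), so that every constraint in the description of $\widehat{P}$ may be regarded as an oracle inequality. Then, among the constraints active at $\overline{x}^*$, I would select a maximal linearly independent subset $\overline{a}_1,\ldots,\overline{a}_k$ by Gaussian elimination and solve the $k\times k$ linear system $\sum_i \lambda_i \overline{a}_i = \overline{c}$ for nonnegative $\lambda_i$; the existence of such a basic nonnegative solution follows from Carath\'eodory's theorem applied to the polyhedral cone spanned by the active normals (which must contain $\overline{c}$ by KKT optimality), while complementary slackness delivers $\sum_i \lambda_i \alpha_i = \overline{c}^T \overline{x}^*$.

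The main obstacle is bridging the gap between ``the ellipsoid method terminated'' and ``the collected cuts certify optimality on the nose'': when $P$ is accessible only through SEP, the algorithm may stop because its current ellipsoid has shrunk below a precision threshold rather than because it has located an exact vertex, so one needs the standard rounding and simultaneous Diophantine approximation machinery to convert an approximate primal-dual pair into an exact basic one of the claimed bit complexity. Carrying this out rigorously is precisely the content of the cited Lemma~6.5.15 of~\cite{grotschel88:geom}, which we invoke here as a black box.
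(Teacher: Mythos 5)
The paper does not prove this lemma; it is invoked as a black-box citation to Lemma~6.5.15 of Gr\"otschel, Lov\'asz, and Schrijver~\cite{grotschel88:geom}, with no argument supplied. Your proposal gives a plausible high-level sketch of the ellipsoid-method proof and correctly identifies the technical heart of the matter (the rounding/Diophantine-approximation step that converts an approximate primal--dual pair into an exact basic one), and you explicitly note at the end that you are ultimately deferring to the cited result. That is consistent with the paper's own treatment, which is simply to cite the lemma. No comparison of proof strategies is meaningful here since the paper offers none.
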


\begin{lemma}
\label{lem:epsilondelta}
There exists a polynomially computable function $\varepsilon_{\Delta}$ which maps any two valued $\sigma$-structures $\structA$, $\structB$ to a positive rational number $\varepsilon_{\Delta}(\structA,\structB)$ such that for any two mappings $h_1,h_2 : A \mapsto B$ satisfying $\cost{h_1} < \cost{h_2} < \infty$, we have $\cost{h_2} - \cost{h_1} > \varepsilon_{\Delta}(\structA,\structB)$.
\end{lemma}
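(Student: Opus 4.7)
The plan is to exploit the fact that all finite costs arising in the definition of $\cost{h}$ are rational numbers with a bounded common denominator determined by the encodings of $\structA$ and $\structB$, so any two distinct finite costs must differ by at least the reciprocal of that denominator.

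Concretely, for every $(f,\tuple{x}) \in \tup{\structA}$ with $f^{\structA}(\tuple{x}) < \infty$, write $f^{\structA}(\tuple{x}) = p_{f,\tuple{x}}^{\structA}/q_{f,\tuple{x}}^{\structA}$ in the encoding, and similarly for $\structB$. I would define
\[
D_A \;\defeq\; \prod_{(f,\tuple{x}) \in \tup{\structA},\, f^{\structA}(\tuple{x}) < \infty} q_{f,\tuple{x}}^{\structA}, \qquad D_B \;\defeq\; \prod_{(g,\tuple{y}) \in \tup{\structB},\, g^{\structB}(\tuple{y}) < \infty} q_{g,\tuple{y}}^{\structB},
\]
and then set $\varepsilon_{\Delta}(\structA,\structB) \defeq 1/(2 D_A D_B)$. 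Each $D_A, D_B$ is a positive integer whose bit size is bounded by the total size of the respective encoding, so $\varepsilon_{\Delta}$ is polynomially computable.

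The key step is to show that if $\cost{h_1}, \cost{h_2} < \infty$, then $D_A D_B \cdot \cost{h_i}$ is a nonnegative integer. For each term $f^{\structA}(\tuple{x}) f^{\structB}(h_i(\tuple{x}))$ in the cost, finiteness of the overall sum forces either the term to be $0$ (when one factor is $0$) or both factors to be finite. In the latter case, both factors are of the form $p/q$ with $q$ dividing $D_A$ (resp.\ $D_B$), so the product multiplied by $D_A D_B$ yields a nonnegative integer. Summing gives that $D_A D_B \cdot \cost{h_i} \in \mathbb{N}$.

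Consequently, if $\cost{h_1} < \cost{h_2}$ and both are finite, then $D_A D_B (\cost{h_2} - \cost{h_1})$ is a positive integer, whence $\cost{h_2} - \cost{h_1} \geq 1/(D_A D_B) > \varepsilon_{\Delta}(\structA,\structB)$. There is no genuine obstacle here: the only mild subtlety is to argue carefully that finiteness of $\cost{h_i}$ does propagate to the individual summands via the convention $0 \times \infty = 0$, so that no term contributes an ill-defined product. Once that is handled, the common-denominator bound closes the proof.
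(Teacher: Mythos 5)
Your proof is correct, and it takes a genuinely different and somewhat more concrete route than the paper's. The paper argues abstractly via bit sizes: it posits a polynomial $\phi_1$ bounding the encoding size of any achievable finite cost, a polynomial $\phi_2$ bounding the encoding size of a difference of two rationals, and then sets $\varepsilon_\Delta$ to be half the smallest positive rational representable within the resulting bit-size bound. Your argument instead produces an explicit common denominator $D_A D_B$ (the product of all finite denominators appearing in the two structures) and shows directly that $D_A D_B \cdot \cost{h}$ is a nonnegative integer whenever $\cost{h}<\infty$, so two distinct finite costs differ by at least $1/(D_A D_B)$. The subtle point you flag — that finiteness of $\cost{h}$ forces each summand to be either $0$ (via the $0\times\infty$ convention) or a product of two finite rationals whose denominators divide $D_A$ and $D_B$ — is exactly what is needed, and you handle it correctly. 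Your version is more self-contained in that it does not rely on the unstated existence of the polynomials $\phi_1,\phi_2$; the paper's version is more modular and perhaps more robust to changes in the encoding scheme, since it only depends on the cost having polynomially bounded encoding size rather than on the specific common-denominator structure. Both give a polynomially computable $\varepsilon_\Delta$, since $D_A D_B$ has bit size at most $|\structA|+|\structB|$.
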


\begin{proof}
We assume without loss of generality that every nonnegative rational number $p/q$ is encoded as a sequence of two nonnegative integers $p$ and $q$. Let $q_{\structA}$ (resp. $q_{\structB}$) denote the product of all denominators of the values $f^{\structA}(\tuple{x})$ with $(f,\tuple{x}) \in \tup{\structA}_{<\infty}$ (resp. values $f^{\structB}(\tuple{x})$ with $(f,\tuple{x}) \in \tup{\structB}_{<\infty}$). The cost of any mapping $h : A \to B$ is an integer multiple of $1/(q_{\structA}q_{\structB})$, so for any two mappings $h_1,h_2 : A \mapsto B$ satisfying $\cost{h_1} < \cost{h_2} < \infty$ we have that $\cost{h_2} - \cost{h_1} \geq 1/(q_{\structA}q_{\structB})$. We can then choose $\varepsilon_{\Delta}(\structA,\structB) \defeq 1/2 \cdot 1/(q_{\structA}q_{\structB})$, which is positive and polynomially computable.
\end{proof}

\begin{lemma}
\label{lem:epsilonomega}
There exists a polynomially computable function $\varepsilon_{\Omega}$ which maps any two valued $\sigma$-structures $\structA$, $\structB$ to a positive rational number $\varepsilon_{\Omega}(\structA,\structB)$ such that for any subset $\H_S$ of $B^A$, the following statements are equivalent:
\begin{itemize}
\item[(i)] There exists an IFH $\omega$ from $\structA$ to $\structB$ such that $\sum_{h \in \H_S}\omega(h) > 0$.
\item[(ii)] There exists an IFH $\omega$ from $\structA$ to $\structB$ such that $\sum_{h \in \H_S}\omega(h) \geq \varepsilon_{\Omega}(\structA,\structB)$.
\end{itemize}
\end{lemma}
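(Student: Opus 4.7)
The plan is to formulate the question as a linear program and bound the bit-complexity of its optimum. Consider the LP $L(\structA,\structB,\H_S)$ that maximises $\sum_{h \in \H_S}\omega(h)$ over all $\omega \in \qplus^{B^A}$ satisfying $\sum_{g \in B^A}\omega(g)=1$ together with the inequalities $\sum_{g \in B^A}\omega(g)\,f^{\structA}(g^{-1}(\tuple{x})) \leq f^{\structB}(\tuple{x})$ for every $(f,\tuple{x})\in \tup{\structB}$. By the definition of an inverse fractional homomorphism, condition (i) holds precisely when the optimum of $L(\structA,\structB,\H_S)$ is strictly positive, and (ii) holds precisely when this optimum is at least $\varepsilon_\Omega(\structA,\structB)$. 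Hence it suffices to exhibit a polynomially computable positive rational $\varepsilon_\Omega(\structA,\structB)$, depending only on $\structA$ and $\structB$, that lower-bounds the optimum whenever it is positive, \emph{uniformly} over all subsets $\H_S \subseteq B^A$.

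To handle the infinite entries I first preprocess the LP: every $g \in B^A$ for which some $(f,\tuple{y}) \in \tup{\structA}$ with $f^{\structA}(\tuple{y}) = \infty$ is mapped to $\tuple{x} = g(\tuple{y})$ with $f^{\structB}(\tuple{x}) < \infty$ can be forced to $\omega(g) = 0$ without affecting feasibility, and any constraint with $f^{\structB}(\tuple{x}) = \infty$ can be discarded. The resulting LP has at most $|\tup{\structB}| + 1$ non-trivial constraints, and its finite rational coefficients $f^{\structA}(g^{-1}(\tuple{x})) = \sum_{\tuple{y}:\,g(\tuple{y})=\tuple{x}} f^{\structA}(\tuple{y})$ and right-hand sides $f^{\structB}(\tuple{x})$ each have bit size polynomial in $|\structA| + |\structB|$.

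Next, I invoke standard linear programming theory. Although $L(\structA,\structB,\H_S)$ has up to $|B|^{|A|}$ variables, its optimum is attained at a vertex of the feasible polytope whose support has size at most $|\tup{\structB}| + 1$, and by Cramer's rule applied to the tight submatrix of coefficients, each nonzero coordinate of such a vertex is a ratio of determinants of polynomially bounded rational entries. Consequently, both the vertex and the objective value (which has $0/1$ coefficients) are rationals of bit size bounded by some explicit polynomial $p(|\structA| + |\structB|)$ that is \emph{independent} of $\H_S$. Since any positive rational with bit size at most $p(n)$ is at least $1/2^{p(n)}$, I can set $\varepsilon_\Omega(\structA,\structB) \defeq 1/2^{p(|\structA| + |\structB|)}$, which is clearly polynomially computable and yields the desired implication (i) $\Rightarrow$ (ii); the reverse is trivial.

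The main obstacle is verifying that the vertex-size bound produced by Cramer's rule depends only on the (polynomially bounded) submatrix of tight constraints at the optimum, rather than on the exponential ambient dimension of $L(\structA,\structB,\H_S)$; this independence is precisely what makes the bound uniform over all choices of $\H_S$. The treatment of the $\infty$-entries is routine once one observes that forcing $\omega(g) = 0$ on the problematic $g$ merely restricts to the face of the polytope on which every feasible solution already lies.
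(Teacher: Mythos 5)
Your proof is correct and takes essentially the same approach as the paper: both formulate condition (i) as positivity of the optimum of the natural LP (after restricting to the finite-cost mappings and discarding vacuous constraints), observe that a basic optimal solution has polynomially bounded support because the LP has only polynomially many functional constraints, and then lower-bound the optimum by a polynomially computable quantity obtained from the bit-size bound on such a basic solution. The only cosmetic difference is that you invoke Cramer's rule explicitly to bound the bit size of the vertex coordinates, whereas the paper cites the general fact (via Schrijver) that the optimum of a feasible LP has encoding size polynomially bounded in the encoding size of the LP.
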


\begin{proof}
Let $\phi$ be a nondecreasing polynomial such that every feasible linear program with encoding size $n$ has an optimum value with encoding size at most $\phi(n)$. (Since linear programming is solvable in polynomial time, such a polynomial exists and depends on the encoding scheme chosen; for more details we refer the reader to~\cite{Schrijver86:ILP}.) Let $f_{\min}$ be the polynomially computable function that maps each natural number $n \geq 2$ to the smallest positive rational number with encoding size at most $n$. We define the set
\begin{align*}
 \H_{<\infty}& \defeq\{h\in B^A: f^{\structB}(\tuple{x})<\infty \Rightarrow f^\structA(h^{-1}(\tuple{x}))<\infty, \text{ for all } (f,\tuple{x})\in \tup{\structB}\}
\end{align*}
and observe that the statement (i) is true if and only if the linear program
\begin{equation}
\label{eq:lbprimal}
\begin{aligned}
&\max \left(\sum_{h \in \H_S \cap \H_{<\infty}}\omega(h)\right) &\\
&\sum_{h \in \H_{<\infty}}\omega(h)f^{\structA}(h^{-1}(\tuple{x})) \leq f^{\structB}(\tuple{x}) & \forall (f,\tuple{x}) \in \tup{\structB}_{<\infty}\\
&\sum_{h \in \H_{<\infty}}\omega(h) = 1 &\\
&\omega(h) \geq 0 & \forall h \in \H_{<\infty}
\end{aligned}
\end{equation}
is feasible and its optimum value is positive. This program has polynomially many inequalities, hence if it is feasible then there exists an optimum solution $\omega^*$ such that $\supp{\omega^*}$ has polynomial size. The restriction of the linear program~(\ref{eq:lbprimal}) to the variables in $\supp{\omega^*}$ has encoding size at most $p(|\structA| + |\structB|)$ for some polynomial $p$, and has the same optimum value. Now, we define
\[
\varepsilon_{\Omega}(\structA,\structB) \defeq f_{\min}(\phi(p(|\structA| + |\structB|)))
\]
and we observe that if the linear program~(\ref{eq:lbprimal}) is feasible and its optimum value is positive, then it is at least $\varepsilon_{\Omega}(\structA,\structB)$. This establishes the implication (i) $\Rightarrow$ (ii) for the function $\varepsilon_{\Omega}$. The implication (ii) $\Rightarrow$ (i) is trivial and given $\structA, \structB$ the function $\varepsilon_{\Omega}(\structA,\structB)$ is polynomially computable, so the claim follows.
\end{proof}

The following lemma is the main technical part in establishing (1) $\Rightarrow$
(2) in Theorem~\ref{thm:search}.

\begin{lemma}
\label{lem:compute-retraction}
Let $\C$ be a class of valued structures. If \problem{SVCSP($\C$, $-$)} is in
PTIME then \problem{Reduction Step($\C$)} is in PTIME as well.
\end{lemma}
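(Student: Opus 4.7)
The plan is to cast \problem{Reduction Step($\C$)} as a linear program with exponentially many variables and solve its polynomially-sized dual via the ellipsoid method with oracle inequalities (Lemma~\ref{lem:ellipsoid}), using the assumed polynomial-time algorithm for \problem{SVCSP($\C$, $-$)} as a separation oracle. Given $(\structA, g)$, let $\H_S \defeq \{h : A \mapsto A \mid h(A) \subsetneq g(A)\}$. I will set up the primal LP $P$ that maximises $\sum_{h \in \H_S \cap \H_{<\infty}} \omega(h)$ subject to $\omega$ being an inverse fractional homomorphism from $\structA$ to $\structA$, where $\H_{<\infty}$ is as in the proof of Lemma~\ref{lem:epsilonomega}. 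By the primal constraints, any feasible $\omega$ is automatically supported on $\H_{<\infty}$. By Proposition~\ref{prop:reduc} combined with Lemma~\ref{lem:epsilonomega}, a valid $g^+$ exists if and only if the optimum of $P$ is at least $\varepsilon_{\Omega}(\structA, \structA)$, and any mapping $h \in \H_S$ in the support of such an optimum $\omega^*$ is a valid choice for $g^+$.

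The dual $D$ of $P$ has polynomially many variables $(y_{f,\tuple{x}})_{(f,\tuple{x})\in\tup{\structA}_{<\infty}}$ and $z$, with one constraint
\[ \sum_{(f,\tuple{x}) \in \tup{\structA}_{<\infty}} y_{f,\tuple{x}}\, f^{\structA}(h^{-1}(\tuple{x})) + z \;\geq\; [h \in \H_S] \]
per mapping $h \in \H_{<\infty}$, plus $y \geq 0$. To separate $D$, given a candidate $(y,z)$ I construct a valued $\sigma$-structure $\mathbb{D}_y$ with universe $A$ that encodes $y$ in its function values, with additional $\infty$-valued tuples chosen so that any $h \notin \H_{<\infty}$ has infinite cost while $\cost{h}=\sum_{(f,\tuple{x})} y_{f,\tuple{x}}\, f^{\structA}(h^{-1}(\tuple{x}))$ for every $h \in \H_{<\infty}$. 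I first invoke the SVCSP oracle on $(\structA, \mathbb{D}_y)$: if the returned minimum cost $c_\star$ satisfies $c_\star < -z$, the returned minimiser $h^\star$ witnesses a violated constraint. Otherwise, for each $a \in g(A)$, I invoke the oracle on $(\structA, \mathbb{D}_y^{(a)})$, where $\mathbb{D}_y^{(a)}$ is $\mathbb{D}_y$ restricted to universe $g(A)\setminus\{a\}$; any mapping so returned automatically lies in $\H_S$, and if its cost is below $1-z$ it supplies a violated constraint. If none of the $|g(A)|+1$ oracle calls yields a violation, $(y,z)$ is dual-feasible.

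Applying Lemma~\ref{lem:ellipsoid} with this oracle produces, in polynomial time, a basic optimum dual solution with oracle inequalities; the infeasibility/unboundedness branches are ruled out since the identity mapping gives a trivial primal-feasible point. The oracle inequalities involve a polynomial-size set of mappings $h_1,\ldots,h_k$, returned during the separation calls, together with multipliers $\lambda_1,\ldots,\lambda_k \geq 0$ such that the distribution $\omega$ defined by $\omega(h_i)=\lambda_i$ is a feasible primal solution attaining the primal optimum. I then compute $\varepsilon_{\Omega}(\structA, \structA)$ in polynomial time via Lemma~\ref{lem:epsilonomega}: if the primal optimum is strictly below this threshold (hence equal to $0$ by Lemma~\ref{lem:epsilonomega}), by Proposition~\ref{prop:reduc} no valid $g^+$ exists and I output that assertion; otherwise, I scan $h_1,\ldots,h_k$ for an $h_i \in \H_S$ with $\lambda_i > 0$ and return $g^+ \defeq h_i$, which by construction lies in $\supp{\omega}$ and satisfies $g^+(A) \subsetneq g(A)$.

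The principal technical obstacle is the careful construction of $\mathbb{D}_y$ and $\mathbb{D}_y^{(a)}$ so that the SVCSP oracle effectively minimises only over $\H_{<\infty}$: mappings outside $\H_{<\infty}$ correspond to no constraint of $D$, so if the oracle returned such a mapping the separation step could erroneously introduce a spurious cut. This will be resolved by placing $\infty$-valued tuples in $\mathbb{D}_y$ at positions that any $h \notin \H_{<\infty}$ must hit, exploiting the fact that the VCSP framework natively supports $\infty$-valued costs. A secondary subtlety is that the separation oracle must handle both values of $[h\in\H_S]$ uniformly, which is settled by the observation that whenever $\min_h \cost{h} < -z$, the minimiser violates its own dual constraint regardless of whether it lies in $\H_S$, so returning it as a cut is always safe.
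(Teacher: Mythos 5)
Your high-level framework --- an exponentially-large primal LP dualized to polynomially many variables, the ellipsoid method with a separation oracle built from the assumed polynomial-time algorithm for \problem{SVCSP($\C$, $-$)}, and a primal inverse fractional homomorphism read off from the basic optimum dual solution with oracle inequalities --- is essentially the paper's strategy; the dual-constraint case analysis and the universe-restriction trick for the $\H_S$-constraints both appear in the paper's proof. The genuine gap is in your resolution of what you correctly flag as the principal obstacle: constructing $\mathbb{D}_y$ so that $\cost{h}=\sum_{(f,\tuple{x})\in\tup{\structA}_{<\infty}} y_{f,\tuple{x}}\,f^{\structA}(h^{-1}(\tuple{x}))$ \emph{exactly} for every $h\in\H_{<\infty}$ while $\cost{h}=\infty$ for every $h\notin\H_{<\infty}$. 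No such $\mathbb{D}_y$ exists. Regrouping gives $\cost{h}=\sum_{(f,\tuple{z})}f^{\mathbb{D}_y}(\tuple{z})\,f^{\structA}(h^{-1}(\tuple{z}))$; taking $h=\text{id}\in\H_{<\infty}$, finiteness of the target forces $f^{\mathbb{D}_y}(\tuple{z})=0$ wherever $f^{\structA}(\tuple{z})=\infty$ and $f^{\mathbb{D}_y}(\tuple{z})<\infty$ wherever $0<f^{\structA}(\tuple{z})<\infty$, so there is no useful position at which to place an $\infty$-valued tuple. One is then forced into $f^{\mathbb{D}_y}=y$ on $\tup{\structA}_{<\infty}$ and $0$ elsewhere, and since $y$ is an arbitrary candidate point supplied to the separation oracle it can have zero entries, in which case a mapping $h\notin\H_{<\infty}$ whose infinite-cost tuples of $\structA$ are all sent to positions with $y=0$ has finite cost; the SVCSP oracle can return such an $h$, which indexes no constraint of your dual, so the oracle is unsound.

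The paper resolves this by perturbing to $f^{\structB}(\tuple{x}) = z(f,\tuple{x}) + \varepsilon$ on $\tup{\structA}_{<\infty}$ with $\varepsilon>0$, which does force $\cost{h}=\infty$ for every $h\notin\H_{<\infty}$. But this creates a second issue that your proposal also leaves open: the perturbed objective is $\sum z\,f^{\structA}(h^{-1}) + \varepsilon\sum_{(f,\tuple{x})\in\tup{\structA}_{<\infty}}f^{\structA}(h^{-1}(\tuple{x}))$, and the error term depends on $h$, so the mapping returned by the oracle on the perturbed instance need not minimise the unperturbed objective over $\H_{<\infty}$ or over $\H_S\cap\H_{<\infty}$. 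The paper's Claim~\ref{claim:argmin}, which relies on choosing $\varepsilon$ below the explicitly computable gap bound $\varepsilon_\Delta$ of Lemma~\ref{lem:epsilondelta}, is what guarantees the two argmin sets coincide; without an argument of this kind the separation oracle again produces unsound cuts. Finally, the definition of $\varepsilon$ requires $\max_{(f,\tuple{x})\in\tup{\structA}_{<\infty}}f^{\structA}(\tuple{x})>0$, so the degenerate $\{0,\infty\}$-valued case (and $|g(A)|=1$) must be dispatched separately, as the paper does via the relational argument of Remark~\ref{rem:crispsearch}; your proposal should note this as well.
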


\begin{proof}
Let $(\structA,g)$ be an input to \problem{Reduction Step($\C$)}. We assume
that $|g(A)| > 1$; otherwise the problem is trivial. If $\structA$ is $\{0,\infty\}$-valued then we can solve the instance by following the argument described in Remark~\ref{rem:crispsearch}, so we also assume that $\max_{(f,\tuple{x}) \in \tup{\structA}_{<\infty}}(f^{\structA}(\tuple{x})) > 0$. We define the set
\begin{align*}
 \H_{<\infty}& \defeq\{h\in A^A: f^{\structA}(\tuple{x})<\infty \Rightarrow f^\structA(h^{-1}(\tuple{x}))<\infty, \text{ for all } (f,\tuple{x})\in \tup{\structA}\}
\end{align*}
and we recall that for every IFH $\omega$ from $\structA$
to $\structA$, every mapping $g' \in \supp{\omega}$ belongs to $\H_{<\infty}$. We
denote by $\H^*$ the set of all mappings $h$ in $\H_{<\infty}$ such that $h(A)
\subsetneq g(A)$. Let us consider the following linear program:
\begin{equation}
\label{eq:newprimal}
\begin{aligned}
&\min 0&\\
&\sum_{h \in \H_{<\infty}}\omega(h)f^{\structA}(h^{-1}(\tuple{x})) \leq f^{\structA}(\tuple{x}) & \forall (f,\tuple{x}) \in \tup{\structA}_{<\infty}\\
&\sum_{h \in \H_{<\infty}}\omega(h) = 1 &\\
&\sum_{h \in \H^*}\omega(h) \geq \varepsilon_{\Omega}(\structA,\structA)\\
&\omega(h) \geq 0 & \forall h \in \H_{<\infty}
\end{aligned}
\end{equation}
By Lemma~\ref{lem:epsilonomega}, this program is not feasible if and only if $g(\structA)$ is a core; otherwise a solution gives a mapping $g \in \H^*$ with the desired property. This program has exponentially many variables, and hence we will solve its dual instead:
\begin{equation}
\label{eq:newdual}
\begin{aligned}
&\max \delta_1 + \varepsilon_{\Omega}(\structA,\structA) \delta_2 &\\
&\sum_{(f,\tuple{x}) \in \tup{\structA}_{<\infty}}z(f,\tuple{x})\left(f^{\structA}(\tuple{x}) - f^{\structA}(h^{-1}(\tuple{x}))\right) + \delta_1 \leq 0 & \forall h \in \H_{<\infty} \backslash \H^*\\
&\sum_{(f,\tuple{x}) \in \tup{\structA}_{<\infty}}z(f,\tuple{x})\left(f^{\structA}(\tuple{x}) - f^{\structA}(h^{-1}(\tuple{x}))\right) + \delta_1 + \delta_2 \leq 0 & \forall h \in \H^*\\
&z(f,\tuple{x}) \geq 0 & \forall (f,\tuple{x}) \in \tup{\structA}_{<\infty}\\
&\delta_2 \geq 0
\end{aligned}
\end{equation}

We now describe a strong separation oracle for the associated polyhedron $P$.
Given a vector $(z,\delta_1,\delta_2) \in \mathbb{Q}^{|\tup{\structA}_{<\infty}|} \times
\mathbb{Q}^2$, we first check if there exists $(f^*,\tuple{x}^*) \in
\tup{\structA}_{<\infty}$ such that $z(f^*,\tuple{x}^*) < 0$; if it is the case
then $a(f^*,\tuple{x}^*) = -1$ and $0$ otherwise defines a separating
hyperplane. Similarly, if $\delta_2 < 0$ then we can set $a(\delta_2) = -1$ and $0$ otherwise to obtain a separation.

Now, let $\structB_0$ be a valued $\sigma$-structure with
universe $B_0 = A$ such that for all $(f,\tuple{x}) \in \tup{\structB_0}$,
$f^{\structB_0}(\tuple{x}) = z(f,\tuple{x})$ if $(f,\tuple{x}) \in
\tup{\structA}_{<\infty}$ and $f^{\structB_0}(\tuple{x}) = 0$ otherwise. We will be interested in computing a mapping $h \in B_0^A \cap \H_{<\infty}$ with minimum cost. However, invoking the assumed algorithm for \problem{SVCSP($\C$, $-$)} on the instance $(\structA,\structB_0)$ is not sufficient for this task because the returned mapping might not belong to $\H_{<\infty}$. In order to solve this problem, we define
\[
\varepsilon \defeq 
\frac{\varepsilon_{\Delta}(\structA,\structB_0)}{|\tup{\structA}| \cdot \max_{(f,\tuple{x}) \in \tup{\structA}_{<\infty}}(f^{\structA}(\tuple{x}))}
\]
where the function $\varepsilon_\Delta$ is as in Lemma~\ref{lem:epsilondelta}, and we let $\structB$ be the valued $\sigma$-structure with
universe $B = A$ such that for all $(f,\tuple{x}) \in \tup{\structB}$,
$f^{\structB}(\tuple{x}) = z(f,\tuple{x}) + \varepsilon$ if $(f,\tuple{x}) \in
\tup{\structA}_{<\infty}$ and $f^{\structB}(\tuple{x}) = 0$ otherwise. Because $\varepsilon > 0$, the set of finite-cost mappings $A \mapsto B$ is precisely $\H_{<\infty}$ and cannot be empty because of the identity mapping. Since
$(\structA,\structB)$ is an instance of \problem{SVCSP($\C$, $-$)}, we can find
a mapping $h^*$ from $A$ to $B$ of minimum cost in polynomial time. Then, we have
\begin{align*}
h^* &\in \text{argmin}_{h \in B^A} \left( \sum_{(f,\tuple{x}) \in \tup{\structB}}f^{\structB}(\tuple{x})f^{\structA}(h^{-1}(\tuple{x})) \right)\\
&= \text{argmin}_{h \in \H_{<\infty}} \left( \sum_{(f,\tuple{x}) \in \tup{\structB}}f^{\structB}(\tuple{x})f^{\structA}(h^{-1}(\tuple{x})) \right)\\
&= \text{argmin}_{h \in \H_{<\infty}} \left( \sum_{(f,\tuple{x}) \in \tup{\structA}_{<\infty}}z(f,\tuple{x})f^{\structA}(h^{-1}(\tuple{x})) + \varepsilon \left( \sum_{(f,\tuple{x}) \in \tup{\structA}_{<\infty}}f^{\structA}(h^{-1}(\tuple{x}))\right) \right).
\end{align*}

Recall that we are looking for a mapping in $B_0^A \cap \H_{<\infty}$ with minimum cost with respect to the instance $(\structA, \structB_0)$. The mapping $h^*$ we have just computed realises the minimum of a slightly different objective function, but the next claim shows that this is not an issue.

\begin{claim}
\label{claim:argmin}
Let $\H_S$ be a non-empty subset of $\H_{<\infty}$ and $h' \in \H_S$. If 
\[
h' \in \text{argmin}_{h \in \H_S}\left( \sum_{(f,\tuple{x}) \in \tup{\structA}_{<\infty}}z(f,\tuple{x})f^{\structA}(h^{-1}(\tuple{x})) + \varepsilon \left( \sum_{(f,\tuple{x}) \in \tup{\structA}_{<\infty}}f^{\structA}(h^{-1}(\tuple{x}))\right) \right)
\]
then
\[
h' \in \text{argmin}_{h \in \H_S} \left( \sum_{(f,\tuple{x}) \in \tup{\structA}_{<\infty}}z(f,\tuple{x})f^{\structA}(h^{-1}(\tuple{x})) \right).
\]
\end{claim}

\begin{proof}
We prove the statement by contraposition. Let $h' \in \H_S$ and suppose that there exists $h_1 \in \H_S$ such that \[
\sum_{(f,\tuple{x}) \in \tup{\structA}_{<\infty}}z(f,\tuple{x})f^{\structA}(h_1^{-1}(\tuple{x})) < \sum_{(f,\tuple{x}) \in \tup{\structA}_{<\infty}}z(f,\tuple{x})f^{\structA}(h'^{-1}(\tuple{x})).
\] 
These two quantities are the respective costs of $h_1$ and $h'$ with respect to the instance $(\structA, \structB_0)$, and both are finite. By Lemma~\ref{lem:epsilondelta} we have
\[
\sum_{(f,\tuple{x}) \in \tup{\structA}_{<\infty}}z(f,\tuple{x})f^{\structA}(h_1^{-1}(\tuple{x})) + \varepsilon_\Delta(\structA,\structB_0) < \sum_{(f,\tuple{x}) \in \tup{\structA}_{<\infty}}z(f,\tuple{x})f^{\structA}(h'^{-1}(\tuple{x}))
\] 
and since 
\[
\frac{\sum_{(f,\tuple{x}) \in \tup{\structA}_{<\infty}}f^{\structA}(h_1^{-1}(\tuple{x}))} {|\tup{\structA}| \cdot \max_{(f,\tuple{x}) \in \tup{\structA}_{<\infty}}(f^{\structA}(\tuple{x}))} \leq 1
\]
it follows from the definition of $\varepsilon$ that 
\[
\sum_{(f,\tuple{x}) \in \tup{\structA}_{<\infty}}z(f,\tuple{x})f^{\structA}(h_1^{-1}(\tuple{x})) + \varepsilon \left( \sum_{(f,\tuple{x}) \in \tup{\structA}_{<\infty}}f^{\structA}(h_1^{-1}(\tuple{x}))\right)
\]
is strictly smaller than $\sum_{(f,\tuple{x}) \in \tup{\structA}_{<\infty}}z(f,\tuple{x})f^{\structA}(h'^{-1}(\tuple{x}))$, and a fortiori strictly smaller than $\sum_{(f,\tuple{x}) \in \tup{\structA}_{<\infty}}z(f,\tuple{x})f^{\structA}(h'^{-1}(\tuple{x})) + \varepsilon \left( \sum_{(f,\tuple{x}) \in \tup{\structA}_{<\infty}}f^{\structA}(h'^{-1}(\tuple{x}))\right)$, which concludes the proof.
\renewcommand{\qedsymbol}{$\blacksquare$}\end{proof}

As a consequence of Claim~\ref{claim:argmin} (with $\H_S = \H_{<\infty}$) we have that $h^*$ realises the maximum of the function  \[
\zeta(h) = \sum_{(f,\tuple{x}) \in \tup{\structA}_{<\infty}}z(f,\tuple{x})\left(f^{\structA}(\tuple{x}) - f^{\structA}(h^{-1}(\tuple{x}))\right) + \delta_1
\]
over $\H_{<\infty}$. If this maximum is positive, then the vector $a$ such that $a(f,\tuple{x}) = f^{\structA}(\tuple{x}) - f^{\structA}({h^*}^{-1}(\tuple{x}))$, $a(\delta_1) = 1$ and $a(\delta_2) = 0$ defines a separating hyperplane (this is true even if $h^* \in \H^*$ because $\delta_2$ is nonnegative). In this case we output $a$ together with the mapping $h^*$. (A separation oracle is supposed to only output the vector $a$, but later on we will need to know to which mapping it corresponds.) Otherwise, only two possibilities remain: either $(z,\delta_1,\delta_2) \in P$, or the maximum of $\sum_{(f,\tuple{x}) \in \tup{\structA}_{<\infty}}z(f,\tuple{x})\left(f^{\structA}(\tuple{x}) - f^{\structA}(h^{-1}(\tuple{x}))\right) + \delta_1 + \delta_2$ over $\H^*$ is positive. 

To verify the latter condition, for every $a \in g(A)$ we construct a valued $\sigma$-structure $\structB_a$ with
universe $B_a = g(A) \backslash \{a\}$ such that for all $(f,\tuple{x}) \in \tup{\structB_a}$,
$f^{\structB_a}(\tuple{x}) = z(f,\tuple{x}) + \varepsilon$ if $(f,\tuple{x}) \in
\tup{\structA}_{<\infty}$ and $f^{\structB_a}(\tuple{x}) = 0$ otherwise. Then, for each $a \in g(A)$ we use the algorithm for \problem{SVCSP($\C$, $-$)} to compute in polynomial time a minimum-cost mapping $h^*_a$ for the instance $(\structA,\structB_a)$. If the cost of $h^*_a$ is infinite for each $a \in g(A)$ then $\H^*$ is empty; it follows that $g(\structA)$ is a core and we can stop. Otherwise, for each $a \in g(A)$ such that the cost of $h^*_a$ is finite we have
\begin{align*}
h^*_a &\in \underset{h \in B_a^A}{\text{argmin}} \left( \sum_{(f,\tuple{x}) \in \tup{\structB_a}}f^{\structB_a}(\tuple{x})f^{\structA}(h^{-1}(\tuple{x})) \right)\\
&= \underset{h \in \H_{<\infty} : h(A) \subseteq g(A) \backslash \{a\}}{\text{argmin}} \left( \sum_{(f,\tuple{x}) \in \tup{\structA}_{<\infty}}z(f,\tuple{x})f^{\structA}(h^{-1}(\tuple{x})) + \varepsilon \left( \sum_{(f,\tuple{x}) \in \tup{\structA}_{<\infty}}f^{\structA}(h^{-1}(\tuple{x}))\right) \right)
\end{align*}
and by Claim~\ref{claim:argmin},
\[
h^*_a \in \underset{h \in \H_{<\infty} : h(A) \subseteq g(A) \backslash \{a\}}{\text{argmin}} \left( \sum_{(f,\tuple{x}) \in \tup{\structA}_{<\infty}}z(f,\tuple{x})f^{\structA}(h^{-1}(\tuple{x})) \right).
\]
Therefore, if we pick $h^{**} \in \text{argmin}_{h^*_a : a \in g(A)} \left(\sum_{(f,\tuple{x}) \in \tup{\structA}_{<\infty}}z(f,\tuple{x})f^{\structA}(h^{-1}(\tuple{x})) \right)$ then we have
\begin{align*}
h^{**} &\in \underset{h \in \H_{<\infty}: h(A) \subset g(A)}{\text{argmin}} \left( \sum_{(f,\tuple{x}) \in \tup{\structA}_{<\infty}}z(f,\tuple{x})f^{\structA}(h^{-1}(\tuple{x})) \right)
\end{align*}
and either $\sum_{(f,\tuple{x}) \in \tup{\structA}_{<\infty}}z(f,\tuple{x})\left(f^{\structA}(\tuple{x}) - f^{\structA}(h^{**-1}(\tuple{x}))\right) + \delta_1 + \delta_2 > 0$ and the vector $a$ such that $a(f,\tuple{x}) = f^{\structA}(\tuple{x}) - f^{\structA}({h^{**}}^{-1}(\tuple{x}))$, $a(\delta_1) = 1$ and $a(\delta_2) = 1$ defines a separating hyperplane (which we output together with the mapping $h^{**}$) or $(z,\delta_1,\delta_2) \in P$. This concludes the description of our strong separation oracle.

We now apply Lemma~\ref{lem:ellipsoid} to the linear program~\eqref{eq:newdual}. Its dual~\eqref{eq:newprimal} is bounded, and if the ellipsoid algorithm returns that it is not feasible then $g(\structA)$ is a core. Otherwise, the algorithm will return a set of polynomially many valid inequalities of the form
\begin{align*}
&\sum_{(f,\tuple{x}) \in \tup{\structA}_{<\infty}}z(f,\tuple{x})\left(f^{\structA}(\tuple{x}) - f^{\structA}(h^{-1}(\tuple{x}))\right) + \delta_1 \leq \alpha'_h & \forall h \in \H' \subseteq (\H_{<\infty} \backslash \H^*)\\
&\sum_{(f,\tuple{x}) \in \tup{\structA}_{<\infty}}z(f,\tuple{x})\left(f^{\structA}(\tuple{x}) - f^{\structA}(h^{-1}(\tuple{x}))\right) + \delta_1 + \delta_2 \leq \alpha''_h & \forall h \in \H'' \subseteq \H^*\\
&- z(f,\tuple{x}) \leq \alpha_{f,\tuple{x}} & \forall (f,\tuple{x}) \in T \subseteq \tup{\structA}_{<\infty}\\
&- \delta_2 \leq \alpha_2
\end{align*}
where each mapping $h$ appearing in the inequalities is explicitly known (because we modified the output of the strong separation oracle), and dual variables that satisfy
\begin{equation*}
\begin{aligned}
&\sum_{h \in \H'}\lambda'_h \left(f^{\structA}(\tuple{x}) - f^{\structA}(h^{-1}(\tuple{x}))\right) + \sum_{h \in \H''}\lambda''_h \left(f^{\structA}(\tuple{x}) - f^{\structA}(h^{-1}(\tuple{x}))\right) - \lambda_{f,\tuple{x}} = 0& \forall (f,\tuple{x}) \in \tup{\structA}_{<\infty}\\
&\sum_{h \in \H'}\lambda'_h + \sum_{h \in \H''}\lambda''_h = 1&\\
&\sum_{h \in \H''}\lambda''_h - \lambda_2 = \varepsilon_{\Omega}(\structA,\structA)&
\end{aligned}
\end{equation*}
where we set $\lambda_{f,\tuple{x}} \defeq 0$ if $(f,\tuple{x}) \notin T$. Now, we define
\[
\omega(h) \defeq
\begin{cases}
\lambda'_h &\quad \text{if $h \in \H'$}\\
\lambda''_h &\quad \text{if $h \in \H''$}\\
0 &\quad \text{otherwise}
\end{cases}
\]
and we deduce from the above system that
\begin{align}
&\sum_{h \in \H_{<\infty}}\omega(h) \left(f^{\structA}(\tuple{x}) - f^{\structA}(h^{-1}(\tuple{x}))\right) - \lambda_{f,\tuple{x}} = 0& \forall (f,\tuple{x}) \in \tup{\structA}_{<\infty}\label{eq:wvar1}\\
&\sum_{h \in \H_{<\infty}}\omega(h) = 1& \label{eq:wvar2}\\
&\sum_{h \in \H^*}\omega(h) \geq \varepsilon_{\Omega}(\structA,\structA) > 0& \label{eq:wvar3}
\end{align}
Then, from~\eqref{eq:wvar1},~\eqref{eq:wvar2} and the nonnegativity of the dual variables we can deduce that for all $(f,\tuple{x}) \in \tup{\structA}_{<\infty}$,
\[\sum_{h \in \H_{<\infty}}\omega(h)f^{\structA}(h^{-1}(\tuple{x})) \leq f^{\structA}(\tuple{x})\]
and hence $\omega$ (complemented with $\omega(h)=0$ for all $h \notin \H_{<\infty}$) is an
IFH with a support of polynomial size. Finally, we
search $\supp{\omega}$ for a mapping $g^+$ such that $g^+(A) \subsetneq g(A)$, which
is guaranteed to exist by the definition of $\omega$ and~\eqref{eq:wvar3}.
\end{proof}

The next lemma is the last missing ingredient in the proof of
Theorem~\ref{thm:search}.

\begin{lemma}
\label{lem:twms-search}
Let $k \geq 1$ and $\C$ be a class of valued $\sigma$-structures such that for every $\structA \in \C$, $\twms{\structA} \leq k-1$ and the overlap of $\structA$ is at most $k$. Then, \problem{SVCSP($\C$, $-$)} is in PTIME. 
\end{lemma}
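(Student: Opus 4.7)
\textbf{Proof plan for Lemma~\ref{lem:twms-search}.}

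The plan is to solve the search version by combining the tightness of $\mathrm{SA}_k$ (Theorem~\ref{thm:suff}) with a standard LP-based self-reduction, taking care to preserve both the bound on $\twms$ and the bound on overlap throughout.

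First, I would solve $\mathrm{SA}_k(\structA,\structB)$ in polynomial time (for fixed $k$ the LP has polynomial size) and let $\mathrm{OPT}:=\optfrac{k}{\structA,\structB}$. By the hypotheses on $\structA$ together with Theorem~\ref{thm:suff}, $\mathrm{OPT}=\opt{\structA,\structB}$. If $\mathrm{OPT}=\infty$ then no finite-cost mapping exists and we report failure.

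Next, I would build an enriched instance in which committing to assignments of the form ``$h(a)=b$'' can be expressed as a hard constraint without disturbing the structural hypotheses. Let $a_1,\dots,a_n$ enumerate $A$, extend $\sigma$ by fresh unary symbols $f_{a_1},\dots,f_{a_n}$ to a signature $\sigma^+$, and define $\structA^+$ by keeping the original interpretations and setting $f_{a_i}^{\structA^+}(a_i)=\infty$ and $f_{a_i}^{\structA^+}(a)=0$ for $a\neq a_i$. Define $\structB^+$ by keeping $\structB$ and setting each $f_{a_i}^{\structB^+}\equiv 0$ on $B$. Because each new positive tuple $(f_{a_i},(a_i))$ has $\toset{(a_i)}=\{a_i\}$, it adds no new edges to the Gaifman graph, is a size-$1$ scope (and hence a trivially admissible bag of any tree decomposition), and has $|\{a_i\}\cap\toset{\tuple{y}}|\le 1\le k$ with every other tuple. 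Hence $\twms{\structA^+}\le k-1$ and the overlap of $\structA^+$ is at most $k$, so by Theorem~\ref{thm:suff}, $\mathrm{SA}_k$ remains tight for $\structA^+$ against any right-hand side; moreover $\opt{\structA^+,\structB^+}=\opt{\structA,\structB}$.

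Then, processing $a_1,\dots,a_n$ in order, for each $a_i$ I would try every $b\in B$ in turn: temporarily modify $f_{a_i}^{\structB^+}$ to assign $0$ to $b$ and $\infty$ to every $b'\neq b$, which enforces $h(a_i)=b$ in every finite-cost mapping; then solve $\mathrm{SA}_k(\structA^+,\structB^+)$. If the returned value equals $\mathrm{OPT}$ I commit to $h(a_i):=b$ and keep the modification; otherwise I revert and try the next $b$. By induction, any optimal mapping compatible with the previously committed assignments must send $a_i$ to some $b^*\in B$, and at $b=b^*$ the LP optimum is still $\mathrm{OPT}$; therefore the inner loop always succeeds.

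The main conceptual obstacle is arranging the self-reduction so that the LP remains tight after each commitment. This is resolved by using singleton unary hard-constraint functions: being of arity $1$ they contribute singleton scopes that do not affect $\twms$ (they fit inside any bag, or stand alone as scopes) and have pairwise overlap at most $1\le k$ with every other tuple, so the hypotheses of Theorem~\ref{thm:suff} are preserved throughout. The total running time is $O(|A|\cdot|B|)$ calls to a polynomial-size LP solver, hence polynomial overall.
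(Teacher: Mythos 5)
Your proof is correct and essentially mirrors the paper's argument. The only difference is cosmetic: the paper introduces one fresh unary hard-constraint symbol per iteration and regards $(\structA_a,\structB_b)$ as a new instance each time, whereas you introduce all of them upfront (with the $\structB$-side interpretations initially trivial) and then tighten one at a time on the right-hand side. Both realize the same self-reduction — pin down elements one by one via singleton-scope hard constraints, which neither add Gaifman edges nor push any overlap above $k$, so Theorem~\ref{thm:suff} continues to apply and the $\mathrm{SA}_k$ value certifies each commitment. One small correction: if $\mathrm{OPT}=\infty$ you should output an arbitrary mapping rather than report failure, since every mapping is then optimal (this is the convention the paper uses).
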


\begin{proof}
Given a pair $(\structA,\structB)$ of valued structures over some signature $\sigma$ and two elements $a \in A$, $b \in B$, we say that $a$ is \emph{fixed} to $b$ if there exists a unary symbol $f_{ab} \in \sigma$ such that $f_{ab}^{\structA}(\tuple{x}) = \infty$ if $\tuple{x} = (a)$ and $0$ otherwise, and $f_{ab}^{\structB}(\tuple{x}) = 0$ if $\tuple{x} = (b)$ and $\infty$ otherwise. Clearly, if $a$ is fixed to $b$ then any finite-cost mapping from $\structA$ to $\structB$ must map $a$ to $b$.

Now, let $(\structA,\structB)$ be an instance of \problem{SVCSP($\C$, $-$)}. Suppose that $\opt{\structA,\structB}$ is finite and that the overlap of $\structA$ is positive (otherwise the problem is trivial). If every element $a \in A$ is fixed to some element $g(a) \in B$, then the mapping $g$ is the only finite-cost mapping from $\structA$ to $\structB$ and we can output $g$. Otherwise, we pick an element $a \in A$ that is not fixed to any element of $B$. For each $b \in B$, we create one new instance $(\structA_{ab},\structB_{ab})$ of SVCSP by adding to $(\structA,\structB)$ a new symbol $f_{ab}$ that fixes $a$ to $b$. The overlap and treewidth modulo scopes of $\structA_{ab}$ are no greater than those of $\structA$, so we can use Theorem~\ref{thm:suff} to compute $\opt{\structA_{ab},\structB_{ab}}$ in polynomial time. 

Observe that there necessarily exists a value $b^* \in B$ such that $\opt{\structA,\structB} = \opt{\structA_{ab^*},\structB_{ab^*}}$ (just take $b^* = h(a)$ for any minimum-cost mapping $h$ from $\structA$ to $\structB$). The instance $(\structA_{ab^*},\structB_{ab^*})$ has the same optimum cost as $(\structA,\structB)$, but one extra element of $A$ is fixed. We can then repeat the operation until all elements of the left-hand side structure are fixed, in which case the unique finite-cost mapping (which can be found in polynomial time by inspecting the signature) is a minimum-cost mapping from $\structA$ to $\structB$.
\end{proof}

\begin{proof}[Proof of Theorem~\ref{thm:search}]
If $(1)$ holds, then, by Theorem~\ref{theo:main}, $\C$ has bounded treewidth
modulo equivalence. Furthermore, by Lemma~\ref{lem:compute-retraction},
\problem{Reduction Step($\C$)} is in PTIME and, by Lemma~\ref{lem:ret-step},
\problem{Core Computation($\C$)} is in PTIME as well.
For the converse implication, assume that $(2)$ holds and let $(\structA,\structB)$ be an instance of \problem{SVCSP($\C$, $-$)}. Let $k$ denote the maximum treewidth of the core of a structure in $\C$. 
We compute in polynomial time the core $g(\structA)$ of $\structA$ and the associated mapping $g:A \mapsto A$. Because $\tw{g(\structA)} \leq k$ implies both $\twms{g(\structA)} \leq k$ and an upper bound of $k+1$ for the overlap of $g(\structA)$, we can use  Lemma~\ref{lem:twms-search} to compute in polynomial time a minimum-cost solution $h: g(A)\mapsto B$ to $(g(\structA),\structB)$. 
Since $\costb{g(\structA)\mapsto\structB}{h} = \costb{\structA\mapsto\structB}{h\circ g}$ and $\opt{\structA,\structB}= \opt{g(\structA),\structB}$, we conclude that $h \circ g$ is a minimum-cost mapping from $A$ to $B$ and the theorem follows.
\end{proof}

\section{Related problems}
\label{sec:meta}

In this section we provide a quick overview of the complexity of deciding the various natural questions on valued structures that arise from our characterisations. 
We also highlight some interesting implications of our results in the context of database theory.

We establish tight complexity bounds for the following problems. We note that
while hardness mostly follows directly from existing results on relational structures, the technical machinery of Section~\ref{sec:equiv} is required in order to derive precise upper bounds.

\begin{itemize}
\item \problem{Improvement}: given two valued structures $\structA,\structB$, is it true that $\structA \less \structB$?
\item \problem{Equivalence}: given two valued structures $\structA,\structB$, is it true that $\structA \equiv \structB$?
\item \problem{Core Recognition}: given a valued structure $\structA$, is $\structA$ a core?
\item \problem{Core Treewidth}: given a valued structure $\structA$ and $k\geq 1$, is the treewidth of the core of $\structA$ at most $k$?
\item \problem{Sherali-Adams Tightness}: given a valued structure $\structA$ and $k\geq 1$, is the Sherali-Adams relaxation of level $k$ tight for $\structA$?
\end{itemize}

\begin{proposition}
\label{prop:imp-equiv}
\problem{Improvement} and \problem{Equivalence} are NP-complete.
\end{proposition}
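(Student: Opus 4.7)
I would prove both NP-hardness and membership in NP separately, starting with the easier direction (hardness) and then carefully bounding certificate sizes.

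For NP-hardness, I would reduce from the homomorphism problem for relational structures, which is well-known to be NP-complete. Given relational $\tau$-structures $\rels{A}, \rels{B}$, I would form the associated valued $\sigma_\tau$-structures $\structA_{0,\infty}$ and $\structB_{0,\infty}$ as in the reduction at the end of Section~\ref{sec:compl}, and claim that $\rels{A}\to\rels{B}$ if and only if $\structA_{0,\infty}\less\structB_{0,\infty}$. The forward direction is easy: if $h:\rels{A}\to\rels{B}$ is a homomorphism, setting $\omega(h)=1$ gives an inverse fractional homomorphism, since for each $(f_R,\tuple{y})$ with $f_R^{\structB_{0,\infty}}(\tuple{y})=0$ (i.e.\ $\tuple{y}\notin R^{\rels{B}}$), no preimage of $\tuple{y}$ under $h$ can lie in $R^{\rels{A}}$, so the sum vanishes. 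For the converse, if $\structA_{0,\infty}\less\structB_{0,\infty}$, then by Proposition~\ref{prop:charfrac} there is an inverse fractional homomorphism $\omega$, and any $g\in\supp{\omega}$ is a homomorphism from $\pos{\structA_{0,\infty}}$ to $\pos{\structB_{0,\infty}}$ (by the remark at the end of Section~\ref{sec:inv-frac}), which corresponds to a homomorphism $\rels{A}\to\rels{B}$. Hardness of \problem{Equivalence} follows from the same reduction, using the known NP-hardness of homomorphic equivalence together with the fact (noted right after Definition~\ref{def:equiv}) that valued equivalence of $\structA_{0,\infty}$ and $\structB_{0,\infty}$ coincides with homomorphic equivalence of $\rels{A}$ and $\rels{B}$.

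For NP-membership, I would use Proposition~\ref{prop:charfrac} to reformulate $\structA\less\structB$ as the feasibility of the linear program over variables $\{\omega(g)\mid g\in B^A\}$ defined by the inverse fractional homomorphism constraints together with $\sum_g\omega(g)=1$ and $\omega(g)\geq 0$. The total number of linear (non-negativity excluded) constraints is $|\tup{\structB}|+1$, which is polynomial in $|\structB|$. The main obstacle is that the number of variables $|B|^{|A|}$ is exponential, so the certificate must be chosen carefully. The idea is to invoke the standard fact that any feasible LP admits a basic feasible solution whose support has size at most the number of non-trivial constraints, and whose entries have encoding size polynomial in the encoding of the LP coefficients. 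Thus if $\omega$ exists at all, there is one with support of size at most $|\tup{\structB}|+1$ and polynomial-size rational weights. A certificate consists of the list of mappings in this support together with their weights, which has size polynomial in $|\structA|+|\structB|$, and feasibility can be verified in polynomial time by directly checking each constraint (noting that $f^{\structA}(g^{-1}(\tuple{x}))$ requires summing over at most $|A|^{\ar{f}}$ preimages).

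Finally, \problem{Equivalence} is in NP because a certificate is just the conjunction of two \problem{Improvement} certificates, one for each direction; equivalently, one guesses two inverse fractional homomorphisms of polynomial support (one from $\structA$ to $\structB$, one from $\structB$ to $\structA$) and verifies them independently. The only real technical point to verify carefully is the polynomial bound on encoding sizes of entries of basic feasible solutions when the LP coefficients themselves can be $\infty$; here one handles $\infty$-valued constraints by simply declaring $\omega(g)=0$ for every $g$ that would force an infinite contribution on the left-hand side of a finite right-hand side, reducing to an LP with finite rational coefficients to which the standard bounds (see, e.g.,~\cite{Schrijver86:ILP}) apply.
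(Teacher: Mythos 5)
Your proof is correct and takes essentially the same approach as the paper: NP-membership via Proposition~\ref{prop:charfrac} and the observation that a feasible LP with polynomially many non-trivial constraints admits a polynomial-size support, and NP-hardness by observing that on $\{0,\infty\}$-valued structures \problem{Improvement} and \problem{Equivalence} specialise to \problem{Homomorphism} and \problem{Homomorphic Equivalence}. One tiny slip: the remark after Definition~\ref{def:equiv} only establishes that valued equivalence \emph{implies} homomorphic equivalence of the positive parts, not the coincidence you cite; but this does not matter, since the two-way equivalence you need ($\rels{A}\leftrightarrow\rels{B}$ iff $\structA_{0,\infty}\equi\structB_{0,\infty}$) already follows by applying your \problem{Improvement} argument in both directions.
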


\begin{proof}
We first prove that \problem{Improvement} is in NP, which implies that
  \problem{Equivalence} is in NP as well. By Proposition~\ref{prop:charfrac}, an
  instance $(\structA,\structB)$ of \problem{Improvement} is a yes-instance if
  and only if there exists an IFH from $\structA$ to $\structB$, or equivalently if $\G_{<\infty} \defeq \{g\in B^A\mid \text{$f^\structA(g^{-1}(\tuple{x}))<\infty$ for all $(f,\tuple{x})\in \tup{\structB}_{<\infty}$}\}$ (where $\tup{\structB}_{<\infty}\defeq\{(f,\tuple{x})\in \tup{\structB}\mid f^{\structB}(\tuple{x})<\infty\}$) is not empty and the system
\begin{align*}
&\sum_{g \in \G_{<\infty}}\omega(g) f^{\structA}(g^{-1}(\tuple{x})) \leq f^{\structB}(\tuple{x}) &\forall (f,\tuple{x})\in \tup{\structB}_{<\infty}\\
&\sum_{g \in \G_{<\infty}}\omega(g) \leq 1 &\\
&-\sum_{g \in \G_{<\infty}}\omega(g) \leq -1 &\\
&-\omega(g) \leq 0 &\forall g \in \G_{<\infty}\\
\end{align*}
has a rational solution $\omega$. Since the number of inequalities is polynomial in $|\structA|$ and $|\structB|$, this system has a solution if and only if it has one with a polynomial number of non-zero variables. Such a subset of non-zero variables is an NP certificate: the corresponding restriction of the system has polynomial size and its satisfiability can be checked in polynomial time.

For hardness, we note that in the special case of $\{0,\infty\}$-valued structures (that is, relational structures) the  \problem{Improvement} and \problem{Equivalence} problems correspond respectively to \problem{Homomorphism} and \problem{Homomorphic Equivalence}, which are well-known to be NP-complete even in the bounded arity case~\cite{Chandra77:stoc}.
\end{proof}

\begin{proposition}
\problem{Core Recognition} is coNP-complete.
\end{proposition}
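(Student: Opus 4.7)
\textbf{Membership in coNP.} I would show that the complement problem, ``is $\structA$ not a core?'', is in NP. By the definition of core together with Proposition~\ref{prop:charfrac}, $\structA$ fails to be a core precisely when there exists a non-surjective mapping $g: A \mapsto A$ and an inverse fractional homomorphism $\omega$ from $\structA$ to $\structA$ with $g \in \supp{\omega}$. Existence of such $\omega$ (for a fixed non-surjective $g$) is feasibility of a linear program with one variable $\omega(h)$ for each $h: A \mapsto A$, one inequality $\sum_h \omega(h) f^{\structA}(h^{-1}(\tuple{x})) \leq f^{\structA}(\tuple{x})$ for each $(f,\tuple{x}) \in \tup{\structA}$, the equation $\sum_h \omega(h) = 1$, the strict inequality $\omega(g) > 0$ (which can be replaced by a bound $\omega(g) \geq \varepsilon$ for a sufficiently small polynomially computable $\varepsilon$, exactly as in Lemma~\ref{lem:epsilonomega}), and nonnegativity. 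Although this program has exponentially many variables, it has only $|\tup{\structA}| + O(1)$ non-trivial constraints, so basic feasible solution theory produces a solution with support of polynomial size and polynomial bit complexity. An NP certificate is therefore a non-surjective $g$, a polynomial-size list of mappings $h_1, \dots, h_m \in A^A$ containing $g$, and rational weights $\omega(h_i)$, which can be verified in polynomial time.

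\textbf{Hardness.} I would reduce from the problem of recognising whether a relational structure is a (relational) core, which is known to be coNP-complete already for graphs~\cite{Hell:graphs}. Given a relational $\tau$-structure $\rels{A}$, I would associate the valued $\sigma_\tau$-structure $\structA_{0,\infty}$ defined in Section~\ref{sec:compl} and claim that $\structA_{0,\infty}$ is a valued core if and only if $\rels{A}$ is a relational core.

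The key observation is a characterisation of the supports of inverse fractional homomorphisms from $\structA_{0,\infty}$ to itself as precisely the set of endomorphisms of $\rels{A}$. Indeed, for any $(f_R, \tuple{x}) \in \tup{\structA_{0,\infty}}$, the inverse-fractional-homomorphism inequality is trivial when $f_R^{\structA_{0,\infty}}(\tuple{x}) = \infty$ (i.e., $\tuple{x} \in R^{\rels{A}}$), whereas when $f_R^{\structA_{0,\infty}}(\tuple{x}) = 0$ (i.e., $\tuple{x} \notin R^{\rels{A}}$) it forces $\omega(g) \, f_R^{\structA_{0,\infty}}(g^{-1}(\tuple{x})) = 0$ for every $g$, which is equivalent to saying that for every $g \in \supp{\omega}$ and every $\tuple{y} \in R^{\rels{A}}$, $g(\tuple{y}) \in R^{\rels{A}}$; that is, $g$ is an endomorphism of $\rels{A}$. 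Conversely, the uniform distribution on any single endomorphism is a valid inverse fractional homomorphism. Therefore the supports of inverse fractional homomorphisms from $\structA_{0,\infty}$ to itself are exactly arbitrary sets of endomorphisms of $\rels{A}$, and so $\structA_{0,\infty}$ is a valued core iff every endomorphism of $\rels{A}$ is surjective, which is exactly the condition that $\rels{A}$ is a relational core.

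\textbf{Main obstacles.} The hardness reduction itself is straightforward; the substantive step is the analysis of inverse fractional homomorphisms in the $\{0,\infty\}$-valued case, which I have sketched above. The main care in the membership part is guaranteeing that the certificate has polynomial bit complexity: one must either invoke standard LP bit-size bounds for rational basic feasible solutions, or use the $\varepsilon_\Omega$-style argument from Lemma~\ref{lem:epsilonomega} to obtain a polynomially bounded lower bound on $\omega(g)$ in any feasible solution.
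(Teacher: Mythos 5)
Your proposal is correct and follows essentially the same strategy as the paper: for membership, both you and the paper encode the complement via a linear program over inverse fractional homomorphisms from $\structA$ to $\structA$ (the paper minimizes the negated total weight on non-surjective mappings and certifies with the polynomial-support optimal basis, whereas you fix a single non-surjective $g$ and certify feasibility of $\omega(g)\geq\varepsilon$, which amounts to the same thing), and in both cases polynomial certificate size follows from standard LP bit-complexity bounds. For hardness, the paper simply observes that on $\{0,\infty\}$-valued structures with a single binary symmetric function symbol, \problem{Core Recognition} coincides with the coNP-complete problem of deciding whether a graph is a core~\cite{hell92:core}; your reduction via $\structA_{0,\infty}$ and the observation that supports of inverse fractional homomorphisms from $\structA_{0,\infty}$ to itself are exactly sets of endomorphisms of $\rels{A}$ is that same reduction spelled out in detail.
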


\begin{proof}
We start by establishing membership in coNP. Let $(\structA)$ be an instance of
  \problem{Core Recognition}. By the definition of a core, $(\structA)$ is a
  no-instance if and only if there exists a non-surjective IFH from $\structA$ to $\structA$. This is true if and only if the optimum of the linear program
\begin{equation*}
\begin{aligned}
&\min \left(-\sum_{g \in \G^*}\omega(g)\right) &\\
&\sum_{g \in \G_{<\infty}}\omega(g)f^{\structA}(g^{-1}(\tuple{x})) \leq f^{\structA}(\tuple{x}) \hspace{15mm}& \forall (f,\tuple{x}) \in \tup{\structA}_{<\infty}\\
&\sum_{g \in \G_{<\infty}}\omega(g) \leq 1 &\\
&-\sum_{g \in \G_{<\infty}}\omega(g) \leq -1 &\\
&-\omega(g) \leq 0 & \forall g \in \G_{<\infty}
\end{aligned}
\end{equation*}
is strictly negative, where $\tup{\structA}_{<\infty}\defeq\{(f,\tuple{x})\in \tup{\structA}\mid f^{\structA}(\tuple{x})<\infty\}$, $\G_{<\infty} \defeq \{g\in A^A\mid \text{$f^\structA(g^{-1}(\tuple{x}))<\infty$ for all $(f,\tuple{x})\in \tup{\structA}_{<\infty}$}\}$ and $\G^*$ is the restriction of $\G_{<\infty}$ to non-surjective mappings. Again, the number of inequalities in this system is polynomial in $|\structA|$ so there exists a solution of minimum cost with a polynomial number of non-zero variables. Such a subset of variables is a coNP certificate.

On $\{0,\infty\}$-valued structures with a single binary symmetric function symbol, \problem{Core Recognition} coincides with the problem of deciding if a graph is a core in the usual sense (that is, the problem of deciding if all of its endomorphisms are surjective). This problem is coNP-complete~\cite{hell92:core}, so \problem{Core Recognition} is coNP-complete as well.
\end{proof}

\begin{proposition}
\problem{Core Treewidth} is NP-complete even for fixed $k \geq 1$, and \problem{Sherali-Adams Tightness} is NP-complete even for fixed $k \geq 1$.
\end{proposition}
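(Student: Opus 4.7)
For the NP upper bound, I would use Proposition~\ref{prop:exist-core} to bound the universe of the core by $|A|$ and use it as the certificate. Proposition~\ref{prop:tw-equiv-core} gives that $(\structA, k)$ is a yes-instance of \problem{Core Treewidth} iff some $\structA^* \equiv \structA$ has $\tw{\structA^*} \leq k$. For \problem{Sherali-Adams Tightness} I would first establish the parallel characterisation ``$(\structA, k)$ is a yes-instance iff some $\structA^* \equiv \structA$ satisfies $\twms{\structA^*} \leq k-1$ and has overlap at most $k$'' by combining Theorem~\ref{thm:suff} (applied to $\structA^*$) with Proposition~\ref{prop:corefrac} (transferring tightness to $\structA$ via $\structA^* \equiv \structA$) for one direction, and Theorem~\ref{thm:sa} applied with $\structA^*$ taken to be the core for the other. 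The NP certificate in both cases is a polynomial-size $\structA^*$ (e.g., the core), a tree decomposition witnessing the structural property, and two inverse fractional homomorphisms of polynomial support (extracted as basic feasible solutions of the LP in Proposition~\ref{prop:charfrac}) certifying the equivalence. All verifications run in polynomial time.

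For NP hardness of \problem{Core Treewidth} at fixed $k \geq 2$, the plan is to reduce from $(k+1)$-colourability, which is NP-complete. Given a graph $G$, output $\structA = (G \cup K_{k+1})_{0,\infty}$. Since valued and relational cores coincide for $\{0,\infty\}$-valued structures and $K_{k+1}$ is a relational core, if $G$ is $(k+1)$-colourable then the core of $\structA$ is $K_{k+1}$, of treewidth $k$; otherwise $\chi(G) \geq k+2$, which transfers to the core by homomorphic equivalence, and thus $\tw{\mathrm{core}} \geq \chi - 1 \geq k+1$. For \problem{Sherali-Adams Tightness} at fixed $k \geq 3$, an analogous reduction from $k$-colourability works: on graph structures $\twms = \tw$ whenever $\tw \geq 2$ and the overlap is always at most $1$, so by Theorem~\ref{thm:sa} the tightness condition collapses to core treewidth $\leq k-1$, and the output $\structA = (G \cup K_k)_{0,\infty}$ gives SA$_k$ tightness iff $G$ is $k$-colourable.

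The main obstacle is NP-hardness for the remaining small cases, namely $k = 1$ for \problem{Core Treewidth} and $k \in \{1, 2\}$ for \problem{Sherali-Adams Tightness}, where $2$-colourability is polynomial and the graph-based reductions above collapse. I would instead reduce from an NP-hard CSP over a template of small treewidth and overlap, such as NAE-$3$-SAT (whose template has a single ternary not-all-equal relation on $\{0,1\}$, with $\tw = 1$, $\twms = 0$, and overlap $2$), suitably adapted to match the target parameters by moving to larger arities when needed. The engineering challenge is to design padding gadgets so that (i) when the input CSP instance admits a homomorphism to the template, the core of the output coincides with the template and meets the structural bound, and (ii) when no such homomorphism exists, every structure equivalent to the output contains in its Gaifman graph a bramble, in the sense of Theorem~\ref{theo:char-twms}, that cannot be covered by a scope nor a small set, thereby forcing the relevant invariant of the core above the threshold. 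Pinning down the core via the numerical certificate of Proposition~\ref{prop:core-char} and controlling the interaction between the padding and the template in the Gaifman graph is the principal technical hurdle.
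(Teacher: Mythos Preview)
Your NP-membership argument is essentially the paper's: guess a polynomial-size equivalent structure (the core suffices, via Proposition~\ref{prop:exist-core}), a suitable tree decomposition, and a polynomial-support certificate of equivalence. The paper packages the last item as a reference to Proposition~\ref{prop:imp-equiv} (\problem{Equivalence} is in NP), but the content is the same.

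For hardness the paper takes a much shorter route than you. It simply observes that on $\{0,\infty\}$-valued structures both problems specialise to their relational counterparts, and then invokes known results: deciding whether a graph has a (relational) core of treewidth at most $k$ is NP-complete for \emph{every} fixed $k \geq 1$ by~\cite{Dalmau02:width}, which settles \problem{Core Treewidth} in one stroke. For \problem{Sherali-Adams Tightness} on arity-$2$ structures, the paper notes that for $k \geq 2$ the condition of Theorem~\ref{thm:sa} collapses to ``core has treewidth $\leq k-1$'' (your own observation), so hardness follows from the previous sentence; for $k=1$ it becomes ``core is a disjoint union of oriented trees'', which is again NP-complete by the proof of~\cite[Theorem~13]{Dalmau02:width}.

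Your $(k+1)$-colourability reduction is correct and is the standard argument behind the $k \geq 2$ part of the cited result, but as you recognise it breaks at $k=1$ (and hence at $k \in \{1,2\}$ for \problem{Sherali-Adams Tightness}). Your proposed workaround via NAE-$3$-SAT with padding gadgets is speculative and not carried through; there is no need for it, since the small-$k$ cases are already in the literature. The ``main obstacle'' you identify is therefore not a genuine obstacle but a reinvention of results you can cite directly.
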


\begin{proof}
First, we prove that these problems belong to NP when $k$ is part of the input. For \problem{Core Treewidth}, the certificate for a yes-instance $(\structA,k)$ is a valued structure $\structB$, a polynomially-sized certificate that $\structB \equiv \structA$ (which exists because \problem{Equivalence} is in NP by Proposition~\ref{prop:imp-equiv}) and a tree decomposition of $G(\pos{\structB})$ of width at most $k$. Correctness follows from Proposition~\ref{prop:tw-equiv-core}. For \problem{Sherali-Adams Tightness}, the certificate for a yes-instance $(\structA,k)$ is a valued structure $\structB$ whose overlap is at most $k$, a polynomially-sized certificate that $\structB \equiv \structA$ and a tree decomposition of $G(\pos{\structB})$ of width modulo scopes at most $k-1$. For correctness, by Theorem~\ref{thm:sa} it is sufficient to prove that this certificate exists if and only if the core $\structA'$ of $\structA$ has treewidth modulo scopes at most $k-1$ and overlap at most $k$. One implication is immediate: if $\structA'$ has treewidth modulo scopes at most $k-1$ and overlap at most $k$ then we can take $\structB \defeq \structA'$. For the converse implication, if this certificate exists then by Theorem~\ref{thm:suff} the Sherali-Adams relaxation of level $k$ is always tight for $\structB$. Then, by Theorem~\ref{thm:sa} the core $\structB'$ of $\structB$ has treewidth modulo scopes at most $k-1$ and overlap at most $k$. Furthermore, $\structA \equiv \structB$ so by Proposition~\ref{prop:iso} $\structA'$ and $\structB'$ are isomorphic, and finally $\structA'$ has treewidth modulo scopes at most $k-1$ and overlap at most $k$.

As before, we derive hardness from the $\{0,\infty\}$-valued case. Determining whether a graph has a core of treewidth at most $k$ is NP-complete for all fixed $k \geq 1$~\cite{Dalmau02:width}, so \problem{Core Treewidth} is NP-complete even for fixed $k \geq 1$ and arity at most $2$. For fixed $k \geq 2$ and on valued structures of arity at most $2$, \problem{Sherali-Adams Tightness} is equivalent to \problem{Core Treewidth} with $k' = k-1$, 
and hence it is NP-complete. 
For the case $k=1$ and arity at most $2$ (i.e., for directed graphs), \problem{Sherali-Adams Tightness} is equivalent to deciding 
whether the core of a directed graph is a disjoint union of \emph{oriented trees}, i.e., simple directed graphs whose underlying undirected graphs are trees. It follows from the proof of \cite[Theorem~13]{Dalmau02:width} that this problem is NP-complete. 
\end{proof}

\subsection{Application to database theory}
\label{subsec:database}

It is well-known that the evaluation/containment problem for \emph{conjunctive
queries} (CQs) (i.e., first-order queries using only conjunction and existential quantification) 
is equivalent to the homomorphism problem, and hence equivalent to CSPs \cite{Chandra77:stoc,Kolaitis98:pods}. 
This observation has been fundamental in providing principled techniques for the static analysis and optimisation of CQs. 
Indeed, in their seminal work \cite{Chandra77:stoc}, 
Chandra and Merlin exploited this connection to show that the containment and equivalence problem for CQs are NP-complete. 
They also provided tools for minimising CQs with strong theoretical guarantees. 
In terms of homomorphisms, minimising a CQ corresponds essentially to computing the (relational) core of a relational structure.

The situation is less clear in the context of annotated databases~\cite{GT17:pods}. 
In this framework, the tuples of the database are annotated with values from a particular semiring ${\cal K}$,   
and the semantics of a CQ is a value from ${\cal K}$. 
For instance, the Boolean semiring $(\{0,1\},\lor,\land,0,1)$ gives us the usual semantics of CQs, and 
the natural semiring $(\mathbb{N},+,\times,0,1)$ corresponds to the so-called \emph{bag semantics} of CQs. 
Another semiring considered in the literature is the
 \emph{tropical} semiring $(\qinf,\min,+,\infty,0)$, which provides a \emph{minimum-cost} semantics~\cite{GT17:pods}.
Unfortunately, the homomorphism machinery cannot be applied directly to the study of containment and equivalence in the semiring setting. 
While there are some works in this direction (see, e.g.~\cite{Kostylev14:tds,green11}), several basic problems remain open.
In particular, the precise complexity of containment/equivalence of CQs over the
tropical semiring is open (it was shown in~\cite{Kostylev14:tds} to be NP-hard
and in $\Pi_{2}^p$, the second level of the polynomial-time hierarchy). 
Our first observation is that these two problems are actually NP-complete. 
Indeed, it is well known that VCSP is equivalent to CQ evaluation over the tropical semiring. 
Moreover, containment and equivalence of CQs over the tropical semiring correspond to improvement and (valued) equivalence of valued structures. 
By applying Proposition~\ref{prop:imp-equiv}, we directly obtain NP-completeness of these problems. 

Our second observation is that our notion of (valued) core provides a notion of minimisation of CQs over the tropical semiring with theoretical guarantees. 
 Indeed, as the following proposition shows, the core of a valued structure is always an equivalent valued structure with minimal number of elements, or in terms of CQs, 
 with minimal number of variables. 

\begin{proposition}
Let $\structA$ and $\structB$ be valued $\sigma$-structures. Then the following are equivalent:
\begin{enumerate}
\item $\structB$ is the core of $\structA$. 
\item $\structB$ is a minimal (with respect to the size of the universe) valued structure equivalent to $\structA$. 
\end{enumerate}
\end{proposition}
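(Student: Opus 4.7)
The plan is to use the existence and uniqueness results for cores (Proposition~\ref{prop:exist-core}), the isomorphism characterisation (Proposition~\ref{prop:iso}), and the reduction step embedded in the proof of Proposition~\ref{prop:exist-core} (i.e.\ that a non-core admits a strictly smaller equivalent image under a support mapping).

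For the direction (1) $\Rightarrow$ (2), suppose $\structB$ is the core of $\structA$ and let $\structC$ be any valued $\sigma$-structure with $\structC \equiv \structA$. Let $\structC'$ denote the core of $\structC$. By Proposition~\ref{prop:exist-core} we have $|C'| \leq |C|$. Since $\structC' \equiv \structC \equiv \structA \equiv \structB$ and both $\structB$ and $\structC'$ are cores, Proposition~\ref{prop:iso} gives that $\structB$ and $\structC'$ are isomorphic, so $|B| = |C'| \leq |C|$. This shows minimality.

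For the direction (2) $\Rightarrow$ (1), assume $\structB$ is equivalent to $\structA$ and has minimum universe size among all such structures. Let $\structB'$ be the core of $\structB$; by Proposition~\ref{prop:exist-core} we have $|B'| \leq |B|$, and by minimality $|B'| = |B|$. It remains to show that $\structB$ itself is already a core (so that, being equivalent to $\structA$, it is a core of $\structA$). Suppose for contradiction that $\structB$ is not a core. Then there exists a non-surjective mapping $g : B \mapsto B$ with $g \in \supp{\omega}$ for some inverse fractional homomorphism $\omega$ from $\structB$ to itself. By Proposition~\ref{prop:equi-supp}, $g(\structB) \equiv \structB \equiv \structA$, while $|g(B)| < |B|$, contradicting the minimality of $\structB$. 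Hence $\structB$ is a core, and being equivalent to $\structA$ it is the core of $\structA$.

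The argument is essentially a bookkeeping of previously established facts; the only subtle point is the second direction, where one must resist the temptation to directly invoke uniqueness of the core on $\structB$ and $\structB'$, which would only give $|B| = |B'|$ but not that $\structB$ itself is a core. The contradiction via Proposition~\ref{prop:equi-supp} is what closes this gap, and it is the main (and only) real step in the proof.
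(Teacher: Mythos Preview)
Your proof is correct and follows essentially the same approach as the paper's: both directions rely on Proposition~\ref{prop:iso}, Proposition~\ref{prop:exist-core}, and Proposition~\ref{prop:equi-supp} in the same way. The only cosmetic differences are that your $(1)\Rightarrow(2)$ is phrased directly rather than by contradiction, and your $(2)\Rightarrow(1)$ contains a harmless but unnecessary detour through the core $\structB'$ of $\structB$ before arriving at the same contradiction via a non-surjective support mapping.
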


\begin{proof}
For (1) $\Rightarrow$ (2), suppose that $\structB$ is the core of $\structA$. By contradiction, 
assume that (2) is false, i.e., there is a valued $\sigma$-structure $\structB'$ such that $|B'|<|B|$ and $\structB'\equiv \structA$. 
In particular, $\structB'\equiv \structB$ and then by Proposition~\ref{prop:iso} the core $\structB''$ of $\structB'$ is isomorphic to $\structB$. 
By Proposition~\ref{prop:exist-core}, we have that $|B''|\leq |B'|<|B|$; a contradiction. For (2) $\Rightarrow$ (1), suppose by contradiction that 
$\structB$ is not the core of $\structA$. Since $\structB\equiv \structA$ by hypothesis, the only possibility is that $\structB$ is not a core. 
Hence, there is an IFH $\omega$ and a non-surjective mapping $g:B\mapsto B$ such that $g\in \supp{\omega}$. 
By Proposition~\ref{prop:easy} (3) from Appendix~\ref{app:equiv}, $g(\structB)\equiv \structB\equiv \structA$. Since $|g(B)|<|B|$, this is a contradiction. 
\end{proof}

Proposition~\ref{prop:exist-core} also gives an algorithm to compute the core of a CQ over the tropical semiring. (In fact, a PSPACE algorithm.)
Finally, it is worth mentioning that our classification result from Theorem~\ref{theo:main} can be interpreted as a characterisation of the classes of CQs over the tropical semiring that can be evaluated 
in PTIME. 

\section*{Acknowledgements}

We would like to thank the anonymous referees of both the conference~\cite{crz18:focs}
and this full version of the paper. In particular, one of the journal referees
suggested a more concise way of presenting some of the results, which
shortened the paper.

\bibliographystyle{alpha}
\bibliography{crz21struct}

\appendix

\section{Missing Proofs from Section~\ref{sec:equiv}}
\label{app:equiv}

We prove several propositions and show how they establish results from
Section~\ref{sec:equiv}.

\begin{proposition}\label{prop:easy}
  ~\begin{enumerate}[(1)]
  \item For every valued $\sigma$-structures $\structA$ and $\structB$ with
    $A=B$, $\structA\vecpred\structB$ implies $\structA\less\structB$.
  \item 
    Let $\structA$ be a valued $\sigma$-structure, $g\in B^A$, and $\omega$
    a probability distribution on $B^A$. Then, $\structA\less g(\structA)$ and
    $\structA\less\omega(\structA)$.
  \item If $\omega$ is an IFH from $\structA$ to $\structA$ then
    $\structA\equiv\omega(\structA)\equiv g(\structA)$ for every
    $g\in\supp{\omega}$.
\end{enumerate}
\end{proposition}
\begin{proof}

~\begin{enumerate}[(1)]
\item $\structA\vecpred\structB$ implies, for every $\structC$ and $g:A\mapsto
  C$, $\costb{\structA\mapsto\structC}{g}\leq\costb{\structB\mapsto\structC}{g}$. Thus, $\opt{\structA,\structC}\leq\opt{\structB,\structC}$ for every $\structC$, and so $\structA\less\structB$.
\item Since
  $\costb{g(\structA)\mapsto\structC}{h}=\costb{\structA\mapsto\structC}{h\circ
    g}$, for every $\structC$ and $h:B\mapsto C$, we obtain $\structA\less g(\structA)$.

Then, since $\opt{\sum_{g\in B^A}\omega(g)g(\structA),\structC}\geq\sum_{g\in B^A}\omega(g)\opt{g(\structA),\structC}$, we get $\structA\less\omega(\structA)$. 

\item From~(2), $\structA\less\omega(\structA)$. From the definition of IFH
  (Definition~\ref{def:ifh}), we have $\omega(\structA)\vecpred\structA$.
    By~(1), $\omega(\structA)\less\structA$. Together, 
    $\structA\equiv\omega(\structA)$.

    By~(2) again, $\structA\less g(\structA)$ for every $g\in\supp{\omega}$.
    In particular, $\opt{\structA,\structC}\leq\opt{g(\structA),\structC}$ for
    every $\structC$.
    Assume for contradiction that
  $\opt{g(\structA),\structC}>\opt{\structA,\structC}$ for at least one 
    $g\in\supp{\omega}$ and some $\structC$. Then,
\begin{multline*}
  \opt{\omega(\structA),\structC}=\opt{\sum_{g\in A^A}\omega(g)g(\structA),\structC}\geq
  \sum_{g\in A^A}\omega(g)\opt{g(\structA),\structC}>
  \\
  \sum_{g\in A^A}\omega(g)\opt{\structA,\structC}
  = \opt{\structA,\structC},
\end{multline*}
which contradicts the already established $\structA\equiv\omega(\structA)$. Hence $\structA\equiv g(\structA)$.
\end{enumerate}
\end{proof}

\begin{proposition}\label{prop:core}
Let $\structA$ be a core valued $\sigma$-structure and $\omega$ be an IFH from $\structA$ to itself. 
Then, for every $g\in\supp{\omega}$ and every $(f,\tuple{x})\in\tup{\structA}$, we have
  $f^{\structA}(\tuple{x})=f^{\structA}(g(\tuple{x}))$.
\end{proposition}

\begin{proof}
  By contradiction, take $(f, \tuple{x})\in \tup{\structA}$ and $g\in \supp{\omega}$ such that 
  $f^{\structA}(\tuple{x})\neq f^{\structA}(g(\tuple{x}))$
  with minimal $c:=f^{\structA}(\tuple{x})$.  
  By minimality, for any
  $h\in\supp{\omega}$, we have
  $f^{\structA}(\tuple{y})=f^{\structA}(h(\tuple{y}))$ if
  $f^{\structA}(\tuple{y})<c$. 
Since $h$ is a bijection ($\structA$ is a core), and by counting, we obtain 
$f^{\structA}(\tuple{y})\geq c$ if and only if
  $f^{\structA}(h(\tuple{y}))\geq c$.  If $c=\infty$, then  $f^{\structA}(g(\tuple{x}))$ needs to be $>\infty$; a contradiction. 
  
  If $c<\infty$, we have 
  $f^{\structA}(\tuple{x})<f^{\structA}(g(\tuple{x}))$. By counting, this
  implies that there is $\tuple{w}$ such that $f^{\structA}(g(\tuple{w}))=c$
  and $f^{\structA}(\tuple{w})>c$. 
  Then   $\sum_{h\in \supp{\omega};\, h(\tuple{y})=g(\tuple{w})} \omega(h) f^{\structA}(\tuple{y})>c=f^{\structA}(g(\tuple{w}))$
  since $f^{\structA}(\tuple{y})\geq c$ and at least for one $\tuple{y}$
  we have $f^{\structA}(\tuple{y})>c$. This contradicts $\omega$ being an IFH.
\end{proof}

\begin{corollary}\label{cor:core}
  Let $\structA$ and $\structB$ be two core valued $\sigma$-structures. Assume
  that there is an IFH $\omega$ from $\structA$ to $\structB$ and an IFH
  $\omega'$ from $\structB$ to $\structA$. Then, $\structA$ and $\structB$ are
  isomorphic.
\end{corollary}

\begin{proof}
  By composition, $\omega'\circ\omega$ is an IFH from $\structA$ to itself. Since
  $\structA$ is a core, the support of $\omega'\circ\omega$ contains only
  bijections. Similarly, since $\structB$ is a core
  the support of $\omega\circ\omega'$ contains only bijections. These two facts
  imply that $|A|=|B|$ and the supports of $\omega$ and $\omega'$ also contain
  only bijections.
  Assume for contradiction that there is $g\in\supp{\omega}$ such that for some
  $(f,\tuple{x})\in\tup{\structA}$ we have $f^{\structA}(\tuple{x})\neq f^{\structB}(g(\tuple{x}))$.

  Case 1: $f^{\structA}(\tuple{x})<f^{\structB}(g(\tuple{x}))$. Since $\omega'$ is an IFH, there must
  exist $g'\in\supp{\omega'}$ such that for some $\tuple{y}$ we have
  $g'(\tuple{y})=\tuple{x}$ and $f^{\structB}(\tuple{y})\leq
  f^{\structA}(\tuple{x})$. But then, $g\circ g'\in\supp{\omega\circ\omega'}$,
  which contradicts Proposition~\ref{prop:core}.

  Case 2: $f^{\structA}(\tuple{x})>f^{\structB}(g(\tuple{x}))$. In this case, we
  consider any $g'\in\supp{\omega'}$ and $(f,g(\tuple{x}))$. As $g'\circ g \in \supp{\omega'\circ\omega}$, Proposition~\ref{prop:core}
  guarantees that $f^{\structB}(g(\tuple{x}))< f^{\structA}(g'(g(\tuple{x})))$. We can then proceed as in Case~1, interchanging the roles of $\structA$ and $\structB$.
\end{proof}

We will need the following variant of Farkas' Lemma, known as Motzkin's transposition theorem. \cite[Lemma~2.8]{tz16:jacm} shows how it can be derived from Farkas' Lemma~\cite[Corollary~7.1k]{Schrijver86:ILP}.

\begin{lemma}
\label{lem:motzkin}
For any $A \in \mathbb{Q}^{m \times n}$ and $B \in \mathbb{Q}^{p \times n}$ exactly one of the following holds:
\begin{itemize}
\item $A\overline{y} > 0, B\overline{y} \geq 0$ for some $\overline{y} \in \qplus^n$, or
\item $A^T\overline{z_1} + B^T\overline{z_2} \leq 0$ for some $0 \neq \overline{z_1} \in \qplus^m$, $\overline{z_2} \in \qplus^p$.
\end{itemize}
\end{lemma}

\begin{proposition}\label{prop:new}
  Let $\structA$ and $\structB$ be two valued $\sigma$-structures and let
  $\G\subseteq B^A$. Then, either there is an IFH $\omega$ from $\structA$ to
  $\structB$ such that $\G\cap\supp{\omega}\neq\emptyset$, or there is a finite-valued
  $\sigma$-structure $\structC$ with $C=B$ such that
  $\costb{\structA\mapsto\structC}{g}>\costb{\structB\mapsto\structC}{\id}$ for
  every $g\in\G$. Moreover, deciding whether such a $\omega$ exists and if so
  then computing it, as well as computing $\structC$ if no such $\omega$ exists,
  can be done effectively  from $\structA$, $\structB$ and $\G$. 
\end{proposition}

\begin{proof}
Let $\M\subseteq B^A$ be the set of all mappings $g$ such that for every $(f,\tuple{x})\in \tup{\structA}$, 
$f^{\structB}(g(\tuple{x}))=\infty$ whenever $f^{\structA}(\tuple{x})=\infty$.    

  Case 1: Assume $\G\cap \M=\emptyset$, i.e., for every $g\in\G$ there is some
  $(f,\tuple{x})\in\tup{\structA}$ satisfying
  $f^{\structA}(\tuple{x})=\infty$ and $f^{\structB}(g(\tuple{x}))<\infty$.
  Define $\structC$ by setting $f^{\structC}(\tuple{x})=0$ if
  $f^{\structB}(\tuple{x})=\infty$ and $f^{\structC}(\tuple{x})=1$ otherwise.
  By the definition of $\structC$, we have
  $\costb{\structB\mapsto\structC}{\id}<\infty$. By the definition of $\structC$
  again and the assumption, $\costb{\structA\mapsto\structC}{g}=\infty$ for
  every $g\in\G$. Hence the desired inequality holds.

Case 2: Assume there is $g\in \G\cap \M$ such that for every $(f,\tuple{x})\in \tup{\structA}$, 
$f^{\structB}(g(\tuple{x}))=\infty$ whenever $0<f^{\structA}(\tuple{x})<\infty$. 
Define $\omega$ by $\omega(g)=1$; it is an
 IFH from $\structA$ to $\structB$, since for every
 $(f,\tuple{y})\in\tup{\structB}$, 
  $\sum_{\tuple{x}:g(\tuple{x})=\tuple{y}}f^{\structA}(\tuple{x})\leq
  f^{\structB}(\tuple{y})$. 

  Case 3: Assume that neither Case~1 nor Case~2 applies. We will use
  Lemma~\ref{lem:motzkin}. 
  We set $n:=|\M|+1$ and $m:=1$. Thus the matrix
  $A$ is just a row vector, where we have a coordinate for every map $g\in \M$
  and one more (say, the last), coordinate. We define $A(g)=1$ if $g\in \M\cap\G$ and $A(g)=0$ otherwise. 
  Note that since Case~1 does not apply, then $\M\cap\G\neq\emptyset$. 
  We set $A$ to be $0$ in the last coordinate.
  Condition $A\overline{y}>0$ means that $\overline{y}(g)>0$ for some
  $g\in\M\cap \G$. 
  Let $\tup{\structB}_{<\infty}:=\{(f,\tuple{y})\in \tup{\structB}: f^{\structB}(\tuple{y})<\infty\}$. 
  We set $p:=|\tup{\structB}_{<\infty}|+1$ and define the matrix $B$ as follows.
  We have a row in $B$ for every $(f,\tuple{y})\in\tup{\structB}_{<\infty}$.
  The entry of $B$ with row $(f,\tuple{y})$ and column $g\in \M$ is
  $-\sum_{\tuple{x}:g(\tuple{x})=\tuple{y}}f^{\structA}(\tuple{x})$. In the
  last column of $B$ we put $f^{\structB}(\tuple{y})$. Finally, we add one more
  row to $B$ of the form $(1,\ldots,1,-1)$. Note that $B$ is actually a rational matrix. 
  
  We now distinguish the two cases of Lemma~\ref{lem:motzkin}. 

  Case 3a: In the first case of Lemma~\ref{lem:motzkin}, we will construct an IFH
  $\omega$ from $\structA$ to $\structB$. Let $c$ be the value of the last
  coordinate of $\overline{y}$.

  If $c=1$, we define $\omega$ as follows: $\omega(g)$ is the $g$-th coordinate
  of $\overline{y}$. Using all, except for the last one, rows of
  $B\overline{y}$, we get
  $\omega(\structA)\vecpred\structB$. Using $A\overline{y}>0$,
  $\G\cap\supp{\omega}\neq\emptyset$. Finally, using the last row of
  $B\overline{y}$, we get that $d:=\sum_{g\in B^A}\omega(g)\geq c=1$. If $d\neq
  1$, to make $\omega$ a probability distribution, we scale it by $1/d$.
  
  If $c>0$, we can scale $\overline{y}$ by $1/c$ and are in the $c=1$ case.

  If $c=0$, let $g\in \M\cap \G$ be a coordinate in
  $\overline{y}$ with $\overline{y}(g)>0$. 
  Since Case 2 does not apply,  there is 
  $(f,\tuple{x})\in\tup{\structA}$ such that $0<f^{\structA}(\tuple{x})<\infty$ and $f^{\structB}(g(\tuple{x}))<\infty$. 
  We claim that
  in $B\overline{y}$ the entry corresponding to $(f, g(\tuple{x}))\in\tup{\structB}_{<\infty}$ is negative,
  which contradicts $B\overline{y}\geq 0$. Indeed, by the definition of $B$, 
  the entry is equal to  $\sum_{h\in \M}\overline{y}(h)(-\sum_{\tuple{y}:h(\tuple{y})=g(\tuple{x})}f^{\structA}(\tuple{y}))$. 
  This is negative as every term in the sum is $\leq 0$ and there is one term $< 0$ (take $h=g$ and $\tuple{y}=\tuple{x}$). 
  
  Case 3b: In the second case of Lemma~\ref{lem:motzkin}, we construct a valued
  $\sigma$-structure $\structC$ as required. Let $c$ be the last entry of
  $\overline{z_2}$ given by Lemma~\ref{lem:motzkin}. Define $\structC$ by
  $f^{\structC}(\tuple{x})$ to be $\overline{z_2}((f,\tuple{x}))+\varepsilon$ for every $(f,\tuple{x})\in\tup{\structB}_{<\infty}$, where $\varepsilon$ is a positive number to be defined later. 
  For  $(f,\tuple{x})\in\tup{\structB}\setminus \tup{\structB}_{<\infty}$ we set $f^{\structC}(\tuple{x})=0$. 
  The last row of the inequality
  $A^T\overline{z_1}+B^T\overline{z_2}\leq 0$ implies that
 $\costb{\structB\mapsto\structC}{\id} = \big(\sum_{(f,\tuple{x})\in \tup{\structB}_{<\infty}} f^{\structB}(\tuple{x}) \overline{z_2}((f,\tuple{x}))\big) + \varepsilon N \leq c + \varepsilon N$, 
  where $N:= \sum_{(f,\tuple{x})\in \tup{\structB}_{<\infty}} f^{\structB}(\tuple{x})$. 
  Since  $\costb{\structB\mapsto\structC}{\id} < \infty$, the required inequality holds for every  $g\in \G\setminus \M$, as 
  we have $\costb{\structA\mapsto\structC}{g}=\infty$ (here we use the fact that $f^{\structC}(\tuple{x})>0$ whenever $f^{\structB}(\tuple{x})<\infty$). 
   On the other hand, for every
  $g\in \G\cap \M$ we have $\costb{\structA\mapsto\structC}{g}\geq \sum_{(f,\tuple{x})\in \tup{\structB}_{<\infty}} \overline{z_2}((f,\tuple{x})) \sum_{\tuple{y}:g(\tuple{y})=\tuple{x}}f^{\structA}(\tuple{y})$. 
  Using the row corresponding to $g$ in the inequality $A^T\overline{z_1}+B^T\overline{z_2}\leq 0$, we obtain $\costb{\structA\mapsto\structC}{g}\geq \overline{z_1}+c$. 
  We can choose $\varepsilon$ so $\overline{z_1}>\varepsilon N$ and obtain the required inequality. 
  
  Computability of $\omega$ and $\structC$ follows directly from the previous arguments and from solving the corresponding system of linear inequalities (from Lemma~\ref{lem:motzkin}) whenever necessary. 
  \end{proof}

We now show how the previous propositions imply the results from
Section~\ref{sec:equiv}.

Propositions~\ref{prop:easy} (1) and (2) imply the ``if direction'' of Proposition~\ref{prop:charfrac}. 
For the ``only if direction'', we can use Proposition~\ref{prop:new} with $\G=B^A$. 
We obtain that if there is no IFH from $\structA$ to $\structB$, then there is  
a structure $\structC$ with
$\opt{\structA,\structC}>\opt{\structB,\structC}$, i.e., $\structA\less\structB$ does not hold.

Proposition~\ref{prop:iso} follows from Corollary~\ref{cor:core} and Proposition~\ref{prop:charfrac}. 

The existence of a core for every valued structure $\structA$ in Proposition~\ref{prop:exist-core} follows from the fact that 
either $\structA$ is a core or, by Proposition~\ref{prop:easy} (3), $\structA\equiv g(\structA)$, where $g:A\to A$ is a non-surjective mapping.  
We can iterate this argument until we find a core of $\structA$, whose universe has size $\leq |A|$. 
Uniqueness of cores follows from Corollary~\ref{cor:core}. 
The computability of the core can be obtained by invoking Proposition~\ref{prop:new} with $\structA=\structB$ and $\G$ being the set of
non-surjective functions on $A$. In this case, Proposition~\ref{prop:new} gives us that if $\structA$ is not a core, then we can compute 
a non-surjective IFH $\omega$ from $\structA$ to itself, and in particular, we can compute a non-surjective $g:A\to A$ such that $\structA\equiv g(\structA)$.
Hence, we can turn the previous existence argument into an algorithm for computing the core.

For the ``only if direction'' of Proposition~\ref{prop:core-char}, we again use Proposition~\ref{prop:new} with $\structA=\structB$ and $\G$ being the set of
non-surjective functions on $A$. We obtain that if $\structA$ is a core, then there exists the required mapping $c$ (which is the finite-valued structure $\structC$) 
and it can be effectively computed. 
For the ``if direction'', and towards a contradiction, suppose $\structA$ is not a core and assume there exists a mapping $c$ as in the statement of the proposition. 
Take a non-surjective IFH $\omega$ from $\structA$ to itself and a non-surjective $g\in \supp{\omega}$. 
Let $\structC$ be the finite-valued structure with $C=A$ and $f^{\structC}(\tuple{x})=c(f,\tuple{x})$. 
By Proposition~\ref{prop:easy} (3), $\structA\equiv g(\structA)$. Let $h$ be a minimum-cost mapping from $g(\structA)$ to $\structC$. 
Since $\costb{g(\structA)\mapsto\structC}{h}=\costb{\structA\mapsto\structC}{h\circ g}$ and $\opt{g(\structA), \structC}=\opt{\structA, \structC}$, 
we obtain that $h\circ g$ is a minimum-cost mapping from $\structA$ to $\structC$. 
However, by the properties of $c$ (and hence $\structC$) and the fact that $h\circ g:A\to A$ is non-surjective, 
we obtain that $\costb{\structA\mapsto\structC}{h\circ g}>\costb{\structA\mapsto\structC}{\id}$. 
This is a contradiction. 

\section{Proof of Example~\ref{ex:btw}}
\label{app:example}

We first recall the construction from Example~\ref{ex:btw}. Consider the signature $\sigma=\{f,\mu\}$, where $f$ and $\mu$ 
are binary and unary function symbols, respectively. 
For $n\geq 1$, let $\structA_n$ be the valued $\sigma$-structure with universe $A_n=\{1,\dots,n\}\times \{1,\dots,n\}$ such that 
(i) $f^{\structA_n}((i,j),(i',j'))=\infty$ if $i\leq i'$, $j\leq j'$, and $(i'-i)+(j'-j)=1$; otherwise $f^{\structA_n}((i,j),(i',j'))=0$, and (ii) $\mu^{\structA_n}((i,j))=1$, for all $(i,j)\in A_n$. 
Also, for $n\geq 1$, let $\structA'_n$ be the valued $\sigma$-structure with universe $A'_n=\{1,\dots,2n-1\}$ such that 
(i) $f^{\structA_n'}(i,j)=\infty$ if $j=i+1$; otherwise $f^{\structA_n'}(i,j)=0$, 
and (ii) $\mu^{\structA_n'}(i)=i$, for $1\leq i\leq n$, and $\mu^{\structA_n'}(i)=2n-i$, for $n+1\leq i\leq 2n-1$. 

We prove the following.

\begin{proposition*}
For every $n \geq 1$, $\structA'_n$ is the core of $\structA_n$.
\end{proposition*}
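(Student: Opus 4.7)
The plan is to generalise the argument from Example~\ref{ex2}, and proceeds in three steps: (i) $\structA'_n$ is a core; (ii) $\structA_n \less \structA'_n$; (iii) $\structA'_n \less \structA_n$. For (i), $\pos{\structA'_n}$ is the directed path on $2n-1$ vertices with every vertex in $R_\mu^{\pos{\structA'_n}}$, and since this path has a unique source and a unique sink the only endomorphism of $\pos{\structA'_n}$ is the identity. By the observation following Proposition~\ref{prop:charfrac}, every inverse fractional homomorphism from $\structA'_n$ to itself is supported on homomorphisms of $\pos{\structA'_n}$ to itself, hence on $\{\id\}$, so $\structA'_n$ is a core. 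For (ii), let $g : A_n \to A'_n$ send $(i,j)$ to $i+j-1$ and put $\omega(g)=1$: the $f$-constraints hold because $g$ maps grid steps to path steps, and the $\mu$-constraint at $k$ is tight because $g^{-1}(k)$ is exactly the $k$-th diagonal of the grid, which has size $d_k \defeq \min(k, 2n-k) = \mu^{\structA'_n}(k)$. Proposition~\ref{prop:charfrac} then yields $\structA_n \less \structA'_n$.

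For (iii), any $h$ in the support of an inverse fractional homomorphism $\omega'$ from $\structA'_n$ to $\structA_n$ must be a homomorphism $\pos{\structA'_n} \to \pos{\structA_n}$, that is, a monotone lattice path of length $2n-2$ in the grid; counting right/down steps forces such a path to go from $(1,1)$ to $(n,n)$, and in particular to visit exactly one vertex on each diagonal. The $\mu$-constraint at $(i,j) \in A_n$ becomes $\Pr_{\omega'}[(i,j) \in h(A'_n)] \cdot d_{i+j-1} \leq 1$, and summing over any diagonal $D_k$ forces equality, i.e., $\Pr_{\omega'}[(i,j) \in h(A'_n)] = 1/d_{i+j-1}$ for every $(i,j)$. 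So the task reduces to constructing a distribution over monotone $(1,1)$-to-$(n,n)$ paths whose marginal on each diagonal is uniform. We build one via flow decomposition on the grid DAG: assign value $1/d_{i+j-1}$ to every vertex $(i,j)$ and extend this vertex-flow to a valid edge-flow by solving the small bipartite transportation problem between each pair of consecutive diagonals (for instance, when $k < n$, the right-edge out of the $i$-th vertex of $D_k$ is given flow $(k-i+1)/[k(k+1)]$ and the down-edge flow $i/[k(k+1)]$; the $k \geq n$ case is symmetric). Standard flow decomposition then expresses this unit flow as a convex combination of monotone paths, yielding the desired $\omega'$. The remaining $f$-constraints of $\structA_n$ are automatically satisfied because consecutive vertices on any monotone path are adjacent in the grid.

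Combining (ii) and (iii), $\structA_n \equi \structA'_n$, and by (i) together with the uniqueness of the core (Proposition~\ref{prop:exist-core}), $\structA'_n$ is the core of $\structA_n$. The main obstacle is part (iii): since the inequality $\Pr[(i,j) \in h(A'_n)] \leq 1/d_{i+j-1}$ is tight for every $(i,j)$, the uniform distribution over paths does not work, and one needs a distribution calibrated so that its marginal on each diagonal is uniform. The flow-based construction overcomes this by prescribing these marginals directly and invoking flow decomposition, rather than writing down the weight of each individual path.
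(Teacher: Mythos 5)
Your proof is correct, and the hard direction (building an inverse fractional homomorphism from $\structA'_n$ to $\structA_n$) produces the same edge flow on the grid DAG as the paper: the paper's $\psi$-values are precisely the conditional step probabilities which, multiplied by the through-flow $1/\min(k,2n-k)$ at a vertex on the $k$-th diagonal, recover your edge flows. The two arguments differ mainly in packaging. The paper defines the path distribution explicitly as $\omega'(P)=\prod_{e\in\mathcal{E}(P)}\psi(e)$ and then verifies by a chain of inductive claims that every vertex on the $k$-th diagonal is visited with probability exactly $1/\min(k,2n-k)$. You instead observe that summing the $\mu$-constraints over a diagonal forces exactly those marginals, prescribe them as vertex flows, extend to a conservative edge flow by a local transportation computation between consecutive diagonals, and invoke flow decomposition on the DAG. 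This is cleaner --- the required marginals drop out of flow conservation rather than needing a separate induction --- at the cost of outsourcing existence of the path distribution to the flow-decomposition theorem. Your preliminary remark that the constraints must hold with equality, and hence that the uniform distribution over monotone paths would fail, is a genuinely useful piece of motivation that the paper leaves implicit. The rest of your argument (that $\structA'_n$ is a core because $\pos{\structA'_n}$ is rigid, the forward map $g((i,j))=i+j-1$ with its tight $\mu$-constraints, and the observation that the $f$-constraints are automatic for any distribution supported on monotone paths) matches the paper.
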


\begin{proof}
Let $n \geq 1$. First, $\pos{\structA'_n}$ is a relational core, so $\structA'_n$ is a core. We define $g:A_n \mapsto A'_n$ as $g((i,j)) \defeq i+j-1$, then for all $k \in A'_n$ we have
\[
|g^{-1}(k)| = |\{(i,j) \in A_n \mid i+j-1 = k\}| = 
\begin{cases}
k \quad &\text{if } 1 \leq k \leq n\\
2n-k \quad &\text{if } n+1 \leq k \leq 2n-1\\
\end{cases}
\]
and hence for every $k \in A'_n$ we have $\mu^{\structA_n}(g^{-1}(k)) =
  |g^{-1}(k)| \leq \mu^{\structA'_n}(k)$. Furthermore,
  $f^{\structA_n}((i,j),(i',j')) = \infty$ implies $g(i',j') - g(i,j) = (i'-i) +
  (j'-j) = 1$ and in turn $f^{\structA'_n}(g(i,j),g(i',j')) = \infty$. In
  particular, for all $k_1,k_2 \in A'_n$ it holds that
  $f^{\structA_n}(g^{-1}((k_1,k_2))) \leq f^{\structA'_n}((k_1,k_2))$, which
  completes the proof that the distribution $\omega(g) \defeq 1$ is an IFH from $\structA_n$ to $\structA'_n$. 

We now turn to the more delicate task of constructing an 
IFH $\omega'$ from $\structA'_n$ to $\structA_n$. Note that for any (hypothetical) mapping $g' \in \supp{\omega'}$, if $f^{\structA'_n}(\tuple{x}) = \infty$ then $f^{\structA_n}(g'(\tuple{x})) = \infty$. This implies that for any such mapping the sequence of vertices $g'(1),\ldots,g'(2n-1)$ is a directed path from $(1,1)$ to $(n,n)$ in $\pos{\structA_n}$ and if $g'(m) = (i,j)$ then $i+j-1 = m$ (i.e. $(i,j)$ lies on the $m$th diagonal). Additionally, the definitions of $\mu^{\structA_n}$ and $\mu^{\structA_n'}$ imply that $\omega'$ is an IFH if and only if the marginal for each vertex on the $m$th diagonal is $1/m$ if $m \leq n$ and $1/(2n-m)$ if $n \leq m \leq 2n-1$. Formally, for every $i,j$,
\begin{align}
\label{eq:exB}
\sum_{g': g'(i+j-1) = (i,j)} \omega'(g') =
\begin{cases}
1/(i+j-1) &\text{ if } i+j-1 \leq n\\
1/(2n-i-j+1) &\text{ otherwise }
\end{cases}
\end{align}
Now, let $B_n \defeq \{(i,j) \in A_n : i+j-1 \leq n\}$ and $\structB_n$ be the restriction of $\structA_n$ to $B_n$. Similarly, let $B_n' \defeq \{1,\ldots,n\}$ and $\structB_n'$ be the restriction of $\structA_n'$ to $B_n'$. Note that it suffices to find an IFH $\omega'_B$ from $\structB_n'$ to $\structB_n$ satisfying Equation~\ref{eq:exB} (the ``otherwise'' part of the equation is void for $\structB_n, \structB'_n$): given any such $\omega'_B$ we can define $\omega'(g') \defeq \omega'_B(g'|_{B_n'})$ if for all $m \leq n$, $g'(m) = (i,j)$ implies that $g'(2n-m) = (n-j,n-i)$, and $0$ otherwise (i.e. every mapping $g'$ in the support of $\omega'$ is a mirror image of itself via the diagonal).

Recall that, in order for $\omega'$ to be an IFH, any $g'_B \in \supp{\omega'_B}$ must define a path in $\pos{\structB_n}$ of length $n$ from $(1,1)$ to a vertex on the last diagonal (i.e. a vertex $(i,j)$ such that $i+j-1 = n$). Each vertex $(i,j)$ in $B_n$, except for those on the last diagonal, has two outgoing edges: to $(i+1,j)$ and to $(i,j+1)$. We choose the first edge with probability $i/(i+j)$ and the second with probability $j/(i+j)$. The probability $\omega'_B(g'_B)$  is the product of the probabilities of the edges in the path defined by $g'_B$.

It remains to prove that Equation~\ref{eq:exB} holds for $\omega'_B$. Note that the LHS of Equation~\ref{eq:exB} is the probability that $(i,j)$ is on a path chosen according to distribution $\omega'_B$, i.e., the probability that the initial part of the path $g'_B(1),\ldots,g'_B(i+j-1)$ ends in the vertex $(i,j)$. We now prove that Equation~\ref{eq:exB} holds by induction on the diagonal of $\structB_n$ (i.e. for increasing values of $i+j-1$). Choose $(i,j)$. If $i=j=1$, the LHS of Equation~\ref{eq:exB} is equal to $1$ and the claim holds. Otherwise, we have two cases. If $i \neq 1 \neq j$, then the probability that $(i,j)$ is on the path is equal to the probability that $(i-1,j)$ is and the edge $((i-1,j),(i,j))$ was chosen, plus the probability that $(i,j-1)$ is and the edge $((i,j-1),(i,j))$ was chosen. By induction, this probability is equal to $1/(i+j-2) \times (j-1)/(i+j-1) + 1/(i+j-2) \times (i-1)/(i+j-1)$, which is equal to $1/(i+j-1)$ as required. In the case where $i=1$ or $j=1$, one of the summands is $0$, but the required equality still holds. This concludes the proof that $\omega'$ is an IFH from $\structA'_n$ to $\structA_n$, and hence $\structA'_n$ is the core of $\structA_n$, as claimed.

As a final remark, we note that if we alter the definitions of $\structA_n$ and $\structA'_n$ so that  the edges of $\structA'_n$ have weight $1$ (instead of $\infty$) and the weight of each edge $e$ of $\structA'_n$ is the marginal of $\omega'$ on $e$, then the distributions $\omega$ and $\omega'$ are still IFHs between $\structA_n$ and $\structA'_n$. Since $\structA'_n$ is still a core, this new construction shows that bounded treewidth modulo (valued) equivalence is a strictly more general property than bounded treewidth \emph{even for finite-valued structures}.
\end{proof}

\section{Proof of Proposition~\ref{prop:lessfrac}}
\label{app:lessfrac}

We prove the following.

\begin{proposition*}
Let $\structA, \structB$ be valued $\sigma$-structures and $k\geq 1$. If there
  exists an IFH from $\structA$ to $\structB$, then $\structA \lessfrac \structB$.
\end{proposition*}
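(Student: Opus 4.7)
The plan is to construct, from an optimal solution $\lambda^{\structB}$ to SA$_k(\structB,\structC)$, a feasible solution $\lambda^{\structA}$ to SA$_k(\structA,\structC)$ of cost at most $\optfrac{k}{\structB,\structC}$, by pulling back $\lambda^{\structB}$ along the mappings in $\supp{\omega}$. For each $(f,\tuple{x})\in\tup{\structA_k}_{>0}$ and $s:\toset{\tuple{x}}\mapsto C$, I would define
\[
\lambda^{\structA}(f,\tuple{x},s)\defeq\sum_{g\in\supp{\omega}}\omega(g)\,\lambda^{\structB}(f,g(\tuple{x}),s_g),
\]
where a summand is taken to be $0$ unless $s$ factors through $g|_{\toset{\tuple{x}}}$, in which case $s_g:\toset{g(\tuple{x})}\mapsto C$ denotes the unique mapping with $s=s_g\circ g|_{\toset{\tuple{x}}}$. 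That $(f,g(\tuple{x}))\in\tup{\structB_k}_{>0}$ for every $g\in\supp{\omega}$, which is needed for the right-hand side to make sense, follows for $f\in\sigma$ from the observation after Proposition~\ref{prop:charfrac} that every $g\in\supp{\omega}$ is a homomorphism from $\pos{\structA}$ to $\pos{\structB}$, and is automatic for $f=\rho_k$.

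Next, I would verify the four LP constraints. Nonnegativity \textcolor{red}{\textbf{(SA4)}} is immediate, and \textcolor{red}{\textbf{(SA2)}} for $\lambda^{\structA}$ follows by swapping sums: summing the definition over all $s:\toset{\tuple{x}}\mapsto C$ amounts to summing over all pairs $(g,s')$ with $s'\in C^{\toset{g(\tuple{x})}}$, which yields $\sum_g\omega(g)\cdot 1 = 1$ by \textcolor{red}{\textbf{(SA2)}} for $\lambda^{\structB}$. For \textcolor{red}{\textbf{(SA3)}}, if $f^{\structC}(s(\tuple{x}))=\infty$ then $f^{\structC}(s_g(g(\tuple{x})))=\infty$ and $f^{\structB}(g(\tuple{x}))>0$, so \textcolor{red}{\textbf{(SA3)}} for $\lambda^{\structB}$ kills every summand; if instead $f^{\structA}(\tuple{x})=\infty$, the inverse fractional homomorphism inequality at $(f,g(\tuple{x}))$ forces $f^{\structB}(g(\tuple{x}))=\infty$ for each $g\in\supp{\omega}$, because the sum $f^{\structA}(g^{-1}(g(\tuple{x})))$ already contains the term $f^{\structA}(\tuple{x})=\infty$. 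Finally, \textcolor{red}{\textbf{(SA1)}} for $\lambda^{\structA}$ at $(f,\tuple{x}),(p,\tuple{y})$ reduces, after the change of variables $r = r'\circ g|_{\toset{\tuple{y}}}$ that bijects factoring $r:\toset{\tuple{y}}\mapsto C$ with arbitrary $r':\toset{g(\tuple{y})}\mapsto C$, to \textcolor{red}{\textbf{(SA1)}} for $\lambda^{\structB}$ at $(f,g(\tuple{x})),(p,g(\tuple{y}))$; the cardinality hypothesis transfers because $|\toset{g(\tuple{x})}|\leq|\toset{\tuple{x}}|\leq k$.

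For the cost bound, substituting the definition into the objective of SA$_k(\structA,\structC)$, swapping the outer sums, and reindexing the inner sum by $\tuple{z}=g(\tuple{x})$ gives
\[
\sum_{(f,\tuple{z})\in\tup{\structB_k}_{>0}}\sum_{s':\toset{\tuple{z}}\mapsto C}\lambda^{\structB}(f,\tuple{z},s')\,f^{\structC_k}(s'(\tuple{z}))\sum_{g\in\supp{\omega}}\omega(g)\,f^{\structA_k}(g^{-1}(\tuple{z})).
\]
For $f\in\sigma$, the inner $g$-sum is bounded by $f^{\structB_k}(\tuple{z}) = f^{\structB}(\tuple{z})$ directly by the inverse fractional homomorphism property; for $f=\rho_k$, this bound fails, but the factor $\rho_k^{\structC_k}(s'(\tuple{z}))=0$ causes those terms to vanish. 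The total is therefore at most the cost of $\lambda^{\structB}$, as required.

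The main obstacle will be the combinatorial bookkeeping in verifying \textcolor{red}{\textbf{(SA1)}}: one must carefully set up the bijection between assignments $r:\toset{\tuple{y}}\mapsto C$ that factor through $g|_{\toset{\tuple{y}}}$ and assignments $r':\toset{g(\tuple{y})}\mapsto C$, while ensuring that the restriction condition $r|_{\toset{\tuple{x}}}=s$ translates correctly into $r'|_{\toset{g(\tuple{x})}}=s_g$. A secondary subtlety is that $\omega$ is an inverse fractional homomorphism only for the original signature $\sigma$, not for its extension $\sigma_k$, which is precisely why the $\rho_k$-terms must be handled separately in the cost argument rather than being absorbed by a blackbox application of the inverse fractional homomorphism on $(\structA_k,\structB_k)$.
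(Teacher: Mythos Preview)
Your proof is correct and follows essentially the same strategy as the paper: pull back an optimal solution of SA$_k(\structB,\structC)$ along the maps in $\supp{\omega}$ and verify feasibility and cost. The one genuine (but minor) difference is that the paper first expands $\optfrac{k}{\structB,\structC}$ as an $\omega$-weighted average over $g\in\supp{\omega}$, then selects a \emph{single} $g$ achieving the bound and defines $\lambda'$ via that one $g$; you instead keep the full convex combination $\lambda^{\structA}=\sum_g\omega(g)\lambda'_g$. Your version verifies \textcolor{red}{\textbf{(SA1)}}--\textcolor{red}{\textbf{(SA4)}} by linearity (each $\lambda'_g$ is feasible, so their convex combination is) and bounds the cost by reindexing and applying the inverse fractional homomorphism inequality directly, rather than via the averaging--pigeonhole step. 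Both routes are equally valid; yours avoids an arbitrary choice of $g$ and makes the role of the inequality $\sum_g\omega(g)f^{\structA}(g^{-1}(\tuple{z}))\le f^{\structB}(\tuple{z})$ more visible, while the paper's has marginally lighter bookkeeping in the constraint verification.
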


\begin{proof}
Let $\structC$ be an arbitrary valued $\sigma$-structure, $\omega$ be an 
IFH from $\structA$ to $\structB$ and $\lambda$ be a
solution to SA$_k(\structB,\structC)$ of minimum cost. 
We can write the cost of $\lambda$ as a sum over all $(f,\tuple{x}) \in \tup{\structB_k}_{>0}$ and $s: \toset{\tuple{x}} \mapsto C_k$, 
as constraint \textcolor{red}{\textbf{(SA3)}} ensures that $\lambda(f,\tuple{x},s)=0$, whenever $f^{\structB_k}(\tuple{x})\times f^{\structC_k}(s(\tuple{x}))=\infty$. 
Then, we have
\begin{align*}
\optfrac{k}{\structB,\structC} &= \sum_{(f,\tuple{x}) \in \tup{\structB_k}_{>0},
s:\toset{\tuple{x}} \mapsto C_k} \lambda(f,\tuple{x},s) f^{\structB_k}(\tuple{x}) f^{\structC_k}(s(\tuple{x}))\\
&=\sum_{(f,\tuple{x}) \in \tup{\structB},
s:\toset{\tuple{x}} \mapsto C} \lambda(f,\tuple{x},s) f^{\structB}(\tuple{x}) f^{\structC}(s(\tuple{x}))\\ &\geq  \sum_{(f,\tuple{x}) \in \tup{\structB}, s:\toset{\tuple{x}} \mapsto C} \left(
\sum_{g \in \supp{\omega}} \omega(g)f^{\structA}(g^{-1}(\tuple{x})) \right) \lambda(f,\tuple{x},s) f^{\structC}(s(\tuple{x}))\\
&= \sum_{g \in \supp{\omega}} \omega(g) \left( \sum_{(f,\tuple{x}) \in \tup{\structB},
s:\toset{\tuple{x}} \mapsto C} \lambda(f,\tuple{x},s) f^{\structA}(g^{-1}(\tuple{x})) f^{\structC}(s(\tuple{x})) \right)\\
&= \sum_{g \in \supp{\omega}} \omega(g) \left( \sum_{(f,\tuple{y}) \in \tup{\structA},
s:\toset{g(\tuple{y})} \mapsto C} \lambda(f,g(\tuple{y}),s) f^{\structA}(\tuple{y}) f^{\structC}(s(g(\tuple{y}))) \right)
\end{align*}
and hence there exists $g\in \supp{\omega}$ such that 
\begin{align}
\optfrac{k}{\structB,\structC} &\geq \sum_{(f,\tuple{y}) \in \tup{\structA},
s:\toset{g(\tuple{y})} \mapsto C}  \lambda(f,g(\tuple{y}),s) f^{\structA}(\tuple{y}) f^{\structC}(s(g(\tuple{y})))\nonumber\\
& = \sum_{(f,\tuple{y}) \in \tup{\structA_k}_{>0},
s:\toset{g(\tuple{y})} \mapsto C_k}  \lambda(f,g(\tuple{y}),s) f^{\structA_k}(\tuple{y}) f^{\structC_k}(s(g(\tuple{y}))) \label{eq:supp}
\end{align}

Since $g\in \supp{\omega}$, we have that $g$ is a homomorphism from $\pos{\structA}$ to $\pos{\structB}$ 
(see remark at the end of Section~\ref{sec:inv-frac}). 
It follows that $(f,g(\tuple{y}))\in \tup{\structB_k}_{>0}$, for every $(f,\tuple{y})\in \tup{\structA_k}_{>0}$. Hence, 
for any $(f,\tuple{y}) \in \tup{\structA_k}_{>0}$ and $r: \toset{\tuple{y}} \mapsto C_k$, we can define
\[
\lambda'(f,\tuple{y},r)=
\begin{cases}
\lambda(f,g(\tuple{y}),s) &\quad \text{if there exists $s:\toset{g(\tuple{y})}\mapsto C_k$ such that $s \circ g = r$}\\
0 &\quad \text{otherwise} 
\end{cases}
\]
Equation~\eqref{eq:supp} then becomes 
\[\optfrac{k}{\structB,\structC} \geq \sum_{(f,\tuple{y}) \in \tup{\structA_k}_{>0},
r:\toset{\tuple{y}} \mapsto C_k} \lambda'(f,\tuple{y},r) f^{\structA_k}(\tuple{y}) f^{\structC_k}(r(\tuple{y}))\]
All that remains to do is to show that $\lambda'$ is a feasible solution to
SA$_k(\structA,\structC)$. The fact that the condition
\textcolor{red}{\textbf{(SA4)}} is satisfied is immediate. Also, it follows
from $\omega(g) > 0$ that $f^{\structA_k}(\tuple{y}) = \infty$ implies
$f^{\structB_k}(g(\tuple{y})) = \infty$ and $f^{\structA_k}(\tuple{y})>0$ implies
$f^{\structB_k}(g(\tuple{y})) >0$; thus $f^{\structA_k}(\tuple{y}) \times
f^{\structC_k}(r(\tuple{y})) = \infty$ implies that
$f^{\structB_k}(g(\tuple{y})) \times f^{\structC_k}(r(\tuple{y})) = \infty$ and
the condition \textcolor{red}{\textbf{(SA3)}} is satisfied for all $r$
that can be written as $s \circ g$ for some $s$, as $\lambda$ satisfies \textcolor{red}{\textbf{(SA3)}}. 
The definition of $\lambda'$
ensures that it also holds for all other mappings $r$. For the condition
\textcolor{red}{\textbf{(SA2)}}, observe that for all $(f,\tuple{y}) \in \tup{\structA_k}_{>0}$ we have
\[\sum_{r: \toset{\tuple{y}} \mapsto C_k}\lambda'(f,\tuple{y},r) = \sum_{s:
\toset{g(\tuple{\tuple{y}})} \mapsto C_k}\lambda'(f,\tuple{y},s \circ g) =
\sum_{s: \toset{g(\tuple{y})} \mapsto C_k}\lambda(f,g(\tuple{y}),s) = 1\]
That only leaves the condition
\textcolor{red}{\textbf{(SA1)}}. Let $(f,\tuple{x}), (p,\tuple{y})\in \tup{\structA_k}_{>0}$ such that 
$\toset{\tuple{x}} \subseteq \toset{\tuple{y}}$ and $|\toset{\tuple{x}}|\leq k$, and let $t:
\toset{\tuple{x}} \mapsto C_k$ be any mapping. 
If there does not exist
a mapping $z: \toset{g(\tuple{x})} \mapsto C_k$ such that $t=z\circ g$, then 
\[\lambda'(f,\tuple{x},t) = 0 = \sum_{r:\toset{\tuple{y}} \mapsto C_k, r|_{\toset{\tuple{x}}} = t}\lambda'(p,\tuple{y},r) \]
and the relevant constraints are satisfied, so let us assume that such a mapping $z$ exists. In this case, we have
\begin{align*}
\sum_{r:\toset{\tuple{y}} \mapsto C_k, r|_{\toset{\tuple{x}}} = t}\lambda'(p,\tuple{y},r) 
&= \sum_{s:\toset{g(\tuple{y})} \mapsto C_k, s\circ g|_{\toset{\tuple{x}}}=t}\lambda'(p,\tuple{y},s \circ g) \\ 
&= \sum_{s:\toset{g(\tuple{y})} \mapsto C_k, s\circ g|_{\toset{\tuple{x}}}=t}\lambda(p,g(\tuple{y}),s) \qquad \text{(by definition of $\lambda'$)}\\
&= \sum_{s:\toset{g(\tuple{y})} \mapsto C_k, s|_{\toset{g(\tuple{x}})}=z}\lambda(p,g(\tuple{y}),s) \\
&= \lambda(f,g(\tuple{x}),z) \qquad \text{(applying \textcolor{red}{\textbf{(SA1)}} to $\lambda$, $(f,g(\tuple{x}))$ and $(p,g(\tuple{y}))$)}\\ 
&= \lambda'(f,\tuple{x},t)
\end{align*}
and again the condition \textcolor{red}{\textbf{(SA1)}} is satisfied. Therefore $\lambda'$ is a feasible solution to SA$_k(\structA,\structC)$, and finally $\structA \lessfrac \structB$.
\end{proof}

\end{document}